\newcommand{\seq}[1]{
\saveexpandmode\expandarg
\StrDel{#1}{&}
\restoreexpandmode
} 
\renewcommand\@oddfoot{
	\hfil
	\rlap{%
		\vtop{%
			\vskip10mm
			\colorbox[rgb]{0.99,0.78,0.07}
			{\@tempdima\evensidemargin
				\advance\@tempdima1in
				\advance\@tempdima\hoffset
				\hb@xt@\@tempdima{%
					\textcolor{darkgray}{\normalsize\sffamily
						\bfseries\quad\expandafter\textsolittle\expandafter{}}%
					\vphantom{$0abcd1$}\hss}}}}
}
\title{\texorpdfstring{From Chinese Postman to Salesman and Beyond II:\\ Inapproximability and Parameterized Complexity}{From Chinese Postman to Salesman and Beyond: Inapproximability and Parameterized Complexity}}
\titlerunning{From Chinese Postman to Salesman and Beyond II}
\author{Fabian Frei}{CISPA Helmholtz Center for Information Security, Saarbrücken}{fabian.frei@cispa.de}{https://orcid.org/0000-0002-1368-3205}{}
\author{Ahmed Ghazy}{CISPA Helmholtz Center for Information Security and Saarland University, Saarbrücken}{ahmed.ghazy@cispa.de}{https://orcid.org/0009-0009-7414-5871}{}
\author{Tim A.~Hartmann}{CISPA Helmholtz Center for Information Security, Saarbrücken}{tim.hartmann@cispa.de}{https://orcid.org/0000-0002-1028-6351}{}
\author{Florian Hörsch}{CISPA Helmholtz Center for Information Security, Saarbrücken}{florian.hoersch@cispa.de}{https://orcid.org/0000-0002-5410-613X}{}
\author{Dániel Marx}{CISPA Helmholtz Center for Information Security, Saarbrücken}{marx@cispa.de}{https://orcid.org/0000-0002-5686-8314}{}
\authorrunning{F.~Frei, A.~Ghazy, T.~A.~Hartmann, F.~Hörsch, D.~Marx}
\keywords{Traveling Salesman Problem on Cubic Bipartite Graphs, Continuous Graphs, Inapproximability, Parameterized Complexity}
\let\oldtextsc\textsc
\renewcommand{\textsc}[1]{\mbox{\oldtextsc{#1}}}
\newtheorem*{theorem*}{Theorem}
\newtheorem{Case}{Case}
\crefname{claim}{Claim}{Claims}
\newcommand{\xlabel}[2][]{\phantomsection\def\@currentlabelname{\ifthenelse{\equal{#1}{}}{#2}{#1}}\label{#2}}
\newcommand{\p}{\textup{P}\xspace}
\newcommand{\np}{\textup{NP}\xspace}
\newcommand{\fpt}{\textup{FPT}\xspace}
\newcommand{\xp}{\textup{XP}\xspace}
\newcommand{\apx}{\textup{APX}\xspace}
\newcommand{\wone}{\textup{W[1]}\xspace}
\newcommand{\wtwo}{\textup{W[2]}\xspace}
\newcommand{\tail}{\textup{tail}\xspace}
\newcommand{\Oh}{\mathcal{O}}
\renewcommand{\Oh}{O}
\newcommand{\oh}{o}
\newcommand{\Ce}{\mathcal{C}\xspace}
\newcommand\ETH{\texorpdfstring{$\mathsf{ETH}$}{ETH}\xspace}
\newcommand{\TSP}{\textup{TSP}\xspace}
\newcommand{\abs}[1]{\lvert #1\rvert}
\renewcommand{\setminus}{-}
\newcommand{\ceil}[1]{\ensuremath{\lceil #1 \rceil}}
\newcommand{\floor}[1]{\ensuremath{\lfloor #1 \rfloor}}
\newcommand{\half}{\ensuremath{{\tfrac{1}{2}}}}
\def\eps{\ensuremath{\varepsilon}}
\renewcommand{\epsilon}{\varepsilon}
\def\phi{\ensuremath{\varphi}}
\newcommand{\tour}{\textsc{Tour}\xspace}
\renewcommand{\textsc}[1]{\textnormal{\scshape #1}}
\newcommand{\VC}{\textsc{Vertex}\-\textsc{Cover}\xspace}
\newcommand{\domset}{\textsc{Dominating}\-\textsc{Set}\xspace}
\newcommand{\covering}{\textsc{Covering}\xspace}
\newcommand{\CyclePartition}{\textsc{Cycle}\-\textsc{Partition}\xspace}
\newcommand{\CycleSubpartition}{\textsc{Cycle}\-\textsc{Subpartition}\xspace}
\newcommand{\CPP}{\textsc{Chinese}\-\textsc{Postman}\-\textsc{Problem}\xspace}
\newcommand{\CSPsubscript}{\textup{CSP}}
\newcommand{\binaryCSP}{\textsc{Binary}\-\textsc{CSP}\xspace}
\newcolumntype{\expand}{}
\long\@namedef{NC@rewrite@\string\expand}{\expandafter\NC@find}
\newcounter{myproblem}
\newcounter{temp}
\crefname{algorithm}{Algorithm}{Algorithms}
\crefname{myproblem}{Problem}{Problems}
\Crefname{algorithm}{Algorithm}{Algorithms}
\Crefname{myproblem}{Problem}{Problems}
\newcommand{\dist}{\operatorname{{d}}}
\newcommand{\len}{\operatorname{\mathsf{\ell}}}
\newcommand{\opt}{\ensuremath{\mathsf{OPT}}}
\newcommand{\opttour}[1][\delta]{\ensuremath{\opt_{#1\textup{-tour}}}\xspace}
\newcommand{\deltatour}[1][\delta]{{\ensuremath{{#1\textup{-tour}}}}\xspace}
\newcommand{\deltatourprob}[1][\delta]{{\ensuremath{{#1\textup{-\textsc{Tour}}}}}\xspace}
\newcommand{\assig}[1]{\ensuremath{\mathcal{A}(#1)}}
\newcommand{\conn}[4]{\ensuremath{\ifthenelse{\equal{#4}{0}}{#1_{#2}^{#3}}{\bar{#1}_{#2}^{#3}}}}
\newcommand{\cand}[2]{\ensuremath{a_{#1}^{#2}}}
\newcommand{\midpath}[2]{\ensuremath{\ifthenelse{\equal{#2}{0}}{M_{#1}}{\bar{M}_{#1}}}}
\newcommand{\ball}[1]{\ensuremath{B_{G'}(#1,\delta)}}
\begin{document}
\maketitle
\begin{abstract}

A well-studied continuous model of graphs, introduced by Dearing and Francis [Transportation Science, 1974],
considers each edge as a continuous unit-length interval of points.
In the problem \(\delta\)-\tour defined within this model, 
the objective to find a shortest tour that comes within a distance of \(\delta\) of every point on every edge. 
This problem was introduced in the predecessor to this article and shown to be essentially equivalent to the Chinese Postman problem for $\delta=0$, to the graphic Travel Salesman Problem (TSP) for $\delta=1/2$, and close to first Vertex Cover and then Dominating Set for even larger $\delta$. Moreover, approximation algorithms for multiple parameter ranges were provided.
In this article, we provide complementing inapproximability bounds and examine the  fixed-parameter tractability of
the problem. On the one hand, we show the following:
\begin{enumerate}
  \item[(1)] For every fixed \(0 < \delta < 3/2\), the problem \(\delta\)-\tour is APX-hard, while for every fixed \(\delta \geq 3/2\), the problem has no polynomial-time \(o(\log n)\)-approximation unless $\p=\np$.
\end{enumerate}
Our techniques also yield the new result that TSP remains \apx-hard on cubic (and even cubic bipartite) graphs. 
\begin{enumerate}
  \item[(2)] 
For every fixed \(0 < \delta < 3/2\), the problem  \(\delta\)-\tour is fixed-parameter tractable (FPT) when parameterized by the length of a shortest tour, while it is W[2]-hard for every fixed \(\delta \geq 3/2\) and para-\np-hard for $\delta$ being part of the input.
  \end{enumerate}
  On the other hand, if $\delta$ is considered to be part of the input, then an interesting nontrivial phenomenon occurs when $\delta$ is a constant fraction of the number of vertices:
  \begin{enumerate}
  \item[(3)]
If $\delta$ is part of the input, then the problem can be solved in time \(f(k)n^{\Oh(k)}\), where \(k = \lceil n/\delta \rceil\); however, assuming the Exponential-Time Hypothesis (ETH), there is no algorithm that solves the problem and runs in time \(f(k)n^{o(k/\log k)}\).
\end{enumerate}
\bigskip
\end{abstract}
\newpage

\section{Introduction}

We continue our previously initiated study~\cite{FreiGHHM24} of the problem \deltatourprob. 
It is defined in the well-studied continuous graph model introduced by Dearing and Francis~\cite{Dearing1974}, where each edge is seen as a continuous unit interval of points with its vertices as endpoints. 

For any given graph~$G$, this yields a compact metric space $(P(G),\dist)$ with a point set $P(G)$ and a distance function $\dist\colon P(G)^2\to \mathbb{R}_{\ge0}$.
Several natural continuous generalizations of standard problems on graphs
have been studied; we refer the reader to our previous article~\cite{FreiGHHM24}
for an overview of key results in the continuous model.

Within our model, the studied tours are \emph{continuous}, that is, not just closed walks on vertices
but closed walks in the graph's continuous point space.
Such a tour may travel from one vertex to the next as usual but can also make U-turns at arbitrary points inside the edges.
Roughly speaking, a continuous tour can be described by a closed sequence of points in the graph, where every pair of consecutive points represents a segment of the tour.
The formal definition is found in \cref{section:prel}.

A \emph{\deltatour} $T$ then is a tour
	that is $\delta$-covering,
	which means that every point in the graph is within distance at most $\delta$ from at least one point contained in the tour $T$.
The task in our problem \deltatourprob is to find a shortest \deltatour.
We observe that computing a shortest $0$-tour is equivalent to computing a shortest Chinese Postman tour (a closed walk going through every edge),
	which is known to be polynomial-time solvable~\cite[Chapter~29]{schrijver-book}.
	Moreover, one can observe that if every vertex of the input graph has degree at least two,
	then there is a shortest $1/2$-tour that stops at every vertex and, conversely,
	any tour stopping at every vertex is a $1/2$-tour.
	Thus, finding a shortest $1/2$-tour is essentially equivalent to
	solving a TSP problem on a graph, with some additional careful handling of degree-1 vertices.

The problem can be shown to be \np-hard for all $\delta > 0$. Our previous article therefore provided polynomial-time approximation algorithms.

\begin{theorem}[\textbf{Approximation Algorithms~\cite{FreiGHHM24}}]
	For every fixed $\delta \in (0,3/2)$, the problem \deltatourprob admits a constant-factor approximation algorithm.
	Moreover, for every fixed $\delta \geq 3/2$, the problem admits an $\Oh(\log n)$-approximation algorithm.
\end{theorem}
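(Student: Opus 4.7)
The plan is to split the analysis at the critical threshold $\delta = 3/2$. For smaller $\delta$, the tour is locally constrained: parking at a vertex $v$ covers only points within distance $\delta$, which for $\delta < 3/2$ excludes the midpoints of edges incident to neighbors of $v$. Consequently, an optimum $\delta$-tour must travel close to essentially every vertex, and this locality enables a constant-factor approximation via reductions to classical problems. For $\delta \geq 3/2$ the local constraint disappears; a single vertex dominates its entire $2$-neighborhood, which aligns the problem with \SetCover and naturally yields an $\Oh(\log n)$ factor.

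For $\delta \in (0, 3/2)$, I would further subdivide the range. When $\delta \in (0, 1/2)$, every edge must contain at least one tour point (else its midpoint is too far), so I would adapt the polynomial-time \CPP algorithm by duplicating a least-cost subset of edges to Eulerize an appropriate subgraph and running a closed-walk algorithm on the result. When $\delta \in [1/2, 3/2)$, the strategy is to reduce to graphic \TSP. The key observation is that any tour stopping at every vertex is a $\delta$-tour once $\delta \geq 1/2$, so applying a constant-factor \TSP approximation (Christofides suffices, or a sharper bound on subcubic instances) produces a feasible $\delta$-tour. To control the approximation ratio, one shows $\opttsp \leq c\cdot \opt$ via a charging argument that decomposes the optimum $\delta$-tour into short pieces, each close enough to a vertex that detouring through that vertex inflates the length only by a multiplicative constant.

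For $\delta \geq 3/2$, the plan is set-cover based. First compute a vertex subset $D \subseteq V(G)$ such that every point of $G$ lies within distance $\delta - \tfrac{1}{2}$ of some vertex in $D$, using the greedy algorithm for \SetCover; this yields an $\Oh(\log n)$-approximation of the minimum such $|D|$. Next, connect $D$ into a tour by constructing a minimum spanning tree of the induced metric on $D$ and doubling its edges, incurring only a constant overhead. Finally, establish the approximation ratio by observing that any $\delta$-tour of length $L$ induces a feasible covering set of size $\Oh(L)$ (its vertex footprint rounded to the nearest integer positions), so the greedy output yields a tour of length $\Oh(\log n)\cdot \opt$.

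The main obstacle I anticipate is the charging argument in the $\delta \in [1/2, 3/2)$ regime, particularly for $\delta$ close to $3/2$, where the optimum $\delta$-tour may skip vertices whose incident edges it merely \emph{brushes} from a distance. The argument must exploit that for $\delta < 3/2$ every edge forces the tour to approach both of its endpoints within a distance strictly less than $1$, and use this to reroute the tour through each skipped vertex at bounded additive cost. The constants in the approximation ratio will inevitably blow up as $\delta \to 3/2$, reflecting the phase transition at this threshold that already manifests in the switch from constant-factor to logarithmic approximability.
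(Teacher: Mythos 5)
The theorem you are addressing is imported verbatim from the predecessor paper~\cite{FreiGHHM24}; the present paper contains no proof of it, only the citation, so there is nothing internal to compare against. Evaluating your sketch on its own terms: the $\delta \geq 3/2$ part (greedy set cover for $\delta$-domination followed by connecting the cover via a doubled tree, charging the tree cost against the optimal tour plus a $\delta$-length detour per cover point) has the right shape, modulo a necessary preprocessing step of brute-forcing tours with at most two stopping points from the discretized set $S_\delta$ so that the degenerate regime $\opt=0$ is handled exactly. For $\delta\in(0,1/2)$ your CPP intuition is defensible in spirit, but the correct per-edge choice is between \emph{traversing} and \emph{peeking in} to depth $1-2\delta$ (as in the Stronger Discretization Lemma), not wholesale Eulerization; the cost accounting is more delicate than duplicating edges.

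The genuine gap is your treatment of $\delta\in[1/2,3/2)$. You assert $\opttsp \leq c\cdot\opt$ and propose to output an $\Oh(1)$-approximate TSP tour, but this inequality is simply false once $\delta \geq 1$. Consider the star $K_{1,n-1}$ with $\delta\in[1,3/2)$: the single-point tour at the center already $\delta$-covers all points, so $\opt=0$, while $\opttsp = 2(n-1)$. No constant $c$, even one depending on $\delta$, can repair this. The deeper issue is that for $\delta\geq 1$ an optimal $\delta$-tour is not forced to approach every vertex; it must only approach at least one endpoint of each edge to within distance $1$ (a vertex-cover-type constraint), and that can be satisfied at arbitrarily lower cost than a Hamiltonian tour. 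Any charging argument must therefore reroute through a cover rather than through all of $V(G)$, account for connectivity separately, and include a discretized brute-force phase for the $\opt=0$ case. As written, the step $\opttsp\leq c\cdot\opt$ is the one that breaks.
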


We extend these previous results by showing approximation lower bounds.

Further, we study the parameterized complexity of \deltatourprob.
As we show, for the length of the tour as the parameter, the problem is \fpt for every $\delta < 3/2$,
	and \wtwo-hard for every $\delta \geq 3/2$.
Notably, there is a jump in the complexity at the threshold of $3/2$,
	similarly as we observe for approximability.
This jump in the complexity also occurs
	for the similar task to $\delta$-cover all points on all edges
	not with a tour but with a minimum number of selected points in $P(G)$.
There, also $3/2$ marks the threshold where the complexity jumps from \fpt to \wtwo-hard~\cite{HartmannLW22}. 
Similarly, at $3/2$, the approximability of finding a $\delta$-cover jumps from
	allowing a polynomial-time constant-factor approximation to Log-\apx-hard~\cite{HartmannJanssen2024}.

Additionally, we point out an interesting phenomenon where the problem can be solved
much faster as $\delta$ gets very large relative to the number of vertices $n$.
More precisely, we also study the parameterization by $n/\delta$.

\subparagraph{Our Results.}

Complementing our previous positive approximation results~\cite{FreiGHHM24},
	we provide corresponding lower bounds.
Specifically, we first prove that, for each $\delta < 3/2$, computing a shortest \deltatour is \apx-hard.

\begin{theorem}[\textbf{\apx-Hardness}]\label{thm:approxmain}
For every fixed $\delta \in (0,3/2)$, the problem \deltatourprob is \apx-hard.
\end{theorem}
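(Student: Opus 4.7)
The plan is a two-stage reduction. First, I would establish \apx-hardness of graphic \TSP on cubic bipartite graphs as a preparatory result. The natural route is to start from a known \apx-hard variant of \TSP on bounded-degree graphs (or from a bounded-occurrence constraint satisfaction problem) and apply edge- and vertex-replacement gadgets that simultaneously enforce bipartiteness and cubicity while contributing a controlled additive cost per gadget, yielding an L-reduction.

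For the second stage, I would give, for each fixed $\delta \in (0, 3/2)$, an L-reduction from cubic bipartite \TSP to \deltatourprob. The case $\delta = \half$ is essentially immediate because any $\delta$-tour on a graph of minimum degree at least two must visit every vertex and, conversely, any closed walk through every vertex is a $\delta$-tour. For the other values, I would replace each edge $uv$ of the cubic bipartite input graph $G$ by a gadget $H_\delta$ obtained by subdividing $uv$ into a path of a $\delta$-dependent constant length, possibly with a small pendant branch attached at its midpoint, and tune the shape so that the minimum $\delta$-covering cost of $H_\delta$ strictly undercuts the full traversal cost by a positive constant $c_\delta$. Since a \TSP tour in a cubic graph uses exactly $n$ edges out of $3n/2$, the remaining $n/2$ gadgets must be covered but need not be traversed, so the total length of an optimal $\delta$-tour becomes an affine function of $\opttsp(G)$, giving the L-reduction.

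The technical heart, and the main obstacle, is a normalization lemma asserting that every optimal \deltatour $T$ of $G'$ can be rearranged without increasing its length so that inside each gadget $H_\delta$ it either fully traverses the gadget or performs a single out-and-back covering detour. Consequently, the skeleton of $T$ traces a \TSP tour of $G$ and the local detours account for the remaining cost. Ruling out pathological tours that cover several gadgets with shared or intertwined detours will require exchange arguments tailored to the continuous model. For $\delta$ close to $3/2$, the $\delta$-balls can reach across a shared endpoint and partially cover a neighboring gadget, which I would neutralize by padding each gadget with enough subdivision at both ends so that coverage from a single tour point cannot straddle two gadgets. With the normalization lemma in place, the two L-reduction inequalities $\opttour(G') \le \alpha \cdot \opttsp(G)$ and $|T_{\textup{TSP}}| - \opttsp(G) \le \beta \cdot (|T| - \opttour(G'))$ follow by routine accounting on the contributions of used, unused, and partially covered gadgets.
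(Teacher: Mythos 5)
Your plan has two structural problems. First, its starting point---``known'' \apx-hardness of graphic \TSP on cubic bipartite graphs---did not in fact exist prior to this paper: the only prior claim of \apx-hardness for cubic (let alone cubic bipartite) \TSP, due to Karpinski and Schmied, contains a flawed tour-reconfiguration step, and \cref{cor:tsphardnessmain} of the paper is a new contribution derived as a \emph{corollary} of the deeper hardness result, not available as a premise for it. You would therefore have to prove the cubic bipartite \TSP hardness yourself, which is essentially the content of the paper's cycle-subpartition argument (\cref{cyclehard,mainconstr}), including the chain gadgets and the delicate exchange arguments used to normalize an arbitrary cycle subpartition into one of the four canonical cycle types. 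Your ``edge- and vertex-replacement gadgets that simultaneously enforce bipartiteness and cubicity'' gloss over exactly the hard part.

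Second, the pass from cubic bipartite \TSP to \deltatourprob for $\delta < 1/2$ cannot go via subdivision-style gadgets. The subdivision correspondence of \cref{drftgzh} sends $\delta$-tours of $G$ to $k\delta$-tours of the $(k{-}1)$-fold subdivision; it transports hardness \emph{upward} in $\delta$ only. Starting from $\delta = 1/2$ you reach the half-integers, but nothing in $(0,1/2)$ and nothing strictly between the half-integers. The paper instead establishes hardness directly on the whole interval $(0,1/2]$ by observing (\cref{discreteklein} and \cref{xtcfrgvt}) that on a cubic graph with $\delta \le 1/2$, a nice $\delta$-tour either traverses or peeks into each edge, so the optimal tour length is an \emph{exact} affine function of the weight of an optimal $(4\delta,1,2-2\delta,-4\delta)$-cycle subpartition---a quantity controlled by the vertex cover number of the cubic source graph, not by $\opttsp$. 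Your ``normalization lemma'' has the right flavor, but on a cubic graph it leads to a cycle-packing optimization, not to a \TSP instance, so the L-reduction you sketch targets the wrong source problem. Once hardness is in hand on $(0,1/2]$, plain subdivision covers $[1/2, 3/2)$ and indeed all $\delta > 0$, since any target $\delta_0$ can be written as $k\cdot(\delta_0/k)$ for an integer $k$ with $\delta_0/k \in (0,1/2]$.
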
          
        
Particularly, as an intermediate step for \apx-hardness for small values of $\delta$,
	we prove \apx-hardness of a versatile cycle covering problem.
Doing so, we also obtain \apx-hardness for \emph{TSP on cubic bipartite graphs}.
	To the best of our knowledge, even for TSP only restricted to cubic graphs,  
	there is only one
	\apx-hardness result that unfortunately happens to be flawed~\cite[Thm.~5.4]{KarpinskiS15}. In particular,
	the proposed tour reconfiguration appears to split the original TSP tour into two disjoint ones, rendering the argument invalid. 
	The issue seems to affect results in a series of other articles~\cite{EngebretsenK01,EngebretsenK06,KarpinskiS12,KarpinskiS13,KarpinskiS13website,Karpinski15FCT}.
	Fortunately, our separate approach closes the gap and yields the hardness on an even more restricted class.

\begin{theorem}\label{cor:tsphardnessmain}
	\TSP is \apx-hard even on cubic bipartite graphs.
\end{theorem}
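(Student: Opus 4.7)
The plan is to obtain \cref{cor:tsphardnessmain} as a direct byproduct of the \apx-hardness result on a cycle-covering problem that also drives the small-$\delta$ case of \cref{thm:approxmain}. First, I would establish \apx-hardness of an appropriate cycle-cover or cycle-subpartition variant on cubic bipartite instances, reducing from a standard \apx-hard source such as \VC on cubic graphs or a bounded-occurrence variant of \textsc{Max}-3\textsc{Sat}. The crucial and most delicate design constraint is that the output of this reduction be simultaneously cubic and bipartite. I would arrange this by using gadgets whose local traversal options encode variable assignments or vertex-cover decisions, chaining them via connectors that preserve $3$-regularity, and subdividing any edges that would otherwise close odd cycles so as to preserve bipartiteness without disturbing the degree sequence.

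Given that reduction, the step from cycle covering to \TSP on cubic bipartite graphs is a short linear-cost transformation. In a connected cubic graph, the length of a shortest TSP tour equals $|V|$ plus the minimum number of additional edge traversals required to glue a disjoint cycle collection into a single closed walk, and this correction term is controlled linearly by the number of vertices left uncovered by the cycle collection. I would therefore exhibit a pair of inverse maps: a cycle subpartition of the instance extends to a TSP tour by inserting short connecting detours whose total cost is proportional to the number of uncovered vertices, and any TSP tour, after the cheap detours are removed, decomposes into a large disjoint cycle collection of the original instance. This yields an affine correspondence between the two optima with explicit small constants, which transports \apx-hardness.

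The main obstacle will be making this transformation fully local and provably connectivity-preserving. The flaw in~\cite[Thm.~5.4]{KarpinskiS15} indicated in the introduction stems from a global tour reconfiguration that inadvertently splits the single closed walk into two disjoint pieces. To avoid repeating this mistake, I would confine every modification to the interior of one gadget at a time and argue, through a short finite case analysis over the possible tour patterns through each gadget, that global connectivity is never broken. Once this local-surgery lemma is in place, the gap-preservation bookkeeping is routine and yields \apx-hardness of \TSP on the class of cubic bipartite graphs.
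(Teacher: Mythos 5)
Your overall pipeline---reduce \VC on cubic graphs to a cycle-subpartition problem on a cubic bipartite graph, then transfer hardness to \TSP via an affine correspondence between cycle subpartitions and tours---is essentially the paper's route, and your account of the cycle-to-tour step (glue the disjoint cycles into a closed walk by doubling a spanning tree of the contracted graph; the extra cost is affine in the number of cycles and uncovered vertices) is correct in substance.

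The genuine gap sits in the step you yourself flag as most delicate. Subdividing an edge always creates a new degree-$2$ vertex, so subdivision never preserves $3$-regularity; your claim that it gives bipartiteness ``without disturbing the degree sequence'' is internally contradictory, and the reduction as sketched would not output a cubic graph. What the paper does instead is replace would-be edges by \emph{chain gadgets}: purpose-built subgraphs on $6k+2$ vertices with two degree-$1$ attachment points and all other vertices of degree $3$, admitting a unique bipartition separating the two attachment points, a Hamiltonian path between them, and a prescribed lower bound of $\min\{\alpha,6\beta\}k$ on the weight of any cycle subpartition of the gadget's interior. After identifying the attachment points with the intended endpoints, these gadgets act as ``thick edges'' that simultaneously enforce cubicity and bipartiteness while heavily penalizing any cycle subpartition that fails to traverse them end to end; some mechanism of this sort is unavoidable. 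Your locality concern about the flawed global tour reconfiguration is, by contrast, mostly sidestepped by the paper's architecture: all of the delicate reconfiguration work is carried out on cycle subpartitions, where disconnected solutions are permitted and nothing has to be reassembled into a single closed walk, and the subsequent cycle-to-tour direction is connectivity-safe automatically because doubling a spanning tree of the contracted graph always yields a connected Eulerian multigraph.
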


Once $\delta$ reaches $3/2$, the problem \deltatourprob suddenly changes character: it becomes similar to \domset,
	where only a logarithmic-factor approximation is possible, unless $\p = \np$; see \cref{thm:approx:lb:log:n}.

\begin{theorem}[\textbf{Log-APX-Hardness}]\label{thm:logmain}
	For every fixed $\delta \geq 3/2$, the problem \deltatourprob has no
	$o(\log n)$-approximation algorithm unless $\p=\np$.
\end{theorem}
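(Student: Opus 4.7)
The plan is to reduce from \SetCover, which by Dinur and Steurer is $(1-o(1))\ln n$-hard to approximate unless $\p = \np$; any $o(\log n)$-approximation for \deltatourprob would then refute $\p = \np$, as the reduction I describe preserves optimum up to a constant factor.

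For the threshold case $\delta = 3/2$, I would construct, from a \SetCover instance with universe $U$ and family $\mathcal{S}$, a three-layer graph $G'$ consisting of a hub vertex $h$, a set vertex $s_i$ adjacent to $h$ for each $S_i \in \mathcal{S}$, and an element vertex $e_u$ adjacent to every $s_i$ with $u \in S_i$. A set cover $\mathcal{C}$ of size $k$ then yields the $3/2$-tour that traverses each edge $h\,s_i$ with $S_i \in \mathcal{C}$ twice; a direct case analysis on the three types of edges shows that every point of $G'$ lies within distance $3/2$ of this tour, so it has total length $2k$. Conversely, the requirement that each $e_u$ lie within distance $3/2$ of any $3/2$-tour $T$ forces $T$ either to come within distance $1/2$ of some $s_i$ with $u \in S_i$, or to protrude past the midpoint of an incident edge $s_i\,e_u$ (which itself incurs $\Omega(1)$ tour length). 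Picking one such $s_{i(u)}$ for every $u$ then yields a set cover $\{S_{i(u)} : u \in U\}$, whose size is $O(\len(T))$ by a charging argument assigning $\Omega(1)$ of tour length to each chosen set vertex. Hence an $\alpha$-approximation for \deltatourprob with $\delta = 3/2$ yields an $O(\alpha)$-approximation for \SetCover, ruling out $o(\log n)$-approximation.

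For any fixed $\delta > 3/2$, I would replace the single edges $h\,s_i$ and $s_i\,e_u$ by paths of integer lengths $L_h$ and $L_e$ chosen so that (i) $L_h \le \delta$, so the $h$-to-$s_i$ paths are entirely covered from the hub; (ii) $L_e \le \delta$, so visiting $s_i$ covers every $e_u$ with $u \in S_i$; (iii) $L_h + L_e > \delta$, so $e_u$ is not covered absent some such visit; and (iv) $L_h + 2L_e \le 2\delta$, so the $s_i\,e_u$ paths remain fully covered by the hub together with any visited $s_j$ with $u \in S_j$. For each $\delta \ge 3/2$ some integer pair $(L_h, L_e)$ satisfies all four constraints (for instance $(1,1)$ on $[3/2, 2)$, $(2,1)$ on $[2,3)$, and so on), and the correspondence above carries over with the per-set cost becoming $2(L_h + L_e)$.

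The main obstacle is the backward direction: showing cleanly that a $\delta$-tour cannot cheat by partially entering many branches, so that every set placed into $\mathcal{C}$ accounts for $\Omega(1)$ of tour length, robustly across all tour topologies. This hinges on a careful continuous charging scheme, especially in scenarios where the tour covers some $e_u$ by pushing past the midpoint of an incident edge rather than properly visiting the corresponding $s_i$. Once that is in hand, gap preservation and the claimed $o(\log n)$-inapproximability follow routinely.
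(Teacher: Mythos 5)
Your high-level plan is in the same family as the paper's proof: reduce from a \SetCover-type problem, build a graph in which ``set'' nodes neighbour ``element'' nodes, and translate covers into short $\delta$-tours and vice versa. The concrete gadgets differ. You reduce from \SetCover directly via a hub--star: a hub $h$ joined by paths of length $L_h$ to set vertices, which in turn are joined by paths of length $L_e$ to element vertices, with $(L_h,L_e)$ tuned to the four inequalities you list. The paper instead reduces from \domset on split graphs (itself Log-\apx-hard by an essentially verbatim application of the Raman--Saurabh reduction from \SetCover), uses the split graph $G$ \emph{as is} for $\delta<2$, and for $\delta\ge 2$ merely attaches a pendant path of length $\lfloor\delta\rfloor-1$ to each independent-set vertex. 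The clique in the split graph plays the role of your hub, but more cheaply: the forward tour is a cycle through the chosen clique vertices of length $|S|$ rather than a sequence of round trips of total length $2L_h|S|$. Both yield the constant-factor relationship needed for gap preservation, so this choice is cosmetic.

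The real issue is the one you flag yourself: the backward direction. Your sketch of a ``continuous charging scheme'' that assigns $\Omega(1)$ tour length to each chosen set vertex, robustly across tour topologies, is precisely the step that would need a full argument, and it is not obvious how to carry it out cleanly against tours that make many shallow incursions, protrude past edge midpoints, or route between set vertices through shared element vertices. The paper does not prove an ad hoc charging bound at all. Instead it invokes the discretization machinery established in the preliminaries: by \cref{lemma:discretization} a shortest $\delta$-tour may be assumed to stop only at points whose offset lies in the four-element set $S_\delta$, and \cref{lentour} bounds the number of stopping points by $\lceil \len(T)/s_\delta\rceil$, where $s_\delta>0$ is a constant depending only on $\delta$ (the minimum gap between admissible offsets). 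The dominating set is then simply the set of clique vertices $u$ for which $T$ stops at some $p(u,v,\lambda)$ with $\lambda<1$, and $|S|\le \len(T)/s_\delta$ falls out immediately. This is the missing ingredient that turns your outline into a proof; without \cref{lemma:discretization} and \cref{lentour} (or an equivalent substitute), the backward direction remains a genuine gap. Two minor omissions worth noting: you would also need to handle the degenerate instances (dominating/cover number at most $2$), which the paper does via the ``nontrivial split graph'' condition and brute force, and to check that $|V(G')|$ is polynomial in $|V(G)|$ so the $\log n$ factor survives — routine, but necessary.
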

We then study the problem parameterized by the solution size,
which is the length of the \deltatour.
As mentioned above, some problems exhibit a jump in parameterized complexity
at a certain threshold value of the covering range $\delta$.
Notably, the $\delta$-\covering problem, introduced by Shier~\cite{Shier1977},
where the task is to find a minimum set of points that $\delta$-covers
the entire graph, becomes similar to \domset and is \wtwo-hard when $\delta \geq 3/2$.
Therefore, it is not very surprising that 
	$\delta = 3/2$ marks the threshold for the parameterized complexity of \deltatour as well; see \cref{section:overview:parameterized}. 
\begin{theorem}[\textbf{Natural Parameterization}]\label{thm:parammain}
	Computing a shortest length \(\delta\)-tour, parameterized by the length of the tour, is \fpt for every fixed \(0 < \delta < 3/2\), and \wtwo-hard for every fixed \(\delta \geq 3/2\).
\end{theorem}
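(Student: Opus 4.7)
The theorem has two parts: an FPT algorithm for $0 < \delta < 3/2$ and $W[2]$-hardness for $\delta \geq 3/2$.

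For the FPT part, my plan is to exploit a structural observation: for $\delta < 3/2$, the midpoint of any edge $uv$ is at distance exactly $1/2$ from each endpoint and at distance $\geq 3/2$ from every point outside $uv \cup \{u,v\}$. Hence a $\delta$-covering tour must touch the edge $uv$ or stop at one of its endpoints, and in either case it visits $u$ or $v$. Consequently, the set $V^{*}$ of vertices visited by any $\delta$-tour is a vertex cover of $G$, and a tour of total length $k$ can visit at most $k+1$ distinct vertices. So in any YES-instance the input graph admits a vertex cover of size at most $k+1$, which can be found in FPT time by standard algorithms. Given such a cover $C$, the algorithm enumerates subsets $V^{*} \subseteq C$ of size at most $k+1$ in time $2^{O(k)}$ and, for each guess, computes the shortest $\delta$-tour whose vertex set is $V^{*}$. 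Since $V \setminus C$ is independent, every $v \in V \setminus V^{*}$ imposes only a local condition: it is either automatically covered by some visited neighbor (when $\delta \geq 1$) or requires a short detour of length $\geq 2(1-\delta)$ into an incident edge, which can be attached to the tour at a chosen $V^{*}$-vertex. The underlying closed-walk problem on $V^{*}$ is a $|V^{*}|$-vertex TSP variant solvable in $2^{O(k)} \cdot n^{O(1)}$ time by the Bellman-Held-Karp dynamic program.

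For the $W[2]$-hardness when $\delta \geq 3/2$, my plan is a parameterized reduction from \textsc{Dominating Set}. Given an instance $(G,k)$, I construct $G'$ by attaching a small gadget to each $v \in V(G)$, designed so that any tour covering the gadget's far endpoint pays a fixed ``selection cost'' for $v$; I then add, for each edge $uv \in E(G)$, an ``edge gadget'' whose critical point is $\delta$-covered exactly when at least one of $u,v$ is selected. The gadgets are glued together through a connector subgraph whose length depends only on $k$. Calibrating distances so that a tour of length at most a target bound $g(k)$ must select exactly $k$ vertices and cover all edge gadgets yields the equivalence with dominating sets of size $\leq k$. For general $\delta \geq 3/2$, the gadgets are scaled via subdivision by a factor depending on $\delta$ so that the ``selection'' distance and the ``edge'' covering distance both match $\delta$.

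The main obstacle on the algorithmic side is the correct treatment of $V \setminus V^{*}$ when $\delta < 1$: detours must be scheduled consistently with the chosen Hamiltonian-style ordering of $V^{*}$, and the algorithm must argue that an optimal tour can be assumed to attach these detours at specific prescribed vertices. On the hardness side, the delicate point is ensuring that the edge gadgets enforce an exact \emph{OR}-semantics over $\{u,v\}$ without giving the tour a cheaper way to $\delta$-cover them by visiting a third vertex; this drives the choice of gadget geometry and of the connector path length.
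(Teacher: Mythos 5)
Your starting observation is correct and is also the paper's: for $\delta<3/2$ the midpoint of any edge $uv$ can only be $\delta$-covered from the closed edge $uv$, so the vertices that a $\delta$-tour stops near (the \emph{non-neglected} vertices, in the paper's terminology) form a vertex cover whose size is bounded by a function of $K$ and $\delta$ (this is the paper's \cref{vertexcover}). The gap is in what you do next. You propose to compute one small vertex cover $C$ and then enumerate subsets $V^{*}\subseteq C$ as candidate visited-vertex sets. This is not sound: the set of vertices an optimal tour approaches is \emph{some} vertex cover of bounded size, but it need not be a subset of the particular $C$ you found, and two vertex covers of the same size can be disjoint. In particular an optimal tour can be forced to stop at or dip towards vertices in the independent set $V\setminus C$, and there may be $\Theta(n)$ candidate such vertices, so you cannot guess them by enumerating subsets of $C$. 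The paper resolves exactly this difficulty with a kernelization rather than enumeration: take a maximal matching to get a vertex cover $X$ with $|X|=O(K/s_\delta)$, partition $V\setminus X$ into false-twin classes by neighborhood in $X$, prove (\cref{eliminate}) that any such class of size $\geq K/s_\delta+3$ contains a deletable vertex, shrink every class to $O(K/s_\delta)$ representatives, and brute-force the resulting kernel of size $f(K,\delta)$. Your later worry about scheduling the ``detours'' for uncovered vertices consistently with a Held--Karp ordering of $V^{*}$ is a real difficulty, but it is downstream of the more fundamental gap that the enumeration domain is wrong.

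\paragraph{Hardness part.}
Both you and the paper reduce from \domset. The paper restricts to split graphs and uses a single, very clean gadget: attach to every independent-set vertex $v$ a pendant path of length $\lfloor\delta\rfloor-1$ (no gadget at all if $\delta<2$), which forces the tour to stay essentially in the clique $C$ and makes its length track the dominating-set size. Your sketch instead builds bespoke vertex-selection gadgets, edge (OR-semantics) gadgets, and a connector of length $g(k)$. Note a significant difference in ambition: the paper's reduction is only a \emph{gap} reduction (\cref{lem:dom_tour_reduction} gives tour length $\leq |S|$ in one direction but only $|S|\leq \len(T)/s_\delta$ in the other, and $s_\delta<1$), so the paper derives its hardness from FPT-\emph{inapproximability} of \domset and its own \cref{asdfasd} explicitly states only \wone-hardness. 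You are aiming for an exact iff-reduction, which, if it existed, would more directly support the stated \wtwo-hardness than the argument the paper actually gives. However, your sketch does not resolve the points you yourself flag as critical (preventing an edge gadget from being $\delta$-covered cheaply via a third vertex; calibrating distances uniformly over all $\delta\geq 3/2$), and those are exactly where such constructions fail. As written, neither direction of your reduction is verifiable.
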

In this context, it is natural to ask whether in the cases in which \wtwo-hardness has been obtained, we can at least design algorithms correctly solving the problem when further weakening the requirements on the running time. Here, it turns out to be crucial whether $\delta$ is fixed or part of the input. 
\begin{theorem}[\textbf{Classic Complexity for Fixed Tour Length}]\label{thm:parammain2}
	Computing a shortest length \(\delta\)-tour, parameterized by the length of the tour, is \xp for every fixed $\delta\geq 0$, and \np-hard if $\delta$ is part of the input, even if $\delta \leq 2$.
\end{theorem}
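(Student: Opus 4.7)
The plan splits into the \xp upper bound (for every fixed $\delta \geq 0$) and the classical \np-hardness lower bound (for $\delta$ part of the input with $\delta \leq 2$).

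For the \xp result, I would fix $\delta \geq 0$ and a tour-length budget $k$ and enumerate the ``backbone'' of a candidate shortest $\delta$-tour in $n^{O(k)}$ time, then optimize the remaining degrees of freedom in polynomial time. The structural starting point is that any $\delta$-tour $T$ of length at most $k$ decomposes into (i) a closed walk $W$ along edges of $G$ traversing at most $k$ edges with multiplicity and (ii) a collection of \emph{tails}, each an out-and-back detour at a vertex of $W$ into an incident edge of length at most $\delta$, needed only to cover edge interiors that $W$ itself does not already $\delta$-cover. One then enumerates all such closed walks $W$; since each step picks an adjacent vertex, there are at most $n^{O(k)}$ candidates. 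For each candidate $W$, the task reduces to choosing tails of minimum total length that, together with $W$, $\delta$-cover $P(G)$ within budget $k-\mathrm{len}(W)$. I would argue that for each edge $e$ of $G$ the set of tails that can meaningfully contribute to covering $e$ is polynomially bounded (determined by the positions of the vertices of $W$ at distance roughly $\delta$ from $e$), so that the tail-selection subproblem becomes a polynomial-time combinatorial optimization. Returning the shortest valid candidate then yields the required $n^{O(k)}$ algorithm.

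For the \np-hardness part, the introduction already asserts that \deltatourprob is \np-hard for every fixed $\delta > 0$. It therefore suffices to invoke this statement for some fixed $\delta_0 \in (0,2]$, for instance $\delta_0 = 1/2$, where shortest $\delta$-tours essentially coincide with graphic \TSP tours; restricting the input to instances with $\delta = \delta_0$ produces a family of inputs that satisfies $\delta \leq 2$, so the decision problem remains \np-hard when $\delta$ is treated as part of the input.

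The principal obstacle I anticipate lies in the tail-optimization subproblem for a fixed backbone $W$. I need an exchange argument showing that each tail can be pushed to one of only polynomially many ``breakpoint'' lengths---those where a new portion of an edge becomes $\delta$-covered---so that the per-edge covering constraints decouple and admit an efficient exact algorithm. Getting this discretization tight is what turns the enumeration into a genuine $n^{O(k)}$ procedure rather than a running time with an unbounded dependence on $\delta$ or on the graph size.
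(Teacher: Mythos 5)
Your proposal for the \xp upper bound differs from the paper's. The paper invokes the Discretization Lemma (\cref{lemma:discretization}) directly: there is a shortest \deltatour all of whose stopping points lie in a set $P_\delta(G)$ of size $O(n^2)$ (at most four canonical positions per edge), and by \cref{lentour} a tour of length at most $K$ built from such points has at most $\lceil K/s_\delta\rceil$ stops. One then simply enumerates all sequences of at most that many points from $P_\delta(G)$ and checks each with \cref{check}. Your walk-plus-tails decomposition would in effect have to re-derive a discretization for the tail lengths, which is exactly the ``main obstacle'' you flag; the paper already has the tool that sidesteps it. Your approach is plausible but leaves the tail-optimization subproblem unresolved, whereas the paper's is a complete one-step argument given the structural preliminaries.

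The \np-hardness part of your proposal is incorrect. The statement is about the problem \emph{parameterized by tour length}, and ``\np-hard if $\delta$ is part of the input'' in this context means para-\np-hard: the problem remains \np-hard even when the parameter (the tour length $K$) is bounded by a constant, indeed by an arbitrarily small $\eps$ (see \cref{nphdinp}). Invoking the existing \np-hardness for a fixed $\delta_0=1/2$ does not give this: those reductions produce instances whose shortest tour length grows with the input, so they say nothing about bounded $K$. Worse, for any fixed $\delta<3/2$ the problem is in \fpt by \cref{fpt:sol}, so it certainly cannot be \np-hard for fixed $K$; any reduction establishing hardness under bounded $K$ must let $\delta$ vary with the instance. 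The paper's proof does exactly this: a reduction from \domset on split graphs (\cref{detgzhu}) that attaches a forced stopping vertex $v_0$ adjacent to the clique side, chooses $\delta=2-\eps_0$ and $K=(4k+2)\eps_0<\eps$, and uses the tininess of $K$ to force the tour to make only very short detours from $v_0$, each corresponding to choosing a vertex of a dominating set. This is the key structural idea your proposal is missing.
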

It is much more surprising what happens if $\delta$ is really large,
	say, comparable to the number of vertices.
We consider the problem of computing a shortest \deltatour when $\delta$ is part of the input.
For large $\delta$,
	the problem becomes similar to covering the whole graph with  $k=\ceil{\frac{n}{\delta}}$ balls of radius $\delta$, suggesting the problem to be solvable in large part by guessing $k$ centers in $n^{O(k)}$ time.
Indeed, we give an algorithm for computing a shortest \deltatour in this running time.

\begin{restatable}[XP Algorithm for Parameter \boldmath$n/\delta$]{linkedtheorem}{ThmParamKUbLargeDelta}
\label{ThmParamKUbLargeDelta}
\label{thm:param:k:ub:largedelta}
There is an algorithm, which, given a connected $n$-vertex graph~$G$, 
	computes a shortest \deltatour of $G$ in $f(k) \cdot n^{\Oh(k)}$ time
	where $k=\lceil n/\delta\rceil$. 
\end{restatable}

On the other hand, we show the exponent in the running time above to be essentially
optimal, ruling out an FPT-time algorithm for this choice of parameter.

\begin{restatable}[Hardness for Parameter \boldmath$n/\delta$]{linkedtheorem}{ThmParamKLbLargeDelta}
\label{ThmParamKLbLargeDelta}
\label{thm:param:k:lb:largedelta}
There are constants $\alpha>0$ and $k_0$ such that,
	unless \ETH fails,
	for every $k \geq k_0$,
	there is no algorithm that, given an $n$-vertex graph $G$ and a constant $K$,
	decides whether $G$ admits a \deltatour of length at most $K$
	in $\Oh(n^{\alpha k/\log{k}})$ time where $k = \ceil{n/\delta}$.
Moreover, the problem is \wone-hard parameterized by~$k$.
\end{restatable}

\Cref{section:prel} begins with formal notions including a thorough definition of a \deltatour.
Then \cref{section:overview} gives an extended overview of our results.
In \cref{section:approximation-lb}, we prove our inapproximability
results. Finally, the parameterized complexity of the problem is studied in \cref{section:param:complexity}.

\section{Formal Definitions and Preliminaries}
\label{section:prel}

\subparagraph{General Definitions.}
For a positive integer $n$, we denote the set $\{1, \dots, n\}$ by $[n]$.
All graphs in this article are undirected, unweighted and do not contain parallel edges or loops. Let $G$ be a graph.
For a subset of vertices $V' \subseteq V(G)$, we denote by $G[V']$ the subgraph induced by $V'$.
The neighborhood of a vertex $u$ is $N_G(u) \coloneqq \{ v \in V(G) \mid uv \in E(G)\}$. 
We write $uv$ for an edge $\{u,v\} \in E(G)$.
We denote by $\log$ the binary logarithm. 

\subparagraph{Problem Related Definitions.}
For a graph~$G$, we define a metric space whose 
	point set $P(G)$ contains, somewhat informally speaking, all points on the continuum of each edge, which has unit length. 
We use the word \emph{vertex} for the elements in $V(G)$, 
	whereas we use the word \emph{point} to denote elements in $P(G)$. Note however, that each vertex of $G$ is also a point of $G$.

More formally, the set $P(G)$ consists of the set of points $p(u,v,\lambda)$ for every edge $uv \in E(G)$ and every $\lambda \in [0,1]$
	where $p(u,v,\lambda)=p(v,u,1-\lambda)$; $p(u,v,0)$ coincides with $u$
	and $p(u,v,1)$ coincides with $v$.
The \emph{edge segment} $P(p,q)$ of $p$ and $q$ then is the subset of points
	$\{ p(u,v,\mu) \mid \min\{\lambda_p,\lambda_q\} \leq \mu \leq \max\{\lambda_p,\lambda_q\} \}$.
A $\seq{p&q}$-\emph{walk} $T$ between points $p_0 \coloneqq  p$ and $p_z \coloneqq  q$
	is a finite sequence of points $\seq{p_0&p_1&\dots&p_z}$
	where every two consecutive points are distinct and lie on the same edge,
	that is, formally, for every $i\in[z]$,
	there are an edge $u_i v_i \in E(G)$ and $\lambda_i,\mu_i \in [0,1]$
	such that $p_{i-1}=p(u_i,v_i,\lambda_i)$ and $p_{i}=p(u_i,v_i,\mu_i)$.
When $p$ and $q$ are not specified, we may simply write \emph{walk} instead of $\seq{p&q}$-walk.

For two points $p,q \in P(G)$, their {\it distance} $\dist(p,q)$ is defined as follows. If $p$ and $q$ are on the same edge $uv$,
say $p=p(u,v,\lambda_p)$ and $q=p(u,v,\lambda_q)$, we define  $\dist(p,q)=\abs{\lambda_q-\lambda_p}$. In order to extend this definition to arbitrary point pairs, we first need the definition of the length of a walk. 
Namely, the \emph{length} $\len(T)$ of a walk $T=\seq{p_0&p_1&\dots&p_z}$ is $\sum_{i\in[z]}\dist(p_{i-1},p_i)$.
A $\seq{p&q}$-walk with minimum length among all $\seq{p&q}$-walks is called \emph{shortest}. We can now extend the notion of distance to arbitrary point pairs.
The \emph{distance between two points} $p,q \in P(G)$, denoted $\dist(p,q)$,
	is either the length of a shortest $\seq{p&q}$-walk or, if no such walk exists, $\infty$ . It is easy to see that both definitions coincide for points located on the same edge.
Further, let $\dist(p,Q) = \inf\{\dist(p,q) \mid q \in Q \}$ for any point $p \in P(G)$ and point set $Q \subseteq P(G)$.
For some $p\in P(G)$ and a walk $T$, we write $\dist(p,T)$ as abbreviation for $\dist(p,P(T))$.
We later show that $\dist(p, T)$ is in fact a minimum taken over the set of stopping points of $T$ (see \Cref{nearstop}).

A \emph{tour} $T$ is a $\seq{p_0&p_z}$-walk with $p_0=p_z$.
For a real $\delta>0$, a $\delta$-\emph{tour} is a tour
	where $\dist(p,T) \leq \delta$ for every point $p \in P(G)$.
We study the following minimization problem.

\begin{myproblem}[\deltatourprob, where $\delta \geq 0$]
\label{prob:deltatour}%
Instance&A connected simple graph $G$.\\
Solution&Any \deltatour $T$.\\
Goal& Minimize the length $\len(T)$.
\end{myproblem}

Further, we use the following notions for a tour $T=\seq{p_0&p_1&\dots&p_z}$.
A \emph{tour segment} of $T$ is a walk
	given by a contiguous subsequence of $\seq{p_0&p_1&\dots&p_z}$.
The tour $T$ \emph{stops} at a point $p \in P(G)$ if $p \in \{p_0,p_1,\dots,p_z\}$ 
and \emph{traverses} an edge $uv$ if $uv$ or $vu$ is a tour segment of $T$.
The \emph{discrete length} of a tour is $z$, that is, the length 
of the finite sequence of points representing it. We denote the discrete length
of a tour $T$ by $\alpha(T)$.

A point $p\in P(G)$ is \emph{integral} if it coincides with a vertex.
Similarly, $p=p(u,v,\lambda)$ is \emph{half-integral} if $\lambda \in \{0,\half,1\}$. A tour is {\it integral} ({\it half-integral}) if all its stopping points are integral (half-integral).

The \emph{truncation} of a tour $T$, denoted as $\floor{T}$,
	is the integral tour where for every edge $uv \in E(G),\lambda<1$,
	every tour segment~$\seq{u&p(u, v, \lambda)&u}$ in $T$ is replaced by $u$.
We note that $P(\floor{T}) \subseteq P(T)$.

\subparagraph{Structural Results.}
As continuous TSP was studied in the predecessor of this article~\cite{FreiGHHM24} for the first time, 
it was necessary to lay plenty of groundwork. In what follows, we summarize the main structural results we make use of here.

\newcommand{\textDefNiceTour}{%
a tour $T=\seq{p_0&p_1&\dots&p_z}$ in a connected graph $G$ with $z \geq 3$ is \emph{nice} if
\begin{itemize}
\item $\{p_{i-1},p_i\}\cap V(G)\neq \emptyset$ for every $i \in [z]$
(i.e., $T$ has no two consecutive stopping points inside an edge),

\item
for every $i \in [z]$ with $p_i \notin V(G)$, we have $p_{i-1} = p_{(i+1)\bmod z}$\\
(i.e., whenever $T$ stops inside an edge, the previous and next stopping point are the same)
\item for every edge $uv \in E(G)$, there is at most one index $i \in [z]$ with
	$p_i \in \{ p(u,v,\lambda) \mid \lambda \in (0,1)\}$\\
(i.e., $T$ stops at most once inside any given edge),
\item if $T$ traverses an edge $uv \in E(G)$, there is no index $i \in [z]$ with $p_i \in \{ p(u,v,\lambda) \mid \lambda \in (0,1)\}$\\
(i.e., $T$ does not stop inside traversed edges), and 
\item every edge $uv \in E(G)$ is traversed at most twice by $T$.
\end{itemize}
}

A \emph{nice tour} is defined (see~\cite{FreiGHHM24}) as a tour
that interacts with the interior of edges in one of four desirable ways.
More precisely, \textDefNiceTour
See \cref{fig:fourbehaviors} for an illustration.

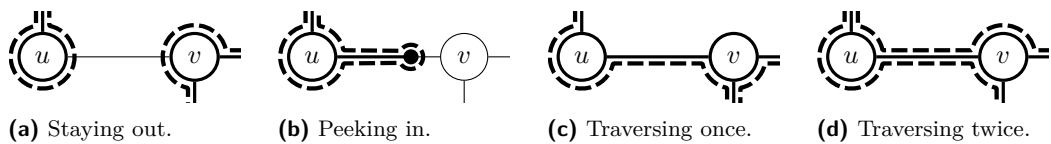
\begin{figure}[H]
\centering
\begin{subfigure}{.24\textwidth}
\begin{tikzpicture}[scale=2,rotate=0,
		vertex/.style={draw,circle,inner sep=.0pt,minimum size=0.62cm},]
    \clip(-.22,-.3) rectangle (1.3,.3);
	\coordinate (cx) at (0,1);
	\coordinate (cy) at ($(1,0)+(0:1)$);
	\coordinate (cz) at ($(1,0)+(-90:1)$);
	\coordinate (cu) at (0,0);
	\coordinate (cv) at (1,0);
	\node[vertex,very thick] (u) at (cu) {$u$};
	\node[vertex,very thick] (v) at (cv) {$v$};
	\node[vertex] (x) at (cx) {$x$};
	\node[vertex] (y) at (cy) {$y$};
	\node[vertex] (z) at (cz) {$z$};
\newcommand{\vertexangle}{12}
\newcommand{\vertexdistance}{.21}
\newcommand{\midpointangle}{30}
\newcommand{\midpointdistance}{.08}
	\draw[very thick] (x)--(u);
	\draw[very thick] (y)--(v)--(z);
	\draw (u)--(v)--(y)--(z)--(v);
	\draw [ultra thick,dash pattern={on 6pt off 2pt}, rounded corners=.5] 
	([shift={(-90+\vertexangle:\vertexdistance)}]cx) arc (-90+\vertexangle:360-90-\vertexangle:\vertexdistance)--([shift={(90+\vertexangle:\vertexdistance)}]cu) arc (90+\vertexangle:450-\vertexangle:\vertexdistance)--cycle;
	\draw [ultra thick,dash pattern={on 6pt off 2pt}, rounded corners=.5] 
	([shift={(-180+\vertexangle:\vertexdistance)}]cy) arc (-180+\vertexangle:180-\vertexangle:\vertexdistance)--([shift={(\vertexangle:\vertexdistance)}]cv) arc (\vertexangle:270-\vertexangle:\vertexdistance)--([shift={(90+\vertexangle:\vertexdistance)}]cz) arc (90+\vertexangle:90-\vertexangle:\vertexdistance);
	\end{tikzpicture}
\caption{Staying out.}
\end{subfigure}
\hfill
\begin{subfigure}{.24\textwidth}
\begin{tikzpicture}[scale=2,rotate=0,
		vertex/.style={draw,circle,inner sep=.0pt,minimum size=0.62cm},]
    \clip(-.22,-.3) rectangle (1.3,.3);
	\coordinate (cx) at (0,1);
	\coordinate (cy) at ($(1,0)+(0:1)$);
	\coordinate (cz) at ($(1,0)+(-90:1)$);
	\coordinate (cu) at (0,0);
	\coordinate (cv) at (1,0);
	\coordinate (cm) at (.65,0);
	\node[vertex,very thick] (u) at (cu) {$u$};
	\node[vertex] (v) at (cv) {$v$};
	\node[vertex] (x) at (cx) {$x$};
	\node[vertex] (y) at (cy) {$y$};
	\node[vertex] (z) at (cz) {$z$};
	\node[minimum size=0.2cm,inner sep=0,outer sep=-1mm,circle,fill=black]  (m) at (cm) {};
\newcommand{\vertexangle}{12}
\newcommand{\vertexdistance}{.21}
\newcommand{\midpointangle}{30}
\newcommand{\midpointdistance}{.08}
	\draw[very thick] (x)--(u)--(m);
	\draw (u)--(v)--(y)--(z)--(v);
	\draw [ultra thick,dash pattern={on 6pt off 2pt}, rounded corners=.5] 
	([shift={(-90+\vertexangle:\vertexdistance)}]cx) arc (-90+\vertexangle:360-90-\vertexangle:\vertexdistance)--([shift={(90+\vertexangle:\vertexdistance)}]cu) arc (90+\vertexangle:360-\vertexangle:\vertexdistance)--([shift={(-180+\midpointangle:\midpointdistance)}]cm) arc (-180+\midpointangle:180-\midpointangle:\midpointdistance)--([shift={(\vertexangle:\vertexdistance)}]cu) arc (\vertexangle:90-\vertexangle:\vertexdistance)--cycle;
	\end{tikzpicture}
\caption{Peeking in.}
\end{subfigure}
\hfill
\begin{subfigure}{.24\textwidth}
\begin{tikzpicture}[scale=2,rotate=0,
		vertex/.style={draw,circle,inner sep=.0pt,minimum size=0.62cm},]
    \clip(-.22,-.3) rectangle (1.3,.3);
	\coordinate (cx) at (0,1);
	\coordinate (cy) at ($(1,0)+(0:1)$);
	\coordinate (cz) at ($(1,0)+(-90:1)$);
	\coordinate (cu) at (0,0);
	\coordinate (cv) at (1,0);
	\node[vertex,very thick] (u) at (cu) {$u$};
	\node[vertex,very thick] (v) at (cv) {$v$};
	\node[vertex] (x) at (cx) {$x$};
	\node[vertex] (y) at (cy) {$y$};
	\node[vertex] (z) at (cz) {$z$};
\newcommand{\vertexangle}{12}
\newcommand{\vertexdistance}{.21}
\newcommand{\midpointangle}{30}
\newcommand{\midpointdistance}{.08}
	\draw[very thick] (x)--(u)--(v);
	\draw[very thick] (y)--(v)--(z);
	\draw (u)--(v)--(y)--(z)--(v);
	\draw [ultra thick,dash pattern={on 6pt off 2pt}, rounded corners=.5] 
	([shift={(-90+\vertexangle:\vertexdistance)}]cx) arc (-90+\vertexangle:360-90-\vertexangle:\vertexdistance)--([shift={(90+\vertexangle:\vertexdistance)}]cu) arc (90+\vertexangle:360-\vertexangle:\vertexdistance)--([shift={(180+\vertexangle:\vertexdistance)}]cv) arc (180+\vertexangle:270-\vertexangle:\vertexdistance)--([shift={(90+\vertexangle:\vertexdistance)}]cz) arc (90+\vertexangle:90-\vertexangle:\vertexdistance);
	\draw [ultra thick,dash pattern={on 6pt off 2pt}, rounded corners=.5] 
	([shift={(90-\vertexangle:\vertexdistance)}]cz)--([shift={(-90+\vertexangle:\vertexdistance)}]cv) arc (-90+\vertexangle:-\vertexangle:\vertexdistance)--([shift={(-180+\vertexangle:\vertexdistance)}]cy);
	\end{tikzpicture}
\caption{Traversing once.}
\end{subfigure}
\hfill
\begin{subfigure}{.24\textwidth}
\begin{tikzpicture}[scale=2,rotate=0,
		vertex/.style={draw,circle,inner sep=.0pt,minimum size=0.62cm},]
    \clip(-.22,-.3) rectangle (1.3,.3);
	\coordinate (cx) at (0,1);
	\coordinate (cy) at ($(1,0)+(0:1)$);
	\coordinate (cz) at ($(1,0)+(-90:1)$);
	\coordinate (cu) at (0,0);
	\coordinate (cv) at (1,0);
	\node[vertex,very thick] (u) at (cu) {$u$};
	\node[vertex,very thick] (v) at (cv) {$v$};
	\node[vertex] (x) at (cx) {$x$};
	\node[vertex] (y) at (cy) {$y$};
	\node[vertex] (z) at (cz) {$z$};
\newcommand{\vertexangle}{12}
\newcommand{\vertexdistance}{.21}
\newcommand{\midpointangle}{30}
\newcommand{\midpointdistance}{.08}
	\draw[very thick] (x)--(u)--(v);
	\draw[very thick] (y)--(v)--(z);
	\draw (u)--(v)--(y)--(z)--(v);
	\draw [ultra thick,dash pattern={on 6pt off 2pt}, rounded corners=.5] 
	([shift={(-90+\vertexangle:\vertexdistance)}]cx) arc (-90+\vertexangle:360-90-\vertexangle:\vertexdistance)--([shift={(90+\vertexangle:\vertexdistance)}]cu) arc (90+\vertexangle:360-\vertexangle:\vertexdistance)--([shift={(180+\vertexangle:\vertexdistance)}]cv) arc (180+\vertexangle:270-\vertexangle:\vertexdistance)--([shift={(90+\vertexangle:\vertexdistance)}]cz) arc (90+\vertexangle:90-\vertexangle:\vertexdistance);
	\draw [ultra thick,dash pattern={on 6pt off 2pt}, rounded corners=.5] 
	([shift={(-180+\vertexangle:\vertexdistance)}]cy) arc (-180+\vertexangle:180-\vertexangle:\vertexdistance)--([shift={(\vertexangle:\vertexdistance)}]cv) arc (\vertexangle:180-\vertexangle:\vertexdistance)--([shift={(\vertexangle:\vertexdistance)}]cu) arc (\vertexangle:90-\vertexangle:\vertexdistance)--([shift={(-90+\vertexangle:\vertexdistance)}]cx) arc (-90+\vertexangle:360-90-\vertexangle:\vertexdistance);
	\end{tikzpicture}
\caption{Traversing twice.}
\end{subfigure}
\caption{The four ways a nice \deltatour defined by at least $3$ points can interact with an edge $uv$.}
\label{fig:fourbehaviors}
\end{figure}

The following result regarding nice tours often allows us to restrict our search space.

\begin{lemma}[Nice Tours~\cite{FreiGHHM24}]\label{lem:tournice}
Let $G$ be a connected graph. Further, let a tour~$T$ in $G$ be given.
Then, in polynomial time, we can compute a tour~$T'$ in $G$ with $\len(T')\leq \len(T)$, such that,
$T'$ is either nice or has at most two stopping points, and if $T$ is a
\deltatour for some $\delta \geq 0$, then so is $T'$.
\end{lemma}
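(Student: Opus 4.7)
The plan is to prove the lemma by an iterative local-rewrite procedure controlled by a lexicographic potential. Starting from $T$, I would repeatedly detect a violation of one of the five conditions in the definition of a nice tour and apply a small modification that (i) never increases $\len(T)$, (ii) never shrinks the set $P(T)$ of points covered by the tour (so the tour remains a \deltatour for every $\delta$ for which the original was), and (iii) strictly decreases the potential $\Phi(T)=(\len(T),\alpha(T),\nu(T))$, where $\nu(T)$ counts the total number of violated conditions summed over all indices. Since $\alpha(T)$ never grows under any of the rewrites and $\nu(T)$ is polynomially bounded in $\alpha(T)$, the procedure terminates in polynomially many steps; if it ever reaches a tour with $\alpha(T)\le 2$ we stop and output it, otherwise the final tour is nice by construction.

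For each condition I would use a short surgery. For condition 5 (edge $uv$ traversed at least three times), two traversals share a direction, so reversing the tour segment between them and re-splicing removes both traversals as a pair while keeping the walk closed, strictly decreasing $\len(T)$. For condition 1 (two consecutive interior stops), both stops lie on a common edge $uv$ by the definition of a walk, and I can delete the inner one; the replaced segment is contained in the remaining segment on $uv$, so $P(T)$ is preserved. For condition 4 (interior stop on a traversed edge), the traversal already covers the whole edge, so I can shift the interior stop to a vertex and contract the resulting redundant U-turn, strictly shortening the tour. For condition 3 (two distinct interior stops on the same edge, once (1) and (4) are handled), both stops lie in separate U-turn ``peeks''; moving the shallower U-turn to coincide with the deeper one makes the shallower one redundant, which I then delete. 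Finally, for condition 2 (interior stop $p_i$ with $p_{i-1}\ne p_{i+1}$), after (1) the stops $p_{i-1}$ and $p_{i+1}$ are vertices and, since $p_i$ is interior to some edge $uv$, must both equal an endpoint of $uv$; if they differ then $p_{i-1},p_i,p_{i+1}$ is a detour across $uv$ that can be replaced by the direct traversal $p_{i-1},p_{i+1}$ of equal length covering the same edge segment.

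The main obstacle I expect is property (ii), that no rewrite shrinks $P(T)$. The clean way to handle this is to prove once, as a separate replacement lemma, that $P(T)$ is exactly the union of the edge-segments $P(p_{i-1},p_i)$ associated with consecutive stop pairs, and that deleting or shortening such a segment preserves $P(T)$ whenever the deleted points already lie in the remaining union. Every one of the five rewrites can then be justified by a single invocation of this lemma. A secondary subtlety is ordering the rewrites so that repairing one condition does not re-introduce a previously repaired one; prioritising the violations in the order (5), (1), (4), (3), (2) is consistent with the potential $\Phi$ and makes progress monotone. The polynomial-time bound is then immediate from the fact that $\alpha(T)$ is non-increasing and that each step either strictly reduces $\len(T)$, or strictly reduces $\alpha(T)$, or strictly reduces one of a constant number of auxiliary counters bounded by $\alpha(T)^2$.
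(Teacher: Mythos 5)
The paper does not prove this lemma itself; it is imported verbatim from the predecessor paper \cite{FreiGHHM24}. So I am assessing your proposal on its own merits rather than against a written proof in this document.

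Your overall plan (iterative local rewrites measured by a potential, with a once-and-for-all replacement lemma) is a sensible shape, but there is a genuine gap, and it is exactly the one you flag as ``the main obstacle'' and then propose to close with the wrong tool. Your intended invariant is that $P(T)$ never shrinks. That invariant is unattainable for the rewrite forced on you by condition (3) when the two interior stops are U-turn peeks from \emph{opposite} ends of a non-traversed edge $uv$, say from $u$ to depth $M_1$ and from $v$ to depth $M_2$ with $M_1+M_2<1$. The coverage $P(T)\cap P(uv)$ is then $[0,M_1]\cup[1-M_2,1]$. No nice tour can realize a disconnected coverage of $uv$: a nice tour's footprint on $uv$ is $\{0,1\}$, $[0,M]$, $[1-M',1]$, or $[0,1]$. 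So $P(T')\supseteq P(T)$ would force $T'$ to traverse $uv$, and the direct double-traversal replacement has length $2 > 2M_1+2M_2$, violating $\len(T')\le\len(T)$. The replacement that does work is a budget shift --- deepen the $u$-peek to depth $M_1+M_2$ and delete the $v$-peek --- which has the same length but provably \emph{shrinks} $P(T)$ (the points in $[1-M_2,1)$ are lost). It is a $\delta$-tour again only because of the quantitative fact that the uncovered sub-arc of $uv$ keeps the same length $1-M_1-M_2$ and still has $P(T')$ at both of its ends, so the maximum distance from any point of $uv$ to $T'$ is unchanged, and this maximum is already a lower bound on $\delta$. That argument is intrinsically about $\max_p\dist(p,\cdot)$, not about $P(T)$, and your proposed ``replacement lemma'' cannot deliver it. Your phrasing of the rewrite --- ``moving the shallower U-turn to coincide with the deeper one'' --- only makes sense for two peeks from the same side of the edge; for opposite-side peeks every natural reading either increases the length or loses coverage without a compensating bound on $\delta$.

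There is a second, smaller gap in your treatment of condition (1). If $p_{i-1},p_i$ are both interior, all of $p_{i-2},\dots,p_{i+1}$ lie on the same edge, but it can happen that \emph{both} $\lambda_{i-1}$ and $\lambda_i$ are local extrema of the $\lambda$-sequence (a zigzag), in which case neither is ``inner'' and deleting either one drops part of $P(T)$. The standard fix is to simplify a whole maximal stretch of consecutive stops inside the closed edge $uv$ at once: such a stretch enters and leaves through $u$ and/or $v$ and can be replaced by a monotone traversal, a $u$-peek to the maximum depth reached, or a $v$-peek, whichever matches its endpoints, strictly shortening the tour unless it is already canonical. This is a fixable oversight, but as written ``delete the inner one; $P(T)$ is preserved'' is not correct.

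In short: the lexicographic-potential scaffolding is fine, and conditions (5), (4), (2) are handled essentially correctly, but the condition-(3) opposite-peek case needs a length-preserving, $P(T)$-shrinking rewrite together with the geometric argument that $\delta_0(T')\le\delta_0(T)$, and condition (1) needs a stretch-level simplification rather than a single-point deletion. Without those, the proposal does not close.
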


In a similar vein, the authors in~\cite{FreiGHHM24} show that we may restrict ourselves
even further to \deltatour{}s whose stopping points come from a small discrete set.
This becomes a very powerful tool given the continuous nature of the problem.

\begin{lemma}[Discretization Lemma~\cite{FreiGHHM24}]\label{lemma:discretization}
For every $\delta \geq 0$ and every connected graph $G$, there is a shortest \deltatour
	that is either nice or contains at most two stopping points and such that each stopping point of the tour can be described as $p(u,v,\lambda)$ with $\lambda \in S_\delta$
	where
$ S_\delta = \big \{0, \delta-\floor{\delta}, \half+\delta - \floor{\half+\delta}, 2\delta -\floor{2\delta} \big\}.$
\end{lemma}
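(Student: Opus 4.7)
The plan is to start with a shortest \deltatour $T$, apply \cref{lem:tournice} to obtain a same-length tour $T'$ that is either nice or has at most two stopping points, and then show that every stopping point of $T'$ can be described with $\lambda \in S_\delta$. In the degenerate case of at most two stopping points, $T'$ is a back-and-forth walk along a single path, and I would verify the claim directly by enumerating the finitely many critical positions of the turning point at which the $\delta$-coverage constraint becomes tight.

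In the nice case, every non-integral stopping point $p_i = p(u,v,\lambda)$ is the apex of a ``peek'': the second bullet of niceness forces $p_{i-1}=p_{i+1}$, and the requirement that consecutive points share an edge forces this common neighbor to lie in $\{u,v\}$; after possibly swapping $u$ and $v$ via $p(u,v,\lambda) = p(v,u,1-\lambda)$, I assume $p_{i-1}=u$, so that the peek contributes exactly $2\lambda$ to $\len(T')$. The plan is to slide $p_i$ continuously towards $u$, decreasing $\lambda$, until a blocking event occurs. The blocking events are of three types: (E0) $\lambda$ reaches $0$; (E1) some point $q \in P(G)$ previously $\delta$-covered only by $p_i$ arrives at distance exactly $\delta$ from $p_i$; or (E2) the coverage ball of $p_i$ becomes tangent to the coverage ball of some other stopping point $p_j$ of $T'$, meaning $\dist(p_i,p_j) = 2\delta$, so that further shrinking would open an uncovered gap on some edge.

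The heart of the argument is to show that each blocking event pins the fractional part of $\lambda$ (after a possible orientation swap on $uv$) to one of the four values in $S_\delta$. For (E1) the key claim is that the binding witness $q$ may be taken to be \emph{canonical}, i.e., either a vertex of $G$ or the midpoint of some edge: a non-canonical $q$ could be slid infinitesimally along its own edge while preserving every relevant coverage relation, contradicting its role as the tight witness. Granting canonicality, the shortest path from $p_i$ to $q$ generically exits through $v$, giving $\dist(p_i,q) = (1-\lambda) + \dist(v,q) = \delta$. If $q$ is a vertex then $\dist(v,q)$ is a nonnegative integer, forcing $1-\lambda \equiv \delta \pmod 1$; if $q$ is a midpoint, $\dist(v,q)$ is a positive half-integer, forcing $1-\lambda \equiv \half + \delta \pmod 1$. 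For (E2) with $p_j$ integral (prototypically $p_j = v$, which yields $1-\lambda = 2\delta$), one gets $1-\lambda \equiv 2\delta \pmod 1$. After a final orientation swap on edge $uv$, the fractional part of $\lambda$ thus falls into $\{\,0,\; \delta-\floor{\delta},\; \half+\delta-\floor{\half+\delta},\; 2\delta-\floor{2\delta}\,\} = S_\delta$.

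The main obstacle I foresee is event (E2) when $p_j$ is itself a non-integral stopping point---two peeks mutually pinning each other, so neither can be slid in isolation. I would address this by treating a connected component of mutually pinned peeks together: fixing the combinatorial skeleton of $T'$ (which vertices it visits, which edges it traverses, and which edges it peeks into), the peek depths become variables in a small linear program whose constraint constants are of the form \emph{integer} $+ \delta$, \emph{integer} $+ \half + \delta$, or \emph{integer} $+ 2\delta$. A standard basic-feasible-solution argument then forces an optimum to assign each peek depth a fractional value in $S_\delta$. The canonicality claim used in (E1) is the secondary subtlety, since it must be justified uniformly across all possible locations of $q$ by a perturbation argument that is careful about degenerate positions (e.g., $q$ sitting at a vertex of $G$ that is not itself a stopping point).
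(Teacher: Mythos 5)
This lemma is imported from the predecessor paper [FreiGHHM24] and is not proved in the present document, so there is no reference proof here to compare against; I am assessing your proposal on its own terms. The overall plan — normalize to a nice tour via \cref{lem:tournice}, view non-integral stopping points as peeks, and argue that in a shortest tour each peek depth is pinned by some tight coverage constraint — is the natural approach and identifies the right blocking phenomena.

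The step that would actually finish the proof, however, is not supported. Your claim that ``a standard basic-feasible-solution argument then forces an optimum to assign each peek depth a fractional value in $S_\delta$'' does not follow from an integer coefficient matrix together with right-hand sides having fractional parts in $\{0,\delta,\delta+\tfrac12,2\delta\}$: a basic solution is $A^{-1}b$ for a nonsingular integer submatrix $A$, and $A^{-1}$ in general has denominators $\det A$, so the components can have fractional parts such as $\delta/2$ or $\delta/3$, none of which lie in $S_\delta$. To make this work you must first establish that the binding constraints among mutually pinned peeks take only the restricted forms $\lambda_i \equiv 0$, $\lambda_i \equiv \delta$, $\lambda_i \equiv \delta+\tfrac12$, $\lambda_i+\lambda_j \equiv 2\delta$, or $\lambda_i-\lambda_j \equiv 0 \pmod 1$ — a $\{0,\pm1\}$ system with at most two nonzeros per row — and then verify that $S_\delta$, taken modulo the reflection $\lambda \sim 1-\lambda$, is closed under the induced maps $\lambda \mapsto 2\delta - \lambda$ and $\lambda \mapsto \lambda$. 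Deriving this restricted structure by a case analysis of where the critical witness $q$ sits and through which endpoint each shortest path exits is precisely the technical core, and it is missing. Two smaller issues: (E2) phrased as $\dist(p_i,p_j)=2\delta$ is not exact in a graph metric, since two $\delta$-balls can share a boundary point $q$ with $\dist(q,p_i)=\dist(q,p_j)=\delta$ while $\dist(p_i,p_j)<2\delta$ (the two shortest $q$-paths need not concatenate into a shortest $p_ip_j$-path), so the blocking condition should be stated at $q$; and the canonicality of the (E1) witness should rest on integrality of vertex-to-vertex distances forcing the apex of the tent function $\mu\mapsto\min(\mu+\dist(a,u),\,1-\mu+\dist(b,u))$ to be a half-integer, rather than on the informal perturbation argument you give, which as stated does not rule out the apex itself being non-canonical.
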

For some given $\delta$, we use $S_\delta$ for the set defined in \cref{lemma:discretization}. We further use $s_\delta = \min \{|s_1-s_2|\mid \{s_i,1-s_i\}\cap S_\delta \neq \emptyset \text{ for $i \in [2]$ }, s_1 \notin \{s_2,1-s_2\}\}$.
We use the following simple property which is proved in \cite{FreiGHHM24}.
\begin{proposition}\label{lentour}
Let $T$ be a tour in a graph $G$ all of whose stopping points are of the form $p(u,v,\lambda)$ for some $uv\in E(G)$ and $\lambda\in S_\delta$. Then $\alpha(T)\leq \lceil\frac{\len(T)}{s_\delta}\rceil$.
\end{proposition}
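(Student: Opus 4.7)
The plan is to lower-bound the distance between each pair of consecutive stopping points of $T$ by $s_\delta$; summing over the $\alpha(T)$ steps then gives $\len(T) \geq \alpha(T)\cdot s_\delta$, and integrality of $\alpha(T)$ finishes the argument.

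Write $T=\seq{p_0&p_1&\dots&p_z}$, so $\alpha(T)=z$. For each $i \in [z]$, the walk definition guarantees that $p_{i-1}$ and $p_i$ are distinct points lying on a common edge $u_iv_i$, so we may write $p_{i-1} = p(u_i, v_i, \lambda_i)$ and $p_i = p(u_i, v_i, \mu_i)$ with $\lambda_i \neq \mu_i$ and $\dist(p_{i-1}, p_i) = |\lambda_i - \mu_i|$. By the proposition's hypothesis together with the identity $p(u,v,\lambda)=p(v,u,1-\lambda)$, both $\lambda_i$ and $\mu_i$ satisfy $\{s,1-s\}\cap S_\delta \neq \emptyset$, so they are candidate values in the minimization defining $s_\delta$.

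The heart of the proof is the per-step bound $\dist(p_{i-1}, p_i) \geq s_\delta$. In the generic case $\lambda_i \notin \{\mu_i, 1-\mu_i\}$, the pair $(\lambda_i,\mu_i)$ meets every condition in the definition of $s_\delta$, so the bound is immediate. In the residual case $\lambda_i = 1-\mu_i$, which forces $\mu_i \neq \half$, the distance equals $|1-2\mu_i|$, and the plan is to exhibit a witness $s \in (S_\delta \cup (1-S_\delta))\setminus\{\mu_i, 1-\mu_i\}$ with $|s-\mu_i| \leq |1-2\mu_i|$; then the pair $(\mu_i,s)$ is admissible in the minimization defining $s_\delta$, which gives $s_\delta \leq |1-2\mu_i|$. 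The existence of such a witness is a routine inspection of the four defining formulas of $S_\delta$ together with their complements; it exploits the symmetry of $S_\delta \cup (1-S_\delta)$ about $\half$, ensuring that any element close to $\half$ is accompanied on either side by further nearby elements.

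Summing the per-step bound over $i \in [z]$ yields $\len(T) \geq z\cdot s_\delta = \alpha(T)\cdot s_\delta$, hence $\alpha(T) \leq \len(T)/s_\delta$, and rounding up gives $\alpha(T) \leq \lceil \len(T)/s_\delta \rceil$. The main obstacle is the residual case $\lambda_i = 1-\mu_i$, where the excluded "complement pair" condition in the definition of $s_\delta$ needs to be circumvented by the witness argument above; the rest is a direct counting argument.
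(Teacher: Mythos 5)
Your overall strategy is sound---proving the per-step lower bound $\dist(p_{i-1},p_i)\ge s_\delta$ and summing gives $\alpha(T)\le \len(T)/s_\delta\le\lceil\len(T)/s_\delta\rceil$, which is even slightly stronger than the stated bound. The reduction of the per-step bound to the residual case $\lambda_i=1-\mu_i$ is also correct. However, the argument offered for that residual case has a genuine gap: the witness you want does not always exist.

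You assert that ``any element close to $\half$ is accompanied on either side by further nearby elements,'' and that some $s\in (S_\delta\cup(1-S_\delta))\setminus\{\mu_i,1-\mu_i\}$ with $|s-\mu_i|\le|1-2\mu_i|$ can always be found by inspecting the four defining formulas. This is false. Take $\delta=0.45$. Then $S_\delta=\{0,\,0.45,\,0.99,\,0.98\}$, so $S_\delta\cup(1-S_\delta)=\{0,\,0.01,\,0.02,\,0.45,\,0.55,\,0.98,\,0.99,\,1\}$. For $\mu_i=0.45$ we have $|1-2\mu_i|=0.1$, yet the only elements of $S_\delta\cup(1-S_\delta)$ within distance $0.1$ of $\mu_i$ are $0.45$ and $0.55=1-\mu_i$, which are exactly the two excluded values. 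No witness exists. Despite this, the per-step bound still holds here, because $s_\delta=0.01$ (realized by the pair $(0,\,0.01)$, say), which is far below $0.1$; the relevant small admissible pair simply lives somewhere else in the set. So the bound $s_\delta\le|1-2\mu_i|$ is true, but your local witness argument does not prove it---one has to exhibit a valid pair (not necessarily involving $\mu_i$) whose gap is at most $|1-2\mu_i|$, which requires a more global look at $S_\delta$. One clean route: with $a'=\min(a,1-a)$ for $a=\delta-\lfloor\delta\rfloor$, and $b'$, $c'$ defined analogously from the other two generators, one can check that $a'+b'=\tfrac12$ and $c'=2\min(a',b')$, which, combined with the fact that $s_\delta\le\min_{x\in\{a',b',c'\}\setminus\{0\}}x$ (the gap to $0$), lets you verify $|1-2\mu_i|\ge s_\delta$ for $\mu_i\in\{a',b',c'\}$ and their complements by a short case analysis. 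As written, though, your proof of the residual case would fail on inputs like $\delta=0.45$.
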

For smaller values of $\delta$, the following stronger form of \cref{lemma:discretization} holds.

\begin{lemma}[Stronger Discretization Lemma~\cite{FreiGHHM24}]\label{discreteklein}
Let $\delta \in [0,1/2]$ and let $G$ be a connected graph. Then there is a shortest \deltatour~$T$ that either consists of at most two stopping points or is nice and such that for every $x_1x_2 \in E(G)$, one of the following holds:
\begin{enumerate}[(a)]
\item $V(G)=\{x_1,x_2\}$ and \[
    T= 
\begin{cases}
    p(x_1,x_2,\delta)p(x_1,x_2,1-\delta)p(x_1,x_2,\delta),& \text{if } \delta<\frac{1}{2}\\
    p(x_1,x_2,\frac{1}{2}),              & \text{if $\delta=\frac{1}{2}$,}
\end{cases}
\]
\item $\deg(x_i)=1$ and $\deg(x_{3-i})\geq 2$ for some $i \in \{1,2\}$, and $T$ contains the tour segment~$\seq{x_{3-i}&p(x_{3-i},x_i,1-\delta)&x_{3-i}}$ and does not stop at any other points on $x_1x_2$,
\item $\min\{\deg(x_1),\deg(x_2)\}\geq 2$ and $T$ traverses $x_1x_2$,
\item $\min\{\deg(x_1),\deg(x_2)\}\geq 2$ and $T$ contains the two tour segments $\seq{x_{3-i}}$ and ($\delta$-dependent)  
$\begin{cases}{\seq{x_{i}&p(x_{i},x_{3-i},1-2\delta)&x_{i}}}, &\text{if $\delta<\frac{1}{2}$}\\{\seq{x_i}}, &\text{if $\delta=\frac{1}{2}$}\end{cases}$, and $T$ does not stop at any other points on $x_1x_2$.
\end{enumerate}
\end{lemma}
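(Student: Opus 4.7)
The plan is to show that any shortest $\delta$-tour $T^*$ can be chosen to have the stated local structure on every edge. We begin by invoking \cref{lem:tournice}: either $T^*$ has at most two stopping points---subsumed by case (a) when $V(G)=\{x_1,x_2\}$---or $T^*$ is nice. For a nice tour, the defining conditions restrict its interaction with any given edge $x_1x_2$ to exactly one of the four behaviors in \cref{fig:fourbehaviors}: staying out, peeking in from one endpoint, traversing once, or traversing twice. The proof then classifies each edge according to which behavior $T^*$ exhibits on it and matches the behavior to one of conditions (a)--(d), with small local canonicalizations where necessary.

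For the traversal cases: if $T^*$ traverses $x_1x_2$ and both endpoints have degree $\ge 2$, we obtain case~(c) directly. If one endpoint $x_i$ is a pendant, traversal is ruled out by minimality---traversing a pendant edge costs length $2$, while a peek-in from $x_{3-i}$ of depth $1-\delta$ costs only $2(1-\delta)<2$ and already $\delta$-covers the pendant vertex $x_i$. Hence a pendant edge falls into case~(b): $\delta$-covering of $x_i$ forces the peek to reach at least $p(x_{3-i},x_i,1-\delta)$, minimality forbids going deeper, and \cref{lemma:discretization} certifies $1-\delta\in S_\delta$, so we obtain exactly the canonical form prescribed in (b).

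For case~(d), both endpoints have degree $\ge 2$ and the tour does not traverse $x_1x_2$. If $T^*$ stays out of the edge entirely, every interior point must be $\delta$-covered via a shortest path through one of the endpoints; in particular, the midpoint of $x_1x_2$ sits at distance at least $1/2$ from $T^*$, forcing $\delta \ge 1/2$ and matching the $\delta=1/2$ branch of (d), where both endpoints can be assumed to be stopping points without increasing length. For $\delta<1/2$, the tour must peek in from some endpoint $x_i$ to a depth $d$. Since $x_{3-i}$ is itself a stopping point and $\delta$-covers its side up to distance $\delta$ along the edge, the interior of $x_1x_2$ is $\delta$-covered precisely when $d+\delta \ge 1-\delta$; minimality makes this inequality tight, giving $d=1-2\delta\in S_\delta$ and the stated peek segment.

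The principal obstacle is to carry out these per-edge canonicalizations simultaneously across all edges without destroying the $\delta$-covering property elsewhere or inflating total length. Because every modification is strictly local---it replaces the tour's excursion on a single edge by a canonical one that $\delta$-covers at least as much of that edge at no greater cost---and leaves the tour's overall connectivity through its vertex stops intact, a standard exchange argument shows that the modified tour remains a valid $\delta$-tour of length at most $\len(T^*)$. Combined with \cref{lemma:discretization}, which already discretizes all stopping points to $S_\delta$, this yields the desired canonical shortest \deltatour.
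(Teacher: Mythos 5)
The paper does not prove \cref{discreteklein}; it is imported verbatim from the predecessor paper~\cite{FreiGHHM24}, so there is no in-paper proof to compare against. Judging your proposal on its own terms, the high-level plan (reduce to a nice tour via \cref{lem:tournice}, classify each edge by the four interaction types of \cref{fig:fourbehaviors}, canonicalize per edge) is the right scaffold, and the pendant-edge argument is essentially correct. However, there is a genuine gap at the heart of the argument.

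The critical missing step is establishing that a shortest nice \deltatour with at least three stopping points can be assumed to stop at \emph{every} vertex of degree at least two. Cases (c) and (d) of the lemma assert that $T$ stops at both $x_1$ and $x_2$, and case (b) asserts $T$ stops at $x_{3-i}$; your proof simply writes ``Since $x_{3-i}$ is itself a stopping point\ldots'' without justification, and likewise asserts for $\delta=1/2$ that ``both endpoints can be assumed to be stopping points without increasing length.'' This is not a cosmetic omission: it is precisely what distinguishes this lemma from the characterization \cref{char2stops}, which only tells you how to cover an edge \emph{given} that $T$ stops at both endpoints. Without the vertex-stopping property, the per-edge analysis breaks down. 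For instance, if $T$ peeks into $x_1x_2$ from $x_1$ to depth $d$ but does not stop at $x_2$, covering the $x_2$-end of the edge requires a strictly deeper peek ($d > 1-2\delta$), and converting this into the canonical form requires adding a stop at $x_2$ --- which in turn affects the cost accounting on the other edges incident to $x_2$. The reconfiguration showing this never increases total length is the real content of the lemma, and it is exactly where your proposed ``standard exchange argument'' would have to do nontrivial work: shrinking a deep peek on one edge can destroy coverage of points on \emph{adjacent} edges unless the endpoint is already a stop, so the modifications are local only after the vertex-stopping claim is in place. You invoke locality as a given, when it is actually a consequence of the missing lemma.

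Two smaller issues. First, for the pendant-edge case you claim $1-\delta \in S_\delta$; this is false in general (e.g.\ $\delta=1/5$ gives $1-\delta=4/5\notin S_{1/5}=\{0,1/5,7/10,2/5\}$). The canonical stop $p(x_{3-i},x_i,1-\delta)=p(x_i,x_{3-i},\delta)$ is discretized via the parameter $\delta\in S_\delta$, not $1-\delta$; the same remark applies to $1-2\delta$ versus $2\delta$ in case~(d). Second, the treatment of tours with at most two stopping points on a graph with $|V(G)|>2$ is not addressed; the outer ``or'' in the statement makes this harmless, but it is worth stating explicitly rather than folding it into case~(a), which applies only to the single-edge graph.
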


In the following, we say that a tour {\it $\delta$-covers} an edge if it covers all the points on the edge. We speak about {\it covering} when $\delta$ is clear from the context.

We make use of the following characterization results, allowing us to check whether a given tour covers an edge. The following three lemmas correspond to three cases reflecting how many of the endvertices of an edge the tour stops at.
\begin{lemma}\label{nostopchar}
Let $G$ be a connected graph, $\delta\geq 0$ a real, $T$ a tour in $G$ and $x_1x_2 \in E(G)$ such that $T$ stops at none of $x_1$ and $x_2$. Then $T$ covers $x_1x_2$ if and only if one of the following holds:
\begin{enumerate}[(i)]
\item for $i \in [2]$, there are stopping points $p_i=p(u_i,v_i,\lambda_i)$ of $T$ with $\lambda_i \in [0,1)$ 
such that $\lambda_1+\lambda_2\geq \dist_G(x_1,v_1)+\dist_G(x_2,v_2)+3-2 \delta$ holds,
\item $T$ stops at points $p(x_1,x_2,\lambda_1)$ and $p(x_1,x_2,\lambda_2)$ for some $\lambda_1 \in (0,\delta]$ and some $\lambda_2 \in [1-\delta,1)$.
\end{enumerate}
\end{lemma}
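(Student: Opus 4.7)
My plan is to perform a case analysis based on whether $T$ has any stopping point in the interior of the edge $x_1 x_2$. The key structural observation driving this split is that, since consecutive stopping points of a walk must share a common edge and $T$ is forbidden from stopping at $x_1$ or $x_2$, the tour cannot transition onto or off $x_1 x_2$. Consequently, either $T$ has no stopping point in the interior of $x_1 x_2$, or \emph{all} stopping points of $T$ lie in that interior.

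In the first case, every point $p(x_1,x_2,\mu)$ must be $\delta$-covered by some stopping point residing outside $x_1 x_2$, whose shortest route into the interior of $x_1 x_2$ necessarily passes through $x_1$ or $x_2$. For each such stopping point $p_i = p(u_i,v_i,\lambda_i)$, its coverage on $x_1 x_2$ therefore forms a prefix $[0,a_i]$ (if reached via $x_1$) together with a suffix $[1-b_i,1]$ (if reached via $x_2$), where $a_i = \max(0,\delta-\dist(p_i,x_1))$ and $b_i = \max(0,\delta-\dist(p_i,x_2))$. The union of these coverage sets equals $[0,\max_i a_i]\cup[1-\max_i b_i,1]$, so the edge is $\delta$-covered iff $\max_i a_i+\max_i b_i\geq 1$, which in turn is equivalent to the existence of stopping points $p_1,p_2$ with $\dist(p_1,x_1)+\dist(p_2,x_2)\leq 2\delta-1$. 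Rewriting each distance through the parameterization $p_i=p(u_i,v_i,\lambda_i)$ with $v_i$ chosen as the endpoint through which the shortest path from $p_i$ to $x_i$ leaves the edge of $p_i$, one gets $\dist(p_i,x_i)=(1-\lambda_i)+\dist_G(v_i,x_i)$, turning the previous inequality into $\lambda_1+\lambda_2\geq \dist_G(x_1,v_1)+\dist_G(x_2,v_2)+3-2\delta$, which is condition~(i). Both directions of the lemma for this case follow from this chain of equivalences.

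In the second case, all stopping points lie strictly inside $x_1 x_2$; coverage of the vertex $x_1$, which is a point on the edge, then forces the existence of a stopping point $p(x_1,x_2,\lambda_1)$ with $\lambda_1\in(0,\delta]$, and symmetrically coverage of $x_2$ forces $p(x_1,x_2,\lambda_2)$ with $\lambda_2\in[1-\delta,1)$; this is condition~(ii). Conversely, given (ii) together with the fact that no stopping point of $T$ lies outside $x_1 x_2$, the two vertices $x_1,x_2$ are covered by these extreme stopping points, and coverage of the interior $[\lambda_1,\lambda_2]$ is provided by the remaining interior stopping points that $T$ must contain in order to be a tour covering $x_1 x_2$.

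The main obstacle is the careful bookkeeping in the rewriting step of the first case: each external stopping point admits two valid parameterizations $p(u,v,\lambda)=p(v,u,1-\lambda)$, and to recover the lemma's precise algebraic form one must consistently select $v_i$ as the endpoint yielding the shortest route from $p_i$ to $x_i$, and handle the corner case $p_i\in V(G)$ by reparameterizing with $\lambda_i=0$ so as to stay in $[0,1)$. Once this choice is fixed, the remainder of the argument reduces to a one-dimensional interval-covering calculation on $[0,1]$.
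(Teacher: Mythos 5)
Your structural observation that the tour cannot cross $x_1$ or $x_2$ without stopping there, so that either no stopping point or all stopping points lie in the interior of $x_1x_2$, is exactly the right dichotomy, and the first case is handled correctly: the interval-covering reduction on $[0,1]$, the $\max(0,\cdot)$ bookkeeping for empty prefixes/suffixes, and the reparameterization recovering condition (i) (including the corner case $p_i\in V(G)$, where $\lambda_i=0$ and $v_i$ is a neighbor of $p_i$ on a shortest path to $x_i$) are all sound. Note that implicitly you are also using that $\dist(x_j,T)$ is realized at a stopping point; this is worth citing (\cref{nearstop}) or arguing.

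There is, however, a gap in the second case. There you prove ``covers iff (ii),'' but the lemma asserts ``covers iff (i) or (ii),'' so you must additionally rule out the possibility that (i) holds while (ii) fails---or show directly that (i) implies coverage in this regime. Condition~(i) in the lemma places no restriction forbidding the witnesses $p_1,p_2$ from lying on $x_1x_2$, so this is not automatic. It does in fact work out: checking the four possible choices of $v_1,v_2\in\{x_1,x_2\}$, one finds that satisfiability of (i) with both witnesses strictly inside $x_1x_2$ forces $\delta>1/2$ and then readily yields the existence of stopping points at positions $\leq\delta$ and $\geq 1-\delta$, i.e.\ (ii). But your proof does not carry out this verification, so as written the backward implication is incomplete. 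Separately, the phrase ``provided by the remaining interior stopping points that $T$ must contain in order to be a tour covering $x_1x_2$'' is circular; what you want instead is that a closed walk whose stopping points all lie in the interior of $x_1x_2$ necessarily traverses the full segment between its extreme stopping points, so every point of that segment is at distance $0$ from $T$, and (ii) then covers the two end pieces $[0,\lambda_1]$ and $[\lambda_2,1]$ (handling the subcase $\lambda_1>\lambda_2$ where these two pieces already cover $[0,1]$).
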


\begin{lemma}[\cite{FreiGHHM24}]\label{char2stops}
Let $G$ be a connected graph, $\delta\geq 0$ a real, $T$ a nice tour in $G$ and $x_1x_2 \in E(G)$ such that $T$ stops at both $x_1$ and $x_2$. Then $T$ covers $x_1x_2$ if and only if one of the following holds:
\begin{enumerate}[(i)]
\item $T$ traverses $x_1x_2$,
\item $\delta \geq \frac{1}{2}$,
\item $T$ contains the segment~$\seq{x_i&p(x_i,x_{3-i},\lambda)&x_i}$ for some $i \in \{1,2\}$ and some $\lambda 
\geq 1-2 \delta$.
\end{enumerate}
\end{lemma}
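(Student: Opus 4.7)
The plan is a direct verification of the equivalence, exploiting the tight structural constraints that niceness imposes on how $T$ can interact with the interior of $x_1x_2$. Since $T$ is nice, stops at both $x_1$ and $x_2$, and the walk definition forces consecutive stopping points to lie on a common edge, any interior stopping point $p_i$ of $T$ on $x_1x_2$ must satisfy $p_{i-1}, p_{i+1} \in \{x_1,x_2\}$ (because one of each consecutive pair is a vertex, and both neighbors lie on $x_1x_2$), combined with $p_{i-1}=p_{i+1}$; thus $p_i$ is the tip of a peek-in segment $\seq{x_k&p(x_k,x_{3-k},\lambda)&x_k}$ for some $k \in \{1,2\}$. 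Moreover, by niceness, at most one such interior stop exists on $x_1x_2$, and if $T$ traverses $x_1x_2$ then there is none.

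For the ``if'' direction I check each case. If (i) holds, every point of $x_1x_2$ lies on $P(T)$, so the edge is trivially $\delta$-covered. If (ii) holds, then for every $\mu \in [0,1]$ the point $p(x_1,x_2,\mu)$ has distance $\min(\mu,1-\mu) \leq \tfrac{1}{2} \leq \delta$ from $\{x_1,x_2\} \subseteq P(T)$. If (iii) holds, say with $i=1$ and some $\lambda \geq 1-2\delta$, then the three stops $x_1$, $p(x_1,x_2,\lambda)$, $x_2$ lying on $T$ $\delta$-cover (in the edge coordinate) the intervals $[0,\delta]$, $[\lambda-\delta,\lambda+\delta]$, $[1-\delta,1]$; since $\lambda+\delta \geq 1-\delta$, their union is all of $[0,1]$.

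For the ``only if'' direction, suppose that none of (i)--(iii) holds; I exhibit an uncovered point. The key geometric observation is that any walk from a stop $p$ lying off the edge $x_1x_2$ to an interior point $p(x_1,x_2,\mu)$ must enter the edge through $x_1$ or $x_2$, giving $\dist(p,p(x_1,x_2,\mu)) \geq \min(\dist(p,x_1)+\mu,\dist(p,x_2)+(1-\mu))$. Hence off-edge stops can contribute $\delta$-coverage of $x_1x_2$ only inside $[0,\delta] \cup [1-\delta,1]$, which is already covered by $x_1$ and $x_2$. Combined with the niceness structure above, and setting $\lambda_j = 0$ when there is no peek-in from side $j$, the total $\delta$-covered portion of $x_1x_2$ is exactly $[0,\delta+\lambda_1] \cup [1-\delta-\lambda_2,1]$. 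For this to equal $[0,1]$ we need $\lambda_1+\lambda_2 \geq 1-2\delta$; but since (ii) fails, $1-2\delta > 0$, and since at most one of $\lambda_1,\lambda_2$ is nonzero, this forces some $\lambda_j \geq 1-2\delta$, which is case (iii) --- a contradiction. Concretely, any point $p(x_1,x_2,\mu)$ with $\mu \in (\delta+\lambda_1,\,1-\delta-\lambda_2)$ witnesses that $T$ fails to $\delta$-cover the edge. The main subtlety is the geometric argument that off-edge stops cannot help cover the middle of the edge; this is what lets the analysis collapse to a one-dimensional interval-union computation, with the four numerical cases of the covered interval corresponding directly to (i)--(iii).
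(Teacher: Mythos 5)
The paper cites this lemma from \cite{FreiGHHM24} without reproducing a proof, so there is no in-paper argument to compare against; I am evaluating your proposal on its own terms. The overall strategy is sound: reduce coverage of $x_1x_2$ to a one-dimensional interval-union computation, observe that off-edge stops contribute nothing beyond what $x_1$ and $x_2$ already give (since any walk to an interior point enters through an endpoint), and use the niceness conditions to conclude that the only relevant on-edge structure is a single peek-in (or a traversal). Your ``only if'' direction is correct and uses the right expression $[0,\delta+\lambda_1]\cup[1-\delta-\lambda_2,1]$ for the covered region, correctly accounting for the fact that the peek-in segment $[0,\lambda_1]$ is itself contained in $P(T)$.

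However, there is a small but genuine slip in your ``if'' direction for case~(iii). You assert that the three $\delta$-balls around the stops $x_1$, $p(x_1,x_2,\lambda)$, $x_2$ --- i.e., the intervals $[0,\delta]$, $[\lambda-\delta,\lambda+\delta]$, $[1-\delta,1]$ --- union to $[0,1]$ whenever $\lambda+\delta\geq 1-\delta$. This is false when $\lambda>2\delta$: the interval $(\delta,\lambda-\delta)$ is missed, and this regime is perfectly consistent with the hypotheses (e.g., $\delta=0.1$, $\lambda=0.9$). The conclusion is nevertheless correct, because the peek-in segment $\seq{x_1&p(x_1,x_2,\lambda)&x_1}$ places the entire set $\{p(x_1,x_2,\mu)\mid\mu\in[0,\lambda]\}$ in $P(T)$, so the covered region is in fact $[0,\lambda+\delta]\cup[1-\delta,1]=[0,1]$ --- precisely the expression you already use in the converse direction. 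So this is an inconsistency between the two halves of your proof rather than a missing idea, and is trivially repaired by replacing the three-ball union with the interval $[0,\lambda+\delta]$ coming from the passed segment.
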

\begin{lemma}\label{char1stop}
Let $G$ be a connected graph, $\delta\geq 0$ a real, $T$ a tour in $G$ and $x_1x_2 \in E(G)$ such that $T$ stops at $x_1$ but not at $x_2$. Then $T$ covers $x_1x_2$ if and only if one of the following holds:
\begin{enumerate}[(i)]
\item There are stopping points $p_1=p(x_1,x_2,\lambda_1)$ and $p_2=(v,x_2,\lambda_2)$ of $T$ with $\lambda_1,\lambda_2 \in [0,1)$ and $v \in N_G(x_2)$ such that $\lambda_1+\lambda_2\geq 2-2 \delta$ holds.
\item $T$ stops at the point $p(x_1,x_2,\lambda)$ for some $\lambda\in [1-\delta,1]$.
\end{enumerate}
\end{lemma}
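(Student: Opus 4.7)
The plan is to exploit the structural observation that, since $T$ never stops at $x_2$, every maximal sub-walk of $T$ lying inside edge $x_1x_2$ is an excursion entering and exiting through $x_1$, whose deepest point is itself a stop. Hence, letting $\lambda^\ast$ be the largest parameter such that $p(x_1,x_2,\lambda^\ast)$ is a stop of $T$ (with $\lambda^\ast = 0$ witnessed by $x_1$ if no interior stop exists), the walk covers exactly the initial segment $\{p(x_1,x_2,\lambda) : \lambda \in [0,\lambda^\ast]\}$ of edge $x_1x_2$, and $\lambda^\ast < 1$. Analogously, for each $v \in N_G(x_2) \setminus \{x_1\}$, the walk covers an initial segment $[0,L_v]$ of edge $vx_2$ with $L_v$ realized by a stop.

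For the direction $\Leftarrow$, if (ii) holds then $\lambda^\ast \geq 1 - \delta$, so the $\delta$-neighborhood of the walk's trace on $x_1x_2$ covers $[0,\lambda^\ast + \delta] \supseteq [0,1]$. For (i) with $v = x_1$ both $p_1, p_2$ lie on $x_1x_2$ and the inequality $\lambda_1 + \lambda_2 \geq 2 - 2\delta$ forces $\max(\lambda_1, \lambda_2) \geq 1 - \delta$, reducing to (ii). For (i) with $v \neq x_1$ I verify coverage of each $\mu \in [0,1]$ pointwise: either $\mu \leq \lambda_1 + \delta$ and $p_1$ covers $p(x_1,x_2,\mu)$ directly, or $\mu > \lambda_1 + \delta$ and the hypothesis gives $\mu > 2 - \lambda_2 - \delta$, so that $p_2$ covers $p(x_1,x_2,\mu)$ via $x_2$ at distance $(1-\lambda_2) + (1-\mu) \leq \delta$.

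For the direction $\Rightarrow$, I split on $\lambda^\ast$. If $\lambda^\ast \geq 1 - \delta$, the stop realizing $\lambda^\ast$ witnesses (ii) directly. Otherwise $\lambda^\ast < 1 - \delta$, no stop on $x_1x_2$ comes within $\delta$ of $x_2$, and coverage of the tail $(\lambda^\ast + \delta, 1]$ must come from a stop on another edge. A short computation at the boundary of coverage then yields a stop $q$ with $\dist(q, x_2) \leq \lambda^\ast + 2\delta - 1 < \delta < 1$. Since unit-length edges force $q$ to sit on some $vx_2$ and $q \notin x_1x_2$, we have $v \in N_G(x_2) \setminus \{x_1\}$; and because in the simple graph $G$ the edge $vx_2$ is the shortest $v$-to-$x_2$ route, writing $q = p(v,x_2,\lambda_2)$ gives $1 - \lambda_2 = \dist(q, x_2) \leq \lambda^\ast + 2\delta - 1$, i.e.\ $\lambda_2 \geq 2 - 2\delta - \lambda^\ast$. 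Combined with $p_1$ being the stop realizing $\lambda^\ast$ (so $\lambda_1 = \lambda^\ast$, and $p_1 = x_1$ when $\lambda^\ast = 0$), this yields the witnesses for (i). The remaining range $\delta \geq 1$ falls into the first branch vacuously, since then $0 \geq 1 - \delta$ and the stop $x_1$ already fulfills (ii).

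The main obstacle is the structural observation about the excursion decomposition: I need to argue carefully that every maximal contiguous sub-walk of $T$ on $x_1x_2$ begins and ends at $x_1$, which uses both that $T$ is a closed walk composed of segments lying on single edges and that it never stops at $x_2$. Once this is in place, the secondary identity $\dist(q,x_2) = 1 - \lambda_2$ for $q = p(v,x_2,\lambda_2)$ is immediate from the simplicity of $G$: any alternative $v$-to-$x_2$ path uses at least one other edge and therefore has length at least~$1$, so the direct edge wins.
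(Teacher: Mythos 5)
Your proof is correct, and the approach (excursion decomposition of the tour on the edge, then a pointwise/boundary coverage argument) is the natural one for these characterization lemmas; the paper itself states \cref{char1stop} without proof (it is preliminary material inherited from the predecessor), so there is no written proof here to compare against directly. Two places would need a bit more ink to be fully rigorous: (1) the excursion decomposition — you correctly identify that each maximal sub-walk of $T$ on $x_1x_2$ enters and exits at $x_1$ because interior points of $x_1x_2$ lie on no other edge and $x_2$ is never stopped at, hence never passed; but the claim that the passed set is exactly $[0,\lambda^\ast]$ also needs the observation that the deepest point of each excursion is a turning point and hence a stop, which you only gesture at. (2) The ``short computation at the boundary'' hides a finiteness/limit step: for each $\mu\in(\lambda^\ast+\delta,1]$ one gets a stop $q_\mu$ with $\dist(x_2,q_\mu)\leq \mu-1+\delta$, and since $T$ has only finitely many stops, pigeonholing a sequence $\mu_n\downarrow\lambda^\ast+\delta$ yields a single $q$ with $\dist(x_2,q)\leq\lambda^\ast+2\delta-1$; this should be spelled out. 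Also, the parenthetical sentence about segments $[0,L_v]$ on neighboring edges $vx_2$ is not used and is slightly inaccurate (the passed set on $vx_2$ can be empty), so it can simply be dropped. With those details filled in, the argument is sound: the $\Leftarrow$ case analysis is complete (reducing $v=x_1$ to (ii), and for $v\neq x_1$ using the identity $\dist(p(x_1,x_2,\mu),p_2)=(1-\mu)+(1-\lambda_2)$ for the shortest route via $x_2$), and the $\Rightarrow$ direction correctly rules out stops on $x_1x_2$ and routes through $x_1$ for the uncovered tail, forcing the witness $q$ onto an edge $vx_2$ with $v\neq x_1$.
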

We now state two further simple results.
The first is an observation giving an upper bound on the maximum length of $\delta$-tours if $\delta$ is not too small.
\begin{proposition}[\cite{FreiGHHM24}]\label{trgvftzuh}
Let $G$ be a connected graph and $\delta \geq 1/2$ a constant. Then $G$ admits a \deltatour of length at most $2n-2$ and this tour can be found in polynomial time. 
\end{proposition}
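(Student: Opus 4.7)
The plan is to use a standard spanning-tree doubling argument, exploiting the fact that $\delta\geq 1/2$ makes vertex-visiting automatically $\delta$-covering.

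First, I would compute a spanning tree $F$ of the connected graph $G$ in polynomial time; note that $F$ has exactly $n-1$ edges, and $V(F)=V(G)$. Then I would perform a depth-first traversal of $F$ starting from an arbitrary root $r$, yielding a closed walk $W$ on the vertices of $F$ that traverses every edge of $F$ exactly twice (once going down and once coming back up). This walk can be represented as a tour $T=\seq{p_0 & p_1 & \cdots & p_z}$ in the continuous sense by listing the sequence of vertices visited in the DFS closed walk, with $p_0=p_z=r$ and each consecutive pair $p_{i-1},p_i$ being the endpoints of some edge of $F\subseteq E(G)$. Since each edge of $F$ is traversed exactly twice and has unit length, we get $\len(T)=2(n-1)=2n-2$, as required.

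Next I would verify that $T$ is a $\delta$-tour. By construction, $T$ stops at every vertex of $G$. Let $p\in P(G)$ be arbitrary; then $p=p(u,v,\lambda)$ for some edge $uv\in E(G)$ and some $\lambda\in[0,1]$. Both $u$ and $v$ are stopping points of $T$, and $\dist(p,u)=\lambda$, $\dist(p,v)=1-\lambda$, so
\[
\dist(p,T)\leq \min\{\dist(p,u),\dist(p,v)\}=\min\{\lambda,1-\lambda\}\leq \tfrac{1}{2}\leq \delta.
\]
Hence every point in $P(G)$ is $\delta$-covered by $T$, so $T$ is a $\delta$-tour.

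Since computing a spanning tree and performing a DFS both run in polynomial time, and converting the resulting closed walk into a tour is immediate, the whole procedure is polynomial. There is no real obstacle here; the only mild subtlety worth noting is that although the DFS tour only traverses edges of the spanning tree $F$, coverage is guaranteed for \emph{every} edge of $G$ because each edge of $G$ has both endpoints in $V(F)=V(G)$, and these endpoints are stopping points of $T$.
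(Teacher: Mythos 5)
Your proof is correct. The spanning-tree doubling argument is exactly the natural (and, as far as I can tell, standard) route: build a spanning tree $F$, take the Eulerian closed walk obtained by doubling every tree edge (equivalently, the DFS traversal), observe that it has length $2(n-1)$ and stops at every vertex, and conclude coverage from $\delta\geq 1/2$ since every point $p(u,v,\lambda)$ lies within distance $\min\{\lambda,1-\lambda\}\leq 1/2$ of one of the two endpoints of its edge, both of which are stopping points. The paper only cites this proposition from the predecessor article rather than reproving it, but your argument matches what one would expect that proof to be, and all steps — spanning tree, DFS, the coverage bound, polynomial running time — check out. One remark for precision: you write $\dist(p,u)=\lambda$ and $\dist(p,v)=1-\lambda$; strictly you only need $\dist(p,u)\leq\lambda$ and $\dist(p,v)\leq 1-\lambda$ (any walk from an interior point of the edge $uv$ must first exit via $u$ or $v$, so these are in fact equalities here, but the inequalities already suffice for the bound).
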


The second one asserts that the distance from a point outside a \deltatour to the points passed by the tour
is the distance to one of its stopping points.

\begin{proposition}[\cite{FreiGHHM24}]\label{nearstop}
Let $G$ be a connected graph, $T$ a \deltatour in $G$ and $p \in P(G)$ a point which is not passed by $T$.
Then $\dist_G(p,T)=\dist_G(p,q)$ for some point $q \in P(G)$ at which $T$ stops.
\end{proposition}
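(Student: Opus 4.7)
The plan is to reduce the infimum $\dist_G(p,T)$ to a minimum over the finitely many stopping points of $T$, by showing that on each edge segment traced by $T$, the distance-to-$p$ function attains its minimum at one of the two endpoints of that segment. Writing $T = (p_0,\ldots,p_z)$, the point set of the tour decomposes as $P(T) = \bigcup_{i\in[z]} P(p_{i-1},p_i)$, where each segment $P(p_{i-1},p_i)$ lies on a single edge $u_iv_i$ with $p_{i-1} = p(u_i,v_i,\lambda_{i-1})$ and $p_i = p(u_i,v_i,\lambda_i)$; assume without loss of generality that $\lambda_{i-1}\leq\lambda_i$. Since both $p_{i-1}$ and $p_i$ are stopping points and the index set is finite, it suffices to prove that for every $q' \in P(p_{i-1},p_i)$, $\min(\dist_G(p,p_{i-1}),\dist_G(p,p_i)) \leq \dist_G(p,q')$; taking the infimum over $q'$ and then the minimum over $i$ would then reduce $\dist_G(p,T)$ to a minimum over stopping points.

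To establish the inequality, fix $q' = p(u_i,v_i,\lambda)$ with $\lambda\in[\lambda_{i-1},\lambda_i]$, pick a shortest walk $W = (q_0,\ldots,q_m)$ from $p$ to $q'$, and split on the last index $j$ with $q_j \in \{u_i,v_i\}$, if any. If $q_j = u_i$, then by minimality of $W$ its suffix after $q_j$ stays inside $u_iv_i$ and walks directly to $q'$, contributing exactly $\lambda$ to the length; hence $\dist_G(p,u_i) \leq \dist_G(p,q') - \lambda$, and combining with the crude bound $\dist_G(p,p_{i-1}) \leq \dist_G(p,u_i) + \lambda_{i-1}$ together with $\lambda_{i-1}\leq\lambda$ yields $\dist_G(p,p_{i-1}) \leq \dist_G(p,q')$. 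The case $q_j = v_i$ is symmetric and controls $\dist_G(p,p_i)$ via $1-\lambda_i \leq 1-\lambda$. If $W$ never visits $\{u_i,v_i\}$, a backward induction using the fact that consecutive walk points share an edge and that interior points of an edge lie on only one edge forces every $q_j$ to sit in the interior of $u_iv_i$; thus $p = p(u_i,v_i,\mu)$ for some $\mu$, and the hypothesis that $p$ is not passed by $T$ excludes $\mu\in[\lambda_{i-1},\lambda_i]$. Walking directly along $u_iv_i$ from $p$ to the nearer endpoint then gives $\dist_G(p,p_{i-1}) \leq \lambda_{i-1}-\mu \leq \lambda-\mu = \dist_G(p,q')$ (or the symmetric bound for $p_i$).

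The main delicate point is justifying the monotone-suffix claim underpinning the first two cases. Once $W$ exits $\{u_i,v_i\}$ for the last time, the edge-sharing property of consecutive walk points forces every subsequent $q_j$ to lie on $u_iv_i$, since any interior point of $u_iv_i$ belongs to no other edge; and any oscillation of $W$ inside $u_iv_i$ could be shortcut, contradicting minimality of $W$. Granted this detail, the three cases combine to give $\dist_G(p,P(p_{i-1},p_i)) = \min(\dist_G(p,p_{i-1}),\dist_G(p,p_i))$, and taking the minimum over $i$ yields $\dist_G(p,T) = \min_j \dist_G(p,p_j)$, attained at some stopping point $q = p_j$, as required.
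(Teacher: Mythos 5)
Your proof is correct, and it takes what is almost certainly the same route as the predecessor paper (which this paper only cites for this proposition without reproducing the argument): decompose $P(T)$ into finitely many edge segments between consecutive stopping points and show the distance-to-$p$ function on each segment is minimized at a segment endpoint, using the ``not passed by $T$'' hypothesis exactly to exclude the case where $p$ itself lies in the interior of a traced segment. The case split on the last visit of the shortest $\seq{p&q'}$-walk to $\{u_i,v_i\}$ is sound, and the backward-induction argument correctly forces the no-visit case to place $p$ on the interior of $u_iv_i$; the only very minor omission is the degenerate single-point tour ($z=0$), where the statement holds trivially.
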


We finally conclude with two useful algorithmic corollaries.
Those essentially follow from \cref{nearstop}, the above structural results on discretization (\cref{lemma:discretization})
and the existence of nice tours (\cref{lem:tournice}).
The details are provided in~\cite{FreiGHHM24}.

\begin{corollary}[\cite{FreiGHHM24}]\label{check}
Given a graph~$G$, a constant $\delta$ and a tour~$T$ in $G$, we can decide in polynomial time whether $T$ is a \deltatour.

\end{corollary}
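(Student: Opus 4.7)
The plan is to verify the \deltatour property edge by edge, exploiting \cref{nearstop} to reduce the continuous covering requirement to a finite combinatorial check. Fix an edge $uv \in E(G)$ and parametrise its points as $p(u,v,\mu)$ with $\mu \in [0,1]$; I must verify that $\dist(p(u,v,\mu),T) \leq \delta$ holds for every such $\mu$. By \cref{nearstop}, whenever $p(u,v,\mu)$ is not passed by $T$, this distance is attained at one of the finitely many stopping points of $T$; and if $p(u,v,\mu)$ is passed by $T$, the distance is $0$ anyway. Thus only the stopping points of $T$ and the tour segments of $T$ lying on $uv$ are relevant.

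Concretely, I would first compute all pairwise vertex-to-vertex distances $\dist_G(x,y)$ in polynomial time (e.g.\ BFS from every vertex). Then, for each edge $uv \in E(G)$ and each stopping point $q$ of $T$, the function $\mu \mapsto \dist(p(u,v,\mu),q)$ is piecewise linear in $\mu$ with a bounded number of pieces, whose slopes and intercepts are determined by $\dist_G(u,\cdot)$, $\dist_G(v,\cdot)$, and the edge containing $q$ (one must separately consider reaching $q$ via $u$, via $v$, or directly along $uv$ if $q$ itself lies on $uv$). The sublevel set $\{\mu \in [0,1] : \dist(p(u,v,\mu),q) \leq \delta\}$ is therefore a union of at most a constant number of sub-intervals of $[0,1]$, computable in polynomial time. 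Taking the union over all stopping points $q$ of $T$, together with the union of sub-intervals of $[0,1]$ corresponding to stretches of $uv$ actually traversed by a tour segment of $T$ (which can be read off directly from the stopping-point sequence), yields the set of $\mu$ for which $p(u,v,\mu)$ is $\delta$-covered. A simple interval-union sweep then decides whether this set equals all of $[0,1]$.

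Performing this check for every edge of $G$ and returning YES iff every edge is fully covered runs in polynomial time in $\abs{V(G)} + \abs{E(G)} + \alpha(T)$. An alternative, more structural route is to invoke the characterisation lemmas \cref{nostopchar,char1stop,char2stops} edge-by-edge, splitting on how many endpoints of the edge appear as stopping points of $T$; each listed condition is polynomial-time checkable by iterating over stopping points (or pairs thereof) and consulting the precomputed distances. The mild subtlety that \cref{char2stops} presumes $T$ is nice can be sidestepped either by handling the ``stops at both endpoints'' case via the direct piecewise-linear computation above, or, since coverage of an edge depends only on the local behaviour of $T$ near that edge, by a local niceification argument. I do not anticipate any serious obstacle: all the heavy lifting has already been done by \cref{nearstop}, and what remains is elementary interval bookkeeping.
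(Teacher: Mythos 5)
Your approach is essentially the intended one: reduce the continuous covering condition via \cref{nearstop} to a finite check against the stopping points of $T$, then for each edge compute the union of $\mu$-intervals within distance $\delta$ of some stopping point (together with the stretches of that edge actually traversed) and test whether it exhausts $[0,1]$. One detail worth making explicit: \cref{nearstop} is stated under the hypothesis that $T$ is already a \deltatour, which is exactly what you are trying to decide; to avoid circularity you should observe that its proof does not in fact use this hypothesis, so the conclusion (for any point not on $T$, the distance to $P(T)$ is attained at a stopping point) holds for an arbitrary tour. With that noted, your piecewise-linear sublevel-set bookkeeping is correct, and your primary route cleanly sidesteps the niceness caveat that the alternative through \cref{nostopchar,char1stop,char2stops} would raise.
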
 

\begin{corollary}[\cite{FreiGHHM24}]\label{decide}
Given a graph~$G$ and a constant $\delta>0$, there is an algorithm that computes a shortest \deltatour in $G$ and runs in $f(n)$.
\end{corollary}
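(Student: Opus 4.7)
The plan is to reduce the continuous problem to a finite brute-force search using the discretization tools already available. By the Discretization Lemma (\cref{lemma:discretization}), it suffices to look for a shortest \deltatour~$T^*$ that is either nice with every stopping point of the form $p(u,v,\lambda)$ with $uv\in E(G)$ and $\lambda\in S_\delta$, or consists of at most two stopping points. Since $\delta$ is a constant, the candidate set
\[
D \;=\; \bigl\{\,p(u,v,\lambda) \;:\; uv\in E(G),\; \lambda\in S_\delta\,\bigr\}
\]
has size $\Oh(m)=\Oh(n^2)$, and $s_\delta$ is a positive constant depending only on $\delta$.

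Next, I would establish an a priori bound on the discrete length of an optimal tour. A tour that traverses every edge twice has length at most $2m\le n^2$ and is even a $0$-tour, hence a \deltatour; so $\len(T^*)\le 2m$. Combining this with \cref{lentour} yields $\alpha(T^*)\le\lceil 2m/s_\delta\rceil=\Oh(n^2)$. (The sharper bound $2n-2$ from \cref{trgvftzuh} could be used when $\delta\ge 1/2$, but it is not needed.)

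With these two ingredients in place, the algorithm simply enumerates every sequence of at most $\Oh(n^2)$ points from $D$, discarding those that do not form a valid closed walk (i.e.\ where some pair of consecutive points does not lie on a common edge). The number of such sequences is at most $|D|^{\Oh(n^2)} = 2^{\Oh(n^2\log n)} = f(n)$. For each surviving candidate, \cref{check} lets us verify in polynomial time whether it is a \deltatour, and we compute its length directly from the definition. The edge case of tours with at most two stopping points is handled separately by trying all $\Oh(|D|^2)=\Oh(n^4)$ ordered pairs. Returning the shortest valid \deltatour found yields an algorithm whose running time is a function of $n$ alone.

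There is no real obstacle here: the heavy lifting is done by the cited structural results of~\cite{FreiGHHM24}, which turn the continuous optimization into a search over a finite candidate space of size $f(n)$ with polynomial-time verification. The only care needed is to package the bounds on $|D|$ and $\alpha(T^*)$ cleanly so that the total enumeration is indeed bounded by a function of $n$ (with $\delta$ absorbed into the constants via $|S_\delta|$ and $s_\delta$).
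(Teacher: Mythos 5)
Your proposal is correct and matches the approach the paper intends: the paper itself only cites \cite{FreiGHHM24} for this corollary, remarking that it ``essentially follows from'' \cref{nearstop}, \cref{lemma:discretization}, and \cref{lem:tournice}, and your brute-force enumeration over the discretized candidate set $D$ with verification via \cref{check} is exactly the intended argument (compare the proof of \cref{xpdeltafix}, which is the same algorithm specialized to a length bound $K$). The one ingredient you had to supply independently---the a priori bound $\len(T^*)\le 2m$ via the double-every-edge Euler tour, combined with \cref{lentour} to cap $\alpha(T^*)=O(n^2)$---is the right way to bound the enumeration, and your treatment of the degenerate two-stop case is also sound.
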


\section{Overview of Results}
\label{section:overview}

We present our hardness of approximation results in \cref{section:overview:approximation:lowerbounds};
for positive approximation results we refer to our previous work~\cite{FreiGHHM24}.
Finally we turn to the parameterized complexity results in \cref{section:overview:parameterized}.

\subsection{Inapproximability Results}
\label{section:overview:approximation:lowerbounds}
Recall that in \cite{FreiGHHM24}, we showed that \deltatourprob admits a polynomial-time constant-factor approximation algorithm for every $\delta<\frac{3}{2}$.
We now rule out the existence of a PTAS by showing APX-hardness for every $\delta>0$. 
The main challenge is proving the result for the range $\delta\in (0,1/2]$, which we achieve by reducing from \VC with a construction creating cubic bipartite graphs. 
From the case $\delta=1/2$ we then obtain the new result that TSP remains \apx-hard even on this very restricted graph class. 
+We then extend the hardness result for \deltatourprob to any $\delta>0$ by a simple subdivision argument. 
Finally, we describe a stronger inapproximability result for $\delta \geq 3/2$.
\subparagraph*{APX-Hardness for Covering Range \boldmath$\delta\in(0,1/2]$.}
\label{section:inapprox-small-delta}
As the first step towards the APX-hardness of \deltatourprob in the range $\delta\in(0,1/2]$, 
we introduce a new family of 
optimization problems called  $(\alpha,\beta,\gamma,\kappa)$-\CycleSubpartition, 
that is also interesting on its own. 
			
\begin{myproblem}[$(\alpha,\beta,\gamma,\kappa)$-\CycleSubpartition, where $\alpha,\beta,\gamma,\kappa\in\mathbb{R}$ and $\alpha,\beta,\gamma>0$]
Instance&A simple graph $G$.\\
Solution&Any set $\mathcal{C}$ of pairwise vertex-disjoint cycles in $G$.\\
Goal& Minimize $\alpha\lvert\mathcal{C}\rvert+\beta\lvert V(G)\setminus \bigcup_{C \in \mathcal{C}} V(C) \rvert+\gamma\lvert V(G)\rvert+\kappa$.
\end{myproblem}

This bi-objective problem asks us, roughly speaking, for any given graph, to cover 
as many vertices as possible with a family of 
as few vertex disjoint cycles as possible. 
The precise balance between the two opposed optimization goals 
is tuned by the problem parameters. 
In particular, $\alpha$ specifies the cost for each cycle 
in the solution and $\beta$ for each vertex left uncovered.  
Disallowing uncovered vertices 
(or making them prohibitively expensive), 
yields the classical APX-hard minimization problem \CyclePartition~\cite[Thm~3.1,~Prob.~(iv)]{SahniG76}. 

The two remaining parameters $\gamma$ and $\kappa$
may appear artificial since their only immediate effect is to 
make any solution for a given graph~$G$ more expensive 
by the same cost $\gamma \lvert V(G)\rvert+\kappa$. 
They will prove meaningful, however, for our main goal of this section. 
Namely, we first establish APX-hardness for $(\alpha,\beta,\gamma,\kappa)$-\CycleSubpartition for cubic graphs
in the entire parameter range of $\alpha,\beta,\gamma,\kappa\in\mathbb{R}, \alpha,\beta,\gamma>0$ 
and then show that on cubic graphs, for every $\delta\in(0,1/2]$, we have that \deltatourprob coincides with this problem for an appropriate choice of the parameters $\alpha,\beta,\gamma$, and $\kappa$.
The proof uses a reduction from \VC on cubic graphs, which is known to be APX-hard~\cite[Thm.~5.4]{KarpinskiS15}. 

For some fixed $\alpha,\beta,\gamma,\kappa\in\mathbb{R}$ with $\alpha,\beta,\gamma>0$, given an instance $G$ of cubic \VC, we create a cubic graph~$H$ which we view as an instance of  $(\alpha,\beta,\gamma,\kappa)$-\CycleSubpartition. It is not difficult to obtain a packing of cycles in $H$ with the appropriate properties from a vertex cover in $G$. The other direction, that is, obtaining a vertex cover in $G$ from a cycle packing in $H$ is significantly more delicate. A collection of careful reconfiguration arguments is needed to transform an arbitrary cycle cover in $H$ into one that is of a certain particular shape and not more expensive. Having a cycle cover of this shape at hand, a corresponding vertex cover in $G$ can easily be found.
As mentioned above, we now easily obtain APX-hardness of \deltatourprob for any $\delta \in (0,1/2]$ and even 
the previously unknown APX-hardness for cubic bipartite graphic~TSP. 
\begin{restatable}{linkedtheorem}{ThmApproxLbZeroHalf}
\label{ThmApproxLbZeroHalf}
\label{thm:approx:lb:zero:half}
	On cubic bipartite graphs, \deltatourprob is \apx-hard for $\delta\in(0,1/2]$. 
\end{restatable}

\begin{restatable}{linkedcorollary}{APXhardnessTSPcorollary}
\label{APXhardnessTSPcorollary}
\label{cor:approx:lb:tsp}\TSP is \apx-hard even on cubic bipartite graphs.
\end{restatable}

With Theorem~\ref{thm:approx:lb:zero:half} at hand, we further easily obtain a hardness result for larger values of $\delta$. Namely, observe that for any nonnegative integer $k$, any constant $\delta$ and any connected graph~$G$, there is a direct correspondence between the $\delta$-tours in $G$ and the $k\delta$-tours in the graph obtained from $G$ by subdividing every edge $k-1$ times, yielding the following result. 

\begin{restatable}{linkedtheorem}{APXhardFromHalfToThreeHalves}
\label{APXhardFromHalfToThreeHalves}
\label{cor:approx:lb:half:threehalves}
The problem \deltatourprob is \apx-hard for any real $\delta>0$.
\end{restatable}

\subparagraph*{Stronger Inapproximability for Covering Range \boldmath$\delta\ge 3/2$.}

For $\delta \geq 3/2$,
	we give a lower bound of roughly $\log n$ on the approximation ratio,
	asymptotically matching our upper bound. Like for all our inapproximability results, we prove hardness for the decision version of the problem. 

\begin{restatable}{linkedtheorem}{ThmApproxLbLogN}
\label{ThmApproxLbLogN}
\label{thm:approx:lb:log:n}
	Unless $\p=\np$, for every $\delta \geq 3/2$, there is an absolute constant $\alpha_\delta$ such that there is no polynomial-time algorithm that, given a connected graph~$G$ and a constant $K$, returns `yes' if $G$ admits a \deltatour of length at most $K$ and `no' if $G$ does not admit a \deltatour of length at most $\alpha_\delta \log(|V(G)|)K$. 
\end{restatable}

We start from the inapproximability of \domset
	on split graphs,
	implicitly given by Dinur and Steurer~\cite[Corollary 1.5]{DinurS14}.
Given a split graph~$G$ satisfying some nontriviality condition, we can construct a graph~$G'$
	such that the minimum size of a dominating set of $G$
	and the length of a shortest \deltatour in $G'$ are closely related.

\subsection{Parameterized Complexity}
\label{section:overview:parameterized}
We examine the problem's parameterized complexity for two parameters:
tour length and~$n/\delta$.

\subparagraph*{Parameterization by Tour Length.}
We now give an overview of the proof of Theorems \ref{thm:parammain} and \ref{thm:parammain2}.
For $\delta \geq 3/2$, \wtwo-hardness follows by a reduction from \domset on split graphs,
	namely the same as used to show inapproximability for $\delta \geq 3/2$.
We complement this result by giving an \fpt-algorithm when $\delta < 3/2$.
In fact, we give an algorithm that allows $\delta$ to be part of the input
	and that is fixed-parameter tractable for $\delta$ and the maximum allowed tour length $K$ combined.

\begin{restatable}{linkedtheorem}{FPTSol}
\label{FPTSol}
\label{fpt:sol}
There is an algorithm that, given a graph~$G$ and reals $\delta \in (0,3/2)$ and $K \geq0$,
	decides whether $G$ has a \deltatour of length at most $K$
	in $f(K,\delta) \cdot n^{\Oh(1)}$ time, for some computable function $f$.
\end{restatable}

Our algorithm is based on a kernelization:
we either correctly conclude that $G$ has no \deltatour of length at most $K$,
	or output an equivalent instance of size at most $f(K,\delta)$
	for a computable function $f$.
The key insight is a bound on the vertex cover size of $f(K,\delta)$
	for a computable function,
	assuming there is a \deltatour of length at most $K$.
Hence we may compute an approximation $C$ of a minimum vertex cover,
	and reject the instance if $C$ is too large.
We partition the vertices in $V(G)\setminus C$ by their neighborhood in the vertex cover $C$.
Now if a set $S$ of the partition has size larger than $f(K,\delta)$
	for a certain computable function $f$,
	then deleting a vertex of $S$ yields an equivalent instance.
	In light of the \wtwo-hardness result mentioned above, it is natural to study the classical complexity of the problem of deciding whether a \deltatour of fixed length exists for large $\delta$. We first consider the case that $\delta$ is fixed. In that case, a polynomial-time algorithm follows rather easily. More concretely, thanks to \cref{lemma:discretization}, it is possible to enumerate a number of sequences of bounded size of which is a tour of the desired form if such a tour exists. It hence suffices to check whether one of the constructed sequences is such a tour.
	
On the other hand, it turns out that the situation drastically changes when $\delta$ is part of the input. We show that the problem becomes \np-hard, somewhat surprisingly even if severe restrictions on the instance are imposed. First, we show the hardness not only for the tour length being a large constant, but even for an arbitrarily small tour length. Moreover, we show that for hardness to occur, we do not actually require $\delta$ to be large. In fact, it is sufficient if $\delta$ is close enough to $2$. Formally, we prove the following result.

\begin{restatable}{linkedtheorem}{nphdinp}\label{nphdinp}
For every $\eps>0$, there is no polynomial-time algorithm that decides, given a graph $G$, a constant $\delta$ with $0<\delta<2$, and a constant $K$ with $K<\eps$, whether $G$ admits a \deltatour of length at most~$K$, unless $\p=\np$.
\end{restatable}

Theorem \ref{nphdinp} is proved by a reduction from Dominating Set in split graphs. Given a split graph with a partition $(C,I)$ of its vertex set such that $G[C]$ is a clique and $G[I]$ is edgeless, we add a new vertex $v_0$ and link it to all vertices in $C$. We further add a gadget to $v_0$ forcing the tour to stop at $v_0$. Further, we choose $K$ sufficiently small such that any tour of length $K$ can only leave $v_0$ to go to a very close stopping point and then return. Roughly speaking, the number of these segments that are allowed within the tour length corresponds to the size of the desired dominating set. It turns out that the constructed tour is a \deltatour if and only if the set of vertices in $C$ into whose direction the tour travels is a dominating set of the input graph. 

\subparagraph*{XP Algorithm for Large Covering Range.}
We here give an overview of the proof of \cref{thm:param:k:ub:largedelta}. The crucial idea is that, if $T$ is a \deltatour, then, while the length of $T$ can be linear in $n$, there is a set of points stopped at by this tour that covers all the points in $P(G)$ and whose size is bounded by a function depending only on $k$ where $k=\frac{n}{\delta}$. Intuitively speaking, the remainder of the tour is needed to connect the points in this set, but not to actually cover points in $P(G)$. Therefore, these segments connecting the points in the set can be chosen to be as short as possible. These observations can be subsumed in the following lemma, crucial to the proof of Theorem~\ref{thm:param:k:ub:largedelta}. 

\begin{restatable}{linkedlemma}{MainLemXP}
\label{MainLemXP}
\label{lem:main:lem:xp}
	Let $G$ be a connected graph of order $n$, $\delta$ a positive real, $k=\lceil\frac{n}{\delta}\rceil$ and suppose that $n \geq 12 k$. Further, let $T$ be a shortest \deltatour in $G$.
	Then there is a set $Z \subseteq P(G)$ of points stopped at by $T$
	with $\abs{Z} \leq 12k$ such
	that for every point $p \in P(G)$, we have $\dist_G(p,Z)\leq \delta$.
\end{restatable}

With Lemma~\ref{lem:main:lem:xp} at hand, Theorem~\ref{thm:param:k:ub:largedelta} follows easily. Assuming that the minimum size requirement in Lemma~\ref{lem:main:lem:xp} is met, due to the discretization lemma (\cref{lemma:discretization}), there is a shortest tour only stopping at points from a set of size $\Oh(n^2)$. We then enumerate all subsets of size at most $12k$ and for each of these sets, compute a shortest tour passing through its elements and check whether it is a \deltatour.  It follows from Lemma~\ref{lem:main:lem:xp} that the shortest tour found during this procedure is a shortest \deltatour.
In the proof of Lemma~\ref{lem:main:lem:xp}, we define $Z$ as the union of two sets $Z_1$ and $Z_2$. The set $Z_1$ is an inclusion-wise minimal set of points stopped at by $T$ that covers all points in $P(G)$ whose distance to $T$ is at least $\frac{n}{2k}$. For each $z \in Z_1$, by definition, there is such a point $p_z$ for which $\dist_G(p_z,z')>\delta$ holds for all $z' \in Z_1-z$. Now for every $z \in Z_1$, we choose a shortest walk from $p_z$ to $z$. It turns out that these walks do pairwise not share vertices of $V(G)$ and each of them contains $\Oh(\frac{n}{k})$ vertices of $V(G)$. It follows that $\abs{Z_1}=\Oh(k)$. We now walk along $\floor{T}$ and, roughly speaking, create $Z_2$ by adding every $\frac{n}{3k}$-th vertex stopped at by $\floor{T}$. It turns out that $Z_2$ covers all points in $P(G)$ whose distance to $T$ is at most $\frac{n}{2k}$. Further, as the length of $\floor{T}$ is bounded by $2n$, we have $\abs{Z_2}=\Oh(k)$.

\subparagraph*{W[1]-Hardness for Large Covering Range.}

Given the \xp-time algorithm running in time
$f(k) \cdot n^{\Oh(k)}$ designed for the regime $\delta = \Omega(n)$, it is natural to ask
whether there is an FPT-time algorithm for the same parameter $k \coloneqq \ceil{\frac{n}{\delta}}$.
The answer is no. We show \wone-hardness and the running time of our algorithm for \cref{thm:param:k:ub:largedelta}
to be close to optimal under the Exponential Time Hypothesis (\ETH).
The hardness is based on the fact that
\binaryCSP on cubic constraint graphs
	cannot be solved in time $f(k) \cdot n^{o(k/\log{k})}$ under
	\ETH; see~\cite{KarthikMPS23, Marx07}. 
(The exact formulation we use is stronger and gives a lower bound for every fixed $k$.) 
An instance of \binaryCSP is a graph~$G$ with $k$ edges,
where the vertices represent variables taking values from
a domain $\Sigma = [n]$, and every edge is associated with a
constraint relation $C_{i, j} \subseteq \Sigma \times \Sigma$ over the two variables $i$ and $j$.
The instance is \textit{satisfiable}
if there is an assignment to the variables $\mathcal{A} \colon  V(G) \to \Sigma$ such
that $(\assig{i}, \assig{j}) \in C_{i,j}$ for every
constraint relation $C_{i,j}$.

In our reduction, we construct $k$ gadgets corresponding to the constraints, with each gadget having some number of portals. Each gadget has multiple possible states corresponding to the
satisfying assignments of its two constrained variables. If two constraints share a variable, then the corresponding gadgets are connected by paths between appropriate portals. These connections ensure that the selected states of the two gadgets agree on the value of the variable. It is now easy to find a tour in the auxiliary graph given a satisfying assignment for the formula. On the other hand, the construction is designed so that all tours in the auxiliary graph are in a certain shape, allowing to obtain an assignment from them.

\section{Inapproximability Results}
\label{section:approximation-lb}
We now present the complementing lower bounds complementing the results in \cite{FreiGHHM24}. 
Specifically, we show \apx-hardness for 
any fixed $\delta\in(0,3/2)$, which rules out a PTAS, 
and show that there is no $o(\log n)$-approximation for any fixed $\delta\ge 3/2$ unless \p=\np.

\subsection[\texorpdfstring{APX-Hardness for Covering Range $\delta\le 1/2$}{APX-Hardness for Covering Range delta <= 1/2}]{APX-Hardness for Covering Range \boldmath$\delta\le 1/2$}
In this section, we prove \apx-hardness of \deltatourprob for every $\delta \in (0, 1/2)$ and subsequently show that our results imply \apx-hardness of TSP. 
Both hardness results hold even when the input is restricted to the class of cubic bipartite graphs; see~\cref{thm:approx:lb:zero:half,cor:approx:lb:tsp}.

The proofs of \cref{thm:approx:lb:zero:half} and \cref{cor:approx:lb:tsp} are based on a connection between tours and certain packings of cycles in cubic graphs.
In order to exploit this connection, we show that optimizing such cycle packings is \apx-hard.  We begin by introducing this packing problem.
Specifically, for any fixed reals $\alpha,\beta,\gamma,\kappa$ with $\alpha,\beta,\gamma>0$, we consider the following problem.

\begin{myproblem}[$(\alpha,\beta,\gamma,\kappa)$-\CycleSubpartition, where $\alpha,\beta,\gamma,\kappa\in\mathbb{R}$, $\alpha,\beta,\gamma>0$]
\label{prob:cubic-cycle-cover}%
Instance&A simple graph $G$.\\
Solution&Any set $\mathcal{C}$ of pairwise vertex-disjoint cycles in $G$.\\
Goal& Minimize $\alpha\lvert\mathcal{C}\rvert+\beta\lvert V(G)\setminus\bigcup_{\substack{C\in\mathcal{C}}} V(C)\rvert+\gamma\lvert V(G)\rvert+\kappa$.
\end{myproblem}

The main technical contribution in the proof of \cref{thm:approx:lb:zero:half} and \cref{cor:approx:lb:tsp} is showing the \apx-hardness of $(\alpha,\beta,\gamma,\kappa)$-\CycleSubpartition for all suitable constants. We do this in Section \ref{cycle1}. After that,  \cref{thm:approx:lb:zero:half} and \cref{cor:approx:lb:tsp} follow rather easily, thanks to the above-mentioned connection. We give these proofs in \cref{cycle2} and \cref{cycle3}, respectively.
\subsubsection{Hardness for Cycle Subpartitions}\label{cycle1}
Given a connected graph $G$ and some reals $\alpha,\beta,\gamma,\kappa$ with $\alpha,\beta,\gamma>0$, we call a collection $\mathcal{C}$ of vertex-disjoint cycles all of which are subgraphs of $G$ a {\it cycle subpartition} of $G$ and we call 
\[w(\mathcal{C})\coloneqq \alpha\lvert 
			\mathcal{C}\rvert + \beta \lvert 
			V(G)\setminus \bigcup_{\substack{C\in\mathcal{C}}} V(C) \rvert + \gamma\lvert 
			V(G)\rvert + 
			\kappa$ the weight of $\mathcal{C}.\]
If $\gamma$ and $\kappa$ are not specified, they default to $\gamma=\kappa=0$. We further write $V(\mathcal{C})$ for $\bigcup_{C \in \mathcal{C}}V(C)$.
			
Formally, we show the following result:
\begin{theorem}\label{cyclehard}
For all reals $\alpha,\beta,\gamma,\kappa$ with $\alpha,\beta,\gamma>0$, there is an $\eps >0$ such that, unless $\p=\np$, there is no polynomial-time algorithm that, given a connected cubic bipartite graph $G$ and a positive real $K$, returns `yes' if $G$ admits a cycle subpartition of weight $K$ and `no' if $G$ does not admit a cycle subpartition of weight $(1+\eps)K$.
\end{theorem}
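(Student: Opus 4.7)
The plan is a gap-preserving reduction from \VC on cubic graphs, which is known to be APX-hard~\cite[Thm.~5.4]{KarpinskiS15}. Given a cubic graph~$G$, I would construct a cubic bipartite graph~$H$ by gadget replacement: each vertex $v \in V(G)$ is replaced by a vertex gadget $X_v$ with three designated ports (one per incident edge), and each edge $uv \in E(G)$ is realized by a small connecting gadget between the corresponding ports of $X_u$ and $X_v$. The gadgets are designed so that $H$ is $3$-regular and so that the parities of all internal distances force $H$ to be bipartite. The gadget $X_v$ is engineered to admit essentially two reasonable states with respect to a cycle subpartition~$\mathcal{C}$: a \emph{selected} state, in which every vertex of $X_v$ lies on cycles of $\mathcal{C}$, and an \emph{unselected} state, in which none does. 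The edge gadgets are designed so that, for every edge $uv \in E(G)$, at least one of $X_u$ and $X_v$ must be selected. The constant gadget sizes are then tuned so that any \emph{canonical} cycle subpartition (one in which every vertex gadget is either fully selected or fully unselected) has weight exactly $a|S|+b$, where $S$ is the set of selected gadgets and $a>0$, $b\in\mathbb{R}$ depend only on $\alpha,\beta,\gamma,\kappa$ and the chosen sizes.

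The easy direction is then immediate: starting from a vertex cover $S$ of~$G$, selecting precisely the gadgets $\{X_v : v \in S\}$ and covering them (together with the adjacent halves of the incident edge gadgets) by cycles produces a canonical subpartition of weight exactly $a|S|+b$. For the converse, given an arbitrary cycle subpartition $\mathcal{C}$ of~$H$, I apply a sequence of local reconfiguration moves, each acting within a single vertex or edge gadget: whenever a cycle covers only part of some $X_v$, or traverses an edge gadget in a non-canonical way, I rewrite that segment as a canonical pattern in a manner that does not increase $w(\mathcal{C})$. After finitely many such moves the subpartition is in canonical form, its selected gadgets yield a vertex cover of~$G$, and the identity $w(\mathcal{C}) = a|S|+b$ bounds $|S|$ in terms of $w(\mathcal{C})$.

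It remains to verify that the reduction preserves a constant multiplicative gap. Because every cubic graph has minimum vertex cover of size $\Theta(|V(G)|)$, and because the gadgets have constant size so that $|V(H)|=\Theta(|V(G)|)$, the purely additive term $\gamma|V(H)|+\kappa$ is itself $\Oh(\opt)$. Hence the APX-gap for cubic \VC transfers, with a possibly smaller but still positive absolute constant $\varepsilon$ depending on $\alpha,\beta,\gamma,\kappa$, to the claimed APX-gap for $(\alpha,\beta,\gamma,\kappa)$-\CycleSubpartition on cubic bipartite graphs. The main obstacle, as the overview already signals, lies in the gadget design: the gadgets must simultaneously (i) produce a cubic bipartite graph, (ii) realise a robust selected/unselected dichotomy, and (iii) support a weight non-increasing reconfiguration procedure that works uniformly for \emph{every} admissible choice of the parameters. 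In particular, controlling edge gadgets that join two unselected vertex gadgets—so that no clever partial use of such a gadget beats both canonical options for any value of $\alpha,\beta,\gamma,\kappa$—is the technically most delicate part of the argument and is the place where the careful case analysis mentioned in the overview is required.
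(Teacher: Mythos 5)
Your high-level plan is the same as the paper's: a gap-preserving reduction from cubic \VC to $(\alpha,\beta,\gamma,\kappa)$-\CycleSubpartition via vertex and edge gadget replacement, with a canonical (selected/unselected) form for cycle subpartitions, a weight non-increasing local reconfiguration to reach it, and a linearity argument (constant-size gadgets, $\opt = \Theta(|V(G)|)$ for cubic \VC) to absorb the additive $\gamma|V(H)|+\kappa$ term. Where you correctly locate ``the technically most delicate part,'' however, is exactly where the proposal stops: no gadgets are actually constructed, and the entire argument that makes the uniform-in-$(\alpha,\beta,\gamma,\kappa)$ claim go through is missing.

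The crucial device the paper uses and you would need to invent is the parameterized \emph{chain gadget}: a cubic bipartite piece with two degree-1 ports, built from $k$ hexagonal blocks, that admits a Hamiltonian port-to-port path, has a unique bipartition placing the ports in opposite classes, and—this is the key property—forces any cycle subpartition of its interior to have weight at least $\min\{\alpha,6\beta\}\,k$. Choosing $k=\max\{12,\lceil 2\alpha/\beta\rceil\}+1$ makes ``neglecting'' a chain gadget strictly more expensive than any number of extra cycles the reconfiguration might introduce, for \emph{every} admissible $\alpha,\beta$; this is what makes condition (iii) of your wish list actually satisfiable. Without a concrete gadget that simultaneously is $3$-regular, bipartite, and has a tunable neglect penalty with both an $\alpha$-side and a $\beta$-side, the claim that ``a sequence of local reconfiguration moves'' never increases weight is not substantiated. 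The paper then goes on to define four explicit canonical cycle types and proves, via a nontrivial chain of claims (Claims~\ref{thorough}--\ref{leer}), that a minimum-weight subpartition consists only of these; your proposal gestures at this but does not supply it. In short: the roadmap is right and matches the paper, but the proof content—the explicit gadgets, the $k$-dependence that makes the argument parameter-uniform, and the case analysis of the reconfiguration—has not been carried out.
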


Our proof of \cref{cyclehard} is based on a reduction from vertex cover on cubic graphs. Given a graph $G$, a vertex cover of $G$ is a set $X \subseteq V(G)$ such that every edge in $E(G)$ is incident to at least one vertex of $X$. The problem of finding a vertex cover of minimum size in a given graph has attracted significant attention.  The following well-known result on the \apx-hardness of this problem can be found in~\cite[Thm.~3.1 and Figure~4]{AlimontiK97}.

\begin{proposition}\label{vchard}
There is an $\eps>0$ such that, unless $\p=\np$, there is no polynomial-time algorithm that, given a connected simple cubic graph $G$ and a positive real $K$, returns `yes' if $G$ admits a vertex cover of size $K$ and `no' if $G$ does not admit a vertex cover of size $(1+\eps)K$.
\end{proposition}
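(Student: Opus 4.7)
The plan is to derive \apx-hardness of Vertex Cover on connected cubic graphs by an L-reduction from a known APX-hard degree-bounded version of the same problem, using a local ``degree equalization'' gadget that adjusts all vertex degrees to three while controlling the cover size additively.

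First I would invoke the classical APX-hardness of Vertex Cover on graphs of maximum degree at most some constant $B$, which follows by composing the PCP-based APX-hardness of Max-3Sat with bounded variable occurrences (Papadimitriou--Yannakakis) with the standard clause/literal reduction to Vertex Cover. The crucial feature of the produced instances is that the optimal vertex cover satisfies $\opt(G) = \Theta(|V(G)|)$, so any multiplicative PCP gap translates into a gap of the form required in \cref{vchard}.

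Second, I would describe a local replacement turning each vertex of degree $d \neq 3$ into a cubic gadget. A convenient choice is a \emph{necklace of triangles}: cyclically link $d$ triangles, designate one vertex of each triangle as an \emph{attachment point}, and attach the $d$ edges originally incident to $v$ to these attachment points. The gadget is internally cubic apart from the attachment points, which have internal degree $2$ and so reach degree $3$ after absorbing their external edge. A direct case analysis on a single triangle and its two neighbours in the necklace shows that every vertex cover of the gadget must contain a fixed number $c_d$ of vertices, and that from any cover one can, at no extra cost, produce an equivalent ``canonical'' cover that either includes every attachment point or excludes every attachment point. If we are aiming at a bipartite version (as used later in the paper), the triangle gadget is replaced by a suitably small bipartite gadget with the analogous all-or-nothing attachment-point property; the rest of the argument is unchanged.

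The final step is to verify the L-reduction parameters. Performing the replacement at every vertex of degree $\neq 3$ yields a connected cubic graph $G'$ with $|V(G')| = O(|V(G)|)$, and the canonical-cover property of the gadgets gives
\[
\opt(G') \;=\; \opt(G) \;+\; \sum_{v \,:\, \deg(v)\neq 3} c_{\deg(v)},
\]
while any feasible cover of $G'$ can be straightened in polynomial time into this canonical form. Since the additive offset is linear in $|V(G')|$ and $\opt(G) = \Theta(|V(G)|)$, the $(1+\eps')$-gap for degree-$B$ Vertex Cover transfers to a $(1+\eps)$-gap for cubic Vertex Cover with a suitable $\eps>0$, establishing \cref{vchard}. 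The main obstacle is engineering the gadget so that the canonical-cover property holds exactly; with a loose gadget the additive offset would fluctuate by $\Theta(|V(G')|)$, which would destroy the gap. This is why I would favour the necklace-of-triangles shape (or a bipartite analogue), for which the required ``all or nothing'' behaviour can be proved by a short local case analysis.
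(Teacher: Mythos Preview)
The paper does not give its own proof of this proposition; it simply cites it as the known result of Alimonti and Kann. So there is no in-paper argument to compare against---your proposal is an attempted reconstruction.

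The overall L-reduction strategy is the right shape, but your concrete gadget does not do what you claim. Take the necklace of $d$ triangles with attachment vertices $a_i$ and remaining vertices $b_i,c_i$, linked cyclically by edges $b_ic_{i+1}$. The set $\{b_i,c_i:i\in[d]\}$ is a minimum vertex cover of the gadget of size $2d$ that excludes every attachment point, but $\{a_i,b_i:i\in[d]\}$ is \emph{also} a minimum vertex cover of size $2d$ that includes every attachment point. Hence the ``all in'' and ``all out'' canonical states cost the same. In $G'$ one may therefore put every gadget into the ``all in'' state at no extra cost, which covers every inter-gadget edge irrespective of the structure of $G$; consequently $\opt(G')=\sum_v 2\deg_G(v)=4|E(G)|$, a quantity independent of $\opt(G)$, and no hardness gap is transferred.

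The property you actually need is that the two canonical states of the gadget differ in cost by exactly one, so that selecting ``all in'' at $v$ simulates paying for $v$ in the original cover. The all-or-nothing normalization you state is necessary but not sufficient for this, and your displayed identity $\opt(G')=\opt(G)+\sum_v c_{\deg(v)}$ does not follow from it. A working reduction (as in Alimonti--Kann) builds the cubic instance directly from a bounded-occurrence satisfiability problem rather than by per-vertex degree equalization, precisely to control this cost differential.
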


In order to use \cref{vchard}, we need the following result which is the most difficult part of the proof.

\begin{lemma}\label{mainconstr}
Let $\alpha,\beta>0$ be constants. Then, given a connected simple cubic graph $G$, in polynomial time, we can construct a cubic bipartite graph $H$ that satisfies $|V(H)|\leq (54 \max\{12, \ceil{\frac{2\alpha}{\beta}}\} + 18) |V(G)|$ such that for every positive integer $K$,
we have that $H$ admits a cycle subpartition of weight at most $\alpha (|V(G)| + 2K)$ if and only if $G$ admits a vertex cover of size at most $K$.
\end{lemma}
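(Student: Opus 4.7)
The plan is to reduce from \apx-hard \VC on connected cubic graphs (\cref{vchard}). Set $M := \max\{12, \lceil 2\alpha/\beta \rceil\}$; this choice guarantees $\beta M \geq 2\alpha$, so that leaving $M$ vertices of $H$ uncovered already costs at least as much as adding two cycles --- this inequality will drive the entire analysis. Given a cubic graph $G$, I would build $H$ by attaching, for each vertex $v$ of $G$, a \emph{vertex gadget} $H_v$ of constant size (at most $18$ vertices) with three designated \emph{port} vertices (one per edge of $G$ at $v$), and, for each edge $vw \in E(G)$, an \emph{edge gadget} $H_{vw}$ on $\Theta(M)$ vertices joining the appropriate port of $H_v$ to that of $H_w$. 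The gadget shapes are designed so that every vertex of $H$ has degree exactly $3$, and the $2$-colorings of the gadgets are chosen to glue consistently (by picking the path-parities inside $H_{vw}$ accordingly), making $H$ bipartite. A direct vertex count yields $|V(H)| \leq 18|V(G)| + 36M\cdot|E(G)| = (18 + 54M)|V(G)|$ using $|E(G)| = \tfrac{3}{2}|V(G)|$, matching the required bound.

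The vertex gadget $H_v$ is to be engineered so that, in any cycle subpartition, its behavior decomposes into one of two intended modes, which we call \emph{active} (interpreted as ``$v$ is in the vertex cover'') and \emph{passive}. In the active mode, a fixed number of cycles inside $H_v$ cover all of $H_v$'s internal vertices and each extends along one incident arm into the edge gadget, enabling the entire edge gadget to be absorbed into one of these cycles regardless of what the opposite endpoint is doing. In the passive mode, no cycle touches $H_v$, so any edge gadget incident to $v$ can be covered only if the opposite endpoint is active. The counting is tuned so that activating a vertex gadget introduces exactly two extra cycles beyond a baseline: a vertex cover $X$ of $G$ of size $K$ then yields a cycle subpartition of $H$ that covers $V(H)$ using exactly $|V(G)| + 2K$ cycles (one baseline cycle per vertex of $G$ plus two per $v \in X$), giving total weight exactly $\alpha(|V(G)| + 2K)$, which settles the forward direction.

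The backward direction --- and the main technical obstacle, as the excerpt emphasizes --- is to go from a cycle subpartition $\mathcal{C}$ with $w(\mathcal{C}) \leq \alpha(|V(G)|+2K)$ to a vertex cover of size $K$. The strategy is a sequence of local reconfiguration lemmas that transform $\mathcal{C}$, without ever increasing $w(\mathcal{C})$, into a canonical subpartition in which every vertex gadget is either active or passive and every edge gadget is fully covered by the cycles from its active endpoints. The quantitative handle is again $\beta M \geq 2\alpha$: any non-canonical pattern either leaves a length-$M$ stretch of some edge gadget uncovered, yielding enough $\beta$-cost to fund the extra cycles needed to restore canonicity, or differs from the canonical form by a small local perturbation that can be absorbed by similar swaps. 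The hard part will be enumerating and verifying every such non-canonical interaction --- cycles entering an arm only partially, cycles threading through a vertex gadget without matching the active mode, and cycles that fuse contributions from adjacent gadgets --- and showing each can be replaced by a canonical configuration of no greater weight. Once canonical form is reached, letting $X\subseteq V(G)$ be the set of vertices whose gadget is active, every edge of $G$ must have at least one endpoint in $X$ (else its edge gadget is not covered from either side), so $X$ is a vertex cover of $G$, and $|X|\leq K$ follows from the weight accounting above.
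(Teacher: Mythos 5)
Your outline matches the paper's high-level plan closely: reduce from cubic \VC, assemble a vertex gadget per $v\in V(G)$ and an edge gadget per edge, distinguish an ``active'' versus ``passive'' mode per vertex gadget so that a vertex cover of size $K$ yields a cycle subpartition with $|V(G)|+2K$ cycles, and prove the converse by a sequence of weight-nonincreasing reconfigurations driven by the inequality $\beta\cdot\max\{12,\lceil 2\alpha/\beta\rceil\}\gtrsim 2\alpha$. The paper does exactly this. But there is a genuine gap in the gadget design you propose: you put all the $\Theta(M)$ mass in the \emph{edge} gadgets and keep each \emph{vertex} gadget at constant size ($\leq 18$). The paper instead builds its vertex gadget on six explicit vertices connected by \emph{three} $\Theta(M)$-sized ``chain gadgets'' (long paths that must be traversed as a Hamiltonian path or else leave $\Theta(M)$ vertices uncovered at a cost of at least $\min\{\alpha,6\beta\}M\geq 2\alpha$). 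Those internal chain gadgets are precisely what make the reconfiguration claims go through: every time a cycle touches the vertex gadget in a non-canonical way (enters at one port without fully completing the triangle, fuses neighbouring gadgets, etc.), the paper shows it forces some chain gadget \emph{inside that vertex gadget} to be neglected, which supplies the $\Theta(\alpha)$ of $\beta$-cost needed to pay for the cleanup. With a constant-size vertex gadget there is no such penalty: a cycle can thread through $H_v$, entering at one port and leaving at another while skipping a handful of internal vertices (or covering them with a cheap short cycle), and the only damage is a constant multiple of $\beta$, which is not enough. Your key claim --- that every non-canonical pattern leaves a length-$M$ stretch of an \emph{edge} gadget uncovered --- is therefore exactly the heart of the matter and does not hold automatically for the design you sketch; you would need some mechanism that propagates misbehaviour at $H_v$ into a $\Theta(M)$-sized neglected region, and the paper's answer is simply to put the chain gadgets inside $H_v$ itself (making the vertex gadget itself $\Theta(M)$-sized).

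One smaller internal inconsistency: you write that in the passive mode ``no cycle touches $H_v$,'' yet you also count ``one baseline cycle per vertex of $G$'' and assert that the resulting subpartition covers all of $V(H)$. These cannot both hold --- if no cycle touches a passive $H_v$ its vertices are uncovered and incur $\beta$-cost, so the weight would not equal $\alpha(|V(G)|+2K)$. The paper's passive vertex \emph{is} covered, by a single ``$1$-cycle'' that stays inside the vertex gadget; presumably you intend the same, but as written the passive mode description contradicts your own forward-direction accounting.
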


Before we give the main proof of Lemma \ref{mainconstr}, we need to describe an auxiliary gadget that will be useful throughout the reduction. Namely, for two reals $\alpha,\beta>0$ and some positive integer $k$, an {\it $(\alpha,\beta,k)$-chain gadget connecting $\ell$ and $r$} is a graph $G$ together with two special vertices $\ell,r \in V(G)$ satisfying the following properties:
\begin{enumerate}[(i)]
\item $d_G(\ell)=d_G(r)=1$ and $d_G(v)=3$ for all $v \in V(G)-\{\ell,r\}$,
\item $|V(G)|=6k+2$,
\item there is a unique bipartition of $G$ with $\ell$ and $r$ being in different classes of this bipartition,
\item $G$ contains a Hamiltonian $\ell r$-path, and 
\item for every cycle subpartition $\mathcal{C}$ of $G-\{\ell,r\}$, we have $w(\mathcal{C})\geq \min\{\alpha,6 \beta\} k$.
\end{enumerate} 
We say that $V(G)-\{\ell,r\}$ is the set of {\it interior} vertices of the chain gadget.
\begin{proposition}\label{fzguhij}
For any reals $\alpha,\beta>0$ and any positive integer $k$, a $(\alpha,\beta,k)$-chain gadget $G$ exists and can be computed in polynomial time. 
\end{proposition}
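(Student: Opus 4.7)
The plan is to assemble $G$ from $k$ identical six-vertex blocks chained in a line. Each block $B_i$ will be a copy of $K_{3,3}$ on bipartition classes $\{a_i,c_i,e_i\}$ and $\{b_i,d_i,f_i\}$ with the single edge $a_if_i$ removed; the vertices $a_i$ and $f_i$ serve as left and right ports. I chain the blocks by inserting the edges $f_ia_{i+1}$ for $i\in[k-1]$, and finally attach the degree-one vertices $\ell$ and $r$ via the edges $\ell a_1$ and $f_kr$.

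Properties (i)--(iv) can be checked directly. Every non-port block vertex has in-block degree $3$ and no external edge, every port has in-block degree $2$ plus exactly one external edge, and $\ell,r$ have degree $1$. The vertex count is $6k+2$. The graph is connected, and each block's bipartition extends consistently across the inter-block edges because $f_ia_{i+1}$ always links a black vertex to a white one; this makes the bipartition of $G$ unique and places $\ell$ and $r$ into opposite classes. Finally, the walk that enters each block through its left port, traverses the Hamiltonian path $a_ib_ic_id_ie_if_i$ inside $K_{3,3}-a_if_i$, and leaves through the right port produces a Hamiltonian $\ell r$-path.

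The central task is property (v). The crucial observation is that every cycle of $G-\{\ell,r\}$ lies inside a single block: contracting each block to a single super-vertex yields a path on $k$ vertices, and a simple cycle projects to a closed walk in this tree that uses each edge at most once, hence zero times (since any closed walk in a tree traverses every edge an even number of times). It therefore suffices to analyse cycle subpartitions of $K_{3,3}-a_if_i$. This graph is bipartite of girth $4$ and has only six vertices, so it cannot accommodate two vertex-disjoint cycles; each block contributes one of three possible amounts to $w(\mathcal{C})$, namely $6\beta$ (no cycle), $\alpha+2\beta$ (a $4$-cycle leaving two vertices uncovered), or $\alpha$ (the Hamiltonian $6$-cycle), and the minimum is $\min\{\alpha,6\beta\}$. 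Summing over all $k$ blocks yields the required bound $w(\mathcal{C})\ge \min\{\alpha,6\beta\}k$, and the construction is manifestly polynomial-time.

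I expect the main obstacle to be the joint packaging of all five requirements into a six-vertex block: the choice of $K_{3,3}-a_if_i$ is essentially forced by simultaneously demanding cubicity, bipartiteness, ports of opposite bipartition colour so that they can be chained cleanly, a Hamiltonian port-to-port path, and a cycle structure tight enough that no configuration of disjoint cycles can cover all six vertices cheaper than $\min\{\alpha,6\beta\}$.
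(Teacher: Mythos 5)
Your construction is identical to the paper's: each six-vertex block is $K_{3,3}$ with the edge between the two port vertices removed, blocks are chained by a single inter-block edge, and $\ell,r$ are attached at the ends. Your argument for property~(v) — that a cycle must stay inside a single block because contracting blocks gives a tree (equivalently, the paper's cut-counting argument), followed by a per-block lower bound of $\min\{\alpha,6\beta\}$ — also matches the paper, with your case enumeration of the three possible per-block cycle subpartitions being a slightly more explicit version of the same bound.
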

\begin{proof}
We construct the gadget as follows. We let $V(G)$ contain $\ell,r$ and a vertex $v_i^{j}$ for every $i \in [k]$ and every $j \in [6]$. For $i \in [k]$, let $V_i=\{v_i^1,\ldots,v_i^6\}$. Next, for every $i \in [k]$, we let $E(G)$ contain the edges $v_i^1v_i^2,v_i^1v_i^4,v_i^2v_i^3,v_i^2v_i^5,v_i^3v_i^4,v_i^3v_i^6,v_i^4v_i^5$, and $v_i^5v_i^6$. Further, for every $i \in [k-1]$, we let $E(G)$ contain the edge $v_i^6v_{i+1}^1$.
Finally, we let $E(G)$ contain the edges $\ell v_1^1$ and $rv_k^6$. This finishes the construction of $G$, which is clearly possible in polynomial time. 
For an illustration, see \cref{fig:vc-covercycle-chaingadget}.

We now check verify that all listed properties are satisfied. 
It follows directly by construction that $(G,\ell,r)$ satisfies $(i)$ and $(ii)$.

Let $V_{\text{even}}=\{v_i^{j}:i \in [k], \text{$j \in [6]$ even }\}\cup \ell$ and $V_{\text{odd}}=V(G)-V_{\text{even}}$. Then $(V_{\text{even}},V_{\text{odd}})$ is a bipartition of $G$ with the desired properties, so $(iii)$ follows.

Let $P$ be the path defined by $V(P)=V(G)$ and where $E(P)$ contains the edges $v_i^1v_i^2,v_i^2v_i^3,v_i^3v_i^4,v_i^4v_i^5$, and $v_i^5v_i^6$ for $i \in [k]$, the edge $v_i^6v_{i+1}^1$ for all $i \in [k-1]$ and the edges $\ell v_1^1$ and $rv_k^6$. It is easy to see that $P$ is a Hamiltonian $\ell r$-path of $G$. This proves $(iv)$.

Now let $\mathcal{C}$ be a cycle subpartition of $G-\{\ell,r\}$. First suppose that there is a smallest index $i \in [k-1]$ such that $v_i^6v_{i+1}^1 \in E(C)$ for some $C \in \mathcal{C}$. Then, by definition, we obtain that $v_i^6v_{i+1}^1$ is the only edge of $E(C)$ linking $V_i$ and $V(G)-V_i$. This contradicts $C$ being a cycle. We hence obtain that for every $i \in [k]$, there is a cycle subpartition $\mathcal{C}_i$ of $G[V_i]$ inherited from $\mathcal{C}$. Consider some $i \in [k]$. If $\mathcal{C}_i \neq \emptyset$, we have $w(\mathcal{C}_i)\geq \alpha |\mathcal{C}_i|\geq \min\{\alpha,6 \beta\}$. 
Otherwise,
we have $w(\mathcal{C})\geq \beta|V_i|\geq \min\{\alpha,6 \beta\}$. It follows that $w(\mathcal{C})=\sum_{i \in [k]}w(\mathcal{C}_i)\geq \min\{\alpha,6 \beta\} k$. Hence $(v)$ holds.
\end{proof}

\begin{figure}
\begin{center}
	\centering
	\tikzstyle{wavey}=[decorate,decoration={coil, aspect=-.5, post length=1mm, segment length=1mm, pre length=2mm},
shorten <= -.8pt,shorten >= -0.8pt]

\begin{tikzpicture}[
    vertex/.style={
        draw,
        circle, 
        inner sep=.0pt, 
        minimum size=0.62cm},
    every edge/.append style={}
]
\newcommand{\hexside}{1.8}
	\node[vertex] (l) at (-6,0) {$\ell$};
	\node[vertex] (r) at (-4.75,0) {$r$};
	\draw (l)[wavey]--(r);

	\node[vertex] (left) at (-3.5,0) {$\ell$};
	\node[vertex] (a1) at (-2.5,0) {};
	\node[vertex] (a2) at (-2,.866) {};
	\node[vertex] (a3) at (-2,-.866) {};
	\node[vertex] (a4) at (-1,.866) {};
	\node[vertex] (a5) at (-1,-.866) {};
	\node[vertex] (a6) at (-0.5,0) {};
	\node[vertex] (b1) at (.5,0) {};
	\node[vertex] (b2) at (1,.866) {};
	\node[vertex] (b3) at (1,-.866) {};
	\node[vertex] (b4) at (2.0,.866) {};
	\node[vertex] (b5) at (2.0,-.866) {};
	\node[vertex] (b6) at (2.5,0) {};
	\node[vertex] (c1) at (4,0) {};
	\node[vertex] (c2) at (4.5,.866) {};
	\node[vertex] (c3) at (4.5,-.866) {};
	\node[vertex] (c4) at (5.5,.866) {};
	\node[vertex] (c5) at (5.5,-.866) {};
	\node[vertex] (c6) at (6,0) {};
	\node[vertex] (right) at (7,0) {$r$};

	\node at (a1) {$v_1^1$};
	\node at (a2)  {$v_1^2$};
	\node at (a3)  {$v_1^4$};
	\node at (a4)  {$v_1^3$};
	\node at (a5)  {$v_1^5$};
	\node at (a6)  {$v_1^6$};

	\node at (b1)  {$v_2^1$};
	\node at (b2)  {$v_2^2$};
	\node at (b3)  {$v_2^4$};
	\node at (b4)  {$v_2^3$};
	\node at (b5)  {$v_2^5$};
	\node at (b6)  {$v_2^6$};

	\node at (c1)  {$v_k^1$};
	\node at (c2)  {$v_k^2$};
	\node at (c3)  {$v_k^4$};
	\node at (c4)  {$v_k^3$};
	\node at (c5)  {$v_k^5$};
	\node at (c6)  {$v_k^6$};

	\draw (a1)--(a2)--(a4)--(a6)--(a5)--(a3)--(a1);
	\draw (a6)--(b1);
	\draw (b1)--(b2)--(b4)--(b6)--(b5)--(b3)--(b1);
	\draw [thick, dotted] (b6)--(c1);
	\draw (c1)--(c2)--(c4)--(c6)--(c5)--(c3)--(c1);
	\draw (left)--(a1);
	\draw (a2)--(a5);
	\draw (a3)--(a4);
	\draw (b2)--(b5);
	\draw (b3)--(b4);
	\draw (c2)--(c5);
	\draw (c3)--(c4);
	\draw (c6)--(right);

	\draw[dashed, rounded corners]
	(-6.5, -.5) rectangle (-4.25,.5) {};
	\draw[dashed, rounded corners]
	(-4, -1.25) rectangle (7.45,1.25) {};
\end{tikzpicture}
\end{center}
\captionsetup{aboveskip=-20pt,belowskip=20pt}
\caption{An $(\alpha, \beta, k)$-chain gadget $G$.}
\label{fig:vc-covercycle-chaingadget}
\end{figure}
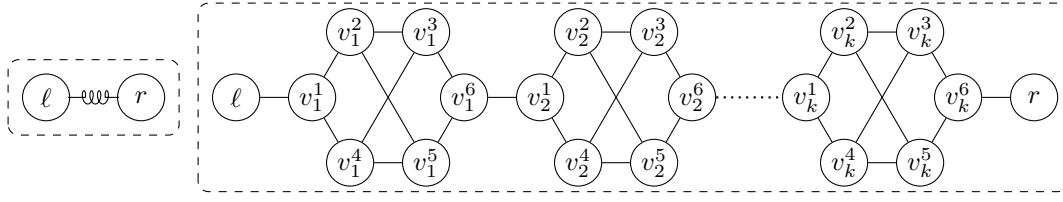

We are now ready to give the main proof of \cref{mainconstr}.

\begin{proof}[Proof of~\cref{mainconstr}]
Let $\alpha,\beta>0$ be fixed and let $G$ be a simple connected cubic graph.
We set $k=\max\{12,\lceil\frac{2 \alpha}{\beta}\rceil\}+1$. We now construct a graph $H$. During this construction, when we say that we connect two vertices $u,v$ with a chain gadget, we mean that we add an $(\alpha,\beta,k)$-chain gadget $(G,\ell,r)$ to the current graph and then identify $u$ with $\ell$ and $v$ with $r$.
First, for every $u \in V(G)$, we let $V(H)$ contain a set $X_u=\{u_1^x,u_2^x,u_1^y,u_2^y,u_1^z,u_2^z\}$ of 6 vertices where $(x,y,z)$ is an arbitrary ordering of $N_G(u)$. Next, we add the edges $u_2^xu_1^y,u_2^yu_1^z$, and $u_2^zu_1^x$ to $E(G)$ and we connect each of the following pairs of vertices with a chain gadget: $(u_1^x,u_2^x),(u_1^y,u_2^y),(u_1^z,u_2^z)$.
Next, for every edge $uv \in E(G)$, we add a set $X_{uv}=\{u_3^v,u_4^v,u_5^v,u_6^v,v_3^u,v_4^u,v_5^u,v_6^u\}$ of 8 vertices to $V(H)$, we add the edges $u_1^vu_6^v,u_2^vu_3^v,u_3^vv_4^{u},u_4^vu_5^v,u_4^vv_3^{u},u_5^vv_6^{u},u_6^vv_5^{u}, v_1^{u}v_6^{u}, v_2^{u}v_3^{u}$, and $v_4^{u}v_5^{u}$, and we connect the following pairs of vertices by chain gadgets: $(u_3^v,u_4^v),(u_5^v,u_6^v),(v_3^{u},v_4^{u}),(v_5^{u},v_6^{u})$. In the following, when speaking of a chain gadget, we mean one of those explicitly added in the construction.
This finishes the description of $H$. By \cref{fzguhij} and construction, $H$ can be computed in polynomial time. Further, for a vertex $u \in V(G)$, we use $X'_u$ for $X_u \cup \bigcup_{v \in N_G(u)}X_{uv}$.
For an illustration, see \cref{fig:vc-covercycle-transformation}.

\begin{figure}
	\centering
	\tikzstyle{wavey}=[decorate,decoration={coil, aspect=-.5, post length=1mm, segment length=1mm, pre length=2mm},
shorten <= -.8pt,shorten >= -0.8pt]

\begin{tikzpicture}[rotate=20,
    vertex/.style={
        draw,
        circle, 
        inner sep=.0pt, 
        minimum size=0.62cm},
    every edge/.append style={}]
    \def\radius{5}
    \def\gap{1}

    \pgfmathsetmacro\x{\radius * cos(0)}
    \pgfmathsetmacro\y{\radius * sin(0)}
    \node[vertex] (v1) at (\x,\y) {$u_1^x$};
    
    \pgfmathsetmacro\xvFourteen{\radius * cos(260)}
    \pgfmathsetmacro\yvFourteen{\radius * sin(260)}
    \node[vertex] (v14) at (\xvFourteen, \yvFourteen) {$u_2^x$};
    
    \node[vertex] (v15) at ({\xvFourteen + (1/7) * (\x - \xvFourteen)}, {\yvFourteen + (1/7) * (\y - \yvFourteen)}) {$u_3^x$};
    \node[vertex] (v16) at ({\xvFourteen + (3/7) * (\x - \xvFourteen)}, {\yvFourteen + (3/7) * (\y - \yvFourteen)}) {$u_4^x$};
    \node[vertex] (v17) at ({\xvFourteen + (4/7) * (\x - \xvFourteen)}, {\yvFourteen + (4/7) * (\y - \yvFourteen)}) {$u_5^x$};
    \node[vertex] (v18) at ({\xvFourteen + (6/7) * (\x - \xvFourteen)}, {\yvFourteen + (6/7) * (\y - \yvFourteen)}) {$u_6^x$};

    \pgfmathsetmacro\dx{\x - \xvFourteen}
    \pgfmathsetmacro\dy{\y - \yvFourteen}
    \pgfmathsetmacro\len{sqrt(\dx*\dx + \dy*\dy)}
    \pgfmathsetmacro\nx{\dy / \len} 
    \pgfmathsetmacro\ny{-\dx / \len}

    \node[vertex] (mv1) at ({\x + 2*\gap*\nx}, {\y + 2*\gap*\ny}) {$x_1^u$};

    \node[vertex] (mv14) at ({\xvFourteen + 2*\gap*\nx}, {\yvFourteen + 2*\gap*\ny}) {$x_2^u$};
    \node[vertex] (mv15) at ({(\xvFourteen + (1/7) * (\x - \xvFourteen)) + 2*\gap*\nx}, 
                              {(\yvFourteen + (1/7) * (\y - \yvFourteen)) + 2*\gap*\ny}) {$x_3^u$};
    \node[vertex] (mv16) at ({(\xvFourteen + (3/7) * (\x - \xvFourteen)) + 2*\gap*\nx}, 
                              {(\yvFourteen + (3/7) * (\y - \yvFourteen)) + 2*\gap*\ny}) {$x_4^u$};
    \node[vertex] (mv17) at ({(\xvFourteen + (4/7) * (\x - \xvFourteen)) + 2*\gap*\nx}, 
                              {(\yvFourteen + (4/7) * (\y - \yvFourteen)) + 2*\gap*\ny}) {$x_5^u$};
    \node[vertex] (mv18) at ({(\xvFourteen + (6/7) * (\x - \xvFourteen)) + 2*\gap*\nx}, 
                              {(\yvFourteen + (6/7) * (\y - \yvFourteen)) + 2*\gap*\ny}) {$x_6^u$};

    \draw (v14) -- (v15);
    \draw[wavey] (v15) -- (v16); 
    \draw (v16) -- (v17);
    \draw[wavey] (v17) -- (v18); 
    \draw (v18) -- (v1);

    \draw[wavey] (mv1) to[bend left=30] (mv14);
    \draw (mv14) -- (mv15);
    \draw[wavey] (mv15) -- (mv16); 
    \draw (mv16) -- (mv17);
    \draw[wavey] (mv17) -- (mv18); 
    \draw (mv18) -- (mv1);

    \pgfmathsetmacro\xvTwo{\radius * cos(20)}
    \pgfmathsetmacro\yvTwo{\radius * sin(20)}
    \node[vertex] (v2) at (\xvTwo, \yvTwo) {$u_2^y$};
    
    \pgfmathsetmacro\xvSeven{\radius * cos(120)}
    \pgfmathsetmacro\yvSeven{\radius * sin(120)}
    \node[vertex] (v7) at (\xvSeven, \yvSeven) {$u_1^y$};
    
    \node[vertex] (v3) at ({\xvTwo + (1/7) * (\xvSeven - \xvTwo)}, {\yvTwo + (1/7) * (\yvSeven - \yvTwo)}) {$u_3^y$};
    \node[vertex] (v4) at ({\xvTwo + (3/7) * (\xvSeven - \xvTwo)}, {\yvTwo + (3/7) * (\yvSeven - \yvTwo)}) {$u_4^y$};
    \node[vertex] (v5) at ({\xvTwo + (4/7) * (\xvSeven - \xvTwo)}, {\yvTwo + (4/7) * (\yvSeven - \yvTwo)}) {$u_5^y$};
    \node[vertex] (v6) at ({\xvTwo + (6/7) * (\xvSeven - \xvTwo)}, {\yvTwo + (6/7) * (\yvSeven - \yvTwo)}) {$u_6^y$};

    \pgfmathsetmacro\dx{\xvSeven - \xvTwo}
    \pgfmathsetmacro\dy{\yvSeven - \yvTwo}
    \pgfmathsetmacro\len{sqrt(\dx*\dx + \dy*\dy)}
    \pgfmathsetmacro\nx{\dy / \len} 
    \pgfmathsetmacro\ny{-\dx / \len}

    \node[vertex] (mv2) at ({\xvTwo + 2*\gap*\nx}, {\yvTwo + 2*\gap*\ny}) {$y_2^u$};
    \node[vertex] (mv7) at ({\xvSeven + 2*\gap*\nx}, {\yvSeven + 2*\gap*\ny}) {$y_1^u$};
    \node[vertex] (mv3) at ({(\xvTwo + (1/7) * (\xvSeven - \xvTwo)) + 2*\gap*\nx}, 
                                {(\yvTwo + (1/7) * (\yvSeven - \yvTwo)) + 2*\gap*\ny}) {$y_3^u$};
    \node[vertex] (mv4) at ({(\xvTwo + (3/7) * (\xvSeven - \xvTwo)) + 2*\gap*\nx}, 
                                {(\yvTwo + (3/7) * (\yvSeven - \yvTwo)) + 2*\gap*\ny}) {$y_4^u$};
    \node[vertex] (mv5) at ({(\xvTwo + (4/7) * (\xvSeven - \xvTwo)) + 2*\gap*\nx}, 
                                {(\yvTwo + (4/7) * (\yvSeven - \yvTwo)) + 2*\gap*\ny}) {$y_5^u$};
    \node[vertex] (mv6) at ({(\xvTwo + (6/7) * (\xvSeven - \xvTwo)) + 2*\gap*\nx}, 
                                {(\yvTwo + (6/7) * (\yvSeven - \yvTwo)) + 2*\gap*\ny}) {$y_6^u$};

    \draw[wavey] (mv2) to[bend right=30] (mv7);
    \draw (mv2) -- (mv3); 
    \draw[wavey] (mv3) -- (mv4); 
    \draw (mv4) -- (mv5);
    \draw[wavey] (mv5) -- (mv6); 
    \draw (mv6) -- (mv7);

    \pgfmathsetmacro\xvEight{\radius * cos(140)}
    \pgfmathsetmacro\yvEight{\radius * sin(140)}
    \node[vertex] (v8) at (\xvEight, \yvEight) {$u_2^z$};

    \pgfmathsetmacro\xvThirteen{\radius * cos(240)}
    \pgfmathsetmacro\yvThirteen{\radius * sin(240)}
    \node[vertex] (v13) at (\xvThirteen, \yvThirteen) {$u_1^z$};

    \node[vertex] (v9) at ({\xvEight + (1/7) * (\xvThirteen - \xvEight)}, {\yvEight + (1/7) * (\yvThirteen - \yvEight)}) {$u_3^z$};
    \node[vertex] (v10) at ({\xvEight + (3/7) * (\xvThirteen - \xvEight)}, {\yvEight + (3/7) * (\yvThirteen - \yvEight)}) {$u_4^z$};
    \node[vertex] (v11) at ({\xvEight + (4/7) * (\xvThirteen - \xvEight)}, {\yvEight + (4/7) * (\yvThirteen - \yvEight)}) {$u_5^z$};
    \node[vertex] (v12) at ({\xvEight + (6/7) * (\xvThirteen - \xvEight)}, {\yvEight + (6/7) * (\yvThirteen - \yvEight)}) {$u_6^z$};

    \pgfmathsetmacro\dx{\xvThirteen - \xvEight}
    \pgfmathsetmacro\dy{\yvThirteen - \yvEight}
    \pgfmathsetmacro\len{sqrt(\dx*\dx + \dy*\dy)}
    \pgfmathsetmacro\nx{\dy / \len} 
    \pgfmathsetmacro\ny{-\dx / \len}

    \node[vertex] (mv8) at ({\xvEight + 2*\gap*\nx}, {\yvEight + 2*\gap*\ny}) {$z_2^u$};
    \node[vertex] (mv13) at ({\xvThirteen + 2*\gap*\nx}, {\yvThirteen + 2*\gap*\ny}) {$z_1^u$};
    \node[vertex] (mv9) at ({(\xvEight + (1/7) * (\xvThirteen - \xvEight)) + 2*\gap*\nx}, 
                             {(\yvEight + (1/7) * (\yvThirteen - \yvEight)) + 2*\gap*\ny}) {$z_3^u$};
    \node[vertex] (mv10) at ({(\xvEight + (3/7) * (\xvThirteen - \xvEight)) + 2*\gap*\nx}, 
                              {(\yvEight + (3/7) * (\yvThirteen - \yvEight)) + 2*\gap*\ny}) {$z_4^u$};
    \node[vertex] (mv11) at ({(\xvEight + (4/7) * (\xvThirteen - \xvEight)) + 2*\gap*\nx}, 
                              {(\yvEight + (4/7) * (\yvThirteen - \yvEight)) + 2*\gap*\ny}) {$z_5^u$};
    \node[vertex] (mv12) at ({(\xvEight + (6/7) * (\xvThirteen - \xvEight)) + 2*\gap*\nx}, 
                              {(\yvEight + (6/7) * (\yvThirteen - \yvEight)) + 2*\gap*\ny}) {$z_6^u$};

    \draw[wavey] (mv8) to[bend right=30] (mv13);
    \draw (mv8) -- (mv9);
    \draw[wavey] (mv9) -- (mv10); 
    \draw (mv10) -- (mv11);
    \draw[wavey] (mv11) -- (mv12); 
    \draw (mv12) -- (mv13);

    \draw (v1) -- (v2);
    \draw (v2) -- (v3); 
    \draw[wavey] (v3) -- (v4); 
    \draw (v4) -- (v5);
    \draw[wavey] (v5) -- (v6); 
    \draw (v6) -- (v7);
    \draw (v7) -- (v8); 
    \draw (v8) -- (v9);
    \draw[wavey] (v9) -- (v10); 
    \draw (v10) -- (v11);
    \draw[wavey] (v11) -- (v12); 
    \draw (v12) -- (v13);
    \draw (v13) -- (v14); 
    \draw[wavey] (v1) to[bend right=30] (v14);
    \draw[wavey] (v2) to[bend left=30] (v7);
    \draw[wavey] (v8) to[bend left=30] (v13);

    \draw (mv3) -- (v4);
    \draw (mv4) -- (v3);
    \draw (mv5) -- (v6);
    \draw (mv6) -- (v5);

    \draw (mv9) -- (v10);
    \draw (mv10) -- (v9);
    \draw (mv11) -- (v12);
    \draw (mv12) -- (v11);

    \draw (mv15) -- (v16);
    \draw (mv16) -- (v15);
    \draw (mv17) -- (v18);
    \draw (mv18) -- (v17);

\end{tikzpicture}
	\caption{The construction of the graph $H$. The figure illustrates the
	subgraph generated for a vertex $u \in V(G)$ incident to edges $ux$, $uy$, and $uz$.
	The remaining vertices in $X_x$, $X_y,$ and $X_z$ are omitted for clarity.}
\label{fig:vc-covercycle-transformation}
\end{figure}
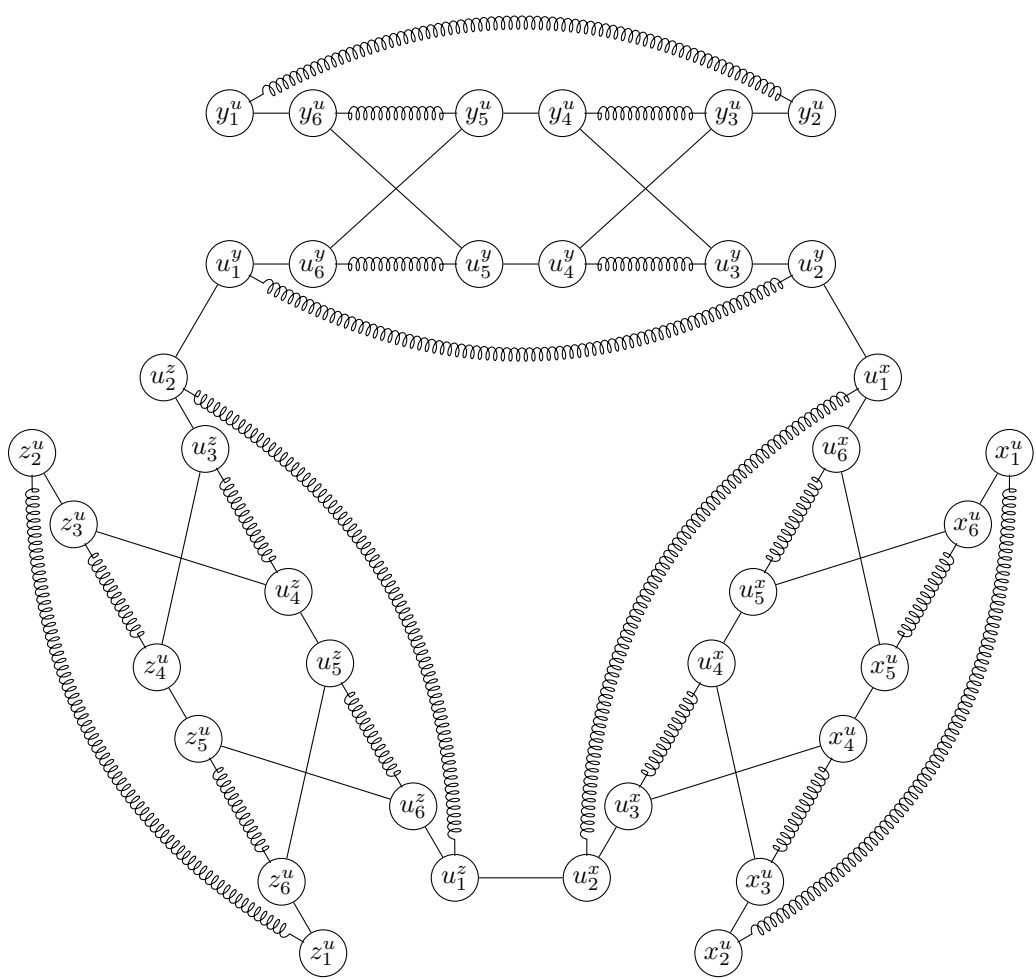

It follows directly by construction and property $(i)$ of chain gadgets that $H$ is cubic. Further, observe that the partition of $V(H)$ into $\{u_i^v: uv \in E(G), \text{$i$ odd }\}$ and $\{u_i^v: uv \in E(G), \text{$i$ even }\}$ is a bipartition of the graph obtained from $H$ by replacing all chain gadgets by edges. By property $(iii)$ of chain gadgets, this bipartition can be extended into a bipartition of $H$. It follows that $H$ is bipartite.

Observe that $V(H)$ consists of the vertex sets of exactly $9$ chain gadgets for every $v \in V(G)$; thus,
property $(ii)$ of chain gadgets yields $|V(H)| = 9(6k + 2) |V(G)| = \left(54 \max\{12, \ceil{\frac{2\alpha}{\beta}}\} + 18\right) |V(G)|$.

It remains to prove that for every positive integer $K$, we have that $H$ admits a cycle subpartition of weight at most $\alpha (|V(G)|+2K)$ if and only if $G$ admits a vertex cover of size at most $K$. It turns out that one of the two directions, namely constructing a cycle subpartition from a vertex cover is rather straight forward while it is significantly more difficult to construct a vertex cover from a cycle subpartition. 

We next introduce a little more notation. Namely, for some $x,y \in V(H)$ which are connected by a chain gadget and a cycle $C$ which is a subgraph of $H$, we say that $C$ contains $\overline{xy}$ if $C$ contains a Hamiltonian path of the chain gadget connecting $x$ and $y$.

We now define a certain collection of cycles in $H$ of four different types. It will turn out that cycle subpartitions of $H$ of minimum weight only contain cycles of one of these kinds. First, with the description of these cycles at hand, it will be easy to conclude the existence of a cycle subpartition of small weight of $H$ from the existence of a small vertex cover of $G$. Afterwards, given a cycle subpartition of $H$ of minimum weight, we use this minimality to show that all cycles of the cycle subpartition are of one of these types and that for every vertex in $V(G)$, some of these cycles intersect the corresponding gadget in a certain way. Once this is established, the remainder of the proof is not difficult.

For some $u \in V(G)$ with $N_G(u)=\{x,y,z\}$ and $u_2^xu_1^y \in E(H)$, a {\it 1-cycle for $u$} consists of $\overline{u_1^xu_2^x},u_2^xu_1^y,\overline{u_1^yu_2^y},u_2^yu_1^z,\overline{u_1^zu_2^z},$ and $u_2^zu_1^x$. Next for an ordered pair $(u,v)$ of vertices in $V(G)$ with $uv \in E(G)$, a {\it 2-cycle for $(u,v)$} consists of $u_1^vu_6^v,\overline{u_6^vu_5^v},u_5^vu_4^v,\overline{u_4^vu_3^v},u_3^vu_2^v$, and $\overline{u_2^vu_1^v}$ and a {\it 3-cycle} for $(u,v)$ consists of  $u_1^vu_6^v,\overline{u_6^vu_5^v},u_5^vv_6^u,\overline{v_6^uv_5^u},v_5^uv_4^u, \overline{v_4^uv_3^u}, v_3^{u}u_4^v,\overline{u_4^vu_3^v}, u_3^vu_2^v$, and $\overline{u_2^vu_1^v}$. Finally, for some $e=uv \in E(G)$, a {\it 4-cycle} for $e$ consists either of $\overline{u_3^vu_4^v},u_4^vv_3^{u},\overline{v_3^{u}v_4^{u}}$, and $v_4^{u}u_3^v$ or of $\overline{u_5^vu_6^v},u_6^vv_5^{u},\overline{v_5^{u}v_6^{u}}$, and $v_6^{u}u_5^v$.
An illustration of these cycles can be found in~\cref{fig:vc-covercycle-cycle-types}. Observe that a 1-cycle for a vertex $u$ and a 2-cycle or a 3-cycle for an ordered pair $(u,v)$ are uniquely defined except that there may be several Hamiltonian paths inside chain gadgets which will be of no relevance. For every edge $e \in E(G)$, there are two 4-cycles for $e$ which are distinct even when not taking into account different Hamiltonian paths of chain gadgets.

\begin{figure}[htbp]
    \centering
    \begin{subfigure}{0.45\linewidth}
		\tikzstyle{wavey}=[decorate,decoration={coil, aspect=-.5, post length=1mm, segment length=1mm, pre length=2mm},
shorten <= -.8pt,shorten >= -0.8pt]

\begin{tikzpicture}[rotate=20,scale=0.5,
    vertex/.style={
        draw,
        circle, 
        inner sep=.0pt, 
        minimum size=0.1cm},
    every edge/.append style={}]
    \def\radius{5}
    \def\gap{1}

    \pgfmathsetmacro\x{\radius * cos(0)}
    \pgfmathsetmacro\y{\radius * sin(0)}
    \node[vertex] (v1) at (\x,\y) {};
    
    \pgfmathsetmacro\xvFourteen{\radius * cos(260)}
    \pgfmathsetmacro\yvFourteen{\radius * sin(260)}
    \node[vertex] (v14) at (\xvFourteen, \yvFourteen) {};
    
    \node[vertex] (v15) at ({\xvFourteen + (1/7) * (\x - \xvFourteen)}, {\yvFourteen + (1/7) * (\y - \yvFourteen)}) {};
    \node[vertex] (v16) at ({\xvFourteen + (3/7) * (\x - \xvFourteen)}, {\yvFourteen + (3/7) * (\y - \yvFourteen)}) {};
    \node[vertex] (v17) at ({\xvFourteen + (4/7) * (\x - \xvFourteen)}, {\yvFourteen + (4/7) * (\y - \yvFourteen)}) {};
    \node[vertex] (v18) at ({\xvFourteen + (6/7) * (\x - \xvFourteen)}, {\yvFourteen + (6/7) * (\y - \yvFourteen)}) {};

    \pgfmathsetmacro\dx{\x - \xvFourteen}
    \pgfmathsetmacro\dy{\y - \yvFourteen}
    \pgfmathsetmacro\len{sqrt(\dx*\dx + \dy*\dy)}
    \pgfmathsetmacro\nx{\dy / \len} 
    \pgfmathsetmacro\ny{-\dx / \len}

    \node[vertex] (mv1) at ({\x + 2*\gap*\nx}, {\y + 2*\gap*\ny}) {};
    \node[vertex] (mv14) at ({\xvFourteen + 2*\gap*\nx}, {\yvFourteen + 2*\gap*\ny}) {};
    \node[vertex] (mv15) at ({(\xvFourteen + (1/7) * (\x - \xvFourteen)) + 2*\gap*\nx}, 
                              {(\yvFourteen + (1/7) * (\y - \yvFourteen)) + 2*\gap*\ny}) {};
    \node[vertex] (mv16) at ({(\xvFourteen + (3/7) * (\x - \xvFourteen)) + 2*\gap*\nx}, 
                              {(\yvFourteen + (3/7) * (\y - \yvFourteen)) + 2*\gap*\ny}) {};
    \node[vertex] (mv17) at ({(\xvFourteen + (4/7) * (\x - \xvFourteen)) + 2*\gap*\nx}, 
                              {(\yvFourteen + (4/7) * (\y - \yvFourteen)) + 2*\gap*\ny}) {};
    \node[vertex] (mv18) at ({(\xvFourteen + (6/7) * (\x - \xvFourteen)) + 2*\gap*\nx}, 
                              {(\yvFourteen + (6/7) * (\y - \yvFourteen)) + 2*\gap*\ny}) {};

    \draw (v14) -- (v15);
    \draw[wavey] (v15) -- (v16); 
    \draw (v16) -- (v17);
    \draw[wavey] (v17) -- (v18); 
    \draw (v18) -- (v1);

    \draw[wavey] (mv1) to[bend left=30] (mv14);
    \draw (mv14) -- (mv15);
    \draw[wavey] (mv15) -- (mv16); 
    \draw (mv16) -- (mv17);
    \draw[wavey] (mv17) -- (mv18); 
    \draw (mv18) -- (mv1);

    \pgfmathsetmacro\xvTwo{\radius * cos(20)}
    \pgfmathsetmacro\yvTwo{\radius * sin(20)}
    \node[vertex] (v2) at (\xvTwo, \yvTwo) {};
    
    \pgfmathsetmacro\xvSeven{\radius * cos(120)}
    \pgfmathsetmacro\yvSeven{\radius * sin(120)}
    \node[vertex] (v7) at (\xvSeven, \yvSeven) {};
    
    \node[vertex] (v3) at ({\xvTwo + (1/7) * (\xvSeven - \xvTwo)}, {\yvTwo + (1/7) * (\yvSeven - \yvTwo)}) {};
    \node[vertex] (v4) at ({\xvTwo + (3/7) * (\xvSeven - \xvTwo)}, {\yvTwo + (3/7) * (\yvSeven - \yvTwo)}) {};
    \node[vertex] (v5) at ({\xvTwo + (4/7) * (\xvSeven - \xvTwo)}, {\yvTwo + (4/7) * (\yvSeven - \yvTwo)}) {};
    \node[vertex] (v6) at ({\xvTwo + (6/7) * (\xvSeven - \xvTwo)}, {\yvTwo + (6/7) * (\yvSeven - \yvTwo)}) {};

    \pgfmathsetmacro\dx{\xvSeven - \xvTwo}
    \pgfmathsetmacro\dy{\yvSeven - \yvTwo}
    \pgfmathsetmacro\len{sqrt(\dx*\dx + \dy*\dy)}
    \pgfmathsetmacro\nx{\dy / \len} 
    \pgfmathsetmacro\ny{-\dx / \len}

    \node[vertex] (mv2) at ({\xvTwo + 2*\gap*\nx}, {\yvTwo + 2*\gap*\ny}) {};
    \node[vertex] (mv7) at ({\xvSeven + 2*\gap*\nx}, {\yvSeven + 2*\gap*\ny}) {};
    \node[vertex] (mv3) at ({(\xvTwo + (1/7) * (\xvSeven - \xvTwo)) + 2*\gap*\nx}, 
                                {(\yvTwo + (1/7) * (\yvSeven - \yvTwo)) + 2*\gap*\ny}) {};
    \node[vertex] (mv4) at ({(\xvTwo + (3/7) * (\xvSeven - \xvTwo)) + 2*\gap*\nx}, 
                                {(\yvTwo + (3/7) * (\yvSeven - \yvTwo)) + 2*\gap*\ny}) {};
    \node[vertex] (mv5) at ({(\xvTwo + (4/7) * (\xvSeven - \xvTwo)) + 2*\gap*\nx}, 
                                {(\yvTwo + (4/7) * (\yvSeven - \yvTwo)) + 2*\gap*\ny}) {};
    \node[vertex] (mv6) at ({(\xvTwo + (6/7) * (\xvSeven - \xvTwo)) + 2*\gap*\nx}, 
                                {(\yvTwo + (6/7) * (\yvSeven - \yvTwo)) + 2*\gap*\ny}) {};

    \draw[wavey] (mv2) to[bend right=30] (mv7);
    \draw (mv2) -- (mv3); 
    \draw[wavey] (mv3) -- (mv4); 
    \draw (mv4) -- (mv5);
    \draw[wavey] (mv5) -- (mv6); 
    \draw (mv6) -- (mv7);

    \pgfmathsetmacro\xvEight{\radius * cos(140)}
    \pgfmathsetmacro\yvEight{\radius * sin(140)}
    \node[vertex] (v8) at (\xvEight, \yvEight) {};

    \pgfmathsetmacro\xvThirteen{\radius * cos(240)}
    \pgfmathsetmacro\yvThirteen{\radius * sin(240)}
    \node[vertex] (v13) at (\xvThirteen, \yvThirteen) {};

    \node[vertex] (v9) at ({\xvEight + (1/7) * (\xvThirteen - \xvEight)}, {\yvEight + (1/7) * (\yvThirteen - \yvEight)}) {};
    \node[vertex] (v10) at ({\xvEight + (3/7) * (\xvThirteen - \xvEight)}, {\yvEight + (3/7) * (\yvThirteen - \yvEight)}) {};
    \node[vertex] (v11) at ({\xvEight + (4/7) * (\xvThirteen - \xvEight)}, {\yvEight + (4/7) * (\yvThirteen - \yvEight)}) {};
    \node[vertex] (v12) at ({\xvEight + (6/7) * (\xvThirteen - \xvEight)}, {\yvEight + (6/7) * (\yvThirteen - \yvEight)}) {};

    \pgfmathsetmacro\dx{\xvThirteen - \xvEight}
    \pgfmathsetmacro\dy{\yvThirteen - \yvEight}
    \pgfmathsetmacro\len{sqrt(\dx*\dx + \dy*\dy)}
    \pgfmathsetmacro\nx{\dy / \len} 
    \pgfmathsetmacro\ny{-\dx / \len}

    \node[vertex] (mv8) at ({\xvEight + 2*\gap*\nx}, {\yvEight + 2*\gap*\ny}) {};
    \node[vertex] (mv13) at ({\xvThirteen + 2*\gap*\nx}, {\yvThirteen + 2*\gap*\ny}) {};
    \node[vertex] (mv9) at ({(\xvEight + (1/7) * (\xvThirteen - \xvEight)) + 2*\gap*\nx}, 
                             {(\yvEight + (1/7) * (\yvThirteen - \yvEight)) + 2*\gap*\ny}) {};
    \node[vertex] (mv10) at ({(\xvEight + (3/7) * (\xvThirteen - \xvEight)) + 2*\gap*\nx}, 
                              {(\yvEight + (3/7) * (\yvThirteen - \yvEight)) + 2*\gap*\ny}) {};
    \node[vertex] (mv11) at ({(\xvEight + (4/7) * (\xvThirteen - \xvEight)) + 2*\gap*\nx}, 
                              {(\yvEight + (4/7) * (\yvThirteen - \yvEight)) + 2*\gap*\ny}) {};
    \node[vertex] (mv12) at ({(\xvEight + (6/7) * (\xvThirteen - \xvEight)) + 2*\gap*\nx}, 
                              {(\yvEight + (6/7) * (\yvThirteen - \yvEight)) + 2*\gap*\ny}) {};

    \draw[wavey] (mv8) to[bend right=30] (mv13);
    \draw (mv8) -- (mv9);
    \draw[wavey] (mv9) -- (mv10); 
    \draw (mv10) -- (mv11);
    \draw[wavey] (mv11) -- (mv12); 
    \draw (mv12) -- (mv13);

    \draw[very thick] (v1) -- (v2);
    \draw (v2) -- (v3); 
    \draw[wavey] (v3) -- (v4); 
    \draw (v4) -- (v5);
    \draw[wavey] (v5) -- (v6); 
    \draw (v6) -- (v7);
    \draw[very thick] (v7) -- (v8); 
    \draw (v8) -- (v9);
    \draw[wavey] (v9) -- (v10); 
    \draw (v10) -- (v11);
    \draw[wavey] (v11) -- (v12); 
    \draw (v12) -- (v13);
    \draw[very thick] (v13) -- (v14); 
    \draw[very thick, wavey] (v1) to[bend right=30] (v14);
    \draw[very thick, wavey] (v2) to[bend left=30] (v7);
    \draw[very thick, wavey] (v8) to[bend left=30] (v13);

    \draw (mv3) -- (v4);
    \draw (mv4) -- (v3);
    \draw (mv5) -- (v6);
    \draw (mv6) -- (v5);

    \draw (mv9) -- (v10);
    \draw (mv10) -- (v9);
    \draw (mv11) -- (v12);
    \draw (mv12) -- (v11);

    \draw (mv15) -- (v16);
    \draw (mv16) -- (v15);
    \draw (mv17) -- (v18);
    \draw (mv18) -- (v17);
\end{tikzpicture}
        \caption{The $1$-cycle for a vertex $u \in V(G)$.}
    \end{subfigure}%
    \hspace{0.05\linewidth}
    \begin{subfigure}{0.45\linewidth}
		\tikzstyle{wavey}=[decorate,decoration={coil, aspect=-.5, post length=1mm, segment length=1mm, pre length=2mm},
shorten <= -.8pt,shorten >= -0.8pt]

\begin{tikzpicture}[rotate=20,scale=0.5,
    vertex/.style={
        draw,
        circle, 
        inner sep=.0pt, 
        minimum size=0.1cm},
    every edge/.append style={}]
    \def\radius{5}
    \def\gap{1}

    \pgfmathsetmacro\x{\radius * cos(0)}
    \pgfmathsetmacro\y{\radius * sin(0)}
    \node[vertex] (v1) at (\x,\y) {};
    
    \pgfmathsetmacro\xvFourteen{\radius * cos(260)}
    \pgfmathsetmacro\yvFourteen{\radius * sin(260)}
    \node[vertex] (v14) at (\xvFourteen, \yvFourteen) {};
    
    \node[vertex] (v15) at ({\xvFourteen + (1/7) * (\x - \xvFourteen)}, {\yvFourteen + (1/7) * (\y - \yvFourteen)}) {};
    \node[vertex] (v16) at ({\xvFourteen + (3/7) * (\x - \xvFourteen)}, {\yvFourteen + (3/7) * (\y - \yvFourteen)}) {};
    \node[vertex] (v17) at ({\xvFourteen + (4/7) * (\x - \xvFourteen)}, {\yvFourteen + (4/7) * (\y - \yvFourteen)}) {};
    \node[vertex] (v18) at ({\xvFourteen + (6/7) * (\x - \xvFourteen)}, {\yvFourteen + (6/7) * (\y - \yvFourteen)}) {};

    \pgfmathsetmacro\dx{\x - \xvFourteen}
    \pgfmathsetmacro\dy{\y - \yvFourteen}
    \pgfmathsetmacro\len{sqrt(\dx*\dx + \dy*\dy)}
    \pgfmathsetmacro\nx{\dy / \len} 
    \pgfmathsetmacro\ny{-\dx / \len}

    \node[vertex] (mv1) at ({\x + 2*\gap*\nx}, {\y + 2*\gap*\ny}) {};
    \node[vertex] (mv14) at ({\xvFourteen + 2*\gap*\nx}, {\yvFourteen + 2*\gap*\ny}) {};
    \node[vertex] (mv15) at ({(\xvFourteen + (1/7) * (\x - \xvFourteen)) + 2*\gap*\nx}, 
                              {(\yvFourteen + (1/7) * (\y - \yvFourteen)) + 2*\gap*\ny}) {};
    \node[vertex] (mv16) at ({(\xvFourteen + (3/7) * (\x - \xvFourteen)) + 2*\gap*\nx}, 
                              {(\yvFourteen + (3/7) * (\y - \yvFourteen)) + 2*\gap*\ny}) {};
    \node[vertex] (mv17) at ({(\xvFourteen + (4/7) * (\x - \xvFourteen)) + 2*\gap*\nx}, 
                              {(\yvFourteen + (4/7) * (\y - \yvFourteen)) + 2*\gap*\ny}) {};
    \node[vertex] (mv18) at ({(\xvFourteen + (6/7) * (\x - \xvFourteen)) + 2*\gap*\nx}, 
                              {(\yvFourteen + (6/7) * (\y - \yvFourteen)) + 2*\gap*\ny}) {};

    \draw[very thick] (v14) -- (v15);
    \draw[very thick, wavey] (v15) -- (v16); 
    \draw[very thick] (v16) -- (v17);
    \draw[very thick, wavey] (v17) -- (v18); 
    \draw[very thick] (v18) -- (v1);

    \draw[wavey] (mv1) to[bend left=30] (mv14);
    \draw (mv14) -- (mv15);
    \draw[wavey] (mv15) -- (mv16); 
    \draw (mv16) -- (mv17);
    \draw[wavey] (mv17) -- (mv18); 
    \draw (mv18) -- (mv1);

    \pgfmathsetmacro\xvTwo{\radius * cos(20)}
    \pgfmathsetmacro\yvTwo{\radius * sin(20)}
    \node[vertex] (v2) at (\xvTwo, \yvTwo) {};
    
    \pgfmathsetmacro\xvSeven{\radius * cos(120)}
    \pgfmathsetmacro\yvSeven{\radius * sin(120)}
    \node[vertex] (v7) at (\xvSeven, \yvSeven) {};
    
    \node[vertex] (v3) at ({\xvTwo + (1/7) * (\xvSeven - \xvTwo)}, {\yvTwo + (1/7) * (\yvSeven - \yvTwo)}) {};
    \node[vertex] (v4) at ({\xvTwo + (3/7) * (\xvSeven - \xvTwo)}, {\yvTwo + (3/7) * (\yvSeven - \yvTwo)}) {};
    \node[vertex] (v5) at ({\xvTwo + (4/7) * (\xvSeven - \xvTwo)}, {\yvTwo + (4/7) * (\yvSeven - \yvTwo)}) {};
    \node[vertex] (v6) at ({\xvTwo + (6/7) * (\xvSeven - \xvTwo)}, {\yvTwo + (6/7) * (\yvSeven - \yvTwo)}) {};

    \pgfmathsetmacro\dx{\xvSeven - \xvTwo}
    \pgfmathsetmacro\dy{\yvSeven - \yvTwo}
    \pgfmathsetmacro\len{sqrt(\dx*\dx + \dy*\dy)}
    \pgfmathsetmacro\nx{\dy / \len} 
    \pgfmathsetmacro\ny{-\dx / \len}

    \node[vertex] (mv2) at ({\xvTwo + 2*\gap*\nx}, {\yvTwo + 2*\gap*\ny}) {};
    \node[vertex] (mv7) at ({\xvSeven + 2*\gap*\nx}, {\yvSeven + 2*\gap*\ny}) {};
    \node[vertex] (mv3) at ({(\xvTwo + (1/7) * (\xvSeven - \xvTwo)) + 2*\gap*\nx}, 
                                {(\yvTwo + (1/7) * (\yvSeven - \yvTwo)) + 2*\gap*\ny}) {};
    \node[vertex] (mv4) at ({(\xvTwo + (3/7) * (\xvSeven - \xvTwo)) + 2*\gap*\nx}, 
                                {(\yvTwo + (3/7) * (\yvSeven - \yvTwo)) + 2*\gap*\ny}) {};
    \node[vertex] (mv5) at ({(\xvTwo + (4/7) * (\xvSeven - \xvTwo)) + 2*\gap*\nx}, 
                                {(\yvTwo + (4/7) * (\yvSeven - \yvTwo)) + 2*\gap*\ny}) {};
    \node[vertex] (mv6) at ({(\xvTwo + (6/7) * (\xvSeven - \xvTwo)) + 2*\gap*\nx}, 
                                {(\yvTwo + (6/7) * (\yvSeven - \yvTwo)) + 2*\gap*\ny}) {};

    \draw[wavey] (mv2) to[bend right=30] (mv7);
    \draw (mv2) -- (mv3); 
    \draw[wavey] (mv3) -- (mv4); 
    \draw (mv4) -- (mv5);
    \draw[wavey] (mv5) -- (mv6); 
    \draw (mv6) -- (mv7);

    \pgfmathsetmacro\xvEight{\radius * cos(140)}
    \pgfmathsetmacro\yvEight{\radius * sin(140)}
    \node[vertex] (v8) at (\xvEight, \yvEight) {};

    \pgfmathsetmacro\xvThirteen{\radius * cos(240)}
    \pgfmathsetmacro\yvThirteen{\radius * sin(240)}
    \node[vertex] (v13) at (\xvThirteen, \yvThirteen) {};

    \node[vertex] (v9) at ({\xvEight + (1/7) * (\xvThirteen - \xvEight)}, {\yvEight + (1/7) * (\yvThirteen - \yvEight)}) {};
    \node[vertex] (v10) at ({\xvEight + (3/7) * (\xvThirteen - \xvEight)}, {\yvEight + (3/7) * (\yvThirteen - \yvEight)}) {};
    \node[vertex] (v11) at ({\xvEight + (4/7) * (\xvThirteen - \xvEight)}, {\yvEight + (4/7) * (\yvThirteen - \yvEight)}) {};
    \node[vertex] (v12) at ({\xvEight + (6/7) * (\xvThirteen - \xvEight)}, {\yvEight + (6/7) * (\yvThirteen - \yvEight)}) {};

    \pgfmathsetmacro\dx{\xvThirteen - \xvEight}
    \pgfmathsetmacro\dy{\yvThirteen - \yvEight}
    \pgfmathsetmacro\len{sqrt(\dx*\dx + \dy*\dy)}
    \pgfmathsetmacro\nx{\dy / \len} 
    \pgfmathsetmacro\ny{-\dx / \len}

    \node[vertex] (mv8) at ({\xvEight + 2*\gap*\nx}, {\yvEight + 2*\gap*\ny}) {};
    \node[vertex] (mv13) at ({\xvThirteen + 2*\gap*\nx}, {\yvThirteen + 2*\gap*\ny}) {};
    \node[vertex] (mv9) at ({(\xvEight + (1/7) * (\xvThirteen - \xvEight)) + 2*\gap*\nx}, 
                             {(\yvEight + (1/7) * (\yvThirteen - \yvEight)) + 2*\gap*\ny}) {};
    \node[vertex] (mv10) at ({(\xvEight + (3/7) * (\xvThirteen - \xvEight)) + 2*\gap*\nx}, 
                              {(\yvEight + (3/7) * (\yvThirteen - \yvEight)) + 2*\gap*\ny}) {};
    \node[vertex] (mv11) at ({(\xvEight + (4/7) * (\xvThirteen - \xvEight)) + 2*\gap*\nx}, 
                              {(\yvEight + (4/7) * (\yvThirteen - \yvEight)) + 2*\gap*\ny}) {};
    \node[vertex] (mv12) at ({(\xvEight + (6/7) * (\xvThirteen - \xvEight)) + 2*\gap*\nx}, 
                              {(\yvEight + (6/7) * (\yvThirteen - \yvEight)) + 2*\gap*\ny}) {};

    \draw[wavey] (mv8) to[bend right=30] (mv13);
    \draw (mv8) -- (mv9);
    \draw[wavey] (mv9) -- (mv10); 
    \draw (mv10) -- (mv11);
    \draw[wavey] (mv11) -- (mv12); 
    \draw (mv12) -- (mv13);

    \draw (v1) -- (v2);
    \draw (v2) -- (v3); 
    \draw[wavey] (v3) -- (v4); 
    \draw (v4) -- (v5);
    \draw[wavey] (v5) -- (v6); 
    \draw (v6) -- (v7);
    \draw (v7) -- (v8); 
    \draw (v8) -- (v9);
    \draw[wavey] (v9) -- (v10); 
    \draw (v10) -- (v11);
    \draw[wavey] (v11) -- (v12); 
    \draw (v12) -- (v13);
    \draw (v13) -- (v14); 
    \draw[very thick, wavey] (v1) to[bend right=30] (v14);
    \draw[wavey] (v2) to[bend left=30] (v7);
    \draw[wavey] (v8) to[bend left=30] (v13);

    \draw (mv3) -- (v4);
    \draw (mv4) -- (v3);
    \draw (mv5) -- (v6);
    \draw (mv6) -- (v5);

    \draw (mv9) -- (v10);
    \draw (mv10) -- (v9);
    \draw (mv11) -- (v12);
    \draw (mv12) -- (v11);

    \draw (mv15) -- (v16);
    \draw (mv16) -- (v15);
    \draw (mv17) -- (v18);
    \draw (mv18) -- (v17);
\end{tikzpicture}
        \caption{The $2$-cycle for a pair $(u, x)$ with $ux \in E(G)$.}
    \end{subfigure}
    
    \vspace{0.05cm}
    
    \begin{subfigure}{0.45\linewidth}
		\tikzstyle{wavey}=[decorate,decoration={coil, aspect=-.5, post length=1mm, segment length=1mm, pre length=2mm},
shorten <= -.8pt,shorten >= -0.8pt]

\begin{tikzpicture}[rotate=20,scale=0.5,
    vertex/.style={
        draw,
        circle, 
        inner sep=.0pt, 
        minimum size=0.1cm},
    every edge/.append style={}]
    \def\radius{5}
    \def\gap{1}

    \pgfmathsetmacro\x{\radius * cos(0)}
    \pgfmathsetmacro\y{\radius * sin(0)}
    \node[vertex] (v1) at (\x,\y) {};
    
    \pgfmathsetmacro\xvFourteen{\radius * cos(260)}
    \pgfmathsetmacro\yvFourteen{\radius * sin(260)}
    \node[vertex] (v14) at (\xvFourteen, \yvFourteen) {};
    
    \node[vertex] (v15) at ({\xvFourteen + (1/7) * (\x - \xvFourteen)}, {\yvFourteen + (1/7) * (\y - \yvFourteen)}) {};
    \node[vertex] (v16) at ({\xvFourteen + (3/7) * (\x - \xvFourteen)}, {\yvFourteen + (3/7) * (\y - \yvFourteen)}) {};
    \node[vertex] (v17) at ({\xvFourteen + (4/7) * (\x - \xvFourteen)}, {\yvFourteen + (4/7) * (\y - \yvFourteen)}) {};
    \node[vertex] (v18) at ({\xvFourteen + (6/7) * (\x - \xvFourteen)}, {\yvFourteen + (6/7) * (\y - \yvFourteen)}) {};

    \pgfmathsetmacro\dx{\x - \xvFourteen}
    \pgfmathsetmacro\dy{\y - \yvFourteen}
    \pgfmathsetmacro\len{sqrt(\dx*\dx + \dy*\dy)}
    \pgfmathsetmacro\nx{\dy / \len} 
    \pgfmathsetmacro\ny{-\dx / \len}

    \node[vertex] (mv1) at ({\x + 2*\gap*\nx}, {\y + 2*\gap*\ny}) {};
    \node[vertex] (mv14) at ({\xvFourteen + 2*\gap*\nx}, {\yvFourteen + 2*\gap*\ny}) {};
    \node[vertex] (mv15) at ({(\xvFourteen + (1/7) * (\x - \xvFourteen)) + 2*\gap*\nx}, 
                              {(\yvFourteen + (1/7) * (\y - \yvFourteen)) + 2*\gap*\ny}) {};
    \node[vertex] (mv16) at ({(\xvFourteen + (3/7) * (\x - \xvFourteen)) + 2*\gap*\nx}, 
                              {(\yvFourteen + (3/7) * (\y - \yvFourteen)) + 2*\gap*\ny}) {};
    \node[vertex] (mv17) at ({(\xvFourteen + (4/7) * (\x - \xvFourteen)) + 2*\gap*\nx}, 
                              {(\yvFourteen + (4/7) * (\y - \yvFourteen)) + 2*\gap*\ny}) {};
    \node[vertex] (mv18) at ({(\xvFourteen + (6/7) * (\x - \xvFourteen)) + 2*\gap*\nx}, 
                              {(\yvFourteen + (6/7) * (\y - \yvFourteen)) + 2*\gap*\ny}) {};

    \draw[very thick] (v14) -- (v15);
    \draw[very thick, wavey] (v15) -- (v16); 
    \draw (v16) -- (v17);
    \draw[very thick, wavey] (v17) -- (v18); 
    \draw[very thick] (v18) -- (v1);

    \draw[wavey] (mv1) to[bend left=30] (mv14);
    \draw (mv14) -- (mv15);
    \draw[very thick, wavey] (mv15) -- (mv16); 
    \draw[very thick] (mv16) -- (mv17);
    \draw[very thick, wavey] (mv17) -- (mv18); 
    \draw (mv18) -- (mv1);

    \pgfmathsetmacro\xvTwo{\radius * cos(20)}
    \pgfmathsetmacro\yvTwo{\radius * sin(20)}
    \node[vertex] (v2) at (\xvTwo, \yvTwo) {};
    
    \pgfmathsetmacro\xvSeven{\radius * cos(120)}
    \pgfmathsetmacro\yvSeven{\radius * sin(120)}
    \node[vertex] (v7) at (\xvSeven, \yvSeven) {};
    
    \node[vertex] (v3) at ({\xvTwo + (1/7) * (\xvSeven - \xvTwo)}, {\yvTwo + (1/7) * (\yvSeven - \yvTwo)}) {};
    \node[vertex] (v4) at ({\xvTwo + (3/7) * (\xvSeven - \xvTwo)}, {\yvTwo + (3/7) * (\yvSeven - \yvTwo)}) {};
    \node[vertex] (v5) at ({\xvTwo + (4/7) * (\xvSeven - \xvTwo)}, {\yvTwo + (4/7) * (\yvSeven - \yvTwo)}) {};
    \node[vertex] (v6) at ({\xvTwo + (6/7) * (\xvSeven - \xvTwo)}, {\yvTwo + (6/7) * (\yvSeven - \yvTwo)}) {};

    \pgfmathsetmacro\dx{\xvSeven - \xvTwo}
    \pgfmathsetmacro\dy{\yvSeven - \yvTwo}
    \pgfmathsetmacro\len{sqrt(\dx*\dx + \dy*\dy)}
    \pgfmathsetmacro\nx{\dy / \len} 
    \pgfmathsetmacro\ny{-\dx / \len}

    \node[vertex] (mv2) at ({\xvTwo + 2*\gap*\nx}, {\yvTwo + 2*\gap*\ny}) {};
    \node[vertex] (mv7) at ({\xvSeven + 2*\gap*\nx}, {\yvSeven + 2*\gap*\ny}) {};
    \node[vertex] (mv3) at ({(\xvTwo + (1/7) * (\xvSeven - \xvTwo)) + 2*\gap*\nx}, 
                                {(\yvTwo + (1/7) * (\yvSeven - \yvTwo)) + 2*\gap*\ny}) {};
    \node[vertex] (mv4) at ({(\xvTwo + (3/7) * (\xvSeven - \xvTwo)) + 2*\gap*\nx}, 
                                {(\yvTwo + (3/7) * (\yvSeven - \yvTwo)) + 2*\gap*\ny}) {};
    \node[vertex] (mv5) at ({(\xvTwo + (4/7) * (\xvSeven - \xvTwo)) + 2*\gap*\nx}, 
                                {(\yvTwo + (4/7) * (\yvSeven - \yvTwo)) + 2*\gap*\ny}) {};
    \node[vertex] (mv6) at ({(\xvTwo + (6/7) * (\xvSeven - \xvTwo)) + 2*\gap*\nx}, 
                                {(\yvTwo + (6/7) * (\yvSeven - \yvTwo)) + 2*\gap*\ny}) {};

    \draw[wavey] (mv2) to[bend right=30] (mv7);
    \draw (mv2) -- (mv3); 
    \draw[wavey] (mv3) -- (mv4); 
    \draw (mv4) -- (mv5);
    \draw[wavey] (mv5) -- (mv6); 
    \draw (mv6) -- (mv7);

    \pgfmathsetmacro\xvEight{\radius * cos(140)}
    \pgfmathsetmacro\yvEight{\radius * sin(140)}
    \node[vertex] (v8) at (\xvEight, \yvEight) {};

    \pgfmathsetmacro\xvThirteen{\radius * cos(240)}
    \pgfmathsetmacro\yvThirteen{\radius * sin(240)}
    \node[vertex] (v13) at (\xvThirteen, \yvThirteen) {};

    \node[vertex] (v9) at ({\xvEight + (1/7) * (\xvThirteen - \xvEight)}, {\yvEight + (1/7) * (\yvThirteen - \yvEight)}) {};
    \node[vertex] (v10) at ({\xvEight + (3/7) * (\xvThirteen - \xvEight)}, {\yvEight + (3/7) * (\yvThirteen - \yvEight)}) {};
    \node[vertex] (v11) at ({\xvEight + (4/7) * (\xvThirteen - \xvEight)}, {\yvEight + (4/7) * (\yvThirteen - \yvEight)}) {};
    \node[vertex] (v12) at ({\xvEight + (6/7) * (\xvThirteen - \xvEight)}, {\yvEight + (6/7) * (\yvThirteen - \yvEight)}) {};

    \pgfmathsetmacro\dx{\xvThirteen - \xvEight}
    \pgfmathsetmacro\dy{\yvThirteen - \yvEight}
    \pgfmathsetmacro\len{sqrt(\dx*\dx + \dy*\dy)}
    \pgfmathsetmacro\nx{\dy / \len} 
    \pgfmathsetmacro\ny{-\dx / \len}

    \node[vertex] (mv8) at ({\xvEight + 2*\gap*\nx}, {\yvEight + 2*\gap*\ny}) {};
    \node[vertex] (mv13) at ({\xvThirteen + 2*\gap*\nx}, {\yvThirteen + 2*\gap*\ny}) {};
    \node[vertex] (mv9) at ({(\xvEight + (1/7) * (\xvThirteen - \xvEight)) + 2*\gap*\nx}, 
                             {(\yvEight + (1/7) * (\yvThirteen - \yvEight)) + 2*\gap*\ny}) {};
    \node[vertex] (mv10) at ({(\xvEight + (3/7) * (\xvThirteen - \xvEight)) + 2*\gap*\nx}, 
                              {(\yvEight + (3/7) * (\yvThirteen - \yvEight)) + 2*\gap*\ny}) {};
    \node[vertex] (mv11) at ({(\xvEight + (4/7) * (\xvThirteen - \xvEight)) + 2*\gap*\nx}, 
                              {(\yvEight + (4/7) * (\yvThirteen - \yvEight)) + 2*\gap*\ny}) {};
    \node[vertex] (mv12) at ({(\xvEight + (6/7) * (\xvThirteen - \xvEight)) + 2*\gap*\nx}, 
                              {(\yvEight + (6/7) * (\yvThirteen - \yvEight)) + 2*\gap*\ny}) {};

    \draw[wavey] (mv8) to[bend right=30] (mv13);
    \draw (mv8) -- (mv9);
    \draw[wavey] (mv9) -- (mv10); 
    \draw (mv10) -- (mv11);
    \draw[wavey] (mv11) -- (mv12); 
    \draw (mv12) -- (mv13);

    \draw (v1) -- (v2);
    \draw (v2) -- (v3); 
    \draw[wavey] (v3) -- (v4); 
    \draw (v4) -- (v5);
    \draw[wavey] (v5) -- (v6); 
    \draw (v6) -- (v7);
    \draw (v7) -- (v8); 
    \draw (v8) -- (v9);
    \draw[wavey] (v9) -- (v10); 
    \draw (v10) -- (v11);
    \draw[wavey] (v11) -- (v12); 
    \draw (v12) -- (v13);
    \draw (v13) -- (v14); 
    \draw[very thick, wavey] (v1) to[bend right=30] (v14);
    \draw[wavey] (v2) to[bend left=30] (v7);
    \draw[wavey] (v8) to[bend left=30] (v13);

    \draw (mv3) -- (v4);
    \draw (mv4) -- (v3);
    \draw (mv5) -- (v6);
    \draw (mv6) -- (v5);

    \draw (mv9) -- (v10);
    \draw (mv10) -- (v9);
    \draw (mv11) -- (v12);
    \draw (mv12) -- (v11);

    \draw[very thick] (mv15) -- (v16);
    \draw (mv16) -- (v15);
    \draw (mv17) -- (v18);
    \draw[very thick] (mv18) -- (v17);
\end{tikzpicture}
        \caption{The $3$-cycle for a pair $(u, x)$ with $ux \in E(G)$.}
    \end{subfigure}%
    \hspace{0.05\linewidth}
    \begin{subfigure}{0.45\linewidth}
		\tikzstyle{wavey}=[decorate,decoration={coil, aspect=-.5, post length=1mm, segment length=1mm, pre length=2mm},
shorten <= -.8pt,shorten >= -0.8pt]

\begin{tikzpicture}[rotate=20,scale=0.5,
    vertex/.style={
        draw,
        circle, 
        inner sep=.0pt, 
        minimum size=0.1cm},
    every edge/.append style={}]
    \def\radius{5}
    \def\gap{1}

    \pgfmathsetmacro\x{\radius * cos(0)}
    \pgfmathsetmacro\y{\radius * sin(0)}
    \node[vertex] (v1) at (\x,\y) {};
    
    \pgfmathsetmacro\xvFourteen{\radius * cos(260)}
    \pgfmathsetmacro\yvFourteen{\radius * sin(260)}
    \node[vertex] (v14) at (\xvFourteen, \yvFourteen) {};
    
    \node[vertex] (v15) at ({\xvFourteen + (1/7) * (\x - \xvFourteen)}, {\yvFourteen + (1/7) * (\y - \yvFourteen)}) {};
    \node[vertex] (v16) at ({\xvFourteen + (3/7) * (\x - \xvFourteen)}, {\yvFourteen + (3/7) * (\y - \yvFourteen)}) {};
    \node[vertex] (v17) at ({\xvFourteen + (4/7) * (\x - \xvFourteen)}, {\yvFourteen + (4/7) * (\y - \yvFourteen)}) {};
    \node[vertex] (v18) at ({\xvFourteen + (6/7) * (\x - \xvFourteen)}, {\yvFourteen + (6/7) * (\y - \yvFourteen)}) {};

    \pgfmathsetmacro\dx{\x - \xvFourteen}
    \pgfmathsetmacro\dy{\y - \yvFourteen}
    \pgfmathsetmacro\len{sqrt(\dx*\dx + \dy*\dy)}
    \pgfmathsetmacro\nx{\dy / \len} 
    \pgfmathsetmacro\ny{-\dx / \len}

    \node[vertex] (mv1) at ({\x + 2*\gap*\nx}, {\y + 2*\gap*\ny}) {};
    \node[vertex] (mv14) at ({\xvFourteen + 2*\gap*\nx}, {\yvFourteen + 2*\gap*\ny}) {};
    \node[vertex] (mv15) at ({(\xvFourteen + (1/7) * (\x - \xvFourteen)) + 2*\gap*\nx}, 
                              {(\yvFourteen + (1/7) * (\y - \yvFourteen)) + 2*\gap*\ny}) {};
    \node[vertex] (mv16) at ({(\xvFourteen + (3/7) * (\x - \xvFourteen)) + 2*\gap*\nx}, 
                              {(\yvFourteen + (3/7) * (\y - \yvFourteen)) + 2*\gap*\ny}) {};
    \node[vertex] (mv17) at ({(\xvFourteen + (4/7) * (\x - \xvFourteen)) + 2*\gap*\nx}, 
                              {(\yvFourteen + (4/7) * (\y - \yvFourteen)) + 2*\gap*\ny}) {};
    \node[vertex] (mv18) at ({(\xvFourteen + (6/7) * (\x - \xvFourteen)) + 2*\gap*\nx}, 
                              {(\yvFourteen + (6/7) * (\y - \yvFourteen)) + 2*\gap*\ny}) {};

    \draw (v14) -- (v15);
    \draw[very thick, wavey] (v15) -- (v16); 
    \draw (v16) -- (v17);
    \draw[very thick, wavey] (v17) -- (v18); 
    \draw (v18) -- (v1);

    \draw[wavey] (mv1) to[bend left=30] (mv14);
    \draw (mv14) -- (mv15);
    \draw[very thick, wavey] (mv15) -- (mv16); 
    \draw (mv16) -- (mv17);
    \draw[very thick, wavey] (mv17) -- (mv18); 
    \draw (mv18) -- (mv1);

    \pgfmathsetmacro\xvTwo{\radius * cos(20)}
    \pgfmathsetmacro\yvTwo{\radius * sin(20)}
    \node[vertex] (v2) at (\xvTwo, \yvTwo) {};
    
    \pgfmathsetmacro\xvSeven{\radius * cos(120)}
    \pgfmathsetmacro\yvSeven{\radius * sin(120)}
    \node[vertex] (v7) at (\xvSeven, \yvSeven) {};
    
    \node[vertex] (v3) at ({\xvTwo + (1/7) * (\xvSeven - \xvTwo)}, {\yvTwo + (1/7) * (\yvSeven - \yvTwo)}) {};
    \node[vertex] (v4) at ({\xvTwo + (3/7) * (\xvSeven - \xvTwo)}, {\yvTwo + (3/7) * (\yvSeven - \yvTwo)}) {};
    \node[vertex] (v5) at ({\xvTwo + (4/7) * (\xvSeven - \xvTwo)}, {\yvTwo + (4/7) * (\yvSeven - \yvTwo)}) {};
    \node[vertex] (v6) at ({\xvTwo + (6/7) * (\xvSeven - \xvTwo)}, {\yvTwo + (6/7) * (\yvSeven - \yvTwo)}) {};

    \pgfmathsetmacro\dx{\xvSeven - \xvTwo}
    \pgfmathsetmacro\dy{\yvSeven - \yvTwo}
    \pgfmathsetmacro\len{sqrt(\dx*\dx + \dy*\dy)}
    \pgfmathsetmacro\nx{\dy / \len} 
    \pgfmathsetmacro\ny{-\dx / \len}

    \node[vertex] (mv2) at ({\xvTwo + 2*\gap*\nx}, {\yvTwo + 2*\gap*\ny}) {};
    \node[vertex] (mv7) at ({\xvSeven + 2*\gap*\nx}, {\yvSeven + 2*\gap*\ny}) {};
    \node[vertex] (mv3) at ({(\xvTwo + (1/7) * (\xvSeven - \xvTwo)) + 2*\gap*\nx}, 
                                {(\yvTwo + (1/7) * (\yvSeven - \yvTwo)) + 2*\gap*\ny}) {};
    \node[vertex] (mv4) at ({(\xvTwo + (3/7) * (\xvSeven - \xvTwo)) + 2*\gap*\nx}, 
                                {(\yvTwo + (3/7) * (\yvSeven - \yvTwo)) + 2*\gap*\ny}) {};
    \node[vertex] (mv5) at ({(\xvTwo + (4/7) * (\xvSeven - \xvTwo)) + 2*\gap*\nx}, 
                                {(\yvTwo + (4/7) * (\yvSeven - \yvTwo)) + 2*\gap*\ny}) {};
    \node[vertex] (mv6) at ({(\xvTwo + (6/7) * (\xvSeven - \xvTwo)) + 2*\gap*\nx}, 
                                {(\yvTwo + (6/7) * (\yvSeven - \yvTwo)) + 2*\gap*\ny}) {};

    \draw[wavey] (mv2) to[bend right=30] (mv7);
    \draw (mv2) -- (mv3); 
    \draw[wavey] (mv3) -- (mv4); 
    \draw (mv4) -- (mv5);
    \draw[wavey] (mv5) -- (mv6); 
    \draw (mv6) -- (mv7);

    \pgfmathsetmacro\xvEight{\radius * cos(140)}
    \pgfmathsetmacro\yvEight{\radius * sin(140)}
    \node[vertex] (v8) at (\xvEight, \yvEight) {};

    \pgfmathsetmacro\xvThirteen{\radius * cos(240)}
    \pgfmathsetmacro\yvThirteen{\radius * sin(240)}
    \node[vertex] (v13) at (\xvThirteen, \yvThirteen) {};

    \node[vertex] (v9) at ({\xvEight + (1/7) * (\xvThirteen - \xvEight)}, {\yvEight + (1/7) * (\yvThirteen - \yvEight)}) {};
    \node[vertex] (v10) at ({\xvEight + (3/7) * (\xvThirteen - \xvEight)}, {\yvEight + (3/7) * (\yvThirteen - \yvEight)}) {};
    \node[vertex] (v11) at ({\xvEight + (4/7) * (\xvThirteen - \xvEight)}, {\yvEight + (4/7) * (\yvThirteen - \yvEight)}) {};
    \node[vertex] (v12) at ({\xvEight + (6/7) * (\xvThirteen - \xvEight)}, {\yvEight + (6/7) * (\yvThirteen - \yvEight)}) {};

    \pgfmathsetmacro\dx{\xvThirteen - \xvEight}
    \pgfmathsetmacro\dy{\yvThirteen - \yvEight}
    \pgfmathsetmacro\len{sqrt(\dx*\dx + \dy*\dy)}
    \pgfmathsetmacro\nx{\dy / \len} 
    \pgfmathsetmacro\ny{-\dx / \len}

    \node[vertex] (mv8) at ({\xvEight + 2*\gap*\nx}, {\yvEight + 2*\gap*\ny}) {};
    \node[vertex] (mv13) at ({\xvThirteen + 2*\gap*\nx}, {\yvThirteen + 2*\gap*\ny}) {};
    \node[vertex] (mv9) at ({(\xvEight + (1/7) * (\xvThirteen - \xvEight)) + 2*\gap*\nx}, 
                             {(\yvEight + (1/7) * (\yvThirteen - \yvEight)) + 2*\gap*\ny}) {};
    \node[vertex] (mv10) at ({(\xvEight + (3/7) * (\xvThirteen - \xvEight)) + 2*\gap*\nx}, 
                              {(\yvEight + (3/7) * (\yvThirteen - \yvEight)) + 2*\gap*\ny}) {};
    \node[vertex] (mv11) at ({(\xvEight + (4/7) * (\xvThirteen - \xvEight)) + 2*\gap*\nx}, 
                              {(\yvEight + (4/7) * (\yvThirteen - \yvEight)) + 2*\gap*\ny}) {};
    \node[vertex] (mv12) at ({(\xvEight + (6/7) * (\xvThirteen - \xvEight)) + 2*\gap*\nx}, 
                              {(\yvEight + (6/7) * (\yvThirteen - \yvEight)) + 2*\gap*\ny}) {};

    \draw[wavey] (mv8) to[bend right=30] (mv13);
    \draw (mv8) -- (mv9);
    \draw[wavey] (mv9) -- (mv10); 
    \draw (mv10) -- (mv11);
    \draw[wavey] (mv11) -- (mv12); 
    \draw (mv12) -- (mv13);

    \draw (v1) -- (v2);
    \draw (v2) -- (v3); 
    \draw[wavey] (v3) -- (v4); 
    \draw (v4) -- (v5);
    \draw[wavey] (v5) -- (v6); 
    \draw (v6) -- (v7);
    \draw (v7) -- (v8); 
    \draw (v8) -- (v9);
    \draw[wavey] (v9) -- (v10); 
    \draw (v10) -- (v11);
    \draw[wavey] (v11) -- (v12); 
    \draw (v12) -- (v13);
    \draw (v13) -- (v14); 
    \draw[wavey] (v1) to[bend right=30] (v14);
    \draw[wavey] (v2) to[bend left=30] (v7);
    \draw[wavey] (v8) to[bend left=30] (v13);

    \draw (mv3) -- (v4);
    \draw (mv4) -- (v3);
    \draw (mv5) -- (v6);
    \draw (mv6) -- (v5);

    \draw (mv9) -- (v10);
    \draw (mv10) -- (v9);
    \draw (mv11) -- (v12);
    \draw (mv12) -- (v11);

    \draw[very thick] (mv15) -- (v16);
    \draw[very thick] (mv16) -- (v15);
    \draw[very thick] (mv17) -- (v18);
    \draw[very thick] (mv18) -- (v17);
\end{tikzpicture}
        \caption{The two $4$-cycles for an edge $ux \in E(G)$.}
    \end{subfigure}

    \caption{An illustration of the four cycle types. The vertex labels are the same 
	as in~\cref{fig:vc-covercycle-transformation} and are omitted for clarity.}
    \label{fig:vc-covercycle-cycle-types}
\end{figure}
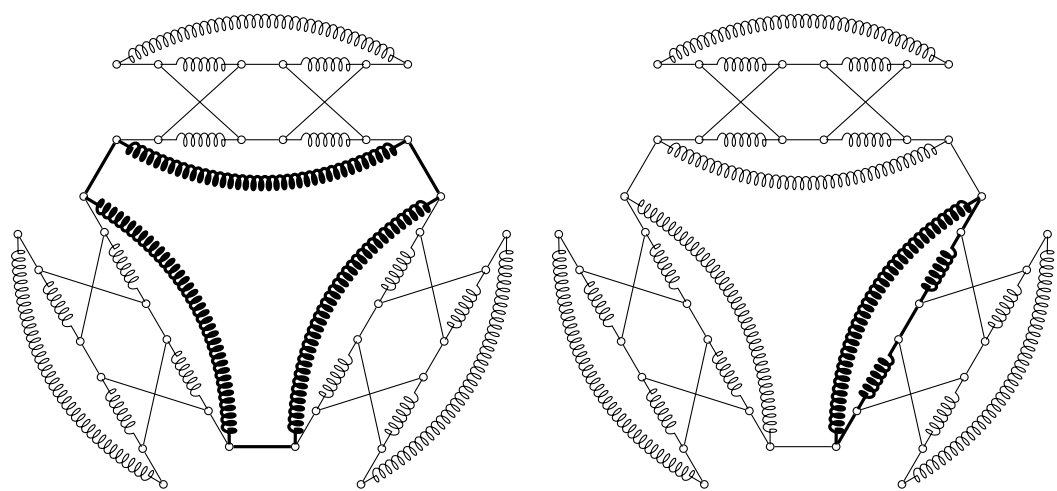
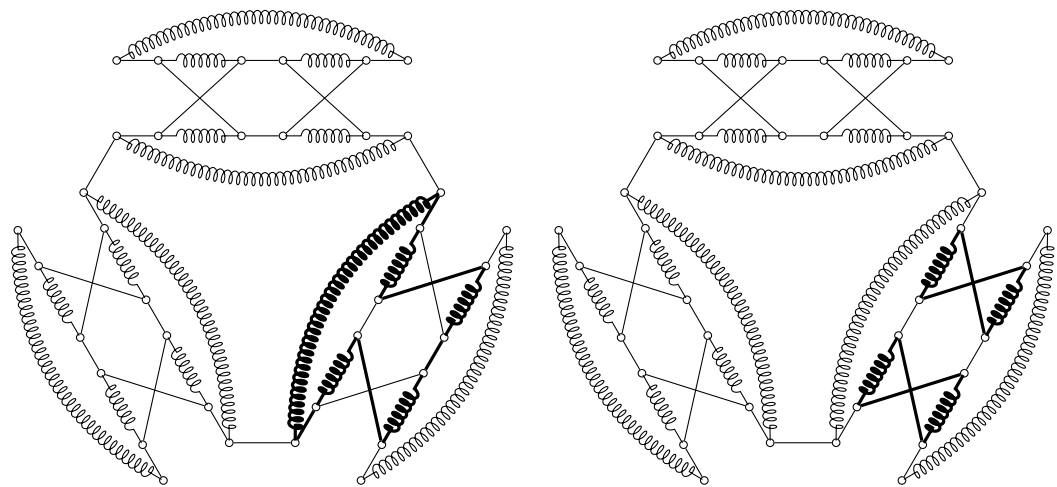
We are now ready to give the easy direction of the statement of the lemma. 

\begin{claim}\label{direasy}
Let $Z$ be a vertex cover of $G$. Then $H$ admits a cycle subpartition of weight $\alpha(|V(G)|+2|Z|)$.
\end{claim}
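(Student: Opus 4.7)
The plan is to construct $\mathcal{C}$ explicitly from $Z$: for every $u \in V(G)\setminus Z$ include the 1-cycle for $u$; for every edge $uv\in E(G)$ with exactly one endpoint in $Z$, say $u\in Z$ and $v\notin Z$, include the 3-cycle for the ordered pair $(u,v)$; for every edge $uv\in E(G)$ with both endpoints in $Z$, include both the 2-cycle for $(u,v)$ and the 2-cycle for $(v,u)$. Note that, since $Z$ is a vertex cover, no edge has both endpoints outside $Z$, so every edge falls under exactly one of these rules.

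The first step is to verify that the listed cycles are pairwise vertex-disjoint. The 1-cycle for $u$ lives entirely inside $X_u$ together with the interiors of the three chain gadgets inside $X_u$. The 2-cycle for $(u,v)$ lives on $\{u_1^v,\ldots,u_6^v\}$ and the interiors of the three chain gadgets $(u_1^v,u_2^v)$, $(u_3^v,u_4^v)$, $(u_5^v,u_6^v)$; in particular it meets $X_u$ only in the pair $\{u_1^v,u_2^v\}$ and the gadget between them. The 3-cycle for $(u,v)$ also meets $X_u$ only in $\{u_1^v,u_2^v\}$ and, inside $X_{uv}$, occupies all eight vertices except $v_1^u,v_2^u$. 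Because a 1-cycle for $u$ is picked only when $u\notin Z$, every neighbour $v$ of such a $u$ lies in $Z$ and contributes only a 3-cycle for $(v,u)$, whose footprint avoids $X_u$ entirely, so no clash arises at $X_u$. When $u\in Z$, the three cycles attached to the edges at $u$ hit $X_u$ in the three disjoint pairs $\{u_1^w,u_2^w\}$ indexed by the neighbours $w$ of $u$, and on any single edge $uv$ the two sides of $X_{uv}$ are split between disjoint cycles (either two 2-cycles, or one 3-cycle that crosses the edge).

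Coverage of $V(H)$ then follows case by case: every $X_u$ with $u\notin Z$ is covered by its 1-cycle; every $X_u$ with $u\in Z$ is covered because each incident edge contributes a 2- or 3-cycle that covers the corresponding pair $\{u_1^w,u_2^w\}$; and every $X_{uv}$ is covered jointly by the 2-cycles for $(u,v)$ and $(v,u)$ when both endpoints lie in $Z$, or entirely by the 3-cycle for $(v,u)$ when only $v\in Z$. Moreover, every chain gadget of $H$ is traversed by precisely the cycle using its two endpoints, so its interior is covered too. Hence $V(\mathcal{C})=V(H)$. For the count, let $E_1$ and $E_2$ be the numbers of edges of $G$ with exactly one and with both endpoints in $Z$. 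Counting edge-endpoints in $Z$ and using that $G$ is cubic gives $E_1+2E_2=3\lvert Z\rvert$, so $\lvert\mathcal{C}\rvert=(\lvert V(G)\rvert-\lvert Z\rvert)+E_1+2E_2=\lvert V(G)\rvert+2\lvert Z\rvert$, and therefore $w(\mathcal{C})=\alpha(\lvert V(G)\rvert+2\lvert Z\rvert)+\beta\cdot 0=\alpha(\lvert V(G)\rvert+2\lvert Z\rvert)$, as required. The only real obstacle is the bookkeeping at the shared vertices $u_1^v,u_2^v\in X_u$ that are also referenced by the cycle attached to edge $uv$; the case distinction above is precisely what ensures these doubly-referenced vertices, and the chain gadget between them, are used by exactly one cycle of $\mathcal{C}$.
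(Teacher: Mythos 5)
Your proof is correct and follows essentially the same approach as the paper: you build the identical cycle subpartition (1-cycles for $u\notin Z$, paired 2-cycles when both endpoints are in $Z$, a 3-cycle when exactly one is), then verify $V(\mathcal{C})=V(H)$ and count. The only difference is cosmetic: the paper attributes three cycles to each $u\in Z$ directly to get $|\mathcal{C}|=|V(G)-Z|+3|Z|$, while you tally per edge type and use the cubic-degree identity $E_1+2E_2=3|Z|$ to reach the same $|V(G)|+2|Z|$; you also spell out the disjointness checks that the paper compresses into ``by construction.''
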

\begin{claimproof}
We describe a cycle subpartition $\mathcal{C}$ of $H$. First, for every $u \in V(G)-Z$, we let $\mathcal{C}$ contain a 1-cycle for $u$. Next consider some $uv \in E(G)$. As $Z$ is a vertex cover and by symmetry, we may suppose that $u \in Z$. If $v \in Z$, we let $\mathcal{C}$ contain a 2-cycle for $(u,v)$ and a 2-cycle for $(v,u)$. If $v \in V(G)-Z$, we let $\mathcal{C}$ contain a 3-cycle for $(u,v)$. This finishes the description of $\mathcal{C}$. By construction, we have $V(\mathcal{C})=V(H)$ and $|\mathcal{C}|=|V(G)-Z|+3|Z|=|V(G)|+2|Z|$. This yields $w(\mathcal{C})=\alpha|\mathcal{C}|+\beta(|V(H)-V(\mathcal{C})|)=\alpha |\mathcal{C}|=\alpha(|V(G)|+2|Z|)$. 
\end{claimproof}

We now give the other direction, which requires a much more involved proof. The difficulty is, roughly speaking, to obtain, starting with an arbitrary cycle subpartition, a cycle subpartition whose shape is similar to the shape of the cycle subpartitions constructed in \cref{direasy}. 

Let $\mathcal{C}$ be a cycle subpartition of $H$ whose weight is at most $\alpha(|V(G)|+2K)$ for some positive integer $K$. We need to show that $G$ admits a vertex cover of size at most $K$. We may suppose that $\mathcal{C}$ is of minimum weight among all cycle subpartitions of $H$.
In the following, if we say that a cycle $C \in \mathcal{C}$ contains an edge of
in $E(H)$ (a set of edges $E(H')$ of a subgraph $H'$ of $H$), we mean that
$E(C)$ contains this edge (this set of edges, $E(H')$).  The following first
restriction on the structure of $\mathcal{C}$ is simple, but very useful.

\begin{claim}\label{thorough}
Let $C \in \mathcal{C}$  and $H'$ a chain gadget in $H$ connecting two vertices $x,y$ such that $C$ contains an $xy$-path fully contained $H'$. Then $C$ contains a Hamiltonian path of $H'$.  
\end{claim}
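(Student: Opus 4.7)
The plan is a minimality argument that leverages property~(iv) of the chain gadget, which supplies a Hamiltonian $xy$-path $P^{\star}$ in $H'$. Suppose for contradiction that the $xy$-path $P\subseteq C\cap H'$ is not Hamiltonian in $H'$, and let $U$ be the nonempty set of interior vertices of $H'$ missed by $P$. The goal is to replace $P$ by $P^{\star}$, produce a strictly cheaper cycle subpartition, and contradict the minimality of $\mathcal{C}$.

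First I would draw two structural consequences from the construction. By property~(i) of the chain gadget, the portals $x,y$ have degree one inside $H'$, while every other vertex of $H'$ has all three of its neighbors inside $H'$. Hence $C$ decomposes as $P$ followed by a $yx$-path $Q$ whose interior lies entirely outside $H'$. For the same reason, any cycle $C'\in\mathcal{C}\setminus\{C\}$ that meets $U$ must be contained in $V(H')\setminus\{x,y\}$, and since it must also be vertex-disjoint from $P$, it must in fact be contained in $U$. Let $\mathcal{D}$ be the collection of such cycles; then $|V(\mathcal{D})|\leq|U|$.

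Next I would perform the swap. Form $C^{\star}$ by replacing $P$ with $P^{\star}$ inside $C$; since $V(P^{\star})=V(H')$ meets $V(Q)$ only at $\{x,y\}$, the resulting closed walk is a genuine cycle. Set
\[
\mathcal{C}^{\star}=\bigl(\mathcal{C}\setminus(\{C\}\cup\mathcal{D})\bigr)\cup\{C^{\star}\}.
\]
Pairwise vertex-disjointness is preserved, because $V(C^{\star})=V(C)\cup U$, and the only cycles in $\mathcal{C}\setminus\{C\}$ intersecting $U$ are exactly those in $\mathcal{D}$, which have been removed. Finally, a direct count gives
\[
w(\mathcal{C}^{\star})-w(\mathcal{C})=-\alpha|\mathcal{D}|-\beta\bigl(|U|-|V(\mathcal{D})|\bigr),
\]
which is strictly negative whenever $U\neq\emptyset$: if $\mathcal{D}=\emptyset$ the second summand is $-\beta|U|<0$, and otherwise the first summand contributes at least $-\alpha<0$ while the second remains nonpositive. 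This contradicts the minimality of $\mathcal{C}$ and forces $P$ to be Hamiltonian in $H'$.

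The main conceptual point is the localization step: observing that, thanks to property~(i), no cycle touching the unused interior of $H'$ can escape the gadget, so the Hamiltonian swap absorbs all nearby cycles at no cost. Everything else is a straightforward bookkeeping of the weight function.
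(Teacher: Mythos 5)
Your proof is correct and follows the same route as the paper: replace the $xy$-path inside $H'$ by a Hamiltonian one via property~(iv), delete the cycles now swallowed inside the gadget, and conclude a strict weight decrease contradicting the minimality of $\mathcal{C}$. Your bookkeeping via $U$ and $\mathcal{D}$ is slightly more explicit than the paper's (which just notes $|\mathcal{C}'|\leq|\mathcal{C}|$ and $V(\mathcal{C})\subseteq V(\mathcal{C}')$ with one of the two strict), but the underlying argument is identical.
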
  
\begin{claimproof}
Suppose otherwise and let $C'$ be the cycle which is obtained from $C$ by replacing the $xy$-path fully contained in $H'$ by a Hamiltonian $xy$-path of $H'$. By property $(iv)$ of chain gadgets, such a path exists. Further, let $\mathcal{C'}$ be obtained from $\mathcal{C}$ by deleting and all cycles of $\mathcal{C}$ fully contained in $H'$ and adding $C'$. It is easy to see that $\mathcal{C'}$ is a cycle subpartition of $H$ and that $|\mathcal{C'}|\leq |\mathcal{C}|$ and $V(\mathcal{C})\subseteq V(\mathcal{C}')$ hold. Further, by assumption, one of $|\mathcal{C'}|<|\mathcal{C}|$ and $V(\mathcal{C})\subsetneq V(\mathcal{C}')$ holds. This yields $w(\mathcal{C}')-w(\mathcal{C})=\alpha(|\mathcal{C'}|-|\mathcal{C}|)+\beta(|V(H)-V(\mathcal{C}')|-|V(H)-V(\mathcal{C}')|)=\alpha(|\mathcal{C'}|-|\mathcal{C}|)+\beta(|V(\mathcal{C})|-|V(\mathcal{C}')|)<0$, a contradiction to the choice of $\mathcal{C}$.
\end{claimproof}

In order to approach the crucial part of our proof, we need some definitions. First, we say that a chain gadget connecting two vertices $x,y \in V(H)$ is {\it neglected} if there does not exist a cycle in $\mathcal{C}$ that contains $\overline{xy}$. Next, we say that some $u \in V(G)$ is {\it erroneous} if there is a neglected chain gadget connecting two vertices in $X'_u$ and we define $V_0$ to be the set of erroneous vertices. Finally, we say that a cycle in $\mathcal{C}$ is {\it affected} if it is not an $i$-cycle for some $i \in [4]$. 

We now prove through two claims that the existence of affected cycles implies the existence of erroneous vertices and hence contributes to $\mathcal{C}$ being of large weight.

\begin{claim}\label{rdftgzhu}
Let $u \in V(G), v \in N_G(u)$ and $C \in \mathcal{C}$ an affected cycle that contains at least one of $u_1^vu_6^v$ and $u_2^vu_3^v$. Then $u$ is erroneous.
\end{claim}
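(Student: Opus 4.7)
The plan is to prove the claim by case analysis tracing the cycle $C$ starting from the edge $u_1^v u_6^v$; the case where $C$ contains $u_2^v u_3^v$ should follow from the symmetry of the gadget under the involution swapping indices $1 \leftrightarrow 2$, $3 \leftrightarrow 6$, $4 \leftrightarrow 5$ within each $X_{uv}$. The key observation driving the argument is the following: for any chain gadget $g$ connecting $x$ and $y$, if both $C$-edges incident to $x$ are non-chain edges, then by vertex-disjointness of $\mathcal{C}$ no cycle in $\mathcal{C}$ can traverse $g$ through $x$, so $g$ is neglected.

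First I examine the edges $C$ uses at $u_1^v$ and $u_6^v$. Each such vertex is incident to exactly one chain edge (of $(u_1^v, u_2^v)$ and $(u_5^v, u_6^v)$ respectively) and two non-chain edges. If $C$ does not use the chain edge at $u_1^v$, the above observation yields that the chain $(u_1^v, u_2^v) \subseteq X_u \subseteq X'_u$ is neglected, so $u$ is erroneous. An analogous argument at $u_6^v$ shows that if its chain edge is unused, the chain $(u_5^v, u_6^v) \subseteq X_{uv} \subseteq X'_u$ is neglected and $u$ is erroneous.

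I may therefore assume $C$ uses both chain edges, and by \cref{thorough}, $C$ contains the Hamiltonian paths $\overline{u_1^v u_2^v}$ and $\overline{u_6^v u_5^v}$, so $u_2^v, u_5^v \in V(C)$. Iterating the same dichotomy, the second $C$-edge at $u_2^v$ is either $u_2^v u_3^v$ or $u_2^v u_1^{v''}$, and at $u_5^v$ is either $u_5^v u_4^v$ or $u_5^v v_6^u$, yielding four sub-cases. In the ``straight'' sub-case $u_2^v u_3^v$ and $u_5^v u_4^v$, I further analyze $u_3^v$ and $u_4^v$: either both use their chain edge (in which case the cycle closes as the $2$-cycle for $(u,v)$, contradicting $C$ being affected), or the chain $(u_3^v, u_4^v) \subseteq X'_u$ becomes neglected (both endpoints saturated by non-chain edges), hence $u$ is erroneous. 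The two ``mixed'' sub-cases (e.g.\ $u_2^v u_3^v$ with $u_5^v v_6^u$) propagate the analysis to $v_6^u, v_5^u, v_4^u, v_3^u$ via the same dichotomy; repeated application either closes $C$ as the $3$-cycle for $(u,v)$ (again contradicting affectedness) or exposes a neglected chain among $\{(v_5^u, v_6^u), (v_3^u, v_4^u), (u_3^v, u_4^v)\} \subseteq X'_u$.

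The hardest sub-case is the ``off-axis'' one, $u_2^v u_1^{v''}$ combined with $u_5^v v_6^u$, where the trace enters both the cyclic structure inside $X_u$ and the $v$-side of $X_{uv}$ simultaneously. Iterating the dichotomy at $u_1^{v''}$ gives either a neglected chain $(u_1^{v''}, u_2^{v''}) \subseteq X_u$ (if $u_1^{v''} u_6^{v''}$ is used) or forces $C$ to continue to $u_2^{v''}$; since $u_1^v$ is already saturated, the ``cyclic'' edge $u_2^{v''} u_1^v$ is blocked, so $C$ must leave the $X_u$ structure via $u_2^{v''} u_3^{v''}$, reducing to one of the earlier sub-cases with respect to the edge $uv''$. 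The $v_6^u$ side is handled symmetrically inside $X_{uv}$. The main obstacle is enumerating all these sub-cases carefully and verifying that whenever the trace fails to close into one of the four canonical cycle types, saturation at the relevant vertices must force a chain gadget in $X'_u$ to be neglected.
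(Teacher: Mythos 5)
Your opening moves are sound and mirror the paper's: by symmetry starting from $u_1^vu_6^v$, noting that if the chain edge at $u_1^v$ (resp.\ $u_6^v$) is unused then $(u_1^v,u_2^v)$ (resp.\ $(u_5^v,u_6^v)$) is neglected, and otherwise obtaining $\overline{u_1^vu_2^v}$ and $\overline{u_6^vu_5^v}$ via \cref{thorough}. But from here your approach diverges from the paper's and runs into a genuine problem.

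Your ``key observation'' only certifies that a chain gadget is neglected when one of its \emph{endvertices lies on $C$ and is saturated by two non-chain $C$-edges}. The paper instead uses a stronger detection criterion: a chain $(x,y)$ can be shown neglected when (nearly) all the non-chain neighbors of $x$ and $y$ already have degree two in $C$, \emph{even if $x,y\notin V(C)$} --- no cycle can then pass through $x$ or $y$ at all, let alone through the chain. This distinction matters. Follow your own trace in the sub-case $u_2^v u_1^{v''}$, $u_5^v u_4^v$, where $u_4^v$ and $v_4^u$ use their chain edges: $C$ then contains $u_1^vu_6^v,\overline{u_1^vu_2^v},\overline{u_6^vu_5^v},u_2^vu_1^{v''},u_5^vu_4^v,\overline{u_4^vu_3^v},u_3^vv_4^u,\overline{v_4^uv_3^u},v_3^uv_2^u$. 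Here the chain $(v_5^u,v_6^u)\subseteq X_{uv}\subseteq X'_u$ \emph{is} neglected (its non-chain neighbors $u_6^v,u_5^v,v_4^u$ are all degree two in $C$; only $v_1^u$ remains, and only one edge joins $v_1^u$ to $\{v_5^u,v_6^u\}$), but $v_5^u,v_6^u\notin V(C)$, so your observation never fires on it. Meanwhile the traced portion of $C$ inside $X'_u$ only hits vertices with one chain edge and one non-chain edge, and the trace then leaves $X'_u$ through $v_2^u\in X_v$ (on one side) and may also exit through $X_{v''}$ or $X_{v'''}$ (on the other), where finding a neglected chain would not make $u$ erroneous. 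Nothing in your plan prevents the cycle from closing far away without your observation ever firing inside $X'_u$. The paper avoids all of this by branching only at $u_5^v$ (and, in Case~2, at $v_3^u$), never needing to consider what happens at $u_2^v$: in each branch it short-circuits to a specific chain in $X'_u$ and applies the blocked-neighborhood argument.

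Two smaller points. In your ``off-axis'' discussion you write that $u_2^{v''}$ is adjacent to $u_1^v$; in the construction the cyclic triangle inside $X_u$ is $u_2^xu_1^y,\,u_2^yu_1^z,\,u_2^zu_1^x$, so the cyclic neighbor of $u_2^{v''}$ is $u_1^{v'''}$ (the third neighbor of $u$), not $u_1^v$. And in your ``straight'' sub-case the conclusion ``both endpoints saturated by non-chain edges'' overstates what you need --- if $u_3^v$ fails to use its chain edge then $u_3^v$ alone is already saturated by $u_2^vu_3^v$ and $u_3^vv_4^u$, which suffices. These are minor, but the first symptomatic of the real issue: branching on $u_2^v$ as well as $u_5^v$ multiplies cases without adding power, and the off-axis case you correctly identify as hardest is precisely the one your detection criterion cannot close.
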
   
\begin{claimproof}
Suppose otherwise and suppose by symmetry that $C$ contains $u_1^vu_6^v$. Then, as $\mathcal{C}$ is a cycle cover and by assumption, we obtain that $C$ contains $\overline{u_1^vu_2^v}$ and $\overline{u_6^vu_5^v}$. As $C$ is a cycle, we obtain that $C$ contains exactly one of $u_5^vu_4^v$ and $u_5^vv_6^{u}$. We distinguish these two cases.

\setcounter{Case}{0}
\begin{Case}
$C$ contains $u_5^vu_4^v$.
\end{Case}
By assumption, we obtain that $C$ contains $\overline{u_4^vu_3^v}$. As $C$ is not a 1-cycle for $u$, we obtain that $C$ does not contain $u_2^vu_3^v$. Hence, as $C$ is a cycle, we have that $C$ contains $u_3^vv_4^{u}$. By assumption, we obtain that $C$ contains $\overline{v_4^uv_3^u}$. Next, as $u$ is not erroneous, there must be a cycle $C'$ that contains $\overline{v_5^{u}v_6^{u}}$. As $H[\{v_5^{u},v_6^{u}\}]$ is acyclic, we have that $C'$ contains at least two edges linking $\{v_5^{u},v_6^{u}\}$ and $V(H)-\{v_5^{u},v_6^{u}\}$. As all vertices in $N_H(\{v_5^{u},v_6^{u}\})-v_1^{u}$ have two neighbors in $C$ not contained in $\{v_5^{u},v_6^{u}\}$, all edges linking $\{v_5^{u},v_6^{u}\}$ and $V(H)-\{v_5^{u},v_6^{u}\}$ in $C'$ must be incident to $v_1^{u}$. As there is only one edge in $H$ linking $\{v_5^{u},v_6^{u}\}$ and $v_1^{u}$, we obtain a contradiction to $C'$ being a cycle.

\begin{Case}
$C$ contains $u_5^vv_6^{u}$.
\end{Case}   
By assumption, we obtain that $C$ contains $\overline{v_6^{u}v_5^{u}}$. As $C$ is a cycle that contains $u_1^vu_6^v$ and $\overline{u_6^vu_5^v}$, we obtain that $C$ does not contain $v_5^{u}u_6^v$. It follows that $C$ contains $v_5^{u}v_4^{u}$ and hence, by assumption $\overline{v_4^{u}v_3^{u}}$. As $C$ is a cycle, we obtain that $C$ contains exactly one of $v_3^{u}v_2^{u}$ and $v_3^{u}u_4^v$.

First suppose that $C$ contains $v_3^{u}v_2^{u}$. As $u$ is not erroneous, there must be a cycle $C'$ that contains $\overline{u_3^{v}u_4^{v}}$. As $H[\{u_3^{v},u_4^{v}\}]$ is acyclic, we have that $C'$ contains at least two edges linking $\{u_3^{v},u_4^{v}\}$ and $V(H)-\{u_3^{v},u_4^{v}\}$. As all vertices in $N_H(\{u_3^{v},u_4^{v}\})-u_2^{v}$ have two neighbors in $C$ not contained in $\{u_3^{v},u_4^{v}\}$, all edges linking $\{u_3^{v},u_4^{v}\}$ and $V(H)-\{u_3^{v},u_4^{v}\}$ in $C'$ must be incident to $u_2^{v}$. As there is only one edge in $H$ linking $\{u_3^{v},u_4^{v}\}$ and $u_2^{v}$, we obtain a contradiction to $C'$ being a cycle.

Now suppose that $C$ contains $v_3^{u}u_4^v$. By assumption, we obtain that $C$ contains $\overline{u_4^vu_3^v}$. As $C$ is a cycle that contains $v_4^{u}v_5^u$ and $\overline{v_4^uv_5^u}$, we obtain that $C$ does not contain $u_3^{v}v_4^u$. As $C$ is a cycle, we obtain that $C$ contains $u_2^vu_3^v$. Hence $C$ is a 3-cycle for $(u,v)$, a contradiction to $C$ being affected.
\end{claimproof}

\begin{claim}\label{afferr}
Let $u \in V(G)$ and $C \in \mathcal{C}$ an affected cycle that contains at least one vertex of $X_u$. Then $u$ is erroneous.
\end{claim}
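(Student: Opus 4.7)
The plan is to assume for contradiction that $u$ is not erroneous while some affected $C \in \mathcal{C}$ meets $X_u$, and show that $C$ is then forced to be precisely a $1$-cycle for $u$, contradicting affectedness. The starting observation is that, since $u$ is not erroneous, for each $v \in \{x,y,z\}$ the chain gadget connecting $u_1^v$ to $u_2^v$ is non-neglected, so some cycle $D_v \in \mathcal{C}$ contains $\overline{u_1^v u_2^v}$; in particular $D_v$ visits both $u_1^v$ and $u_2^v$ via their unique edge into the chain gadget. Because the cycles in $\mathcal{C}$ are vertex-disjoint, whenever $C$ contains $u_1^v$ or $u_2^v$ we must have $C = D_v$, which forces $C$ to use the chain-gadget edges at both endpoints and to traverse a Hamiltonian path of that gadget.

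Next I would pin down which of the two remaining incident edges $C$ uses at each such endpoint. Aside from its chain-gadget edge, $u_1^v$ has only the incident edges $u_1^v u_6^v$ and one triangle edge inside $H[X_u]$; similarly $u_2^v$ has $u_2^v u_3^v$ and one triangle edge. By Claim~\ref{rdftgzhu}, if the affected $C$ used any edge $u_1^v u_6^v$ or $u_2^v u_3^v$ then $u$ would be erroneous, contradicting the standing assumption. Hence at every endpoint of a chain gadget in $X_u$ that $C$ visits, the second edge used by $C$ must be the triangle edge inside $H[X_u]$.

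Finally, starting from the hypothesized vertex of $X_u$ on $C$ (say $u_1^x$; the remaining five cases are symmetric), the triangle edge at $u_1^x$ goes to $u_2^z$, so $u_2^z \in V(C)$; applying the first paragraph to $z$ gives $C = D_z$; the second paragraph then forces the triangle edge $u_2^y u_1^z$ and hence $u_1^y \in V(C)$; iterating once more forces $C = D_y$ and the triangle edge $u_2^x u_1^y$, closing the cycle through the already-committed $x$-gadget. The resulting cycle coincides edge-by-edge with a $1$-cycle for $u$, contradicting that $C$ is affected. The main subtlety lies not in this final propagation but in the preceding structural observation that $C = D_v$ the moment $C$ touches any $u_i^v$; once this is in place alongside Claim~\ref{rdftgzhu}, the remainder is a direct trace around the triangular ring formed by $X_u$.
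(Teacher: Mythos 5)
Your proof is correct and follows essentially the same approach as the paper: both use the fact that non-erroneousness of $u$ plus vertex-disjointness forces $C$ to absorb each chain gadget it touches, then invoke Claim~\ref{rdftgzhu} to exclude the $u_1^vu_6^v$ and $u_2^vu_3^v$ edges, and finally propagate around the triangle in $X_u$ to conclude $C$ is a $1$-cycle for $u$. (Small slip: the triangle edge $u_2^yu_1^z$ puts $u_2^y$, not $u_1^y$, on $C$, though the subsequent $C=D_y$ step immediately yields $u_1^y\in V(C)$ anyway.)
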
   
\begin{claimproof}
Suppose otherwise and let $N_G(u)=\{x,y,z\}$ such that $u_2^xu_1^y\in E(H)$. By symmetry, we my suppose that $C$ contains $u_1^x$. By assumption and as $\mathcal{C}$ is a cycle subpartition, we have that $C$ contains $\overline{u_1^xu_2^x}$.
If $C$ contains one of  $u_1^xu_6^x$ and $u_2^xu_3^x$, we obtain a contradiction by \cref{rdftgzhu}.
As $C$ is a cycle, we obtain that $C$ contains $u_1^xu_2^z$ and $u_2^xu_1^y$. By assumption, we obtain that $C$ contains $\overline{u_1^yu_2^y}$ and $\overline{u_1^zu_2^z}$. If $C$ contains $u_2^yu_3^y$, we obtain a contradiction by \cref{rdftgzhu}. Otherwise, we have that $C$ contains $u_2^yu_1^z$, so $C$ is a 1-cycle for $u$, a contradiction to $C$ being affected. 
\end{claimproof}

The following result precisely describes the interaction of $\mathcal{C}$ and gadgets corresponding to vertices which are not erroneous.

\begin{claim}\label{ecyc}
Let $u \in V(G)-V_0$. Then either $\mathcal{C}$ contains a 1-cycle for $u$ or $\mathcal{C}$ contains a 2-cycle or a 3-cycle for $(u,v)$ for every $v \in N_G(u)$. 
\end{claim}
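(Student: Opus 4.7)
The plan is to use the two previously established claims about the consequences of unaffected cycles together with the definition of non-erroneousness. Fix $u \in V(G)-V_0$ with $N_G(u) = \{x,y,z\}$ ordered so that $u_2^x u_1^y, u_2^y u_1^z, u_2^z u_1^x \in E(H)$.

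For each $v \in N_G(u)$, the chain gadget connecting $u_1^v$ and $u_2^v$ cannot be neglected (otherwise $u$ would be erroneous, since $u_1^v,u_2^v \in X'_u$). Hence there is a cycle $C_v \in \mathcal{C}$ that contains $\overline{u_1^v u_2^v}$. Since $C_v$ passes through $u_1^v \in X_u$, Claim~\ref{afferr} guarantees that $C_v$ is unaffected, i.e., it is an $i$-cycle for some $i \in [4]$. I will then simply inspect the four cycle types to determine the possibilities: the edges of a $2$-cycle for any pair $(w,w')$ with $w \neq u$, of a $3$-cycle for any such pair, and of a $4$-cycle, never contain an edge from a chain gadget incident to a vertex of $X_u$, and a $2$-cycle or $3$-cycle for $(u,v')$ with $v' \neq v$ does not contain $\overline{u_1^v u_2^v}$ either. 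Thus $C_v$ must be a $1$-cycle for $u$, or a $2$-cycle for $(u,v)$, or a $3$-cycle for $(u,v)$.

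If $C_x$ is a $1$-cycle for $u$, the first alternative of the claim holds and we are done. Otherwise $C_x$ is a $2$-cycle or $3$-cycle for $(u,x)$; both types include the vertices $u_1^x, u_2^x \in X_u$. Now apply the same argument to $y$: the cycle $C_y$ containing $\overline{u_1^y u_2^y}$ belongs to the same three possibilities. However, a $1$-cycle for $u$ would also pass through $u_1^x$, contradicting vertex-disjointness with $C_x$. Consequently $C_y$ must be a $2$-cycle or $3$-cycle for $(u,y)$. The symmetric argument applied to $z$ produces a $2$-cycle or $3$-cycle for $(u,z)$, establishing the second alternative.

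I expect no serious obstacle here: the work is essentially a short case distinction enabled by the two preceding claims (Claims~\ref{rdftgzhu} and~\ref{afferr}) and by the exhaustive list of unaffected cycle types. The only part that warrants some care is verifying that $\overline{u_1^v u_2^v}$ appears only in the three cycle shapes listed above; this is a direct inspection of the edge sets in the definitions of the four cycle types and can be carried out by a quick enumeration, using the fact that no $4$-cycle meets $X_u$ and that the $2$- and $3$-cycles use $u_i^v$-vertices only for the specific pair $(u,v)$ they are associated with.
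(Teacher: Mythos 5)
Your proposal is correct and follows essentially the same route as the paper's own proof: use non-erroneousness to find a cycle containing $\overline{u_1^vu_2^v}$, apply Claim~\ref{afferr} to conclude it is unaffected, classify it among the four cycle types, and then invoke vertex-disjointness of the subpartition to rule out mixing a 1-cycle with 2-/3-cycles. The only difference is cosmetic — you enumerate the cycle types directly while the paper shortcuts via checking whether $u_2^xu_1^y\in E(C)$, and you spell out the disjointness step that the paper leaves as "As $\mathcal{C}$ is a cycle subpartition, the statement follows."
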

\begin{claimproof}
Let $N_G(u)=\{x,y,z\}$ such that $u_2^xu_1^y \in E(H)$. As $u \in V(G)-V_0$, we obtain that $\overline{u_1^xu_2^x}$ is contained in a cycle $C$ of $\mathcal{C}$. By \cref{afferr}, we obtain that $C$ is not affected. If $u_2^xu_1^y \in E(C)$, this implies that $C$ is a 1-cycle for $u$. Otherwise, this implies that $C$ is a 2-cycle or a 3-cycle for $(u,x)$. Similar arguments show that $\mathcal{C}$ either contains a 1-cycle for $u$ or 2-cycles or 3-cycles for $(u,y)$ and $(u,z)$. As $\mathcal{C}$ is a cycle subpartition, the statement follows.
\end{claimproof}

In the following, let $E_0 \subseteq E(G)$ be the set of edges $uv$ with $u,v \in V(G)-V_0$ such that $\mathcal{C}$ contains a 1-cycle for $u$ and a 1-cycle for $v$ and $\mathcal{C}$ contains at most one 4-cycle for $uv$.

We show that edges in $E_0$ have a similar effect as erroneous vertices.

\begin{claim}\label{tfgzhi}
Let $e=uv \in E_0$. Then there is a neglected chain gadget connecting two vertices in $X_e$.
\end{claim}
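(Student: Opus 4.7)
The plan is to argue by contradiction: suppose, for $e = uv \in E_0$, that none of the four chain gadgets connecting the pairs $(u_3^v, u_4^v)$, $(u_5^v, u_6^v)$, $(v_3^u, v_4^u)$, and $(v_5^u, v_6^u)$ is neglected, and derive that $\mathcal{C}$ must then contain both $4$-cycles for $uv$, contradicting the defining condition of $E_0$.

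The first step is to isolate the analysis to the eight vertices of $X_{uv}$. By definition of $E_0$, $\mathcal{C}$ contains a $1$-cycle for $u$ and a $1$-cycle for $v$; inspection of the $1$-cycle definition shows that these $1$-cycles already occupy the vertices $u_1^v, u_2^v, v_1^u, v_2^u$. Because cycles in a subpartition are vertex-disjoint, the four edges $u_1^v u_6^v$, $u_2^v u_3^v$, $v_1^u v_6^u$, $v_2^u v_3^u$ — the only edges of $H$ leaving $X_{uv}$ apart from the interiors of the four gadgets in question — cannot appear in any cycle meeting $X_{uv}$. Hence every cycle of $\mathcal{C}$ that touches $X_{uv}$ consists entirely of edges inside $X_{uv}$ together with full chain-gadget $\ell r$-paths; the latter follows from \cref{thorough} combined with the fact that the endpoints $\ell,r$ have degree $1$ inside the gadget, so that any cycle entering the gadget from outside must traverse it from $\ell$ to $r$.

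The second step is a short degree-count in the auxiliary graph obtained from $X_{uv}$ by contracting each non-neglected chain gadget into a single edge between its two endpoints. In this graph, the four vertices $u_3^v, u_6^v, v_3^u, v_6^u$ have degree $2$, while $u_4^v, u_5^v, v_4^u, v_5^u$ have degree $3$. By assumption every gadget in $X_{uv}$ is non-neglected, so every vertex of $X_{uv}$ lies on some cycle of $\mathcal{C}$ and therefore has degree exactly $2$ in the selected edge set. Both edges at each of the four degree-$2$ vertices are forced, and propagating these forced selections to the adjacent degree-$3$ vertices forces $u_4^v u_5^v$ and $v_4^u v_5^u$ to be omitted. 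The resulting configuration is precisely the two $4$-cycles for $uv$, contradicting $e\in E_0$.

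The main obstacle is the first step: rigorously confining the analysis to the induced subgraph on $X_{uv}$. This relies on combining the $1$-cycle coverage of the four neighbouring ports, the vertex-disjointness property of a cycle subpartition, the degree-$1$ endpoint structure of chain gadgets, and \cref{thorough}. Once this reduction is in place, the degree-counting argument is essentially mechanical and immediately yields the two forced $4$-cycles.
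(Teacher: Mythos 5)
Your proof is correct, but it takes a genuinely different route from the paper. The paper argues locally: using the fact that at most one $4$-cycle exists, it picks a non-$4$-cycle $C$ that contains one of the four chain gadgets, and then chases the cycle through $X_{uv}$ edge by edge, each time using the $1$-cycle occupancy of $u_2^v$ or $v_2^u$ and the cyclicity of $C$ to force the next edge, until $C$ is forced to be a $4$-cycle after all. Your argument is global: you first observe that the four boundary edges $u_1^vu_6^v$, $u_2^vu_3^v$, $v_1^uv_6^u$, $v_2^uv_3^u$ are dead (their outer endpoints are consumed by the two $1$-cycles), so every cycle of $\mathcal{C}$ meeting $X_{uv}$ is confined to $X_{uv}$ and the interiors of its four gadgets, and then you run a degree argument on the auxiliary multigraph in which each non-neglected gadget is replaced by an edge between its ports. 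There each of $u_3^v,u_6^v,v_3^u,v_6^u$ has degree $2$, which forces all their incident edges; this in turn forces $u_4^vu_5^v$ and $v_4^uv_5^u$ out, and what remains is exactly the disjoint union of the two auxiliary $4$-cycles. Since distinct cycles of $\mathcal{C}$ are vertex-disjoint and each projects to a single cycle of the auxiliary graph, these two $4$-cycles must come from two distinct cycles of $\mathcal{C}$, contradicting $e\in E_0$. Your degree-count is a bit more machinery to set up (one must state the confinement step carefully and justify, via Claim~\ref{thorough} and the degree-$1$ structure of the ports, that the projection of a cycle onto the auxiliary graph is well-defined and degree-preserving on $X_{uv}$), but it dispenses with the paper's explicit casework and arguably explains more transparently why both $4$-cycles, not just one, are forced.
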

\begin{claimproof}
Suppose otherwise, so, by symmetry and assumption, we may suppose that $\mathcal{C}$ contains a cycle $C$ that contains $\overline{u_3^vu_4^v}$ and is not a 4-cycle. As $\mathcal{C}$ is a cycle subpartition of $H$ and contains a 1-cycle for $u$, we obtain that $C$ does not contain $u_2^vu_3^v$. Hence $C$ contains $u_3^vv_4^{u}$ and, by  assumption, $\overline{v_4^{u}v_3^{u}}$. Next, as $\mathcal{C}$ is a cycle subpartition of $H$ and contains a 1-cycle for $v$, we obtain that $C$ does not contain $v_2^uv_3^u$. Further, as $C$ is not a 4-cycle, we obtain that $C$ does not contain $v_3^{u}u_4^v$. This contradicts $C$ being a cycle.
\end{claimproof}

Finally, in the following claim, we can show that there are actually no erroneous vertices and edges in $E_0$ and hence the structure which was previously established for vertices which are not erroneous holds for all vertices in $V(G)$. The following claim in particular implies that $\mathcal{C}$ does not contain affected cycles. 
\begin{claim}\label{leer}
$V_0 = E_0=\emptyset$.
\end{claim}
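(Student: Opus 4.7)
The plan is to argue by contradiction: assume $V_0 \cup E_0 \neq \emptyset$ and build from $\mathcal{C}$ a cycle subpartition $\mathcal{C}'$ with $w(\mathcal{C}') < w(\mathcal{C})$, contradicting the minimality of $\mathcal{C}$. Fix either a witness $u \in V_0$ or an edge $uv \in E_0$. In the first case let $H^*$ be a neglected chain gadget with endpoints $x, y \in X'_u$; in the second, apply Claim~\ref{tfgzhi} to obtain a neglected chain gadget $H^*$ with $x, y \in X_{uv}$. Because each of $x$ and $y$ has exactly one edge into $H^*$, any cycle of $\mathcal{C}$ that uses a vertex in the interior of $H^*$ together with $x$ must also exit via $y$, yielding an $xy$-path contained entirely in $H^*$; by Claim~\ref{thorough} this path is Hamiltonian, contradicting the neglect of $H^*$. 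Hence every cycle of $\mathcal{C}$ meeting the interior of $H^*$ lies entirely in that interior, and by property~(v) of chain gadgets their joint weight contribution is at least $\min\{\alpha, 6\beta\} \cdot k$, with $k = \max\{12, \lceil 2\alpha/\beta \rceil\}+1$.

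The surgery in the case $u \in V_0$ is as follows: remove every cycle of $\mathcal{C}$ that meets $X'_u$ together with every cycle meeting one of the $O(1)$ interface vertices $\{v_1^u, v_2^u : v \in N_G(u)\}$, and insert a $1$-cycle for $u$ together with, for every $v \in N_G(u)$, a $2$-cycle for $(v, u)$. These seven new cycles are pairwise vertex-disjoint, cover all of $X'_u$, and in addition cover the interface vertices $v_1^u, v_2^u$ for each $v \in N_G(u)$. In the case $uv \in E_0$ we proceed analogously: remove the two $1$-cycles for $u$ and $v$ (whose existence is guaranteed by the definition of $E_0$), any $4$-cycle for $uv$ in $\mathcal{C}$, every cycle meeting $X_{uv}$, and every cycle touching the interface vertices around $u$ and $v$; then insert $2$-cycles for $(u,v)$ and $(v,u)$, both $4$-cycles for $uv$, and, guided by Claim~\ref{ecyc}, suitable $2$- or $3$-cycles at the remaining neighbors of $u$ and $v$ to restore the coverage of $X_u$ and $X_v$ lost by deletion of the $1$-cycles.

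In both cases, the weight savings from removing the interior cycles of $H^*$ amount to at least $\min\{\alpha, 6\beta\}\cdot k$, whereas the $\alpha$-cost of the $O(1)$ newly inserted cycles and the $\beta$-cost of the bounded number of vertices rendered uncovered by the surgery are each bounded by some constant $C(\alpha, \beta)$ independent of $k$. The choice $k \geq \lceil 2\alpha/\beta \rceil+1$ in the construction of $H$ was made precisely so that $\min\{\alpha, 6\beta\}\cdot k$ strictly exceeds any such constant, whence $w(\mathcal{C}') < w(\mathcal{C})$, the desired contradiction.

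The main obstacle is the bookkeeping to make the surgery clean: one must show that $\mathcal{C}'$ remains a valid cycle subpartition (pairwise vertex-disjoint cycles) and that the number of vertices newly uncovered is $k$-independent. This relies on two structural observations. First, the only edges leaving $X'_u$ (respectively $X_{uv}$) go through the bounded set of interface vertices $\{v_1^u, v_2^u : v \in N_G(u)\}$, and these vertices are themselves covered by the inserted $2$-cycles for $(v, u)$. Second, Claim~\ref{ecyc} precisely describes the local configuration of $\mathcal{C}$ at every non-erroneous neighbor, letting us identify exactly which of its cycles must be replaced to avoid vertex-disjointness conflicts with the new insertions, and giving the explicit replacement that restores coverage while costing only $O(1)$ extra.
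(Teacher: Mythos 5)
Your high-level strategy---contradict minimality of $\mathcal{C}$ by constructing a cheaper $\mathcal{C}'$, using property~(v) of chain gadgets to quantify the savings from a neglected gadget---matches the paper's. However, the concrete surgery you propose has two genuine gaps, and the way it differs from the paper's is precisely what makes it fail.

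First, in the $u\in V_0$ case you delete \emph{every} cycle touching the interface vertices $\{v_1^u,v_2^u : v\in N_G(u)\}$ and replace them with $2$-cycles for $(v,u)$. But if $v\notin V_0$ and $\mathcal{C}$ contains a $1$-cycle for $v$ (which \cref{ecyc} allows), that $1$-cycle is one of the cycles you delete---it passes through $v_1^u, v_2^u$ and the chain gadget $(v_1^u,v_2^u)$. Your replacement $2$-cycle for $(v,u)$ restores coverage of $v_1^u, v_2^u$ and that one chain gadget, but leaves the remaining four vertices of $X_v$ and the two other chain gadgets of $X_v$ uncovered: that is $\Theta(k)$ newly uncovered vertices and a $\Theta(k\beta)$ penalty, the same order as your claimed savings $\min\{\alpha,6\beta\}\cdot k$. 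Your assertion that the $\beta$-cost of newly uncovered vertices is bounded by a $k$-independent constant is therefore false, and the inequality $w(\mathcal{C}')<w(\mathcal{C})$ does not follow. The paper avoids this by never touching cycles that intersect $X_v$ for $v\notin V_0$: it deletes only affected and ``indirectly affected'' cycles, observes that a non-affected cycle meets $X_v$ for at most one $v$, and so keeps all $1$-cycles at non-erroneous vertices intact---then, instead of a $2$-cycle for $(v,u)$, it inserts a $3$-cycle for $(u,v)$, which does \emph{not} use $v_1^u, v_2^u$ or the chain gadget $(v_1^u,v_2^u)$ and therefore does not conflict with the existing $1$-cycle for $v$.

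Second, your $E_0$ surgery proposes to insert, simultaneously, ``$2$-cycles for $(u,v)$ and $(v,u)$'' and ``both $4$-cycles for $uv$.'' These are not vertex-disjoint: a $2$-cycle for $(u,v)$ uses $u_3^v,\dots,u_6^v$ and the chain gadgets $(u_3^v,u_4^v)$ and $(u_5^v,u_6^v)$, while the two $4$-cycles for $uv$ each reuse a pair of those vertices (one uses $u_3^v, u_4^v, v_3^u, v_4^u$, the other $u_5^v, u_6^v, v_5^u, v_6^u$). So $\mathcal{C}'$ would not be a cycle subpartition. The paper's surgery for $e\in E_0$ is to keep the existing $1$-cycles for $u$ and $v$ (which it never deletes, by the observation above) and merely add the two disjoint $4$-cycles for $e$; these live entirely in $X_{uv}$ and are disjoint from the $1$-cycles. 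In short, the paper's proof works because its modifications are designed to be non-overlapping with the kept cycles, whereas your surgery both deletes too much and inserts cycles that collide.
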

\begin{claimproof}
We call a cycle in $\mathcal{C}$ {\it indirectly affected} if it is not affected, but it contains a vertex of $X_u$ for some $u \in V_0$. We describe in the following a different cycle subpartition $\mathcal{C}'$ of $H$, obtaining a contradiction to the minimality of $\mathcal{C}$. We first delete all affected and indirectly affected cycles from $\mathcal{C}$. Observe that every cycle that is not affected contains vertices of $X_u$ for at most one $u \in V(G)$. Hence no cycle of $\mathcal{C}$ that intersects $X_u$ for some $u \in V(G)-V_0$ is deleted. Now consider some $uv \in E(G)$ with $u,v \in V_0$. We add a 1-cycle for $(u,v)$ and a 1-cycle for $(v,u)$. Next consider some $u \in V_0$ and some $v \in N_G(u)-V_0$. By \cref{ecyc}, we have that $\mathcal{C}$ contains a 1-cycle for $v$, a 2-cycle for $(v,u)$, or a 3-cycle for $(v,u)$. If $\mathcal{C}$ contains a 1-cycle for $v$, we add a 3-cycle for $(u,v)$. If $\mathcal{C}$ contains a 2-cycle for $(v,u)$, we add a 2-cycle for $(u,v)$. Further, if $\mathcal{C}$ contains a 3-cycle for $(v,u)$, we replace this cycle by a 2-cycle for $(v,u)$ and we add a 2-cycle for $(u,v)$. Finally, for every $e \in E_0$, we add 2 disjoint 4-cycles for $e$, deleting previously existing ones. We denote this newly constructed collection of cycles by $\mathcal{C}'$. It is easy to see that $\mathcal{C}'$ is a cycle subpartition of $H$.

In the following, we will show that $w(\mathcal{C}')<w(\mathcal{C})$ unless $V_0$ and $E_0$ are empty. To this end, 
let $V_1\subseteq V(H)$ be the set of vertices which are contained in the interior of a neglected chain gadget and 
let $V_2=V(H)-V_1$. By property $(i)$ of chain gadgets, \cref{thorough} and construction, we have that $V(C)$ is 
fully contained in one of $V_1$ and $V_2$ for every $C \in \mathcal{C}$. For $i \in \{1,2\}$, let $\mathcal{C}_i$ be 
the cycle subpartition of $H[V_i]$ inherited from $\mathcal{C}$.

 By the definition of $V_0$, for every $u \in V_0$, 
there is a neglected chain gadget connecting two vertices in $X_u'$.
Next, by \cref{tfgzhi}, for every $e \in E_0$, there is a neglected chain gadget connecting two vertices in 
$X_e$. It follows that the total number of neglected chain gadgets is at least $\frac{1}{2}|V_0|+|E_0|$. It hence 
follows by property $(v)$ of chain gadgets that $w(\mathcal{C}_1)\geq \min\{\alpha,6 
\beta\}k(\frac{1}{2}|V_0|+|E_0|)$. Next observe that by construction and \cref{ecyc}, we have that 
$V(\mathcal{C}')=V(H)$, so in particular $|V(H)-V(\mathcal{C}')|\leq |V(H)-V(\mathcal{C})|$. Finally observe that, 
when constructing $\mathcal{C}'$, we added at most two cycles for every $u \in V_0$ and $v \in N_G(u)$ and we added 
at most two cycles for every $e \in E_0$. This yields $|\mathcal{C}'|-|\mathcal{C}|\leq 6|V_0|+2|E_0|$. This yields
\begin{align*}
 w(\mathcal{C}')-w(\mathcal{C})&=(w(\mathcal{C}')-w(\mathcal{C}_1))-w(\mathcal{C}_2)\\
 &\leq \alpha  (|\mathcal{C}'|-|\mathcal{C}|)-\min\{\alpha,6 \beta\}k(\frac{1}{2}|V_0|+|E_0|)\\
 &\leq \alpha (6|V_0|+2|E_0|)-\min\{\alpha,6 \beta\}k(\frac{1}{2}|V_0|+|E_0|)\\
 &= (6 \alpha -\frac{1}{2}\min\{\alpha,6 \beta\}k)|V_0|+ (2\alpha -\min\{\alpha,6 \beta\}k)|E_0|.
\end{align*}

By the choice of $k$, we obtain that $6 \alpha-\frac{1}{2}\min\{\alpha,6 \beta\}k<6 \alpha-\frac{1}{2}\min\{\alpha,6 \beta\}\max\{12,\frac{2 \alpha}{\beta}\}\leq 0$ and $2 \alpha-\min\{\alpha,6 \beta\}k<0$. Hence, if one of $V_0$ and $E_0$ is nonempty, we obtain a contradiction to the choice of $\mathcal{C}$. Hence the statement follows.
\end{claimproof}

We are now ready to construct a vertex cover of $G$. First we associate cycles in $\mathcal{C}$ to vertices in $V(G)$. For every vertex $u \in V(G)$ such that $\mathcal{C}$ contains a 1-cycle of $u$, we associate this 1-cycle to $u$. For every vertex $u \in V(G)$ and every $v \in N_G(u)$ such that $\mathcal{C}$ contains a 2-cycle or 3-cycle of $(u,v)$ , we associate this cycle to $u$. Finally consider an edge $e=uv \in E(G)$ such that $X_e$ contains two disjoint 4-cycles. Then we associate these two cycles to an arbitrary vertex in $\{u,v\}$. Let $Z \subseteq V(G)$ be the set of vertices which are associated to at least 3 cycles. By Claims \ref{ecyc} and \ref{leer} and construction, every vertex is associated to at least one cycle. Further, we have by construction that every cycle is associated to at most one vertex.
This implies $w(\mathcal{C})\geq \alpha |\mathcal{C}|\geq \alpha(|V(G)|+2|Z|)$. As $w(\mathcal{C})\leq \alpha(|V(G)|+2K)$, we obtain $|Z|\leq K$. It remains to prove that $Z$ is a vertex cover of $G$. To this end, let $uv \in E(G)$. If $\mathcal{C}$ contains a 2-cycle or a 3-cycle for one of $(u,v)$ or $(v,u)$, then, by construction, we have that $Z$ contains one of $u$ and $v$. Otherwise, by Claims \ref{ecyc} and \ref{leer}, we obtain that $\mathcal{C}$ contains 1-cycles for $u$ and $v$. By  \cref{leer}, we have that $uv$ is not contained in $E_0$ and hence $\mathcal{C}$ contains 2 4-cycles for $uv$. It follows by construction that $Z$ contains one of $u$ and $v$.   
\end{proof}

For the main proof of \cref{cyclehard}, we use the following simple proposition. 
\begin{proposition}\label{restdrzftustr}
Let $G$ be a cubic graph and $Z$ a vertex cover of $G$. Then $|Z|\geq |V(G)|/2$. 
\end{proposition}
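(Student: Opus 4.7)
The plan is a short counting argument. First I would observe that since $G$ is cubic, we have $|E(G)| = 3|V(G)|/2$ by the handshake lemma. Next, since $Z$ is a vertex cover, every edge in $E(G)$ has at least one endpoint in $Z$, so the sum $\sum_{v\in Z}\deg_G(v)$ counts every edge of $G$ at least once, giving $\sum_{v\in Z}\deg_G(v) \geq |E(G)|$.

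Then I would use the fact that $\deg_G(v) = 3$ for every $v \in V(G)$ to rewrite the left-hand side as $3|Z|$. Combining with the previous inequality yields $3|Z| \geq 3|V(G)|/2$, from which the claim follows by dividing by $3$.

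There is no real obstacle here; the statement is a direct consequence of double-counting edges in a cubic graph. The only thing worth highlighting is that the argument does not require $G$ to be connected or simple — just $3$-regular — so no additional hypotheses are needed beyond those stated. An alternative phrasing is to note that the complement $V(G)\setminus Z$ is an independent set and that the edges leaving it are counted exactly $3|V(G)\setminus Z|$ times from its side and at most $3|Z|$ times from the other side, but the direct vertex-cover counting above is the cleanest presentation.
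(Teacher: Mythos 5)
Your proof is correct and uses essentially the same counting argument as the paper; the paper just phrases it contrapositively (assuming $|Z|<|V(G)|/2$ and deriving that fewer than $|E(G)|$ edges are covered), while you run the inequality directly.
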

\begin{proof}
    Suppose that $|Z| < |V(G)|/2$. Since $\deg_G(v) = 3$ for all $v \in V(G)$, it follows that
    the number of edges incident to $Z$ is at most $3|Z| < 3|V(G)|/2 = |E(G)|$, a contradiction.
\end{proof}

We are now ready to give the main proof of \cref{cyclehard}.

\begin{proof}[Proof of~\cref{cyclehard}]
Let $\alpha,\beta,\gamma,\kappa$ with $\alpha,\beta,\gamma>0$ be fixed. Let $\eps$ be the constant from \cref{vchard} and let $\eps'=\eps\frac{\alpha}{(3 \alpha+\gamma+1)2(54 \max\{12,\lceil2 \alpha/\beta\rceil\}+18)}$.
Now suppose that there is a polynomial-time algorithm $A$ that, given a connected cubic graph $H$ and a positive constant $K$, returns `yes' if $H$ admits a cycle subpartition of weight at most $K$ and returns `no'  if $H$ does not admit a cycle subpartition of weight at most $(1+\eps')K$.

Let a graph $G$ and an integer $K_1$ be given. We will design an algorithm that returns `yes' if $G$ admits a vertex cover of size $K_1$ and `no' if $G$ does not admit a vertex cover of size at most $(1+\eps)K_1$. First, if $K_1<\frac{1}{2}|V(G)|$, we return `no' and if $K_1 \geq |V(G)|$, we return `yes.'
This is justified by \cref{restdrzftustr} and the fact that $V(G)$ is trivially a vertex cover of $G$. Next, if $|V(G)|<\max\{\kappa,\frac{|\kappa|}{\gamma}\}$, we compute a minimum vertex cover by a brute force approach and output an appropriate answer. We may hence in the following assume that $\frac{1}{2}|V(G)| \leq K_1 < |V(G)|$ and $|V(G)|\geq \max\{\kappa,\frac{|\kappa|}{\gamma}\}$. We now use \cref{mainconstr} and compute, in polynomial time, a cubic bipartite graph $H$ with $|V(H)|\leq (54 \max\{12,\lceil\frac{2 \alpha}{\beta}\rceil\}+18) |V(G)|$ such that for every positive integer $K$, we have that $H$ admits a cycle subpartition $\mathcal{C}$ with $\alpha|\mathcal{C}|+\beta(|V(H)-V(\mathcal{C})|)\leq \alpha (|V(G)| + 2K)$ if and only if $G$ admits a vertex cover of size at most $K$.

We now apply $A$ to $H$ and $K_2$ which is defined by $K_2=\alpha(|V(G)| + 2 K_1)+\gamma |V(H)|+\kappa$.
Observe that $K_2>\gamma |V(H)|+\kappa\geq \gamma |V(G)|+\kappa\geq \gamma \frac{|\kappa|}{\gamma}+\kappa=|\kappa|+\kappa \geq 0$. We output the output of  $A$.

Observe that our complete algorithm can be executed in polynomial time. We now analyze the correctness of this algorithm. First suppose that $G$ admits a vertex cover of size at most $K_1$. Then $H$ admits a cycle subpartition of weight $K_2$. By assumption, our algorithm outputs `yes.'

Now suppose that our algorithm outputs `yes.' Then, by assumption, we have that $H$ admits a cycle subpartition $\mathcal{C}$ of weight at most $(1+\eps')K_2$.

By the assumptions on $K_1$, the definition of $\eps'$, and $|V(H)|\geq |V(G)|$, we thus have
\begin{align*}
\alpha \eps K_1&\geq \alpha \eps \frac{1}{2}|V(G)|\\
&\geq \alpha \eps \frac{1}{2(54 \max\{12,\lceil 2\alpha/\beta\rceil\}+18)}|V(H)|\\
&\geq \eps' (3 \alpha+\gamma+1)|V(H)|\\
&\geq \eps'(\alpha(2 K_1+|V(G)|)+\gamma |V(H)|+\kappa) = \eps' K_2. 
\end{align*} 

This yields 
\begin{align*}
w(\mathcal{C})&\leq (1+\eps')K_2 = K_2 + \eps' K_2\\
&\leq \alpha(|V(G)| + 2K_1) + \gamma |V(H)| + \kappa +  \alpha \eps K_1\\
&\leq \alpha(|V(G) + 2(1+\eps) K_1) + \gamma |V(H)| + \kappa.
\end{align*}

It follows by the definition of the weight function that $\alpha |\mathcal{C}|+\beta(|V(H)-V(\mathcal{C})|)\leq  \alpha(|V(G)| + 2(1+\eps)K_1)$.
This yields that $G$ admits a vertex cover of size at most $(1+\eps)K_1$. We now obtain \p = \np from  \cref{vchard}.
\end{proof}
\subsubsection[\texorpdfstring{Hardness for $\delta$-Tours}{Hardness for delta-Tours}]{Hardness for \boldmath$\delta$-Tours}\label{cycle2}
We prove the following more formal restatement of \cref{cor:approx:lb:tsp}.

\begin{theorem}\label{apx1}
For every $\delta \in (0,1/2]$, there is an $\eps >0$ such that, unless $\p=\np$, there is no algorithm that, given a connected graph $G$ and a positive real $K$, returns `yes' if $G$ admits a \deltatour of length $K$ and `no' if $G$ does not admit a \deltatour of length $(1+\eps)K$.
\end{theorem}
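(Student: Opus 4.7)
The plan is to show that on every connected cubic bipartite graph $G$ with $|V(G)|\ge 4$, the optimum of \deltatourprob equals the optimum of $(\alpha,\beta,\gamma,\kappa)$-\CycleSubpartition for the parameters $(\alpha,\beta,\gamma,\kappa)=(4\delta,\,1,\,2-2\delta,\,-4\delta)$. Since $\alpha,\beta,\gamma>0$ holds throughout $\delta\in(0,1/2]$, \cref{cyclehard} will then immediately yield \apx-hardness of \deltatourprob on cubic bipartite graphs via the identity reduction on the underlying graph, preserving the multiplicative gap.

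To set up the correspondence, I will invoke the Stronger Discretization Lemma (\cref{discreteklein}) to restrict attention to a nice shortest $\delta$-tour $T$ whose interaction with every edge falls into one of cases (a)--(d). Cubicity excludes both (a) and (b), leaving each edge either traversed (case (c), once or twice) or peeked (case (d)). Writing $E(G)=E_1\sqcup E_2\sqcup E_D$ for the edges traversed once, traversed twice, and peeked respectively gives $\len(T)=|E_1|+2|E_2|+2(1-2\delta)|E_D|$, with the last term vanishing when $\delta=1/2$. The Eulerian parity of the traversal multigraph $E_1\cup 2E_2$ combined with cubicity forces $d_1(v)\in\{0,2\}$ at every vertex, so $E_1$ is itself a cycle subpartition $\mathcal{C}$ of $G$ with $|E_1|=|V(\mathcal{C})|$.

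The key step is then to compute, for a fixed $\mathcal{C}$, the minimum $|E_2|$ compatible with (i)~$E_1\cup 2E_2$ being connected on the set of visited vertices, and (ii)~$d_2(v)\ge 1$ for every $v\notin V(\mathcal{C})$, which is needed so the tour can actually reach such a vertex. Contracting each cycle of $\mathcal{C}$ to a single node and retaining each remaining vertex as a singleton yields a connected contracted graph on $|\mathcal{C}|+|V(G)\setminus V(\mathcal{C})|$ nodes, whence $|E_2|\ge|\mathcal{C}|+|V(G)\setminus V(\mathcal{C})|-1$ is both necessary and achievable by a spanning tree. Substituting this and using $|E(G)|=\tfrac{3}{2}|V(G)|$ together with $|V(\mathcal{C})|=|V(G)|-|V(G)\setminus V(\mathcal{C})|$ simplifies $\len(T)$ to
\[
4\delta\,|\mathcal{C}|+|V(G)\setminus V(\mathcal{C})|+(2-2\delta)|V(G)|-4\delta,
\]
which is precisely the weight of $\mathcal{C}$ under the proposed parameters. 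The converse direction is routine: starting from any cycle subpartition $\mathcal{C}$, the matching $\delta$-tour is obtained by walking an Eulerian circuit of $E_1\cup 2E_2$, with $E_2$ a spanning tree of the contracted graph, and inserting at suitable stops the peek segments for the remaining non-cycle edges.

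The hard part will be justifying that the lower bound $|E_2|=|\mathcal{C}|+|V(G)\setminus V(\mathcal{C})|-1$ is truly realizable without clashing with the degree constraint that each $v\in V(\mathcal{C})$ offers only a single non-cycle slot available to carry an $E_2$ edge. Cubicity is exactly what keeps this bookkeeping clean: every cycle vertex has a unique third edge and every vertex outside $V(\mathcal{C})$ has three free slots, so any spanning tree of the contracted graph can be lifted to a valid choice of $E_2$ without any per-vertex obstruction, and parallel or self-loop non-cycle edges can be absorbed into $E_D$ at cost $2(1-2\delta)$ each. Once the identity of the two optima is in place, the identity reduction preserves ratios and \cref{cyclehard} delivers the claimed $\eps>0$ depending on $\delta$, completing the proof.
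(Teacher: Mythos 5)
Your proposal is correct and essentially the same as the paper's proof (which the paper carries out in Lemma~\ref{xtcfrgvt}). Your partition $E(G)=E_1\sqcup E_2\sqcup E_D$ into edges traversed once, traversed twice, and peeked is exactly the paper's $F_1,F_2,F_3$ (for a nice tour an edge is traversed at most twice, so ``odd/even $\geq 2$'' collapses to ``once/twice''); the cubicity-plus-parity argument that $E_1$ is a cycle subpartition, the spanning-tree lower bound $|E_2|\geq |\mathcal{C}|+|V(G)\setminus V(\mathcal{C})|-1$ via the contracted graph, and the Euler-tour-plus-peeks construction for the converse are all identical. Your worry about realizability of the spanning tree at a cubic vertex is unfounded for the reason you yourself give: distinct spanning-tree edges at a contracted cycle node correspond to distinct original edges and hence exit through the third edges of distinct cycle vertices, so cubicity never blocks the lift.
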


We first give a well-known result of Euler, which we later use.
A multigraph~$G$ is called {\it Eulerian} if $\deg_G(v)$ is even for all $v \in V(G)$. 
An {\it Euler tour} of $G$ is an integral tour $v_0,\ldots, v_z$ in $G$
such that for each pair $u,v \in V(G)$, the number of indices 
$i \in [z]$ with $\{v_{i-1},v_i\}=\{u,v\}$ is exactly the number of edges in $E(G)$ linking $u$ and~$v$.
\begin{proposition}[Euler's Theorem]\label{euler}
A multigraph~$G$ admits an Euler tour if and only if it is connected and Eulerian. Moreover, if an Euler tour exists, it can be computed in polynomial time.
\end{proposition}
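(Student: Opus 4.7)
The plan is to handle the two implications separately and then extract the polynomial-time algorithm from the constructive direction.

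For necessity, suppose $G$ admits an Euler tour $v_0, v_1, \ldots, v_z$ with $v_0 = v_z$. Consecutive pairs in the sequence exhaust the edges of $G$, so every non-isolated vertex appears in the tour and the tour itself witnesses that all such vertices lie in one connected component; hence $G$ is connected (restricted to its non-isolated part, which is all that matters). For parity, every appearance of a vertex $v$ at an interior position contributes exactly two tour edges incident to $v$ (one arriving, one leaving), and the start/end vertex $v_0 = v_z$ likewise contributes two, namely the first and last edges of the tour. Since each edge of $G$ appears exactly once in the tour, $\deg_G(v)$ equals twice the number of such appearances, which is even.

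For sufficiency, I would invoke Hierholzer's construction. Starting at an arbitrary vertex $v$ with at least one incident edge, greedily traverse unused edges, marking each edge as used when crossed. At any intermediate vertex $u \neq v$, the number of remaining unused incident edges drops by exactly two on each passage, so by parity the walk can only ever get stuck back at $v$, yielding a closed walk $W$. If $W$ uses every edge, it is an Euler tour and we are done. Otherwise, connectivity of $G$ guarantees some vertex $w$ of $W$ that is still incident to an unused edge; the residual multigraph (obtained by removing the edges of $W$) remains Eulerian by the same parity argument, so repeating the greedy procedure from $w$ yields a closed walk $W'$, edge-disjoint from $W$, which can be spliced into $W$ at $w$ to form a strictly longer closed walk that still uses every edge at most once. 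Iterating strictly decreases the number of unused edges, so the process terminates with an edge-disjoint closed walk covering all of $E(G)$, i.e. an Euler tour.

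For polynomial-time computability, observe that one greedy walk and one splice can be carried out in $O(|E(G)|)$ time using adjacency-list data structures with pointers that support constant-time edge deletion and constant-time lookup of the next unused incident edge, and the total number of splices is at most $|E(G)|$; a naive implementation is obviously polynomial and suffices for the statement. The main obstacle is the sufficiency direction: one must verify both that a greedy walk from $v$ actually returns to $v$ (a parity argument) and that splicing subtours yields a single closed walk that preserves the edge-disjointness invariant, both of which follow from the loop invariants maintained throughout Hierholzer's procedure.
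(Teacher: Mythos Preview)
Your argument is correct and follows the standard Hierholzer construction. The paper, however, does not actually prove this proposition: it is stated as a well-known result of Euler and used as a black box, with no proof given. So there is nothing to compare against; your write-up simply supplies the classical proof that the paper omits. One small caveat worth tightening: the statement as phrased requires $G$ to be connected, not merely connected on its non-isolated part, so in the necessity direction you should note that the definition of an Euler tour in the paper (a closed walk using every edge the prescribed number of times) implicitly presumes there are no isolated vertices, or else the equivalence fails as stated; since the paper only applies the proposition to connected multigraphs arising from spanning subgraphs, this is harmless.
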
 

\begin{lemma}\label{xtcfrgvt}
Let $G$ be a connected, cubic graph and $\delta,K$ constants with $0<\delta<\frac{1}{2}$ and $K>0$. Then $G$ admits a \deltatour of length at most $K$ if and only if $G$ admits a cycle subpartition $\mathcal{C}$ with $\alpha|\mathcal{C}|+\beta|V(G)-V(\mathcal{C})|+\gamma|V(G)|+\kappa \leq K$ where $\alpha=4 \delta,\beta=1,\gamma=2-2\delta$ and $\kappa=-4 \delta$.
\end{lemma}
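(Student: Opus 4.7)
The plan is to prove both directions by matching tour length against cycle-subpartition weight, using \cref{discreteklein} to restrict attention to $\delta$-tours of very rigid shape. Since $G$ is cubic and connected, $|V(G)|\ge 4$, so cases (a) and (b) of \cref{discreteklein} are inapplicable, and the ``$\le 2$ stopping points'' alternative is infeasible (such a tour covers only a sub-interval of one edge together with two $\delta$-balls, never enough for a cubic graph when $\delta<\tfrac12$). Thus I fix a shortest $\delta$-tour $T$ that is nice and in which every edge is either traversed with multiplicity in $\{1,2\}$ (case~(c)) or peeked via a case-(d) segment of length $2(1-2\delta)$. Partition $E(G)=E_s\sqcup E_d\sqcup E_p$ accordingly, set $n=|V(G)|$, and note $|E_p|=3n/2-|E_s|-|E_d|$; a routine expansion yields $\len(T) = (4\delta-1)|E_s| + 4\delta|E_d| + 3n(1-2\delta)$, to be compared against $w(\mathcal{C}) = 4\delta|\mathcal{C}| + (n-|V(\mathcal{C})|) + (2-2\delta)n - 4\delta$.

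For the forward direction I take $\mathcal{C}$ to be the cycles of the subgraph $(V(G),E_s)$: the Eulerian nature of the traversal multigraph forces $s_v+2d_v$ to be even at each vertex $v$, hence $s_v\in\{0,2\}$ by cubicity, so $(V(G),E_s)$ is $2$-regular on its non-isolated vertices. This yields $|V(\mathcal{C})|=|E_s|$, and substituting gives $\len(T)-w(\mathcal{C}) = 4\delta\bigl(|E_s|+|E_d|+1-n-|\mathcal{C}|\bigr)$. The decisive inequality is $|E_s|+|E_d|\ge n-1+|\mathcal{C}|$, which I plan to derive from a connectivity argument: every edge being in case (c) or (d) makes every vertex a stopping point of $T$ and hence incident to at least one traversed edge, so the single closed walk renders $(V(G),E_s\cup E_d)$ connected; on the other hand $(V(G),E_s)$ has exactly $(n-|E_s|)+|\mathcal{C}|$ connected components, and only $|E_d|$ edges are available to reach a single component.

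For the reverse direction, given $\mathcal{C}$ I contract each cycle to a single super-node, obtaining together with the $|V(G)-V(\mathcal{C})|$ uncovered vertices a connected quotient of $G$ on $|\mathcal{C}|+|V(G)-V(\mathcal{C})|$ super-nodes; I choose any spanning tree $B\subseteq E(G)$ of this quotient and form the multigraph $M$ on $V(G)$ by taking each cycle edge once and each edge of $B$ twice. Every vertex has even degree in $M$ (cycle-vertices receive $2$ from their cycle plus twice their bridge-degree; uncovered vertices receive only twice their bridge-degree, which is positive because every super-node of a spanning tree is incident to a tree edge) and $M$ is connected, so by \cref{euler} it admits an Euler tour; complementing this tour with a case-(d) peek on every remaining edge of $G$ yields a $\delta$-tour, because both endpoints of any peeked edge are stopping points of the resulting tour (cycle-vertices via their cycle, uncovered vertices via an incident bridge). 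Running the same algebra in reverse then produces $\len(T)=w(\mathcal{C})$ exactly. The main obstacle is the component-count argument supporting $|E_s|+|E_d|\ge n-1+|\mathcal{C}|$; disposing of the $\le 2$ stopping-point alternative and verifying case-(d) eligibility in the reverse construction are delicate but routine once \cref{discreteklein} and \cref{euler} are in hand.
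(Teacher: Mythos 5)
Your proposal is correct and follows essentially the same route as the paper's proof: normalize via \cref{discreteklein}, partition edges by traversal multiplicity into once-traversed, twice-traversed and peeked, observe that parity at each vertex forces the once-traversed edges to form a cycle subpartition, and obtain the decisive bound $|E_s|+|E_d|\ge n-1+|\mathcal{C}|$ from a component count (your $(V(G),E_s)$ has $|\mathcal{C}|+n-|E_s|$ components and connectedness of the traversal skeleton needs at least that many minus one extra edges); the reverse direction via contraction, spanning tree, Euler tour and case-(d) peeks is also identical in spirit. The only cosmetic difference is that the paper phrases the lower bound directly as $|F_2|\ge|\mathcal{C}|+|V(G)\setminus V(\mathcal{C})|-1$ and never spells out the dismissal of the $\le 2$-stopping-point alternative, which you address explicitly.
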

\begin{proof}
Let first $\mathcal{C}$ be a cycle subpartition of $G$ that satisfies $\alpha|\mathcal{C}|+\beta|V(G)-V(\mathcal{C})|+\gamma|V(G)|+\kappa \leq K$. Let $G'$ be the graph obtained from $G$ by contracting every cycle of $\mathcal{C}$ into a single vertex. As $G$ is connected, so is $G'$. We can hence choose a spanning tree $T$ of $G'$. Let $F$ be the set of edges in $E(G)$ that correspond to the edges in $E(T)$. Observe that $|F|=|V(G')|-1=|\mathcal{C}|+|V(G)-V(\mathcal{C})|-1$.  Now let $H$ be the multigraph on $V(G)$ that contains $E(\mathcal{C})$ and that contains two copies of every edge of $F$. As $\mathcal{C}$ is a cycle subpartition of $G$, we obtain that the degree of every vertex of $H$ is even. Further, it follows by construction that $H$ is connected. Hence, by Proposition \ref{euler}, we obtain that $H$ admits an Euler tour $T_0$. We now obtain a tour $T$ in $G$ from $T_0$ in the following way: for every $uv \in E(G)$ that is not traversed by $T_0$, we choose an arbitrary occurrence of one of $u$ and $v$, say $u$, and replace this occurrence by the segment $up(u,v,1-2\delta)u$. It follows directly by construction and Lemma \ref{char2stops} that $T$ is a \deltatour of $G$. Further, as $\mathcal{C}$ is a cycle subpartition and $G$ is cubic, we have 
\begin{align*}
\len(T)&=2|F|+|E(\mathcal{C})|+(2-4 \delta)|E(G)-(F \cup E(\mathcal{C}))|\\
&=4 \delta|F|+(4 \delta-1)|E(\mathcal{C})|+(2-4 \delta)|E(G)|\\
&=4 \delta (|\mathcal{C}|+|V(G)-V(\mathcal{C})|-1)+(4 \delta-1)|V(\mathcal{C})|+(3-6 \delta)|V(G)|\\
&=\alpha|\mathcal{C}|+\beta|V(G)-V(\mathcal{C})|+\gamma|V(G)|+\kappa\\
&\leq K.
\end{align*}

Now suppose that $G$ a \deltatour $T$ with $\len(T)\leq K$. By Lemma \ref{discreteklein} and as $G$ is cubic, we may suppose that $T$ is nice and that one of $(c)$ and $(d)$ of \cref{discreteklein} holds  for every $uv \in E(G)$. Now let $F_1$ be the set of edges in $E(G)$ which are traversed an odd number of times by $T$, let $F_2$ be the set of edges of $E(G)$ that are traversed an even number of times, but at least twice,  by $T$ and let $F_3=E(G)-(F_1 \cup F_2)$. Observe that, as $T$ is a tour, every vertex of $G$ is incident to an even number of edges in $F_1$. Hence $F_1$ forms the edge set of a cycle subpartition $\mathcal{C}$ in $G$. Further, as $T$ is a nice tour stopping at every vertex of $G$ and $(d)$ of \cref{discreteklein} holds  for every $uv \in F_3$, we obtain $|F_2|\geq |\mathcal{C}|+|V(G)-V(\mathcal{C})|-1$. 

Further, as $\mathcal{C}$ is a cycle subpartition and $G$ is cubic, we have 
\begin{align*}
K&\geq \len(T)\\
&\geq 2|F_2|+|F_1|+(2-4 \delta)|F_3|\\
&= 2|F_2|+|F_1|+(2-4 \delta)|E(G)-(F_1 \cup F_2)|\\
&=4 \delta|F_2|+(4 \delta-1)|F_1|+(2-4 \delta)|E(G)|\\
&\geq 4 \delta (|\mathcal{C}|+|V(G)-V(\mathcal{C})|-1)+(4 \delta-1)|V(\mathcal{C})|+(3-6 \delta)|V(G)|\\
&=\alpha|\mathcal{C}|+\beta|V(G)-V(\mathcal{C})|+\gamma|V(G)|+\kappa.
\end{align*}

This finishes the proof.
\end{proof}

Observe that for any constant $\delta$ with $0<\delta<\frac{1}{2}$, we have that all of $4 \delta,1$, and $2-2\delta$ are positive. Hence  \cref{xtcfrgvt} and \cref{cyclehard} directly imply \cref{apx1}.
\subsubsection{Hardness for Cubic Bipartite TSP}\label{cycle3}
We now obtain the hardness result for graphic cubic bipartite TSP through the following simple observation.
\begin{lemma}\label{tsp12}
Let $G$ be a cubic bipartite connected graph and $K$ a constant. Then $G$ admits a TSP-tour of length at most $K$ if and only if $G$ admits a $\frac{1}{2}$-tour of length $K$. 
\end{lemma}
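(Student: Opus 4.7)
The plan is to establish that on a cubic connected graph (in particular bipartite, but bipartiteness is not needed for this lemma), TSP-tours and $1/2$-tours have the same possible lengths, so the optimum values coincide.

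For the forward direction, suppose $G$ admits a TSP-tour of length at most $K$. Viewed as a continuous tour $T$, this is a closed walk on the vertices that stops at every vertex of $G$, with no peeking inside edges, so its continuous length equals the number of edges traversed. Since $G$ is cubic, every vertex has degree at least $2$, and by the observation recalled in the introduction, any tour stopping at every vertex is a $1/2$-tour (indeed, every point $p$ on an edge $uv$ satisfies $\dist(p,T)\le \min\{\dist(p,u),\dist(p,v)\}\le 1/2$). Thus $T$ itself is a $1/2$-tour of length at most $K$.

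For the converse, let $T$ be a $1/2$-tour of length at most $K$. By \cref{lem:tournice} we may assume $T$ has at most two stopping points or is nice, and in fact by \cref{discreteklein} applied with $\delta=1/2$ we may assume that a shortest $1/2$-tour $T'$ of length at most $\len(T)\le K$ exists that is either described by at most two stopping points or is nice and, for each edge $x_1x_2\in E(G)$, falls into one of the cases (a)--(d). Since $G$ is cubic and connected with $|V(G)|\geq 4$, case (a) (which requires $V(G)=\{x_1,x_2\}$) and case (b) (which requires a degree-$1$ vertex) are impossible, and the degenerate sub-two-stop case cannot cover a cubic graph on at least four vertices. Hence $T'$ is nice and every edge of $G$ falls into case (c) (traversed by $T'$) or case (d) (both endpoints stopped at by $T'$, with the $\delta=1/2$ segments being the trivial $\seq{x_i}$, i.e.\ no peek-in).

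In particular, $T'$ stops at every vertex of $G$ and interacts with the interior of edges only by traversals. Consequently, the underlying closed walk of $T'$ in $G$ uses only actual edges of $G$, visits every vertex, and has combinatorial length equal to the continuous length $\len(T')\le K$. That closed walk is therefore a TSP-tour of $G$ of length at most $K$, completing the equivalence. The main thing to be careful about is ruling out the degenerate small-stopping-point cases and the peek-in behavior of case (b); both are excluded by the assumption that $G$ is cubic (hence minimum degree $3$) and has at least four vertices, after which the structural guarantees of \cref{discreteklein} do the remaining work.
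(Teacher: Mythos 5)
Your proof is correct and follows essentially the same route as the paper: the forward direction observes that a TSP-tour visits every vertex of a cubic graph and hence $1/2$-covers every edge, and the converse invokes \cref{discreteklein} with $\delta=1/2$ to conclude that a shortest $1/2$-tour is nice, stops at every vertex, and only traverses edges. You are somewhat more explicit than the paper in ruling out the degenerate cases (at most two stopping points, case (a), case (b)), and you avoid the paper's appeal to \cref{char2stops} in the forward direction by computing the distance bound directly, but these are cosmetic rather than substantive differences.
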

\begin{proof}
First let $T$ be a TSP-tour of $G$ with $\len(T)\leq K$. Clearly, we may suppose that $T$ is nice. It then follows directly from \cref{char2stops} that $T$ is a $\frac{1}{2}$-tour of $G$. 

Now suppose that $G$ admits a $\frac{1}{2}$-tour $T$ with $\len(T)\leq K$. By \cref{discreteklein} and as $G$ is cubic, we may suppose that one of $(c)$ and $(d)$ of \cref{discreteklein} holds for every $uv \in E(G)$. Hence $T$ is a TSP-tour of $G$.
\end{proof}

\Cref{apx1} and \cref{tsp12} immediately imply \cref{cor:approx:lb:tsp}.

\subsection[\texorpdfstring{APX-Hardness for $\delta\ge1/2$ via Subdivision}{APX-Hardness for delta > 1/2 via Subdivision}]{APX-Hardness for \boldmath$\delta\ge1/2$ via Subdivision}
In this section, we extend the hardness result obtained for small $\delta$ in the previous section to arbitrary values of $\delta$. More concretely, we prove \cref{cor:approx:lb:half:threehalves}. The key ingredient to transfer the hardness result to larger $\delta$ is the following simple subdivision result.
\begin{lemma}\label{drftgzh}

Let $G$ be a graph, $\delta>0$ a constant and $k$ a positive integer. Then, in polynomial time, we can create a graph~$G'$ such that for every $\delta$-tour~$T$ in $G$, there is a $k \delta$-tour~$T'$ with $\len(T')=k \len(T)$ in $G'$, and for every $k\delta$-tour~$T'$ in $G'$, there is a $\delta$-tour~$T$ in $G$ with $\len(T)=\frac{1}{k} \len(T')$.

\end{lemma}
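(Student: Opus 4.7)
The plan is to construct $G'$ as the graph obtained from $G$ by subdividing every edge exactly $k-1$ times. Concretely, for every $uv \in E(G)$, I would replace $uv$ with a path $u=w_0^{uv},w_1^{uv},\dots,w_k^{uv}=v$ of $k$ unit-length edges. This is clearly computable in polynomial time. The crucial observation is that there is a natural bijection $\phi\colon P(G)\to P(G')$ that scales all distances by exactly $k$: a point $p(u,v,\lambda)$ of $G$ maps to the unique point on the subdivision path from $u$ to $v$ in $G'$ at distance $k\lambda$ from $u$. Since shortest walks in $G$ correspond bijectively (via $\phi$) to shortest walks in $G'$ of $k$ times the length, one obtains $\dist_{G'}(\phi(p),\phi(q))=k\cdot\dist_G(p,q)$ for all $p,q\in P(G)$.

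For the first direction, given a \deltatour $T=\seq{p_0&p_1&\dots&p_z}$ in $G$, I would produce $T'$ by mapping each $p_i$ to $\phi(p_i)$ and then, whenever the segment from $\phi(p_{i-1})$ to $\phi(p_i)$ crosses one or more subdivision vertices in $G'$, inserting those subdivision vertices as additional stopping points in their natural order along the segment. The resulting sequence is a valid tour of $G'$ (consecutive points are distinct and now share an edge of $G'$), and its length equals $\sum_{i\in[z]}\dist_{G'}(\phi(p_{i-1}),\phi(p_i)) = k\cdot\len(T)$ since inserting intermediate collinear stopping points does not change the length. For the covering property, for any $p'\in P(G')$ let $p=\phi^{-1}(p')$; by \cref{nearstop} we have $\dist_G(p,T)=\dist_G(p,p_i)$ for some stopping point $p_i$ of $T$, and then $\dist_{G'}(p',T')\leq \dist_{G'}(p',\phi(p_i))=k\cdot\dist_G(p,p_i)\leq k\delta$.

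For the converse, given a $k\delta$-tour $T'=\seq{q_0&q_1&\dots&q_{z'}}$ in $G'$, I would set $T\coloneqq\seq{\phi^{-1}(q_0)&\phi^{-1}(q_1)&\dots&\phi^{-1}(q_{z'})}$. Any two consecutive points $q_{i-1},q_i$ lie on a single subdivision edge $w_{j-1}^{uv}w_j^{uv}$, and their preimages therefore lie on the same original edge $uv\in E(G)$ and are distinct (since $\phi$ is a bijection), so $T$ is a valid tour of $G$. The equality $\len(T)=\tfrac{1}{k}\len(T')$ follows edge by edge from the scaling property of $\phi$. For the covering property, any point $p\in P(G)$ satisfies, again by \cref{nearstop} applied to $T'$, that $\dist_{G'}(\phi(p),T')=\dist_{G'}(\phi(p),q_i)$ for some stopping point $q_i$, so $\dist_G(p,T)\leq \dist_G(p,\phi^{-1}(q_i))=\tfrac{1}{k}\dist_{G'}(\phi(p),q_i)\leq\delta$.

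The only subtle point, and thus the only place requiring care, is the forward direction's need to refine a segment of $T$ whose $\phi$-image spans several subdivision edges of $G'$; verifying that the refined sequence is still a walk of the same total length is routine but must be stated explicitly. Everything else is a direct consequence of $\phi$ being a distance-scaling bijection of the underlying metric spaces $(P(G),\dist_G)$ and $(P(G'),\dist_{G'})$.
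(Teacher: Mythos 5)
Your proof is correct and follows essentially the same approach as the paper: subdividing each edge $k-1$ times and using the natural distance-scaling bijection $\phi$. The only difference is that you insert all crossed subdivision vertices directly rather than first invoking the niceness lemma (\cref{lem:tournice}) as the paper does; your version is actually slightly cleaner since applying \cref{lem:tournice} only guarantees $\len(T')\leq k\len(T)$ rather than the exact equality claimed, whereas your direct refinement preserves the length exactly.
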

\begin{proof}
We obtain $G'$ from $G$ by subdividing each edge of $G$ $k-1$ times. Clearly, $G'$ can be constructed from $G$ in polynomial time. For some $uv \in E(G)$, we denote the newly created vertices by $x_{uv}^1,\ldots,x_{uv}^{k-1}$ in the order they appear when following the path in $G'$ corresponding to $uv$ from $u$ to $v$. Further, we use $x_{uv}^0$ for $u$ and $x_{uv}^k$ for $v$. By convention, we have $x_{uv}^{i}=x_{vu}^{k-i}$ for $i \in \{0,\ldots,k\}$. We next define a natural bijection $\phi\colon P(G)\to P(G')$; namely, we set $\phi(v)=v$ for all $v \in V(G)$ and for any $p=p(u,v,\lambda)\in P(G)$ with $\lambda \in (0,1)$, we set $\phi(p)=p(x_{uv}^{\floor{k\lambda}},x_{uv}^{\floor{k\lambda}+1},k\lambda-\floor{k\lambda})$. It is easy to see that $\phi$ is a bijection.

First let $T$ be a \deltatour in $G$. By \cref{lem:tournice}, we may suppose that $T$ is nice. In order to construct a tour $T'$ in $G'$, let $p$ and $q$ be consecutive stopping points of $T$. As $T$ is nice and by symmetry, we may suppose that there is some $uv \in E(G)$ with $p=u$ and $\phi(q)=(x_{uv}^{i-1},x_{uv}^{i},\lambda)$ for some $i \in [k]$ and $\lambda \in (0,1]$. We  replace $pq$ by $u,x_{uv}^1,x_{uv}^2,\ldots,x_{uv}^{i-1},\phi(q)$. We obtain $T'$ by applying this replacement to every pair of consecutive stopping points of $T$. It is easy to see that $T'$ is a tour in $G$ with $\len(T')=k\len(T)$. In order to prove that $T'$ is a $k \delta$-tour, consider some $q' \in P(G')$. As $T$ is a \deltatour in $G$, there is some $p \in P(T)$ with $\dist_G(p,\phi^{-1}(q'))\leq \delta$. We obtain that $\dist_{G'}(\phi(p),q')\leq k \delta$. By construction, we have $\phi(p)\in P(T')$; it follows that $T'$ is a $k \delta$-tour in $G'$. 

Now let $T'$ be a $k \delta$-tour in $G'$. We obtain $T$ from $T'$ by replacing $p'$ by $\phi^{-1}(p')$ for all stopping points $p'$ of $T'$. It is easy to see that $T$ is a tour in $G$ with $\len(T)=\frac{1}{k} \len(T')$. Now consider some $q \in P(G)$. As $T'$ is a $k\delta$-tour in $G'$, there is some $p' \in P(T')$ such that $\dist_{G'}(\phi(q),p')\leq k\delta$. This yields $\dist_{G'}(q,\phi^{-1}(p')\leq \delta$. As $\phi^{-1}(p') \in P(T)$ by construction, it follows that $T$ is a \deltatour in $G$. 
\end{proof}

We are now ready to conclude \cref{cor:approx:lb:half:threehalves} from \cref{thm:approx:lb:zero:half} and \cref{drftgzh}. For convenience, \cref{cor:approx:lb:half:threehalves} is restated below.

\APXhardFromHalfToThreeHalves*\label\thisthm

\begin{proof}
Let $\delta>0$ and consider $\delta'=\frac{\delta}{\ceil{2 \delta}+1}$. Observe that $\delta'<\frac{1}{2}$, so by Theorem~\ref{thm:approx:lb:zero:half}, there is some $\eps>0$ such that there is no polynomial-time $(1+\eps)$-approximation algorithm for the problem of computing a shortest $\delta'$-tour unless $\p=\np$. 

Suppose that there is a polynomial-time $(1+\eps)$-approximation algorithm for \deltatourprob. Let $G'$ be a graph and $K$ be a constant. By Lemma~\ref{drftgzh}, in polynomial time, we can compute a graph~$G$ such that for every $\delta'$-tour~$T'$ in $G'$, there is a $\delta$-tour~$T$ with $\len(T)=(\ceil{2 \delta}+1)\len(T')$ and for every $\delta$-tour~$T$ in $G$, there is a $\delta'$-tour~$T'$ in $G'$ with $\len(T')=\frac{1}{\ceil{2 \delta}+1} \len(T)$. By assumption, there is a polynomial-time algorithm that outputs `yes' if $G$ admits a \deltatour of length at most $(\ceil{2 \delta}+1)K$ and `no' if $G$ does not admit a \deltatour of length at most $(1+\eps)(\ceil{2 \delta}+1)K$. Observe that this algorithm outputs `yes' if $G'$ admits a $\delta'$-tour of length at most $K$ and `no' if $G$ does not admit a \deltatour of length at most $(1+\eps)K$. Further, the total running time of this procedure is polynomial. It follows that $\p=\np$.
\end{proof}

\subsection[\texorpdfstring{Inapproximability for Covering Range $\delta\geq 3/2$}{Inapproximability for Covering Range delta >= 3/2}]{Inapproximability for Covering Range \boldmath$\delta\geq 3/2$}\label{sec:aprox:lb:log:n}
This section is dedicated to proving Theorem~\ref{thm:approx:lb:log:n}. Our main technical ingredient is a lemma mapping instances of the dominating set problem in so-called split graphs to instances of our tour problem. This lemma will be reused in Section~\ref{section:w1hard-wrt-opt}. 

We need two technical definitions. A graph~$G$ is called a {\it split graph} if there is a partition $(C,I)$ of $V(G)$ such that $G[C]$ is a clique and $I$ is an independent set.
We say that a split graph is {\it trivial} if it contains a dominating set of size at most 2, {\it nontrivial} otherwise. We first need the following simple observation on dominating sets in split graphs.
\begin{proposition}\label{propsplit}
Let $G$ be a connected split graph and let $(C,I)$ be a partition of $V(G)$ with $C \neq \emptyset$ such that $G[C]$ is a clique  and $I$ is an independent set.
Then for every dominating set $S$ of $G$, there is a dominating set $S'$ of $G$ with $|S'|\leq |S|$ and $S' \subseteq C$.
\end{proposition}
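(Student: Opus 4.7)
The plan is to transform $S$ into the desired $S'$ by greedily swapping each vertex of $S \cap I$ for one of its neighbors in $C$, and to verify that this swap preserves both the domination property and the cardinality bound.

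First I would handle the degenerate case $|V(G)| = 1$: since $C \neq \emptyset$, the unique vertex lies in $C$, so $S \subseteq C$ already and there is nothing to do. Henceforth I assume $|V(G)| \geq 2$, so every vertex of $G$ has at least one neighbor by connectivity. Now pick any $v \in S \cap I$. Because $I$ is independent, $N_G(v) \subseteq C$, and by the previous remark $N_G(v) \neq \emptyset$, so I may choose some $c_v \in N_G(v) \subseteq C$. I would then argue that $S'' := (S \setminus \{v\}) \cup \{c_v\}$ still dominates $G$: the only vertices whose domination status could possibly be affected are $\{v\} \cup N_G(v)$; but $c_v$ dominates $v$ by the choice $c_v \in N_G(v)$, and since $G[C]$ is a clique, $c_v$ is adjacent to every other vertex of $C$ and hence also dominates all of $N_G(v) \subseteq C$.

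Iterating this replacement once for each remaining element of $S \cap I$ produces a set $S' \subseteq C$; each step either leaves the size unchanged or decreases it by one (the latter happening precisely when $c_v$ was already in $S$), so $|S'| \leq |S|$. No serious obstacle is anticipated: the only points requiring care are ensuring that $v$ has a neighbor (where connectivity together with the handling of the one-vertex case, supported by the hypothesis $C \neq \emptyset$, is used) and verifying that the new dominator covers everything the old one did (where the clique structure of $G[C]$ is essential).
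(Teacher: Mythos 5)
Your proof is correct and uses the same core idea as the paper: replace each vertex of $S\cap I$ with a neighbor in $C$, which exists by connectivity, and the clique structure of $G[C]$ guarantees domination is preserved. The paper phrases this as a minimality/contradiction argument (choose $S$ minimizing $|S\cap I|$) while you iterate the swap explicitly, but these are routine variants of the same argument; your extra handling of the one-vertex case is harmless and in fact unnecessary, since $C\neq\emptyset$ and $|V(G)|=1$ already force $I=\emptyset$.
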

\begin{proof}
Let be $S$ a dominating set of $G$ minimizing $|S \cap I|$ among
all dominating sets of size at most $|S|$. The statement is trivial for $S\cap I=\emptyset$. 
Otherwise, there is some $v\in S\cap I$. Since $G$ is a connected graph and $I$ is an independent set, $v$ has a neighbor $u\in C$. 
The set $S'=u\cup S-v$ satisfies $|S'|\le|S|$ and is a dominating set since $C$ is a clique, contradicting minimality.
\end{proof}
We are now ready to give the main lemma for Theorem~\ref{thm:approx:lb:log:n}.
\begin{lemma}
\label{lem:dom_tour_reduction}
	For any fixed $\delta \geq 1.5$ and every nontrivial, connected split graph~$G$,
	in polynomial time, we can compute a graph~$G'$ such that the following conditions hold:
	\begin{enumerate}
		\item $|V(G')| \leq \delta |V(G)|$,
			\label[lemma]{prop:dom_tour_instance_size}
		\item for every dominating set $S$ in $G$,
			there is a $\delta$-tour~$T$ in $G'$ with
			$\len(T) \leq |S|$,
			\label[lemma]{prop:dom_to_tour}
		\item for every $\delta$-tour~$T$ in $G'$, 
			 there is a dominating set $S$ in $G$
			with 
			$|S| \leq
			\len(T)/s_\delta$.
			\label[lemma]{prop:tour_to_dom}
	\end{enumerate}
\end{lemma}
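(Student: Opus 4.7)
The plan is to build $G'$ by attaching a pendant path of length $\ell \coloneqq \lfloor \delta - 1 \rfloor$ to each vertex of $G$. This yields $|V(G')| = (1+\ell)\,|V(G)| \le \delta\,|V(G)|$, settling \cref{prop:dom_tour_instance_size}. In the boundary case $\delta = 3/2$ we have $\ell = 0$ and $G' = G$. Throughout, let $(C,I)$ be a split partition of $G$ with $G[C]$ a clique and $I$ an independent set.

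For \cref{prop:dom_to_tour}, given a dominating set $S$ of $G$ I first apply \cref{propsplit} to replace $S$ with an equal-size dominating set inside $C$. Then $G[S]$ is a clique on $|S|$ vertices, so it admits a Hamiltonian cycle, which I take as the tour $T$; its length is exactly $|S|$. Since $S \subseteq C$ and $S$ dominates $G$, every vertex of $G$ lies within distance $1$ of some $s \in S$, and every edge of $G$ has both endpoints simultaneously within distance $1$ of some common $s$ (for edges inside $C$ this uses the clique; for an edge $uv$ with $u \in C, v \in I$ this uses a dominator of $v$ in $C$). \Cref{char2stops} then covers all edges of $G$, while each pendant edge is covered because both its endpoints lie within distance at most $\ell + 1 \le \delta$ of some $s \in S$.

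For \cref{prop:tour_to_dom}, I discretize $T$ via \cref{lemma:discretization} so that all stopping points lie at positions $\lambda \in S_\delta$. Consecutive distinct stopping points on a common edge are then at distance at least $s_\delta$, so $\alpha(T) \le \len(T)/s_\delta$ (the unceiled form of \cref{lentour}). I then assign to each stopping point $p$ a single vertex $v_p \in C$: if $p$ lies inside the pendant attached at some $x \in V(G)$, set $v_p = x$ when $x \in C$ and otherwise any fixed $C$-neighbor of $x$; if $p$ lies on an edge of $G$, let $v_p$ be a $C$-endpoint of that edge (choosing the appropriate one when both endpoints lie in $C$, guided by the nice-tour structure of \cref{lem:tournice}). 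Defining $S \coloneqq \{v_p : p \text{ stops of } T\}$ gives $|S| \le \alpha(T) \le \len(T)/s_\delta$. The key remaining claim is that $S$ dominates $G$: any $v \in C$ is trivially covered since $S \subseteq C$ sits in a clique; for $v \in I$, the pendant endpoint $f_v$ forces a witness stop $p^\star$ with $\dist_{G'}(p^\star, v) \le \delta - \ell < 2$, whose local structure (together with $I$ being independent) forces $v_{p^\star}$ to be a neighbor of $v$ in $C$.

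The main obstacle will be producing the single-vertex-per-stop assignment that makes $|S| \le \alpha(T)$ go through, rather than the wasteful two-endpoint assignment that would only yield $|S| \le 2\alpha(T)$. Handling edges of $G[C]$ with two $C$-endpoints is delicate, because either endpoint is a syntactically valid choice yet only one of them may actually lie in $N(v)$ for a given $v \in I$; the analysis will route through the nice-tour reduction of \cref{lem:tournice} and the discretized position set $S_\delta$ to pin down the correct endpoint. The pendant length $\ell = \lfloor \delta - 1 \rfloor$ is exactly the maximum integer length compatible with the size bound while still activating the locality $\delta - \ell < 2$, and nontriviality of $G$ rules out degenerate short tours that would otherwise violate $|S| \le \len(T)/s_\delta$.
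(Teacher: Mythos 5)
Your construction differs from the paper's in a harmless way: you attach a pendant path of length $\lfloor\delta\rfloor-1$ to \emph{every} vertex, whereas the paper attaches such a path only to vertices of $I$ (and sets $G'=G$ when $\delta<2$). Both satisfy condition (1), and your sketch for condition (2) — a Hamiltonian cycle on $S\subseteq C$ covers everything within distance $\lfloor\delta\rfloor\le\delta$ — goes through after routine checking (the detour through \Cref{char2stops} is not really needed; a direct distance bound suffices).

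The genuine gap is in condition (3). You want to assign a \emph{single} vertex $v_p\in C$ to each stopping point $p$ and set $S=\{v_p\}$, so that $|S|\le\alpha(T)\le\len(T)/s_\delta$. But the assignment for a stop $p^\star$ interior to an edge $cc'$ of $G[C]$ is ambiguous, and your claim that ``local structure forces $v_{p^\star}$ to be a neighbor of $v$'' is simply false for the parameter range at hand. Concretely, suppose $\{\delta\}\ge 1/2$, so $\delta-\ell\ge 3/2$. A discretized stop at the midpoint $p^\star=p(c,c',1/2)$ satisfies $\dist_{G'}(p^\star,v)=3/2\le\delta-\ell$ for \emph{every} $v\in I$ adjacent to at least one of $c,c'$. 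Thus $p^\star$ may be the unique witness stop both for some $v_1\in I$ with $c\in N(v_1)$, $c'\notin N(v_1)$, and for some $v_2\in I$ with $c'\in N(v_2)$, $c\notin N(v_2)$. No single choice of $v_{p^\star}\in\{c,c'\}$ dominates both, and you have not shown that some other stop rescues the uncovered one. You flag this difficulty yourself (``the analysis will route through the nice-tour reduction \ldots to pin down the correct endpoint''), but that is a gesture, not an argument, and I do not see how the nice-tour structure alone singles out the correct endpoint when the stop sits exactly in the middle of the edge.

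The paper sidesteps this entirely by defining $S$ as the set of \emph{all} $u\in C$ such that $T$ stops at some $p(u,\cdot,\lambda)$ with $\lambda<1$ (i.e., within distance $<1$ of $u$), and then bounding $|S|$ using the nice-tour structure: an interior stop on $cc'$ arises only as a ``peek in'' from one endpoint, so that endpoint is itself a stop, giving an injection from $S$ into the set of stopping points. Adopting this set $S$ (rather than a one-per-stop assignment) would repair your argument; as written, the proof of (3) does not close.
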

\begin{proof}
Let $(C,I)$ be a partition of $V(G)$ such that $C$ induces a clique and $I$ an independent set. It is easy to see that $(C,I)$ can be computed in polynomial time given $G$. We now create $G'$ from $G$ in the following way. If $\delta \geq 2$, for every $v \in I$, we add a new vertex $v'$ and we connect $v$ and $v'$ by a path of length $\floor{\delta}-1$. If $\delta<2$, we set $G'=G$ and $v'=v$ for all $v \in I$. This finishes the description of $G'$. Clearly, $G'$ can be constructed in polynomial time. By construction, we have $|V(G')|=|V(G)|+(\floor{\delta}-1)|I|\leq \delta |V(G)|$, so (1) holds.

To prove (2), let $S$ be a dominating set in $G$. As $G$ is connected and nontrivial, we have that $C \neq \emptyset$.
Therefore, by \cref{propsplit}, we  may assume that 
$S \subseteq C$. Let $\{s_1,\ldots,s_k\}$ be an arbitrary ordering of $S$ and let $T = (s_1, \dots, s_k, s_1)$. As $C$ is a clique, we have that $T$ is a tour in $G'$. Further, we clearly have $\len(T)=|S|$. In order to see that $T$ is a $\delta$-tour in $G'$, let $p=p(x,y,\lambda) \in P(G)$. By symmetry, we may suppose that $\lambda \leq \frac{1}{2}$. If $x \in C$, let $u \in S$. As $C$ is a clique, we have $\dist_{G'}(p,T)\leq \dist_{G'}(p,u)\leq \dist_{G'}(p,x)+\dist_{G'}(x,u)\leq \frac{1}{2}+1\leq \delta$. Otherwise, by construction, we have $\dist_{G'}(p,v)\leq \delta-1$ for some $v \in I$. As $S$ is a dominating set, there is some $u \in N_{G'}(v)\cap S$. This yields $\dist_{G'}(p,T)\leq \dist_{G'}(p,u)\leq \dist_{G'}(p,x)+\dist_{G'}(x,u)\leq (\delta-1)+1= \delta$. This proves (2).

In order to prove (3), let $T$ be a \deltatour in $G'$. By Lemma~\ref{lemma:discretization}, we may suppose that every stopping point of $T$ is of the form $p(x,y,\lambda)$ for some $xy \in E(G)$ and some $\lambda \in S_\delta$ and further, that either $\alpha(T)\leq 2$ or $T$ is nice. Further, let $S$ contain all the elements $u$ of $C$ such that $T$ stops at a point of the form $p(u,v,\lambda)$ for some $\lambda<1$. 

First suppose that $S=\emptyset$. Then, by construction, there is some $v \in I$ such that all stopping of $T$ are on the unique path from $v$ to $v'$ in $G'$. As $G$ is non-trivial, there is some $w \in I-v$. By construction, we have $\dist_{G'}(w',T)\geq \dist_{G'}(w',v)=\dist_{G'}(w',w)+\dist_{G'}(w,v)=\floor{\delta}-1+2>\delta$, a contradiction to $T$ being a $\delta$-tour. We may hence suppose that $S\neq \emptyset$.

We next show that $S$ is a dominating set of $G$. As $S \neq \emptyset$ and $C$ is a clique, we have that $S$ dominates $C$. Now consider some $v \in I$ and suppose for the sake of a contradiction that $S$ does not dominate $v$. As $T$ is a tour and by construction, we obtain that $T$ does not stop at any point of the unique path in $G'$ linking $v$ and $v'$. This yields that $\dist_G(v',T)=\dist_G(v,v')+\dist_G(v,T)=\floor{\delta}-1+2>\delta$, a contradiction to $T$ being a \deltatour.

As $G$ is non-trivial and $S$ is a dominating set, we obtain $|S|\geq 3$. It follows that $T$ has at least $|S|$ stopping points and hence, in particular, is nice. We obtain $\len(T)\geq s_\delta |S|$ by \cref{lentour}. Hence (3) holds.  
\end{proof}

The following well-known result can easily be obtained from a corresponding result for the set cover problem in \cite{DinurS14}.
\begin{proposition}\label{domsetapprox}
Unless $\p=\np$, there is an absolute constant $\alpha_0$ for which there is no polynomial-time algorithm that, given a graph~$G$ and a constant $K$, returns `yes' if $G$ admits a dominating set of size at most $K$ and `no' if $G$ does not admit a dominating set of size at most $\alpha_0 \log (|V(G)|)K$. 
\end{proposition}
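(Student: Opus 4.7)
The plan is to derive Proposition~\ref{domsetapprox} from the inapproximability of \SetCover proved by Dinur and Steurer~\cite{DinurS14} via the classical gap-preserving reduction from \SetCover to \domset.

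The starting point will be the fact that there is an absolute constant $c>0$ such that, unless $\p=\np$, no polynomial-time algorithm distinguishes a \SetCover instance $(U,\mathcal{S})$ admitting a cover of size at most $K$ from one whose minimum cover has size larger than $c\log(|U|)K$; this is what the cited paper provides.

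Given such a \SetCover instance, I will build a graph $G$ with vertex set $V(G)\coloneqq U\cup\mathcal{S}\cup\{v^\star\}$ for a fresh vertex $v^\star$, turn $\mathcal{S}$ into a clique, join $v^\star$ to every vertex of $\mathcal{S}$, and add the edge $uS$ for every pair $u\in U$, $S\in\mathcal{S}$ with $u\in S$. Note that $G$ is a connected split graph with split partition $(\mathcal{S},U\cup\{v^\star\})$, and since $|V(G)|\le|U|+|\mathcal{S}|+1$ is polynomial in $|U|$, we have $\log|V(G)|=\Theta(\log|U|)$. Consequently, a gap of $\alpha_0\log|V(G)|$ on the \domset side translates to a gap of $\Theta(\alpha_0)\log|U|$ on the \SetCover side.

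The two directions of the correspondence are then routine. Any set cover $\mathcal{S}'\subseteq\mathcal{S}$ of $U$ will be a dominating set of $G$ of the same size: $\mathcal{S}'$ dominates $U$ by the covering property, dominates $\mathcal{S}$ since $\mathcal{S}$ is a clique, and dominates $v^\star$ through any of its elements. In the other direction, for any dominating set $D$ of $G$ I will apply Proposition~\ref{propsplit} with $C=\mathcal{S}$ and $I=U\cup\{v^\star\}$ to obtain a dominating set $D'\subseteq\mathcal{S}$ with $|D'|\le|D|$; such a $D'$ then dominates $U$ and hence forms a set cover of $U$ of size at most $|D|$. An $\alpha_0\log|V(G)|$-distinguishing algorithm for \domset would therefore yield a $\Theta(\alpha_0)\log|U|$-distinguishing algorithm for \SetCover, contradicting the Dinur--Steurer bound once $\alpha_0$ is chosen small enough relative to $c$.

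The only nontrivial aspect of the argument is the bookkeeping of constants: tracking how the $\log|V(G)|$ factor degrades to a $\log|U|$ factor after the polynomial blow-up and isolating the final constant $\alpha_0$. The combinatorial content of the reduction is entirely standard and the hardness bound is inherited directly from the cited \SetCover result.
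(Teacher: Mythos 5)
Your reduction is correct and matches the route the paper itself indicates (it states the proposition follows from Dinur--Steurer's \textsc{SetCover} hardness and leaves the details to the reader): the standard split-graph construction, with the clean-up direction handled by Proposition~\ref{propsplit}, together with the observation that the hard \textsc{SetCover} instances have polynomially many sets so $\log|V(G)|=\Theta(\log|U|)$. One small remark: because your construction already outputs a split graph, it in fact directly proves the paper's Theorem~\ref{rabsaur} as well, so the intermediate pass through Raman--Saurabh's reduction would be unnecessary; and the auxiliary vertex $v^\star$ is harmless but not actually needed for the argument.
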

Following a proof by Raman and Saurabh \cite[Thm.~2]{RamanS08} verbatim, we obtain the following result from \cref{domsetapprox}.

\begin{theorem}\label{rabsaur}
Unless $\p=\np$, there is an absolute constant $\alpha$ for which there is no algorithm that runs in polynomial time and, given a split graph~$G$ and a constant $K$, returns `yes' if $G$ admits a dominating set of size at most $K$ and `no' if $G$ does not admit a dominating set of size at most $\alpha \log (|V(G)|)K$. 
\end{theorem}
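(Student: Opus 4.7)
The plan is to reduce approximating \domset on arbitrary graphs (\cref{domsetapprox}) to approximating \domset on split graphs in an approximation-preserving manner. Given a graph~$G$ with $V(G)=\{v_1,\dots,v_n\}$, we construct a split graph~$G'$ as follows. The vertex set of $G'$ is $C \cup I$, where $C=\{c_1,\dots,c_n\}$ is a clique of size $n$ and $I=\{w_1,\dots,w_n\}$ is an independent set of size $n$. For every $i\in[n]$, we join $w_i$ to $c_j$ for each $j$ such that $v_j \in N_G[v_i]$, i.e., $v_j=v_i$ or $v_iv_j \in E(G)$. By construction, $G'$ is a split graph on $2n$ vertices that can be built in polynomial time.

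The key claim is $\gamma(G)=\gamma(G')$, where $\gamma$ denotes the minimum dominating set size. First, given a dominating set $S\subseteq V(G)$ of $G$, the set $S'=\{c_j : v_j\in S\}$ dominates $G'$: the clique $C$ is dominated by any nonempty subset of $C$, and every $w_i\in I$ has some $v_j\in S\cap N_G[v_i]$, so $c_j\in S'$ is a neighbor of $w_i$. Hence $\gamma(G')\leq \gamma(G)$. Conversely, by \cref{propsplit} applied with the partition $(C,I)$, any dominating set of $G'$ can be converted, without increasing its size, to one fully contained in $C$. Such a set $S'\subseteq C$ corresponds to the set $S=\{v_j : c_j\in S'\}$ in $G$, which dominates every $v_i$ because $w_i$ is dominated in $G'$ by some $c_j\in S'$, meaning $v_j\in N_G[v_i]$. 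Thus $\gamma(G)\leq \gamma(G')$, giving equality.

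To conclude, suppose for contradiction that for every constant $\alpha>0$ there is a polynomial-time algorithm $\mathcal{A}_\alpha$ that, on split graphs, distinguishes dominating sets of size $K$ from those of size at least $\alpha\log(|V|) K$. Given an arbitrary instance $(G,K)$ of the hardness in \cref{domsetapprox}, we run $\mathcal{A}_\alpha$ on $(G',K)$ and return its answer. Since $|V(G')|=2n$, we have $\log|V(G')| \leq \log|V(G)|+1 \leq 2\log|V(G)|$ for $n\geq 2$, so if $\gamma(G)\leq K$ then $\gamma(G')\leq K$ and $\mathcal{A}_\alpha$ answers yes; if $\gamma(G)\geq \alpha_0\log(|V(G)|)K$ then $\gamma(G')\geq \tfrac{\alpha_0}{2}\log(|V(G')|)K$, so choosing $\alpha=\alpha_0/2$ makes $\mathcal{A}_\alpha$ answer no. This would yield a polynomial-time algorithm distinguishing the two cases for \domset on general graphs, contradicting \cref{domsetapprox}. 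Hence no such $\alpha$ exists for split graphs, proving the theorem. The only subtlety is handling the logarithm bookkeeping and making sure the reduction is truly polynomial, both of which are routine.
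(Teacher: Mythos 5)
Your proof is correct and follows essentially the same approach as the paper. The paper simply cites Raman and Saurabh's reduction verbatim; the construction you spell out (duplicate $V(G)$ into a clique $C$ and an independent set $I$, joining $w_i$ to $c_j$ exactly when $v_j\in N_G[v_i]$, so that $\gamma(G)=\gamma(G')$ and $|V(G')|=2n$ affects the logarithm by at most a factor of two) is precisely that reduction, so the two arguments coincide.
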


We are now ready to prove \cref{thm:approx:lb:log:n}, which we restate here for convenience.
\ThmApproxLbLogN*\label\thisthm

\begin{proof}
Let $\delta \geq \frac{3}{2}$, let $\alpha_\delta=s_\delta\alpha/2$ where $\alpha$ is the constant from Theorem~\ref{rabsaur}, and suppose that there is a polynomial-time algorithm $A$ that, given a graph~$G'$ and a constant $K'$, returns `yes' if $G'$ admits a \deltatour of length at most $K'$ and `no' if $G'$ does not admit a \deltatour of length at most $\alpha_\delta \log(|V(G')|)K'$. We will show that $\p=\np$ using Theorem~\ref{rabsaur}.  

Let $G$ be a split graph and $K$ a constant. Observe that every subgraph of a split graph is also a split graph. Hence, if $G$ is disconnected, we may apply our algorithm to every connected component of $G$. We may hence suppose that $G$ is connected.
Next, if $|V(G)|\leq \delta$, we can solve the problem by a brute-force approach by \cref{decide}. We first check in $O(|V(G)|^2)$ time whether $G$ admits a dominating set of size at most 2. If this is the case, we may output an appropriate answer.  We may therefore suppose that $G$ is nontrivial, connected, and satisfies $|V(G)|\geq \delta$. 

It then follows by Lemma~\ref{lem:dom_tour_reduction} that we can compute a graph~$G'$ satisfying (1), (2), and (3) in polynomial time. We then apply $A$ to $G'$ and $K$ and output the output of $A$. Observe that the total running time of the algorithm is polynomial. Further, if $G$ admits a dominating set of size at most $K$, then, by $(1)$, we have that $G'$ admits a \deltatour of length at most $K$, so the algorithm returns `yes.' Now assume that the algorithm returns `yes,' so by assumption, we have that $G'$ admits a \deltatour of length at most $\alpha_\delta \log(|V(G')|)K$. By (3), there is a dominating set $S$ of $G$ that satisfies $|S|\leq \alpha_\delta \log(|V(G')|)K/s_\delta$. As $|V(G)|\geq \delta$, we have $\log(|V(G')|)\leq 2\log(|V(G)|)$, yielding $|S|\leq \alpha \log(|V(G)|)K$. By Theorem~\ref{rabsaur}, we obtain that $\p=\np$.
\end{proof}
\section{Parameterized Complexity}
\label{section:param:complexity}
In this section, we analyze the parameterized complexity of our problem \deltatourprob. 
On the one hand, we look at the problem parameterized by the length of a shortest tour. 
Here we give an FPT algorithm for all $\delta<3/2$ in~\cref{section:parametrized-by-opt}, we provide  a \wone-hardness result and an XP algorithm for fixed $\delta\ge3/2$ in~\cref{section:w1hard-wrt-opt}, and we prove an NP-hardness result for $\delta$ being part of the input in \cref{dfguhs}. 
On the other hand, we analyze the problem with the parameter $n/\delta$ 
where $n$ is the given graph's order.  
\Cref{section:by-n-over-delta}~presents an XP-algorithm and 
\cref{section:w1-hard-n-over-delta} rules out the existence of an 
FPT algorithm under the assumption $\fpt\neq\wone$.

\subsection[\texorpdfstring{FPT Algorithm Parameterized by Tour Length for $\delta<3/2$}{FPT Algorithm Parameterized by Tour Length for delta < 3/2}]{FPT Algorithm Parameterized by Tour Length for \boldmath$\delta<3/2$}
\label{section:parametrized-by-opt}
This section derives an FPT algorithm for \deltatourprob 
	parameterized by the length of a shortest tour $\opttour$. 
In fact, we give an algorithm which allows $\delta$ to be part of the input
	and that is fixed-parameter tractable for $\delta$ and tour length $K$ combined.
Formally, we prove \cref{fpt:sol}, which we restate here for convenience. 

\FPTSol*\label\thisthm

Our algorithm is based on the following kernelization:
Given a graph~$G$, reals $\delta \in (0,3/2)$ and $K\geq0$,
	we either correctly conclude that $G$ has no \deltatour of length at most $K$,
	or output an equivalent instance of size at most $f(K,\delta)$
	for a computable function $f$.
The key insight is a bound on the vertex cover size of $f(K,\delta)$
	for a computable function,
	assuming there is a \deltatour of length at most $K$; see \cref{vertexcover}.
Hence we may compute an approximation $C$ of a minimum vertex cover,
	and reject the instance if $C$ is too large.
We partition the vertices in $V(G) \setminus C$ by their neighborhood in the vertex cover $C$.
The main remaining insight, proven in~\cref{neglect}, is that we can remove from any set $S$ 
of the partition whose size is larger than $f(K,\delta)$ for a certain computable function $f$,
	one vertex and obtain an equivalent instance.

Let $G$ be a connected graph and $T$ a tour of $G$ that stops at at least 3 points.
We then say that $T$ \emph{neglects} a vertex $v \in V(G)$
	if $T$ does not stop at a point in distance less than $1$ to~$v$.

The next result shows that if a connected graph $G$ admits a short \deltatour, then it also admits a small vertex cover. 
\begin{lemma}\label{vertexcover}
Let $G$ be a connected graph and let $T$ be a nice shortest \deltatour in $G$ for some $\delta>0$ with $\delta< \frac{3}{2}$ where $T$ stops at at least 3 points. Then $G$ admits a vertex cover of size at most $\len(T)/s_\delta+1$.
\end{lemma}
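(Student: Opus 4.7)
My plan is to construct a vertex cover $Z \subseteq V(G)$ directly from the stopping points of $T$ and bound $|Z|$ by $\alpha(T)$, using \cref{lentour} to translate into the stated length bound. First, via \cref{lemma:discretization} I may assume (replacing $T$ by a shortest $\delta$-tour of the same length if necessary, preserving niceness) that every stopping point lies at a discretized position $p(u,v,\lambda)$ with $\lambda\in S_\delta$; \cref{lentour} then gives $\alpha(T)\le\lceil\len(T)/s_\delta\rceil\le\len(T)/s_\delta+1$.

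I then define $V_0\subseteq V(G)$ as the set of vertices at which $T$ makes an integral stop, and $A\subseteq V(G)$ as the set of non-peek endpoints of edges carrying an interior stop: for every peek segment $(a,p(a,b,\lambda),a)$ occurring in $T$, I put $b$ into $A$. Set $Z\coloneqq V_0\cup A$. Using that niceness forces every interior stop to be a unique peek per edge, each of the $z=\alpha(T)$ index positions contributes at most one vertex to $V_0\cup A$ (integral indices to $V_0$, interior indices to $A$), so $|Z|\le|V_0|+|A|\le\alpha(T)\le\len(T)/s_\delta+1$.

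The heart of the argument is verifying that $Z$ is a vertex cover. Suppose for contradiction that $uv\in E(G)$ has $u,v\notin Z$. Since $u,v\notin V_0$, the niceness dichotomy forces the ``stay out'' behaviour on $uv$: $T$ neither traverses $uv$ nor places an interior stop on it. By \cref{nearstop}, the midpoint $m$ of $uv$ is $\delta$-covered from outside by some stopping point $s$ of $T$, and I rule out every possibility. If $s$ is integral at a vertex $w$, then $w\ne u,v$ (else $w\in V_0$), and any walk from $w$ to $m$ must enter $uv$ through $u$ or $v$, giving $\dist_G(w,m)=\min(\dist_G(w,u),\dist_G(w,v))+\tfrac12\ge\tfrac32>\delta$. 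If $s=p(a,b,\lambda)$ is interior on an edge $ab$ with $\{a,b\}\cap\{u,v\}=\emptyset$, the same routing argument yields $\dist_G(s,m)\ge\min(\lambda,1-\lambda)+\tfrac32\ge\tfrac32>\delta$. Finally, if $s$ is interior on an edge incident to $u$ or $v$, say on $ua$, then niceness makes it the midpoint of a peek $(x,s,x)$ with $x\in\{u,a\}$: peeking from $u$ forces $u\in V_0$ and peeking from $a$ forces $u\in A$, both contradicting $u\notin Z$. All cases contradict, so $Z$ is a vertex cover.

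The main obstacle is the case analysis in the third paragraph: the crux is that the threshold $\delta<3/2$ is precisely what forces every stopping point close enough to $m$ to occupy an edge incident to $u$ or $v$, at which point the nice-tour structure ``peek-or-stop-at-endpoint'' pins $u$ or $v$ into $V_0\cup A$. A secondary subtlety is the discretization step: one must check that the existence of a nice shortest $\delta$-tour with $\ge 3$ stops is compatible with \cref{lemma:discretization}, so that $T$ may indeed be assumed to have stops in $S_\delta$ without losing length or niceness.
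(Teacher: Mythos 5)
Your proof is correct and follows essentially the same route as the paper's. Your set $Z=V_0\cup A$ coincides with the paper's set $X$ of non-neglected vertices (a vertex is non-neglected iff $T$ stops at it, or $T$ peeks into one of its incident edges from the opposite end, which is exactly $V_0\cup A$), and your case analysis on the location of the covering stopping point $s$ relative to the midpoint $m$ is the expanded version of the paper's one-line ``by symmetry, $q=p(w,u,\lambda)$ with $\lambda>0$.'' The one place you diverge in accounting is the size bound: the paper bounds $|X|$ directly by charging at least $\min\{2\lambda_v,1\}\ge s_\delta$ of tour length to each $v\in X\setminus\{v_0\}$, whereas you go through the discrete length via \cref{lentour}, getting $|Z|\le\alpha(T)\le\lceil\len(T)/s_\delta\rceil$. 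Both yield the same $\len(T)/s_\delta+1$ bound, and both implicitly require the stopping points to lie in $S_\delta$ (you are actually more explicit than the paper in flagging that this is supplied by \cref{lemma:discretization}).
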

\begin{proof}
Let $X$ be the set of vertices in $V(G)$ which are not neglected by $T$. As $T$ is nice and stops at at least 3 points, we may suppose that $T$ stops at some $v_0 \in V(G)$ and $\abs{X}\geq 3$.  For every $v \in X-v_0$, we now define $\lambda_v$ to be the largest constant such that $T$ stops at a point of the form $p(u,v,\lambda_v)$ for some $u \in N_G(v)$. As $T$ is nice, we obtain that $\len(T)\geq \sum_{v \in X-v_0}\min\{2\lambda_v,1\}\geq \sum_{v \in X-v_0}\min\{2s_\delta,1\}\geq (\abs{X}-1)s_\delta$. It follows that $\abs{X}\leq \len(T)/s_\delta+1$. In order to prove that $X$ is a vertex cover, let $e=uv \in E(G)$ and let $p=(u,v,\frac{1}{2})$. As $T$ is a \deltatour in $G$ and by Proposition~\ref{nearstop}, there is some $q\in P(G)$ stopped at by $T$ such that $\dist_G(p,q)\leq \delta$. As $\delta<\frac{3}{2}$ and by symmetry, we may suppose that $q=(w,u,\lambda)$ for some $w \in N_G(u)$ and some $\lambda>0$. We obtain that $u \in X$. It follows that $X$ is a vertex cover of $G$. 
\end{proof}
The next result shows that if the graph is very large in comparison to the tour length, then the tour neglects some vertices of the graph.
\begin{proposition}\label{neglect}
Let $G$ be a connected graph, $\delta >0$, $T$ a nice shortest \deltatour in $G$ stopping at at least 3 points and $Y$ a set of at least $\len(T)/s_\delta+2$ vertices in $V(G)$. Then at least one vertex in $Y$ is neglected by $T$. 
\end{proposition}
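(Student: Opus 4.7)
The plan is to bound the number of non-neglected vertices. Let $X \subseteq V(G)$ denote the set of vertices of $G$ that are not neglected by $T$. I will show that $|X| \leq \len(T)/s_\delta + 1$. Once this is established, the proposition follows immediately: the hypothesis $|Y| \geq \len(T)/s_\delta + 2 > |X|$ forces the existence of some $v \in Y \setminus X$, which is by definition a vertex in $Y$ that is neglected by $T$.

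The bound $|X| \leq \len(T)/s_\delta + 1$ is obtained by exactly the same counting argument that appears in the proof of \cref{vertexcover}. Applying \cref{lemma:discretization}, I may assume without loss of generality that every stopping point of $T$ has the form $p(u,v,\lambda)$ with $\lambda \in S_\delta$. Since $T$ is nice with at least three stopping points, $T$ stops at some vertex $v_0 \in V(G)$. For every $v \in X \setminus \{v_0\}$, define $\lambda_v$ to be the largest value $\lambda$ such that $T$ stops at some point $p(u,v,\lambda)$ with $u \in N_G(v)$. Because $v$ is not neglected, $T$ stops at some point strictly within distance~$1$ of $v$, so $\lambda_v > 0$, and the discretization then forces $\lambda_v \geq s_\delta$ by the defining choice of $s_\delta$.

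The niceness of $T$ ensures that the length contributions associated to distinct vertices in $X \setminus \{v_0\}$ can be charged disjointly, giving
\[
    \len(T) \;\geq\; \sum_{v \in X \setminus \{v_0\}} \min\{2\lambda_v,1\} \;\geq\; \sum_{v \in X \setminus \{v_0\}} \min\{2s_\delta,1\} \;\geq\; (|X|-1)\, s_\delta,
\]
from which $|X| \leq \len(T)/s_\delta + 1$ follows by rearranging.

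The main technical subtlety, inherited from \cref{vertexcover}, lies in verifying the disjointness of the charged contributions in the first inequality above: for an edge $uv$ whose endpoints both lie in $X$, the tour's single interaction with $uv$ must not be over-counted. This is handled by the four-case classification of how a nice tour interacts with any edge (see \cref{fig:fourbehaviors}), which provides exactly the structure needed for a valid charging scheme. Since this entire accounting step is already spelled out in the proof of \cref{vertexcover}, the present proof reduces to invoking it and then translating the resulting bound on non-neglected vertices into the required existence of a neglected vertex in $Y$.
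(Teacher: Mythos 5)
Your proof is correct and follows the same route as the paper: the paper's proof of Proposition~\ref{neglect} assumes for contradiction that no vertex of $Y$ is neglected and derives $\len(T)\geq \abs{Y-v_0}\min\{2s_\delta,1\}\geq(\abs{Y}-1)s_\delta$, hence $\abs{Y}\le\len(T)/s_\delta+1$, which is logically the same as your bound $\abs{X}\le\len(T)/s_\delta+1$ on all non-neglected vertices followed by a cardinality comparison. One small caveat worth noting: you justify $\lambda_v\ge s_\delta$ by ``WLOG'' replacing $T$ by a tour discretized via \cref{lemma:discretization}, but the proposition is stated for an \emph{arbitrary} nice shortest tour $T$, and the discretization lemma only guarantees the \emph{existence} of such a tour, not that the given one can be assumed discretized---so the replacement changes the object the statement speaks about. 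The paper's own proof is silent on this point and asserts the same bound directly, so your version makes the hidden dependence visible rather than introducing a new gap; still, if you wanted to make the argument airtight you would either restrict the statement to discretized tours or argue that minimality alone forces $\lambda_v\ge s_\delta$.
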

\begin{proof}

By assumption, there is some $v_0 \in V(G)$ at which $T$ stops. Now suppose that no vertex of $Y$ is neglected by $T$. Then we have $\len(T)\geq  \abs{Y-v_0}\min\{2s_\delta,1\}\geq (\abs{Y}-1)s_\delta$. This yields $\abs{Y}\leq \len(T)/s_\delta+1$, a contradiction.
\end{proof}

Given a graph~$G$, two vertices $u,v \in V(G)$ are \emph{false twins} in $G$ if $N_G(u)=N_G(v)$ and $E(G)$ does not contain an edge linking $u$ and $v$.
A \emph{set of false of twins} in $G$ is a set of vertices which are pairwise false twins in~$G$.
\begin{lemma}\label{eliminate}
Let $G$ be a connected graph, $\delta>0$ and $K>0$. Moreover, let $Y\subseteq V(G)$ be a set of false twins of size at least $K/s_\delta+3$ and $y \in Y$. Then $G$ admits a \deltatour of length at most $K$ if and only if $G-y$ admits a \deltatour of length at most $K$. Moreover, every nice \deltatour of length at most $K$ in $G-y$ is  a \deltatour of length at most $K$ in $G$.
\end{lemma}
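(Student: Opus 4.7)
The proof splits cleanly into the two directions, with the second (the ``moreover'' claim) being considerably more delicate.

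For the forward direction, take a shortest \deltatour $T$ of $G$ with $\len(T)\le K$. By \cref{lem:tournice} I may assume $T$ is nice, treating the degenerate case of at most two stopping points separately: then $T$ lies on a single edge, which, if incident to $y$, I may translate onto the analogous edge incident to any $y'\in Y-y$ via the false-twin automorphism. In the main case, since $|Y|\ge K/s_\delta+3 \ge \len(T)/s_\delta+2$, \cref{neglect} produces some twin $y^*\in Y$ neglected by $T$. Because $y$ and $y^*$ are false twins, the transposition $\sigma$ swapping them is an automorphism of $G$, so passing from $T$ to $\sigma(T)$ lets me assume $y^*=y$. A neglected $y$ has no stopping point of $T$ within distance $1$, which rules out $T$ stopping on, or traversing, any edge incident to $y$; hence $T\subseteq G-y$. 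Finally, any shortest walk in $G$ between points of $P(G-y)$ that passes through $y$ may be rerouted through another twin $y'\in Y-y$ (available since $|Y|\ge3$) without changing its length, so $\dist_G=\dist_{G-y}$ on $P(G-y)$ and $T$ is a \deltatour of $G-y$.

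For the backward direction, let $T$ be a nice \deltatour of $G-y$ of length at most $K$, viewed also as a tour of $G$. For $p\in P(G-y)$ one has $\dist_G(p,T)\le\dist_{G-y}(p,T)\le\delta$, so only points on edges incident to $y$ (including $y$ itself) remain to be checked. Here I invoke the large twin set: for each $y'\in Y-y$, the tour $T$ covers $y'$ via a stopping point $q_{y'}$ with $\dist_{G-y}(q_{y'},y')\le\delta$. The key observation is that if $q_{y'}$ is not on any edge incident to a twin, then its shortest walk to $y'$ enters $y'$ through some $u\in N$, giving $\dist_{G-y}(q_{y'},u)\le\delta-1$; by twin symmetry this yields $\dist_G(q_{y'},y)\le(\delta-1)+1=\delta$, and in fact $\dist_G(q_{y'},p(y,v,\lambda))\le\delta$ whenever $v=u$, extending to all $v\in N$ by picking, for each $v$, the twin whose witness $q_{y'}$ enters via $v$.

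To guarantee such a ``good'' witness exists, I pass (via \cref{lemma:discretization}) to a nice discretized tour $T^\star$ of length at most $\len(T)\le K$ with stopping points described by $\lambda\in S_\delta$; then \cref{lentour} gives $\alpha(T^\star)\le\lceil K/s_\delta\rceil$, hence fewer than $|Y-y|\ge K/s_\delta+2$ distinct stopping points. A stopping point on an edge $y'u$ (or at $y'$ itself) can serve at most one twin $y'$ for $\delta<3/2$ (two twins $y'_1,y'_2$ would force $\delta\ge\dist(y'_1,y'_2)/2\ge1$ plus further constraints that fail on twin-incident peek locations), so by pigeonhole some $y^*\in Y-y$ must be covered by a stopping point $q$ lying off all twin-incident edges; applying the first paragraph's rerouting argument then shows this $q$ also lies in $T$ and witnesses $\delta$-coverage of $y$ and of every point on edges $yv$ in $G$. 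The statement about \emph{every} nice $T$ of length at most $K$, rather than the discretized $T^\star$, follows since any excess length in $T$ relative to $T^\star$ only widens the set of stopping points available, preserving coverage.

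The main obstacle is the ``moreover'' direction, specifically combining the twin pigeonhole with the nice-tour structure to ensure that coverage of every edge $yv$, not merely of the vertex $y$, is inherited from coverage of the corresponding edges $y'v$ at a sufficient number of twins $y'$.
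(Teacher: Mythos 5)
Your forward direction is essentially the paper's: apply \cref{lem:tournice}, invoke \cref{neglect} on $Y$ to find a neglected twin, relabel (you phrase it as applying the twin-swap automorphism, which is the same thing), observe $T$ stays inside $G-y$, and reroute any shortest $p$--$q$ walk that passes through $y$ via another twin. That part is fine.

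The backward direction is where the proposal goes wrong. You introduce a pigeonhole argument resting on the claim that, for $\delta<3/2$, a stopping point lying on an edge incident to a twin (or at a twin) can ``serve'' at most one twin. Two problems. First, the lemma is stated for \emph{all} $\delta>0$; nowhere does it assume $\delta<3/2$, so any argument that needs this bound does not prove the lemma as stated. Second, even granting $\delta<3/2$, the claim is false: take $q$ at a common neighbour $u$ of the twins (or at $p(u,y',\lambda)$ with $\lambda$ small), and $\delta\in[1,3/2)$; then $\dist_G(q,y'')\leq\dist_G(q,u)+1\leq\delta$ for every twin $y''$, so a single twin-incident stop covers all of them. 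The parenthetical justification (``two twins would force $\delta\ge\dist(y_1',y_2')/2\ge1$ plus further constraints'') actually shows that the obstruction disappears exactly when $\delta\ge1$, which is well inside your claimed range. Consequently the pigeonhole does not produce a witness stop lying off all twin-incident edges, and the rest of the argument has no starting point. The closing sentence about ``excess length only widening the set of stopping points'' is also not a proof of the ``moreover'' clause.

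The paper's backward direction is both simpler and $\delta$-free, and it reuses exactly the idea from the forward direction that you already had in hand: apply \cref{neglect} to $Y-y$ inside $G-y$ to get a twin $y'$ that $T$ neglects; since $T\subseteq P(G-y)$ and $T$ neglects $y'$, the twin-swap automorphism $\sigma$ exchanging $y$ and $y'$ fixes $P(T)$ pointwise, so for $p=p(x,y,\lambda)$ one gets $\dist_G(p,T)=\dist_G(\sigma(p),T)=\dist_G(p(x,y',\lambda),T)\leq\dist_{G-y}(p(x,y',\lambda),T)\leq\delta$. No discretization, no pigeonhole, no bound on $\delta$ needed.
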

\begin{proof}
First suppose that $G$ admits a \deltatour~$T$ of length at most $K$. By \cref{lem:tournice}, without loss of generality, we may suppose that $T$ is nice. It follows from Proposition~\ref{neglect} that $G$ neglects some vertex in $Y$. Possibly relabeling some elements of $Y$, we may suppose that $T$ neglects $y$. We now show that $T$ is also a \deltatour in $G-y$. First observe that, as $T$ neglects $y$, we clearly have that $T$ is a tour in $G-y$. Now consider some $p \in P(G-y)$. As $T$ is a \deltatour in $G$, we obtain that there is some point $q \in P(G)$ passed by $T$ such that $\dist_G(p,q)\leq \delta$. As $T$ neglects $y$, we obtain $q \in P(G-y)$. Now let $W=pp_1\ldots p_zq$ be a shortest $\seq{p&q}$-walk in $G$. By assumption, we have $\len(W)\leq \delta$. Further, we may suppose that $p_i \in V(G)$ for $i \in [z]$. If $p_i \neq y$ for all $i \in [z]$, it holds $\dist_{G-y}(p,T)\leq \dist_{G-y}(p,q)\leq \len(W)\leq \delta$. We may hence suppose that $p_i=y$ for some $i \in [z]$. Let $y' \in Y-y$ and let $W'=pp_1\ldots p_{i-1}y'p_{i+1}\ldots p_zq$.

 As $y$ and $y'$ are false twins, we obtain that $W'$ is a walk in $G-y$. This yields $\dist_{G-y}(p,T)\leq \dist_{G-y}(p,q)\leq \len(W')=\len(W)\leq \delta$. Hence $T$ is a \deltatour in $G-y$.

Now suppose that $G-y$ admits a \deltatour of length at most $K$ and let $T$ be a nice shortest \deltatour in $G-y$.  Clearly, we have $\len(T)\leq K$ and $T$ is also a tour in $G$. Now consider some $p \in P(G)$. If $p \in P(G-y)$, as $T$ is a \deltatour in $G-y$, we have $\dist_G(p,T)\leq \dist_{G-y}(p,T)\leq \delta$. We may hence suppose that $p=p(x,y,\lambda)$ for some $x \in N_G(y)$ and some $\lambda>0$. Further, by Proposition~\ref{neglect}, there is some $y'\in Y-y$ which is neglected by $T$. Let $p'=p(x,y',\lambda)$. As $T$ neglects $y'$ and is a \deltatour in $G-y$, we have $\dist_G(p,T)=\dist_G(p',T)\leq \delta$.
Hence $T$ is a \deltatour in $G$. 
\end{proof}
We are now ready to prove the existence of the desired kernel.
\begin{lemma}\label{findkernel}
Let $G$ be a connected graph, $\delta>0$ with $\delta<\frac{3}{2}$ and $K>0$. Then the problem of deciding whether $G$ admits a \deltatour of length at most $K$ admits a kernel of size $f(K,\delta)$ that can be computed in $n^{\Oh(1)}$ for some computable function $f:\mathbb{R}_{\geq 0}^2\rightarrow \mathbb{Z}_{\geq 0}$.
\end{lemma}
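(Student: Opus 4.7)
The plan is to kernelize in three stages. First, dispose of the degenerate case in which $G$ admits a \deltatour of length at most~$K$ with at most two stopping points: by \cref{lemma:discretization}, any such tour has stopping points in a polynomial-size set of candidates, so we enumerate all singletons and pairs and test each candidate tour in polynomial time via \cref{check}. If one succeeds, we return a fixed trivial YES-instance as the kernel; otherwise, we may assume throughout that any \deltatour of length at most~$K$ stops at at least three points, which is exactly the hypothesis needed to invoke \cref{vertexcover} and \cref{neglect}.

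Second, compute a $2$-approximation $C$ of a minimum vertex cover of $G$ (via any maximal matching). By \cref{vertexcover}, if a \deltatour of length at most~$K$ stopping at at least three points exists, then $G$ has a vertex cover of size at most $K/s_\delta + 1$, so $G$ has a $2$-approximate vertex cover of size at most $2(K/s_\delta + 1)$; if $|C|$ exceeds this bound, we return a fixed trivial NO-instance. Otherwise $|C| \leq c(K,\delta) \coloneqq 2(K/s_\delta + 1)$. Partition $V(G) \setminus C$ into equivalence classes by neighborhood in $C$; since $C$ is a vertex cover, $V(G) \setminus C$ is an independent set, so vertices in the same class are pairwise false twins. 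The number of classes is at most $2^{|C|} \leq 2^{c(K,\delta)}$.

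Third, apply \cref{eliminate} iteratively: while some class $Y$ satisfies $|Y| > K/s_\delta + 3$, pick an arbitrary $y \in Y$ and delete it. By \cref{eliminate}, each deletion preserves the existence of a \deltatour of length at most~$K$, and the threshold condition of the lemma is maintained at the moment of every removal. After termination, every class has at most $K/s_\delta + 3$ vertices, so the kernel has at most $c(K,\delta) + 2^{c(K,\delta)}(K/s_\delta + 3) \eqqcolon f(K,\delta)$ vertices and is constructed in polynomial time. Connectivity is automatically preserved throughout: deleting one vertex $y$ from a set of false twins of size at least two keeps the graph connected, because any walk that used $y$ can be rerouted via any remaining twin $y' \in Y \setminus \{y\}$, which has the same neighborhood. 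The only subtle point is the combined invocation of \cref{eliminate}; however, since we always delete from a class whose current size strictly exceeds $K/s_\delta + 3$, each step is valid, which is what makes the iterated reduction go through.
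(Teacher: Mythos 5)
Your proposal is correct and follows essentially the same route as the paper's own proof: dispose of the $\le 2$-stopping-point case by discretization, bound the vertex cover via a maximal matching using \cref{vertexcover}, partition the remaining vertices into false-twin classes by their neighborhood in the cover, and prune large classes by repeated application of \cref{eliminate}. The only cosmetic differences are that you delete vertices one at a time rather than selecting a subset of each class outright, and your per-class cap is $K/s_\delta+3$ rather than $\lceil K/s_\delta+2\rceil$; both yield the same bound $f(K,\delta)$ up to constants.
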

\begin{proof}
We first use a brute force algorithm to check whether $G$ contains a \deltatour of length at most $K$ stopping at at most 2 points. This is possible in $n^{\Oh(1)}$ by \cref{lemma:discretization}. If this is the case, then $K_1$ serves as the desired kernel.
We next greedily compute a maximal matching in $G$. Let $X$ be the set of vertices covered by this matching and let $k=\abs{X}$. Observe that every vertex cover needs to contain at least one vertex of each edge in the matching, hence every vertex cover of $G$ is of size at least $\frac{1}{2}k$.
Thus, if  $\frac{1}{2}k>K/s_\delta+1$, by Lemma~\ref{vertexcover}, we obtain that the desired tour does not exist. Hence, a path on $\lceil K+4 \delta\rceil+1$ vertices may serve as the desired kernel. We may therefore assume that  $\frac{1}{2}k\leq K/s_\delta+1$. Let $x_1,\ldots,x_k$ be an arbitrary ordering of $X$. Now for every $I \subseteq [k]$, let $Y_I$ be the set of vertices $y \in V(G)-X$
	with neighborhood $N_G(y)=\{x_i:i \in I\}$.
Observe that for all $I \subseteq [k]$, we have that $Y_I$ is a set of false twins. Further, as $X$ is a vertex cover of $G$, we have that $(X,(Y_I:I \subseteq [k]))$ is a partition of $V(G)$. Now for every $I \subseteq [k]$, let $Y_I'$ be an arbitrary subset of $Y_I$ of size $\min\{\abs{Y_I}, \lceil K/s_\delta+2 \rceil\}$. Further, let $G'=G[X \cup \bigcup_{I \subseteq [k]}Y'_I]$. Repeatedly applying Lemma~\ref{eliminate}, we obtain that $G'$ admits a \deltatour of length at most $K$ if and only if $G$ admits a \deltatour of length at most $K$. Hence $G'$ is a kernel for the problem. Clearly, $G'$ can be computed in $n^{\Oh(1)}$. Finally, we have that $\abs{V(G')}=\abs{X}+\sum_{I \subseteq [k]}\abs{Y'_I}\leq k+2^{k} \lceil K/s_\delta+2 \rceil\leq 2 (K/s_\delta+1)+2^{2K/s_\delta+1} \lceil K/s_\delta+2 \rceil$.
\end{proof}

Lemma~\ref{findkernel} and \cref{decide} directly imply Theorem~\ref{fpt:sol}.

\subsection[\texorpdfstring{Hardness with Respect to Tour Length for $\delta\ge3/2$}{Hardness with Respect to Tour Length for delta > 3/2}]{\boldmath Complexity results for fixed $\delta\ge3/2$}\label{section:w1hard-wrt-opt}
In Section~\ref{section:parametrized-by-opt}, we have showed that for $\delta<3/2$, the problem of computing a shortest \deltatour is fixed parameter tractable with respect to the length of a shortest \deltatour. The first objective of this section is to rule out a similar result when $\delta \geq 3/2$. We heavily make use of Lemma~\ref{lem:dom_tour_reduction}. 

The following result was proved by Karthik et al.~in \cite{KarthikLM19}.

\begin{theorem}\label{graphs}
Unless $\fpt = \wone$, for any function $f\colon \mathbb{Z}_{\geq 0}\to \mathbb{Z}_{\geq 0}$, there is no algorithm that, given a graph~$G$ and an integer $k$, returns `yes' if $G$ admits a dominating set of size $k$, returns `no' if $G$ does not admit a dominating set of size $f(k)$,  and runs in $f(k)n^{O(1)}$.
\end{theorem}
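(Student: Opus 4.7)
The plan is to reduce from a $W[1]$-hard problem such as \textsc{Multicolored Clique} on $k$ parts, building a \textsc{Dominating Set} instance whose optimum gap grows as an arbitrary computable function $f(k)$. The reduction naturally breaks into two stages: a gap-producing base reduction, followed by a gap-amplifying composition in the spirit of a distributed PCP.

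For the base reduction, given a \textsc{Multicolored Clique} instance with vertex parts $V_1,\dots,V_k$ of size $n$, I would build a graph $G'$ with one ``coordinate'' cluster per $V_i$ and one bipartite ``constraint'' gadget per pair $(i,j)$ enforcing that the chosen vertices in $V_i$ and $V_j$ are adjacent. A choice of one vertex per $V_i$ that corresponds to a $k$-clique yields a dominating set of size $O(k^2)$, while any inconsistent choice forces many gadget vertices to remain undominated. Routing all gadgets through a single universal clique on the ``coordinate side'' (so that the graph is essentially a split graph) lets me appeal to a structural observation akin to \cref{propsplit}: every minimum dominating set can be assumed to lie on the coordinate side and hence to encode a candidate clique. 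This yields an initial constant-factor gap.

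To amplify the constant gap to the desired function $f$, I would compose the base reduction with a tensor-product or disperser-based construction: replace each element of the underlying universe with a block of elements sampled from a suitable combinatorial design, so that any dominating set not corresponding to a genuine $k$-clique leaves a polynomial fraction of elements uncovered and must therefore grow proportionally. Iterating this amplification a bounded number of rounds, with a round count depending only on the target $f$ and $k$, boosts the gap above $f(k)$ while maintaining that the final parameter is a function $g(k)$ of $k$ alone and that the instance size remains $g(k) \cdot n^{O(1)}$. An FPT-time $f$-approximation would then yield an FPT-time decision procedure for \textsc{Multicolored Clique}, contradicting $\fpt \neq \wone$.

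The central obstacle will be keeping the parameter and instance size under control during gap amplification: a naive product construction pushes the parameter to something like $k^{\Theta(\log f(k))}$ and inflates the instance to $n^{f(k)}$, which destroys the $f(k)\cdot n^{O(1)}$ budget. Overcoming this requires the technical heart of the Karthik--Laekhanukit--Manurangsi framework, namely a \emph{threshold graph composition} combined with list-decodable codes that reuses the same coordinate set across amplification rounds, so that each round inflates the parameter only by a function of the current parameter and not of $n$. Calibrating this composition so that the gap can be pushed above an \emph{arbitrary} computable function, rather than merely an explicit one, is the subtlest step and requires an inductive amplification scheme whose error parameters depend only on $k$.
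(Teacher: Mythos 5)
The paper does not actually prove this theorem---it simply cites Karthik, Laekhanukit, and Manurangsi~\cite{KarthikLM19}, so your task was to reconstruct their argument. Your sketch has the right cast of characters but a genuine structural error in the order of operations. You claim the base reduction from \textsc{Multicolored Clique} to a split-graph \textsc{Dominating Set} instance ``yields an initial constant-factor gap.'' It does not. The standard reduction (pick one vertex per color class, use incidence gadgets) is an \emph{exact} reduction: completeness gives a dominating set of size roughly $k$, and soundness forces size roughly $k+1$, which is an additive gap of~$1$, not a multiplicative constant. There was no known constant-factor FPT-gap for \textsc{Dominating Set} prior to the KLM-style machinery; indeed, proving even constant inapproximability for parameterized \textsc{Clique} (Lin, 2021) was a separate major achievement. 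Without a gap in the source instance, no gadget ``leaves many vertices undominated'' in a way that scales multiplicatively.

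The second issue is that you place gap creation and amplification \emph{after} reaching \textsc{Dominating Set} and propose tensor-style products on the dominating-set instance itself. The actual KLM proof inverts this order: it starts from exact W[1]-hardness of \textsc{Multicolored Clique}, creates a gap at the level of a CSP-like intermediate problem (\textsc{MaxCover} with projection) via the distributed PCP framework---a communication protocol for \textsc{Set Disjointness} in the style of $\text{IP}=\text{PSPACE}$, which is where the arbitrarily large gap $F(k)$ with bounded parameter and polynomial instance blowup comes from---and only then reduces gap-\textsc{MaxCover} to \textsc{Dominating Set} via a hypercube partition system that preserves the already-large gap. Direct product-style amplification of \textsc{Dominating Set}, as you propose, is exactly the step that causes the $k^{\Theta(\log F(k))}$ parameter blowup you flag at the end; the communication-protocol step is what sidesteps it, and it is not something that can be retrofitted onto a dominating-set instance after the fact. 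Your closing paragraph essentially concedes that you are deferring the technical heart to KLM without reproducing it, and the part you do supply in detail (the base reduction) does not do what you claim.
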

Again, using literally the same reduction as for Theorem 2 in \cite{RamanS08}, we can obtain the same result when restricting to split graphs.
\begin{theorem}\label{w1splitgraphs}
Unless $\fpt = \wone$, for any function $f\colon \mathbb{Z}_{\geq 0}\to \mathbb{Z}_{\geq 0}$, there is no algorithm that, given a split graph~$G$ and an integer $k$, returns `yes' if $G$ admits a dominating set of size $k$, returns `no' if $G$ does not admit a dominating set of size $f(k)$,  and runs in $f(k)n^{O(1)}$.
\end{theorem}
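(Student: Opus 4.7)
The plan is to give a parameter-preserving polynomial-time reduction from \domset on arbitrary graphs to \domset on split graphs, and then invoke \cref{graphs} to conclude. This is essentially the textbook transformation used in the proof of \cref{rabsaur}.

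Given an instance $(G,k)$ of \domset, I would construct a split graph $G'$ with vertex set $V(G')=V(G)\cup\{u_v:v\in V(G)\}$, making $V(G)$ a clique of $G'$, leaving $\{u_v:v\in V(G)\}$ as an independent set, and adding an edge between $u_v$ and every vertex of $N_G[v]=\{v\}\cup N_G(v)$ for each $v\in V(G)$. By construction $G'$ is a split graph with $|V(G')|=2|V(G)|$, can be built in polynomial time, and is connected whenever $|V(G)|\geq 1$, because $V(G)$ is a clique and each $u_v$ is adjacent to $v$.

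The key claim is that, for every $\ell\geq 1$, the graph $G$ admits a dominating set of size at most $\ell$ if and only if $G'$ does. For the forward direction, a dominating set $S\subseteq V(G)$ of $G$ dominates the clique $V(G)$ in $G'$ (since $S\neq\emptyset$ and any vertex of a clique dominates it), and dominates each $u_v$ because some $s\in S\cap N_G[v]$ is, by construction, adjacent to $u_v$ in $G'$. For the backward direction, given any dominating set of $G'$ of size $\ell$, \cref{propsplit} supplies a dominating set $S\subseteq V(G)$ of size at most $\ell$; then for each $v\in V(G)$ the vertex $u_v$ must be dominated by $S$ in $G'$, which forces $S\cap N_G[v]\neq\emptyset$, and this is exactly the statement that $v$ is dominated by $S$ in $G$.

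Because the reduction preserves the dominating set number exactly, any algorithm that, given a split graph $G'$ and integer $k$, returns \textup{`yes'} if $G'$ has a dominating set of size $k$, returns \textup{`no'} if $G'$ has no dominating set of size $f(k)$, and runs in time $f(k)n^{O(1)}$, would, via the above reduction, yield an algorithm with the same guarantees on general graphs in time $f(k)n^{O(1)}$, contradicting \cref{graphs} unless $\fpt=\wone$. I foresee no real obstacle in executing this plan: the only substantive points are verifying that the split-graph construction preserves the dominating set number exactly and that \cref{propsplit} is applicable to push a dominating set of $G'$ into its clique part without enlarging it, both of which are immediate from the construction.
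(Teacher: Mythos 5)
Your proposal is correct and follows the same route as the paper: the paper simply invokes the Raman--Saurabh reduction from general \domset to \domset on split graphs, which is precisely the construction you spell out (clique side $V(G)$, independent side of copies $u_v$ adjacent to $N_G[v]$), and your verification via \cref{propsplit} that the reduction preserves the dominating-set number exactly, combined with \cref{graphs}, is sound.
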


We are now ready to prove the main result of this section that rules out the possibility of an asymptotic approximation in FPT time.
\begin{theorem}\label{fptdeltasmall}
Let $\delta \geq 3/2$ be fixed. Unless $\fpt = \wone$, for any function $f\colon \mathbb{R}_{\geq 0}\to \mathbb{Z}_{\geq 0}$, there is no algorithm that, given a connected graph~$G$ and a constant $K$, returns `yes' if $G$ admits a \deltatour of length $K$, returns `no' if $G$ does not admit a \deltatour of length $f(K)$ and runs in $f(K)|V(G)|^{O(1)}$.
\end{theorem}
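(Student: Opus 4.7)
The plan is to prove the theorem by contradiction, reducing from \domset on split graphs parameterized by the solution size~$k$, whose asymptotic W[1]-hardness is given by \cref{w1splitgraphs}; a standard modification of that reduction makes the hardness apply to connected and nontrivial split graphs (trivial instances, i.e., those with a dominating set of size at most~$2$, are easily detected in polynomial time and handled directly). The real work has already been packaged in \cref{lem:dom_tour_reduction}, which takes any connected, nontrivial split graph~$G$ and produces in polynomial time a connected graph~$G'$ with $|V(G')|\le\delta|V(G)|$ whose minimum \deltatour length and minimum dominating set size are linked up to the fixed factor $1/s_\delta$.

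Suppose for contradiction that, for some computable function $f$, an algorithm $A$ as claimed exists. Given an instance $(G,k)$ of split graph \domset (of the form above), I apply \cref{lem:dom_tour_reduction} to~$G$, obtain $G'$, and invoke $A$ on $(G',K)$ with $K\coloneqq k$. The output of $A$ is returned as the decision for $(G,k)$.

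For correctness, by property~(2) of \cref{lem:dom_tour_reduction}, if $G$ admits a dominating set of size at most $k=K$ then $G'$ admits a \deltatour of length at most $K$, so $A$ returns `yes'. Conversely, if $A$ returns `yes' then, by the gap guarantee, $G'$ admits a \deltatour of length at most $f(K)=f(k)$, and property~(3) then yields a dominating set of $G$ of size at most $f(k)/s_\delta$. Setting $\hat f(k)\coloneqq\lceil f(k)/s_\delta\rceil$, this is precisely an algorithm that distinguishes `$G$ has a dominating set of size at most $k$' from `$G$ has no dominating set of size at most $\hat f(k)$'. Since $\delta$ (and hence $s_\delta$) is fixed, and $s_\delta\le 1$ implies $f(k)\le\hat f(k)$, the running time $f(k)\,(\delta|V(G)|)^{O(1)}$ is bounded by $\hat f(k)\,|V(G)|^{O(1)}$, contradicting \cref{w1splitgraphs} applied to the computable function~$\hat f$.

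The main obstacle was already resolved in the proof of \cref{lem:dom_tour_reduction}: its property~(3) converts any short \deltatour in $G'$ into a small dominating set in $G$ up to a fixed factor depending only on~$\delta$ (via~$s_\delta$), which is exactly the slack needed for an asymptotic gap on the tour side to translate into an asymptotic gap on the dominating set side. Beyond this the present theorem follows almost mechanically from \cref{lem:dom_tour_reduction,w1splitgraphs}, with only the routine bookkeeping above for connectedness and nontriviality.
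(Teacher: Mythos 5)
Your proof is correct and follows essentially the same route as the paper's: reduce from split-graph \domset (Theorem~\ref{w1splitgraphs}), dispatch the trivial and disconnected cases up front, invoke Lemma~\ref{lem:dom_tour_reduction}, run the hypothetical algorithm on $(G',k)$, and fold the $1/s_\delta$ slack and the $\delta^{O(1)}$ blowup into a derived function to get the contradiction. The only cosmetic difference is that the paper defines $f'(k)=\max\{\delta^c,1/s_\delta\}\,f(k)$ to simultaneously absorb both the size blowup and the $s_\delta$ loss, whereas you absorb the $\delta^{O(1)}$ factor into the $|V(G)|^{O(1)}$ directly (valid since $\delta$ is fixed) and use $\hat f(k)=\lceil f(k)/s_\delta\rceil$; both accomplish the same thing.
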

\begin{proof}
Suppose that for some function $f\colon \mathbb{R}_{\geq 0}\to \mathbb{Z}_{\geq 0}$, there is an algorithm $A$ that, given a connected graph~$G$ and a constant $K$, returns `yes' if $G$ admits a \deltatour of length $K$, returns `no' if $G$ does not admit a \deltatour of length $f(K)$ and runs in $f(K)|V(G)|^{O(1)}$. Without loss of generality, we may suppose that $f$ is monotonously increasing. We define $f'\colon\mathbb{Z}_{\geq 0}\rightarrow \mathbb{Z}_{\geq 0}$ by $f'(k)=\max\{\delta^c,1/s_\delta\}f(k)$ for all $k\in \mathbb{Z}_{\geq 0}$ where $c$ is the constant hidden in the polynomial time expression in Lemma~\ref{lem:dom_tour_reduction}. We will show that $\fpt = \wone$ using Theorem~\ref{w1splitgraphs}.

Let $G$ be a split graph and $k$ a positive integer. We first test in $O(|V(G)|^2)$ whether $G$ admits a dominating set of size at most 2. If this is the case, there is nothing to prove. Next observe that every subgraph of a split graph is also a split graph. Hence, if $G$ is disconnected, we may apply our algorithm to every connected component of $G$. We may therefore assume that $G$ is nontrivial and connected. 

It then follows by Lemma~\ref{lem:dom_tour_reduction} that we can compute a graph~$G'$ satisfying (1), (2), and (3) in polynomial time. We now apply $A$ to $(G',k)$ and output the output of $A$.

First suppose that $G$ admits a dominating set of size $k$. Then, by $(2)$ of \cref{lem:dom_tour_reduction}, we obtain that $G'$ admits a \deltatour of length $k$ and hence $A$ outputs `yes' by assumption.

Now suppose that $A$ outputs `yes.' By assumption, we have that $G'$ admits a \deltatour of length at most $f(k)$. It follows by $(3)$ of \cref{lem:dom_tour_reduction} that $G$ admits a dominating set of size at most $f(k)/s_\delta\leq f'(k)$. 

 Finally observe that by (1), the total running time of our algorithm is $f(k)\delta^{c}|V(G)|^{c}=f'(k)|V(G)|^{O(1)}$.

It hence follows that $\fpt = \wone$ by Theorem~\ref{w1splitgraphs}.
\end{proof}
As an immediate corollary, we have the following result, which contrasts with Theorem~\ref{fpt:sol}.
\begin{corollary}\label{asdfasd}
For every $\delta \geq \frac{3}{2}$, \deltatour is $\wone$-hard parameterized by the length of a shortest \deltatour. 
\end{corollary}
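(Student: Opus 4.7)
My plan is to derive \cref{asdfasd} as an immediate consequence of \cref{fptdeltasmall}. The key observation is that any exact FPT algorithm for the decision version of \deltatour parameterized by tour length would, in particular, satisfy the gap promise that \cref{fptdeltasmall} shows to be infeasible unless $\fpt = \wone$.

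Concretely, fix any $\delta \geq 3/2$ and suppose, for contradiction, there is an algorithm $B$ that, given a connected graph $G$ and a constant $K$, decides whether $G$ admits a \deltatour of length at most $K$ and runs in time $g(K)\cdot|V(G)|^{O(1)}$ for some computable $g\colon \mathbb{R}_{\geq 0}\to\mathbb{Z}_{\geq 0}$. I would then set $f(K)\coloneqq \max(g(K),\lceil K\rceil)$ and verify that $B$ fulfills the hypotheses of \cref{fptdeltasmall} with this $f$: it returns \textsc{yes} whenever $G$ admits a \deltatour of length $K$, and returns \textsc{no} whenever $G$ does not admit a \deltatour of length $f(K)\geq K$ (since in this latter case no \deltatour of length $K$ exists either), while running in time at most $f(K)\cdot|V(G)|^{O(1)}$. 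By \cref{fptdeltasmall}, the existence of such an algorithm forces $\fpt=\wone$. Hence, under the standard assumption $\fpt\neq\wone$, no FPT algorithm for \deltatour parameterized by tour length exists, and \wone-hardness follows via the same parameterized reduction from $\domset$ on split graphs that is employed inside the proof of \cref{fptdeltasmall}.

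The statement is essentially a specialization of \cref{fptdeltasmall} to the identity gap, so no new construction or technical ingredient is required; the only point worth spelling out is the trivial observation that an exact algorithm vacuously witnesses the gap promise for every $f\geq \mathrm{id}$. Consequently, there is no genuine obstacle in the argument.
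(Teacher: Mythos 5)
Your proposal is correct and follows exactly the same route as the paper, which derives the corollary immediately from \cref{fptdeltasmall}; you simply spell out the routine observation that an exact FPT decision algorithm vacuously witnesses the gap promise for $f\geq\operatorname{id}$, so its existence would force $\fpt=\wone$.
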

Observe that \cref{asdfasd} and \cref{fpt:sol} imply \cref{thm:parammain}.
In the following, we complement \cref{fptdeltasmall} by the following simple result showing that an XP algorithm is available.
\begin{theorem}\label{xpdeltafix}
Let a graph $G$ and constants $\delta,K>0$ be given. Then, in $n^{f(K,\delta)}$, we can decide whether $G$ admits a \deltatour of length at most $K$.
\end{theorem}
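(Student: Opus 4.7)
The plan is to combine the Discretization Lemma (\cref{lemma:discretization}), the bound on the discrete length given by \cref{lentour}, and the tour-checking algorithm of \cref{check}. Concretely, the algorithm performs a brute-force enumeration over all candidate tours whose stopping points come from the finite discrete set promised by \cref{lemma:discretization}, keeping only those of length at most~$K$.

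First I would handle the trivial case of tours with at most two stopping points separately: for every choice of one or two points of the form $p(u,v,\lambda)$ with $uv \in E(G)$ and $\lambda \in S_\delta$ we construct the candidate tour and test, via \cref{check}, whether it is a \deltatour of length at most~$K$. Since $|S_\delta| \leq 4$ and $|E(G)| \leq n^2$, there are only $n^{\Oh(1)}$ such candidates, each of which is processed in polynomial time.

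For the main case, \cref{lemma:discretization} guarantees that if a \deltatour of length at most $K$ exists, then there is one, call it $T$, which is nice and whose stopping points are all of the form $p(u,v,\lambda)$ with $uv \in E(G)$ and $\lambda \in S_\delta$. By \cref{lentour} such a tour satisfies $\alpha(T) \leq \lceil K/s_\delta \rceil$. Since $\delta$ and $K$ are constants, $s_\delta$ is a constant as well, so $\alpha(T)$ is bounded by a function $g(K,\delta)$. The algorithm then enumerates every sequence $\seq{p_0&p_1&\dots&p_z&p_0}$ of length $z \leq g(K,\delta)$, where each $p_i$ is chosen from the set of at most $4|E(G)| = \Oh(n^2)$ discrete candidate points. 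The number of such sequences is bounded by $(Cn^2)^{g(K,\delta)} = n^{f(K,\delta)}$ for a suitable computable $f$. For each enumerated sequence, we verify in polynomial time (using \cref{check}) that it is a tour, that it is a \deltatour, and that its length is at most $K$. We return \YES exactly if some candidate passes all three checks.

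Correctness follows immediately from \cref{lemma:discretization}: if a \deltatour of length at most $K$ exists, then one is found among the enumerated candidates; conversely, the algorithm only outputs \YES when it has explicitly constructed a valid \deltatour of the required length. There is no real obstacle here—the proof is essentially a direct application of the structural lemmas established in~\cite{FreiGHHM24} and summarized in \cref{section:prel}; the only minor point to check is that consecutive points in an enumerated sequence lie on a common edge (required by the definition of a walk), but sequences violating this are simply rejected by the tour-checking step and contribute nothing.
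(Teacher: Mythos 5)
Your proof is correct and follows essentially the same route as the paper: enumerate all sequences over the discrete point set guaranteed by \cref{lemma:discretization} with discrete length bounded via \cref{lentour} by $\lceil K/s_\delta\rceil$, and test each candidate with \cref{check}. The explicit split into the at-most-two-stopping-point case is not needed, since \cref{lentour} and \cref{lemma:discretization} already cover those tours, but it is harmless.
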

\begin{proof}
Let $P_\delta(G)$ be the collection of points in $P(G)$ which are of the form $p(u,v,\lambda)$ for some $uv \in E(G)$ and $\lambda \in S_\delta$. We now check for every sequence of points in $P_\delta(G)$ of length at most $\lceil \frac{K}{s_\delta}\rceil$ whether this sequence is a \deltatour in $G$ of length at most $K$. If this is the case, we output the sequence. If we try all sequences without finding such a tour, we report that $G$ does not admit a \deltatour of length at most $K$. This finishes the description of the algorithm.

For the correctness of the algorithm, first observe that if the algorithm outputs a tour, then it is a \deltatour in $G$ of length at most $K$ by construction. Now suppose that $G$ admits a \deltatour $T$ of length at most $K$. By \cref{lemma:discretization}, we may suppose that all stopping points of $T$ are in $P_\delta(G)$. It follows by \cref{lentour} that $\alpha(T)\leq \lceil\frac{\len(T)}{s_\delta}\rceil\leq \lceil\frac{K}{s_\delta}\rceil$. It hence follows by construction that our algorithm outputs a \deltatour of $G$ of length at most $K$.

For the analysis of the algorithm, observe that, as $|S_\delta|=4$, we have $|P_\delta(G)|\leq 8|E(G)|\leq 8n^2$. It follows that the total number of sequences we consider is bounded by $(8n^2)^{\frac{K}{s_\delta}}=n^{\Oh(\frac{K}{s_\delta})}$. Further, for every sequence, it is not difficult to see that we can check in polynomial time whether it is a tour of length at most $K$. Finally, by \cref{check}, we can check in polynomial time whether it is a \deltatour. Hence our algorithm has the desired running time.
\end{proof}
\subsection[\texorpdfstring{Hardness for $\delta$ Being Part of the Input}{Hardness for delta Being Part of the Input}]{\boldmath Hardness for $\delta$ Being Part of the Input}\label{dfguhs}
This section is dedicated to showing that not even an XP algorithm can be hoped for if $\delta$ is part of the input. More concretely, we prove \cref{nphdinp}, which we restate here for convenience.
\nphdinp*

Observe that \cref{xpdeltafix} and \cref{nphdinp} imply \cref{thm:parammain2}.
The main technicality is contained in the following main lemma.

\begin{lemma}\label{detgzhu}
Let $G$ be an incomplete connected split graph, $k$ a positive integer, and $\eps>0$ a constant. Then, in polynomial time, we can construct a graph $G'$, a constant $\delta$ with $0<\delta<2$, and a constant $K$ with $K<\eps$ such that the following hold:
\begin{enumerate}[(i)]
\item if $G$ admits a dominating set of size at most $k$, then $G'$ admits a \deltatour of length at most $K$,
\item if $G'$ admits a \deltatour of length at most $K$, then $G$ admits a dominating set of size at most $2k+1$.
\end{enumerate}
\end{lemma}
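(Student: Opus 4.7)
The plan is to instantiate the reduction sketched in the overview, using a very short excursion length $\alpha>0$ as the single tuning parameter. Let $(C,I)$ be the split partition of $G$; since $G$ is connected and incomplete, both $C$ and $I$ are nonempty. Fix $\alpha>0$ small enough that $K:=(2k+2)\alpha$ satisfies $K<\min(1,\eps)$, set $\delta:=2-\alpha\in(0,2)$, and construct $G'$ by adding a new vertex $v_0$ joined to every vertex of $C$, together with a pendant path $v_0$--$a$--$a'$ of length $2$. The whole construction is clearly polynomial in $|V(G)|$. The pendant path serves as a forcing gadget: $\dist_{G'}(v_0,a')=2>\delta$, so the endpoint $a'$ cannot be covered from $v_0$. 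Moreover, reaching any stopping point on the edge $aa'$ would require a round-trip of length at least $2(1-\alpha)\geq 1>K$, so in any tour of length at most $K$ the endpoint $a'$ must be covered by a stopping point on $v_0a$ at distance at least $\alpha$ from $v_0$.

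For direction~(i), assume $G$ has a dominating set $D$ of size at most $k$. By \cref{propsplit}, I may take $D\subseteq C$. Define excursion points $p_c:=p(v_0,c,\alpha)$ for each $c\in D$, and $p_*:=p(v_0,a,\alpha)$. The tour $T$ that goes $v_0\to p_c\to v_0$ once for each $c\in D$ and then $v_0\to p_*\to v_0$ has length exactly $2(|D|+1)\alpha\leq K$. To verify $T$ is a $\delta$-tour: the vertex $v_0$ covers every point of $G'$ within distance $\delta$, which includes all of $C$ and the portion of edges adjacent to $v_0$ up to the relevant radius; each $p_c$ satisfies $\dist_{G'}(p_c,v)=(1-\alpha)+1=\delta$ for every $v\in N_G(c)\cap I$, so the $p_c$'s collectively cover all of $I$ because $D$ dominates $I$; and $\dist_{G'}(p_*,a')=\delta$ covers $a'$ together with the short uncovered segment of the edge $aa'$ near $a'$.

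For direction~(ii), let $T$ be a $\delta$-tour in $G'$ with $\len(T)\leq K$. The decisive step is a confinement claim: every stopping point of $T$ lies on an edge $v_0c$ with $c\in C$ or on the gadget edge $v_0a$, each at distance at most $K/2<1$ from $v_0$. Indeed, $T$ must contain a stopping point on $v_0a$ at distance $\geq\alpha$ from $v_0$ (to cover $a'$, as argued above), and $T$'s diameter is at most $K/2$; any stop at a vertex $c\in C$ or $v\in I$, or at a point on an edge of $G[C]$ or on $aa'$, would sit at distance at least $1$ from $v_0$ and, by the same round-trip argument, force $\len(T)\geq 2>K$, a contradiction. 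Now define $D:=\{c\in C: T\text{ has a stopping point on }v_0c\text{ at distance }\geq\alpha\text{ from }v_0\}$. For each $v\in I$, coverage by $T$ requires a stopping point $p$ with $\dist(p,v)\leq\delta$; if $p$ lies on $v_0c$ at distance $\lambda$ from $v_0$, then examining the two routings through $v_0$ or through $c$ gives $\dist(p,v)=\min(\lambda+2,\,(1-\lambda)+\dist_G(c,v))$, so $\dist(p,v)\leq 2-\alpha$ forces $\dist_G(c,v)=1$ (i.e., $c\in N_G(v)$) and $\lambda\geq\alpha$, hence $c\in D$. Thus $D$ dominates $I$; since $I\neq\emptyset$ implies $D\neq\emptyset$ and $C$ is a clique, $D$ dominates $C$ as well, so $D$ is a dominating set of $G$. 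Finally, each $c\in D$ contributes at least $2\alpha$ to $\len(T)$ via a round-trip on the distinct edge $v_0c$, and the gadget stop contributes another $\geq 2\alpha$ on the distinct edge $v_0a$, giving $\len(T)\geq 2\alpha(|D|+1)$, and hence $|D|\leq K/(2\alpha)-1=k\leq 2k+1$.

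The main obstacle will be the rigorous confinement argument: one must carefully rule out tour configurations that deviate from the "star" around $v_0$, such as compound excursions through $G[C]$, tours that visit any vertex of $C$ or $I$, or tours that reach the edge $aa'$. All these deviations are excluded by the crude length comparison $\len(T)\leq K<1$ together with the observation that any such deviation incurs a round-trip of length at least $1$. A secondary technicality is the bound $\len(T)\geq 2\alpha(|D|+1)$ when several excursion points of $T$ share the same edge incident to $v_0$; this is handled by summing, for each edge $v_0c$ used by $T$, over the maximum-distance excursion point on that edge, which already contributes twice its distance from $v_0$ to the length.
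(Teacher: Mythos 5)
Your construction coincides with the paper's (same $v_0$ joined to $C$, same pendant path of length $2$ playing the role of the paper's $v_0v_1v_2$), and your direction~(ii) sketch is essentially sound. However, direction~(i) contains a genuine error: the excursion distance $\alpha$ is too short by a factor of two, and the resulting tour is \emph{not} a $\delta$-tour.

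Concretely, take $v\in I$, let $c\in D$ be a dominator of $v$, and let $w\in C\cap N_G(v)$ with $w\notin D$. Consider the point $p=p(v,w,\lambda)$ at distance $\lambda$ from $v$ with $0<\lambda<\alpha$. Its distance to $v_0$ (routing through $w$) is $2-\lambda>2-\alpha=\delta$. Its distance to $p_c=p(v_0,c,\alpha)$ is $\min\bigl(\lambda+(2-\alpha),\,(1-\lambda)+(1+\alpha)\bigr)=\lambda+2-\alpha>\delta$ for $\lambda>0$. No other stopping point is closer. So a whole segment near $v$ on the edge $vw$ is uncovered. Your phrasing ``the $p_c$'s collectively cover all of $I$'' is only about the \emph{vertices} of $I$; the lemma's condition for covering the \emph{edge} $x_1x_2$ with $x_1\in C$, $x_2\in I$ (this is \cref{nostopchar}(i)) requires $\lambda_1+\lambda_2\ge\dist(x_1,x_1)+\dist(x_2,c)+3-2\delta=2\alpha$, and your tour only provides $\lambda_1+\lambda_2=\alpha$.

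This is exactly why the paper uses an excursion depth of $2\eps_0$ on the edges $v_0s$ for $s$ in the dominating set (while keeping depth $\eps_0$ on the gadget edge), with $\delta=2-\eps_0$ and budget $K=(4k+2)\eps_0$. If you patch your proof by using depth $2\alpha$ for the $p_c$'s, the budget becomes $(4k+2)\alpha$, and then in direction~(ii) you can only guarantee a round-trip of $\geq 2\alpha$ per counted excursion (not $\geq 4\alpha$), giving $|D|\leq K/(2\alpha)-1 = 2k$, consistent with the lemma's stated bound of $2k+1$. Your current $K=(2k+2)\alpha$ and claimed $|D|\leq k$ rest on the invalid direction~(i). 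Finally, the confinement step you flag as the ``main obstacle'' is real but standard: the paper handles it by first reducing to a \emph{nice}, discretized tour via \cref{lem:tournice} and \cref{lemma:discretization}, which forces the tour to be a sequence $v_0\,p(v_0,u_1,\lambda_1)\,v_0\cdots$; you would need to invoke the same machinery rather than rely solely on a crude length comparison.
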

\begin{proof}
Let $(C,I)$ be a partition of $V(G)$ such that $G[C]$ is a clique and $I$ is an independent set in $G$. Observe that such a partition exists as $G$ is a split graph and it can be computed in polynomial time. Further, as $G$ is incomplete, we have $I \neq \emptyset$. We now obtain $G'$ from $G$ by adding three new vertices $v_0,v_1,v_2$ and the edges $v_0v_1,v_1v_2$, and $v_0c$ for all $c \in C$. Now choose some $\eps_0>0$ such that $(4k+2)\eps_0<\min\{\eps,\frac{1}{2}\}$. We set $\delta=2-\eps_0$ and $K=(4k+2)\eps_0$. This finishes the description of $(G',\delta,K)$. Observe that $(G',\delta,K)$ can be computed in polynomial time and $0<\delta<2$ and $K<\eps$ hold. We still need to show that $(i)$ and $(ii)$ hold.

First suppose that $G$ admits a dominating set $S$ with $|S|\leq k$. By \cref{propsplit}, we may suppose that $S \subseteq C$. Let $s_1,\ldots,s_q$ be an arbitrary enumeration of $S$. We now define a tour $T$ in $G'$ by $T=v_0p(v_0,v_1,\eps_0)v_0p(v_0,s_1,2\eps_0)v_0\ldots v_0p(v_0,s_q,2\eps_0)v_0$. It is easy to see that $T$ is indeed a tour in $G'$. Further, by construction, we have $\ell(T)=(4q+2)\eps_0\leq K$. We still need to show that $T$ is a \deltatour in $G'$.

To this end, consider some $x_1x_2 \in E(G')$. First suppose that $v_0 \in \{x_1,x_2\}$, say $x_1=v_0$. Then, as $T$ stops at $v_0$ and $0 \geq 2 \eps_0-2=2-2 \delta$, we obtain that $x_1x_2$ satisfies $(i)$ of \cref{char1stop} and is hence covered by $T$. Next suppose that $\{x_1,x_2\}=\{v_1,v_2\}$, say $x_1=v_1$ and $x_2=v_2$. As $T$ stops at $p(v_0,v_1,\eps_0)$ and $2 \eps_0=4-2\delta=\dist_{G'}(x_1,v_1)+\dist_{G'}(x_2,v_1)+3-2\delta$, we obtain that $x_1x_2$ satisfies $(i)$ of \cref{char2stops} and is hence covered by $T$. Next suppose that $\{x_1,x_2\}\subseteq C$. As $T$ stops at $v_0$ and $0\geq 3-2\delta=\dist_{G'}(x_1,x_1)+\dist_{G'}(x_2,x_2)+3-2\delta$, we obtain that $x_1x_2$ satisfies $(i)$ of \cref{char2stops} and is hence covered by $T$. Hence, by symmetry, we may suppose that $x_1 \in C$ and $x_2 \in I$.  As $S$ is a dominating set in $G$, there is some $s \in S \cap N_{G}(x_2)$. As $T$ stops at $v_0$ and $p(v_0,s,2 \eps_0)$ and $2 \eps_0= 4-2 \delta=\dist_{G'}(x_1,x_1)+\dist_{G'}(s,x_2)+3-2\delta$, we obtain that $x_1x_2$ satisfies $(i)$ of \cref{char2stops} and is hence covered by $T$. It follows that $T$ is \deltatour in $G'$. Hence $(i)$ follows.

Now suppose that $G'$ admits a \deltatour $T$ of length at most $K$.
\begin{claim}\label{v_0stop}
$T$ stops at $v_0$.
\end{claim}
\begin{proof}Observe that $I \neq \emptyset$ as $G$ is incomplete. Let $z \in I$.
As $T$ is a \deltatour and by \cref{nearstop}, there are stopping points $p_1,p_2$ of $T$ such that $\dist_G(p_1,z)\leq \delta$ and $\dist_G(p_2,v_2)\leq \delta$. By construction, we obtain that $p_1=p(c,v,\lambda)$ for some $c \in C, v\in N_{G'}(c)$, and $\lambda \in [0,1]$ and $p_2=p(v_1,v,\lambda)$ for some $v \in N_{G'}(v_1)$ and $\lambda \in [0,1]$. Next, as $T$ is a tour, we obtain that $T$ contains a $p_1p_2$-walk. Further, it follows by construction that every $p_1p_2$-walk in $G'$ stops at $v_0$. Hence $T$ stops at $v_0$.
\end{proof}
By \cref{v_0stop}, as $T$ is nice and $\ell(T)\leq K<1$, we obtain for some $u_1,\ldots,u_q \in N_{G'}(v_0)$ and $\lambda_1,\ldots,\lambda_q \in (0,1)$ that $T=v_0p(v_0,u_1,\lambda_1)v_0\ldots v_0p(v_0,u_q,\lambda_q)v_0$. We now let $S$ contain all vertices $s \in C$ for which there is some $i \in [q]$ such that $u_i=s$ and $\lambda_i \geq \eps_0$. By construction, we have $(4k+2)\eps_0=K\geq \ell(T)\geq 2 \eps_0|S|$, so $|S|\leq 2k+1$. We still need to show that $S$ is a dominating set of $G$. To this end, consider some $z \in I$. As $T$ is a \deltatour in $G'$ and by \cref{nearstop}, there is a stopping point $p$ of $T$ such that $\dist_G(p,z)\leq \delta$. By the above, we have that $p=p(v_0,u,\lambda)$ for some $u \in C \cup \{v_1\}$ and $\lambda \in (0,1)$. If $u=v_1$, we have $\dist_G(p,z)=2+\lambda>\delta$, a contradiction. If $u \in C-N_{G}(z)$, we have $\dist_G(p,z)\geq \min\{\dist_G(v_0,z),\dist_G(u,z)\}=2>\delta$, a contradiction. If $u \in N_{G}(z)$ and $\lambda<\eps_0$, we have $\dist_G(p,z)= 2-\lambda>2-\eps_0=\delta$, a contradiction. We hence obtain that $u \in N_G(z)$ and $\lambda\geq \eps_0$. By the definition of $S$, we obtain in particular $S \cap N_G(u)\neq \emptyset$. It follows that $S$ dominates $I$. As $G$ is a connected split graph and $S  \subseteq C$, we obtain that $S$ is a dominating set of $G$. Hence $(ii)$ follows.
\end{proof}
We are now ready to give the main proof of \cref{nphdinp}.
\begin{proof}[Proof of \cref{nphdinp}]
Let some $\eps>0$ be fixed and suppose that there is a polynomial-time algorithm $A$ that, given a graph $G$, a constant $\delta$ with $0<\delta<2$, and a constant $K$ with $K<\eps$, correctly decides whether $G$ admits a \deltatour of length at most $K$. We will obtain a contradiction using \cref{rabsaur}.

 Let $G$ be a split graph on $n$ vertices. If $G$ is disconnected, we can solve the problem for every connected component of $G$ and output an appropriate answer. We may hence suppose that $G$ is connected. 
 Next, if $k \leq 1$ or $n\leq 2^{3/\alpha_0}$, where $\alpha_0$ is the constant from \cref{rabsaur}, we can solve the problem by a brute-force approach. We may hence suppose that $k \geq 2$ and $n\geq 2^{3/\alpha_0}$. Finally, if $G$ is a complete graph, we can clearly solve the problem in polynomial time. We may hence suppose that $G$ is incomplete. Now, using \cref{detgzhu}, we can compute in polynomial time a graph $G'$, a constant $\delta$ with $0<\delta<2$, and a constant $K$ with $K<\eps$ such that $(i)$ and $(ii)$ hold. We now run $A$ on $(G',\delta,K)$ and output the output of $A$. This finishes the description of our algorithm. Observe that it runs in polynomial time.

First suppose that $G$ admits a dominating set of size at most $k$. Then by $(i)$, we have that $G'$ admits a \deltatour of length at most $K$. By the assumption on $A$, it follows that our algorithm outputs `yes.'

Now suppose that our algorithm outputs `yes.' By assumption, we obtain that $G'$ admits a \deltatour of length at most $K$. It follows by $(ii)$ that $G$ admits a dominating set of size at most $2k+1$. As $k\geq 2$ and $n\geq 2^{3/\alpha_0}$, we have $2k+1<3k\leq \alpha_0 \log(n)k$. 
Hence \cref{rabsaur} yields $\p=\np$. 
\end{proof}

\subsection[\texorpdfstring{XP Algorithm Parameterized by $n/\delta$}{XP Algorithm Parameterized by n/delta}]{XP Algorithm Parameterized by \boldmath$n/\delta$}
\label{section:by-n-over-delta}
The problem \deltatourprob is easy for $\delta=0$, where it corresponds to the \CPP; see~\cite{FreiGHHM24} for the details. 
Further, the problem is also trivial for the other extreme, that is, when $\delta$ is very large. Indeed, if $\delta\ge n$ (where $n$ denotes the number of vertices in the graph), any single point $\delta$-covers the entire graph and hence forms a shortest \deltatour. 
We are interested in how the hardness of the problem increases when $\delta$ is decreased from this upper point of triviality. 
We therefore analyze the problem \deltatourprob parameterized by $n/\delta$. (Note that this parameter is 1 for the trivial case of $\delta=n$.)
We first present in this section an XP-algorithm and then rule out, in the following section, the possibility of an FPT algorithm 
under the assumption that $\fpt\neq\wone$.

The main result of this section is Theorem~\ref{thm:param:k:ub:largedelta} which we restate here.
\ThmParamKUbLargeDelta*\label\thisthm
We start by giving the key lemma for the proof of Theorem~\ref{thm:param:k:ub:largedelta} which is restated below.

\MainLemXP*\label\thisthm
\begin{proof}
We denote by $X$ the set of vertices $x \in V(G)$ with $\dist_G(x,T)\geq \frac{n}{2k}$.
Let $Z_1$ be an inclusion-wise minimal set of points passed by $T$ such that
	$\dist_G(x,T)=\dist_G(x,Z_1)$ for all $x \in X$. By Proposition~\ref{nearstop}, we may suppose that $T$ stops at $z$ for every $z \in Z_1$.

\begin{claim}
	$\abs{Z_1} \leq 3k$.
\end{claim}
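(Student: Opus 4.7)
The plan is to produce, for each $z\in Z_1$, a witness $x_z\in X$ whose uniquely closest element of $Z_1$ is $z$, and then to lower-bound $|Z_1|$ by constructing long, pairwise vertex-disjoint shortest walks from each such $x_z$ to $z$.

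I would first exploit the inclusion-wise minimality of $Z_1$: deleting any $z\in Z_1$ must break its defining property, so there exists $x_z\in X$ with $\dist_G(x_z,z)<\dist_G(x_z,z')$ for every $z'\in Z_1\setminus\{z\}$. Since $z\in Z_1$ and $\dist_G(x_z,Z_1)=\dist_G(x_z,T)$, this also gives $\dist_G(x_z,z)=\dist_G(x_z,T)\geq n/(2k)$, the last inequality coming from $x_z\in X$. For each such pair I would fix a shortest $x_z$-to-$z$ walk $W_z$ in $G$, which then has length at least $n/(2k)$.

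The heart of the argument is to show that no vertex of $V(G)$ lies on two of the walks. Suppose for contradiction that $v\in V(G)$ lies on both $W_z$ and $W_{z'}$ with $z\neq z'$. Since $W_z$ is geodesic, $\dist_G(x_z,z)=\dist_G(x_z,v)+\dist_G(v,z)$; combined with the triangle inequality $\dist_G(x_z,z')\leq\dist_G(x_z,v)+\dist_G(v,z')$ and the strict inequality $\dist_G(x_z,z)<\dist_G(x_z,z')$ from the choice of $x_z$, this yields $\dist_G(v,z)<\dist_G(v,z')$. The symmetric argument via $W_{z'}$ and $x_{z'}$ yields $\dist_G(v,z')<\dist_G(v,z)$, a contradiction. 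I expect this step to be the main obstacle, with the only real care needed being the verification that the identities above remain valid when $v$ happens to be an endpoint such as $x_z$ or $x_{z'}$, which is immediate.

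Finally, I would conclude with a simple vertex count: a shortest walk of length $\ell$ starting at a vertex of $G$ traverses at least $\lceil\ell\rceil$ vertices of $V(G)$ (the endpoint $x_z$ together with the integer points along the underlying shortest path). Hence each $W_z$ contains at least $\lceil n/(2k)\rceil$ vertices of $V(G)$, and the vertex-disjointness just established gives
\[
|Z_1|\cdot\lceil n/(2k)\rceil \leq |V(G)| = n,
\]
so $|Z_1|\leq 2k \leq 3k$, as claimed.
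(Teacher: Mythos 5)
Your proposal is correct and follows essentially the same line as the paper's proof: use inclusion-wise minimality to obtain, for each $z\in Z_1$, a witness $x_z\in X$ uniquely closest to $z$; take shortest walks; establish pairwise vertex-disjointness via the geodesic identity and the uniqueness of the closest element; then count. The only cosmetic differences are that you derive the disjointness contradiction symmetrically ($\dist_G(v,z)<\dist_G(v,z')$ and vice versa) rather than via a single chained inequality, and your vertex count per walk ($\lceil n/(2k)\rceil$) is slightly sharper than the paper's ($n/(2k)-1$), giving $\abs{Z_1}\leq 2k$ rather than $3k$; both suffice.
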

\begin{proof}
Clearly, we have $\abs{Z_1} \leq n$, so $Z_1$ is finite. Hence, by the definition of $Z_1$, there is an ordering $z_1,\dots,z_q$ of $Z_1$ and a collection $x_1,\dots,x_q$ of elements of $X$ such that $\dist_G(x_i,z_i)=\dist_G(x_i,T)$ for $i \in [q]$ and $\dist_G(x_i,z_j)>\dist_G(x_i,T)$ for $i,j \in [q]$ with $i \neq j$. For $i \in [q]$, let $P_i$ be a shortest $\seq{x_i&z_i}$-walk in $G$. We may suppose that for $i \in [q]$, all the points distinct from $x_i$ and $z_i$ that $P_i$ stops at are vertices of $G$. Further, by the minimality of $P_i$, we have that $P_i$ stops at most once at every vertex of $G$. We denote the set of vertices of $G$ that $P_i$ stops at by $V(P_i)$.  
First suppose that there are distinct $i,j \in [q]$ with $V(P_i)\cap V(P_j)\neq \emptyset$ and let $v \in V(P_i)\cap V(P_j)$. This yields
\begin{align*}
\dist_G(x_i,z_i)+\dist_G(x_j,z_j)&=\dist_G(x_i,v)+\dist_G(v,z_i)+\dist_G(x_j,v)+\dist_G(v,z_j)\\
&=(\dist_G(x_i,v)+\dist_G(v,z_j))+(\dist_G(x_j,v)+\dist_G(v,z_i))\\
&\geq \dist_G(x_i,z_j)+\dist_G(x_j,z_i)\\
&> \dist_G(x_i,z_i)+\dist_G(x_j,z_j),
\end{align*}
a contradiction.

We hence obtain that $V(P_i)\cap V(P_j)=\emptyset$ for all $i,j \in [q]$ with $i \neq j$. Further, observe that for every $i \in [q]$, as the length of $P_i$ is at least $\frac{n}{2k}$, we have that $\abs{V(P_i)}\geq \frac{n}{2k}-1$. This yields $n \geq \sum_{i=1}^q \abs{V(P_i)}\geq q (\frac{n}{2k} -1)$. As $n \geq 12k$, we obtain
$q \leq \frac{2k}{1-\frac{2k}{n}} \leq 3k$.
\end{proof}

\begin{claim}\label{trgvftzuh-application-1}
There is a set $Z_2$ of points stopped at by $T$ with $\abs{Z_2} \leq 9k$ such that $\dist_G(p,Z_2)\leq \frac{n}{3k}$ holds for all $p \in P(G)$ which are stopped at by $T$.
\end{claim}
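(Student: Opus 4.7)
The plan is to walk along the truncated integral tour $\lfloor T \rfloor$ and sample every $q$-th vertex for a carefully chosen step size $q = \lceil n/(3k)\rceil$. By Lemma~\ref{lem:tournice}, I may assume $T$ is nice (otherwise $T$ has at most two stopping points, and taking $Z_2$ to be all of them yields $|Z_2| \leq 2 \leq 9k$ with zero coverage distance). The preliminary observation is that $k=\lceil n/\delta\rceil \geq n/\delta$ combined with $n\geq 12k$ yields $\delta \geq n/k \geq 12 \geq 1/2$; hence Proposition~\ref{trgvftzuh} applies and gives $\len(T) \leq 2n-2$, so $\len(\lfloor T \rfloor) \leq 2n-2$ as well.

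Writing $\lfloor T \rfloor = (w_0, w_1, \ldots, w_m=w_0)$ with $m = \len(\lfloor T\rfloor) \leq 2n-2$ and consecutive vertices adjacent in $G$, I would define
\[ Z_2 = \{w_{jq} : 0 \leq j \leq \lfloor m/q \rfloor\}. \]
Since $P(\lfloor T \rfloor) \subseteq P(T)$, each element of $Z_2$ is a stopping point of $T$. The size bound follows from
\[ |Z_2| \leq \tfrac{m}{q} + 1 \leq \tfrac{2n-2}{n/(3k)} + 1 = 6k - \tfrac{6k}{n} + 1, \]
which is at most $6k$ (hence $\leq 9k$) using $n \geq 12k$ and $k \geq 1$.

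For coverage I would split into two cases. If the stopping point $p$ is a vertex, then $p = w_i$ on $\lfloor T \rfloor$, and a standard cyclic argument about equally-spaced markers (the last ``wrap-around'' gap has length $m - \lfloor m/q\rfloor q < q$) shows the tour-distance, and therefore the graph distance, from $w_i$ to the nearest element of $Z_2$ is at most $q/2 \leq n/(6k) + 1/2$, which is bounded by $n/(3k)$ provided $n \geq 3k$. If $p = p(u,v,\lambda)$ is a non-integer stopping point, the nice-tour property guarantees that $p$ occurs in an excursion $w, p, w$ with $w \in \{u,v\}$ a vertex visited by $\lfloor T \rfloor$, so $\dist_G(p, w) \leq 1$. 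The triangle inequality and the previous case then give
\[ \dist_G(p, Z_2) \leq 1 + q/2 \leq n/(6k) + 3/2 \leq n/(3k) \]
using $n \geq 9k$. Both coverage inequalities are delivered by the hypothesis $n \geq 12k$.

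The main obstacle is getting the coverage right in the non-integer case: such points contribute an additive slack of up to $1$ (the excursion length) on top of the $q/2$ sampling radius, and the sampling step $q$ must simultaneously be small enough that $1 + q/2 \leq n/(3k)$ and large enough that the count $m/q + 1$ stays below $9k$. The assumption $n \geq 12k$ is precisely what reconciles these two constraints, giving the required slack to absorb the $+1$ while keeping $|Z_2|$ small.
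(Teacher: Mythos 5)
Your proof follows essentially the same approach as the paper's: sample vertices at regular intervals along the truncated tour $\lfloor T\rfloor$ (the paper uses step size $\lfloor n/(2k)\rfloor$ rather than your $\lceil n/(3k)\rceil$, but either choice works), bound $|Z_2|$ via $\len(\lfloor T\rfloor)\leq\len(T)\leq 2n-2$ from Proposition~\ref{trgvftzuh}, and absorb the unit excursion slack for interior stopping points using $n\geq 12k$. One small arithmetic slip: $\frac{2n-2}{n/(3k)}+1 = 6k-\frac{6k}{n}+1$ actually exceeds $6k$ when $n\geq 12k$ (since $6k/n\leq 1/2$), so the intermediate bound should read $\leq 6k+1\leq 7k$ rather than $\leq 6k$; this has no effect on the desired conclusion $|Z_2|\leq 9k$.
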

\begin{proof}
Let $\floor{T}=v_1\dots v_tv_1$ be the truncation of $T$.
Further, let $\beta$ be the largest integer such that
$1+\beta \floor{\frac{n}{2k}}\leq t$ and let $Z_2 = \{v_{1 + i
\floor{\frac{n}{2k}}} \mid i \in \{0, \ldots,\beta\}\}$, where only a single copy of a vertex is maintained in case it occurs several times.
Now consider a point $p$ stopped at by $T$. Then, clearly, there is some
$i \in [t]$ with $\dist_G(p,v_i)\leq 1$. Further, either $t-i+1\leq \frac{n}{4k}$ or there is some
$j \in \{1 + i \floor{\frac{n}{2k}} \mid i \in \{0,\ldots, \beta\}\}$ such that
 $\abs{i-j}\leq \frac{n}{4k}$. We obtain $\dist_G(p,Z_2)\leq
\dist_G(p,v_j)\leq \dist_G(p,v_i)+\dist_G(v_i,v_j)\leq 1 +\abs{i-j}\leq
1+\frac{n}{4k} \leq \frac{n}{3k}$ as $n \geq 12k$. If $\delta \leq \frac{1}{2}$, we obtain $k=\lceil\frac{n}{\delta}\rceil\geq n$, a contradiction. We hence have $\delta\geq \frac{1}{2}$, so by the optimality of $T$ and \cref{trgvftzuh}, it follows that $\len(\floor{T})\leq \len(T)\leq 2n-2$.
As $n \geq 12k$, and $k \geq 1$,
we obtain that
$\abs{Z_2} \leq \beta+1 \leq
	\frac{\len(\floor{T})}{\floor{\frac{n}{2k}}}+1 \leq
	\frac{2n}{\floor{\frac{n}{2k}}}+1\leq \frac{2n}{\frac{n}{4k}}+1 \leq 8k+ 1 \leq 9k$.
\end{proof}
Let $Z=Z_1 \cup Z_2$ and observe that $\abs{Z} \leq 12k$.
Now let $p=(u,v,\lambda) \in P(G)$ for some $uv \in E(G)$ and some $\lambda\in [0,1]$. First suppose that $\{u,v\}-X\neq \emptyset$, say $\dist_G(u,T)\leq \frac{n}{2k}$. Then there is a point $p'$ stopped at by $T$ with $\dist_G(u,p')\leq \frac{n}{2k}$. By the definition of $Z_2$, we obtain $\dist_G(p,Z)\leq \dist_G(p,u)+\dist_G(u,p')+\dist_G(p',Z_2)\leq 1+\frac{n}{2k}+\frac{n}{3k} \leq n/k$, as $n \geq 12k$. Now suppose that $\{u,v\}\subseteq X$. As $n \geq 12k$ and $T$ is a feasible \deltatour of $G$, we then have
\begin{align*}
\dist_G(p,Z)&\leq \min\{\dist_G(p,u)+\dist_G(u,Z_1), \dist_G(p,v)+\dist_G(v,Z_1)\}\\
&= \min\{\dist_G(p,u)+\dist_G(u,T), \dist_G(p,v)+\dist_G(v,T)\}\\
&=\dist_G(p,T)\\
&\leq \delta.
\end{align*}
\end{proof}

Given a graph~$G$ and a positive constant $\delta$, we denote by $S_{G,\delta}$ be the set of points $(u,v,\lambda)\in P(G)$ such that $uv \in E(G)$ and $\lambda \in S_\delta$.
We can now prove a structural result which is the key to our algorithm.
\begin{lemma}\label{drftgzhu}
Let $G$ be a connected graph of order $n$, $\delta$ a positive real, $k=\lceil\frac{n}{\delta}\rceil$ and suppose that $n \geq 12 k$. Then there is a sequence $(z_1,\dots,z_q)$ of points in $S_{G,\delta}$ with $q \leq 12k$ such that for any collection of walks $Q_1,\dots,Q_q$ in $G$ such that $Q_i$ is a shortest $z_iz_{i+1}$-walk in $G$ for $i \in [q-1]$ and $Q_q$ is a shortest $z_qz_1$-walk in $G$, we have that the concatenation $Q_1\dots Q_q$ is a shortest \deltatour. 
\end{lemma}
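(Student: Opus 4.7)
The plan is to extract the sequence $(z_1,\dots,z_q)$ directly from the stopping points of a carefully chosen shortest \deltatour, and then verify that \emph{any} cyclic concatenation of shortest walks between consecutive $z_i$'s reproduces its length while retaining the $\delta$-covering property. First I would invoke the discretization lemma (\cref{lemma:discretization}) to fix a shortest \deltatour $T$ in $G$ all of whose stopping points are of the form $p(u,v,\lambda)$ with $\lambda\in S_\delta$; in particular, every stopping point of $T$ lies in $S_{G,\delta}$. Since $n\ge 12k$, I may then apply \cref{lem:main:lem:xp} to $T$ to obtain a set $Z\subseteq P(G)$ of stopping points of $T$ with $|Z|\le 12k$ such that $\dist_G(p,Z)\le\delta$ for every $p\in P(G)$. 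By the choice of $T$, we have $Z\subseteq S_{G,\delta}$. I would then order $Z$ as $(z_1,\dots,z_q)$ with $q=|Z|\le 12k$ according to the cyclic order in which the points first occur as stopping points along $T$, breaking ties arbitrarily.

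To see that this sequence has the claimed property, let $Q_1,\dots,Q_q$ be any collection of shortest walks with $Q_i$ a $z_iz_{i+1}$-walk (indices taken cyclically), and let $T'$ denote their concatenation, which is evidently a tour in $G$. The $\delta$-covering part is essentially immediate: $T'$ stops at every point of $Z$, hence $\dist_G(p,T')\le\dist_G(p,Z)\le\delta$ for every $p\in P(G)$ by the property of $Z$. For the length bound, I would decompose $T$ into $q$ consecutive tour segments $T_1,\dots,T_q$, where $T_i$ is the piece of $T$ running from the chosen occurrence of $z_i$ to the next chosen occurrence of $z_{i+1}$; each $T_i$ is a $z_iz_{i+1}$-walk, so $\len(T_i)\ge\dist_G(z_i,z_{i+1})=\len(Q_i)$. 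Summing cyclically yields $\len(T')=\sum_i\len(Q_i)\le\sum_i\len(T_i)=\len(T)$. Since $T$ is already a shortest \deltatour and $T'$ is a \deltatour of length at most $\len(T)$, it follows that $T'$ is also a shortest \deltatour, as required.

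The only genuinely delicate point is arranging $Z\subseteq S_{G,\delta}$: \cref{lem:main:lem:xp} guarantees only that $Z$ consists of stopping points of the supplied shortest tour, so the discreteness of the $z_i$ is inherited from the starting tour $T$. This is precisely why the discretization lemma must be invoked \emph{before}, rather than after, \cref{lem:main:lem:xp}. Once this is handled, both remaining conditions, namely the size bound $q\le 12k$ and the interplay between the $\delta$-covering of $Z$ and the shortest-walk structure of $T'$, follow by transparent comparisons to $T$.
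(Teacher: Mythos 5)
Your proposal is correct and follows essentially the same route as the paper: invoke \cref{lemma:discretization} first to obtain a discretized shortest \deltatour $T$, apply \cref{lem:main:lem:xp} to extract $Z\subseteq S_{G,\delta}$ of size at most $12k$, order $Z$ cyclically along $T$, observe that the concatenated tour stops at all of $Z$ (hence covers), and compare its length to $T$'s via the segment-by-segment decomposition. The remark that the discretization must precede \cref{lem:main:lem:xp} is exactly the point the paper handles implicitly, so nothing is missing.
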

\begin{proof}
As $n \geq 12k$, we have $\delta>1$. Hence, by Lemma~\ref{lemma:discretization}, there is a shortest \deltatour~$T$ of $G$ such that all the points at which $T$ stops are contained in $S_{G,\delta}$. By Lemma~\ref{lem:main:lem:xp}, there is a collection $Z=\{z_1,\dots,z_q\}$ of $q \leq 12k$ points stopped at by $T$ such that $\dist_G(p,Z)\leq \delta$ for all $p \in P(G)$. By symmetry, we may suppose that the points of $Z$ are stopped at in the order $(z_1\dots z_q)$ by $T$. Now let $Q_1,\dots,Q_q$ be a collection of paths in $G$ such that $Q_i$ is a shortest $\seq{z_i&z_{i+1}}$-walk in $G$ for $i \in [q-1]$ and $Q_q$ is a shortest $z_qz_1$-walk in $G$ and let $T^*$ be the concatenation $Q_1\dots Q_q$. It follows by construction that $T^*$ is a tour in $G$. Further, as $\dist_G(p,Z)\leq  \delta$ for all $p \in P(G)$ and all points of $Z$ are stopped at by $T^*$, we obtain that $T^*$ is a \deltatour in $G$. Next, for $i \in [q-1]$, let $Q_i'$ be the subwalk of $T$ from $z_i$ to $z_{i+1}$ corresponding to the chosen occurrences of $z_i$ and $z_{i+1}$ and let $Q_q'$ be the subwalk of $T$ from $z_q$ to $z_{1}$ corresponding to the chosen occurrences of $z_q$ and $z_{1}$. Observe that $T$ is the concatenation $Q'_1\ldots Q'_q$. Further, by the choice of $Q_i$, we have $\len(Q_i)\leq \len(Q_i')$ for $i \in [q]$. This yields $\len(T^*)=\sum_{i=1}^{q} \len(Q_i)\leq \sum_{i=1}^{q} \len(Q'_i)=\len(T)$. Hence $T^*$ is a shortest \deltatour in $G$.
\end{proof}
We are now ready to prove our main result.
\begin{proof}(Proof of Theorem~\ref{thm:param:k:ub:largedelta})
Let $G$ be a connected graph of order $n$, $\delta>0$ a constant and $k=\lceil \frac{n}{\delta}\rceil$.

If $n<12k$, then, by \cref{decide}, we can find a shortest \deltatour in $G$ in $f(k)$.

We may hence suppose that $n \geq 12k$. We first compute a \deltatour~$T_0$ in $G$ of length at most $2n-2$ which can be done in $n^{\Oh(1)}$ by Proposition~\ref{trgvftzuh}. Throughout the algorithm, we maintain a \deltatour~$T$ in $G$ that is shortest among all those found so far. It hence suffices to prove that we find a shortest \deltatour in $G$ at some point during the algorithm.
Let $(z_1,\dots,z_q)$ be a sequence of points in $S_{G,\delta}$. We now compute a collection of walks $Q_1,\dots,Q_q$ in $G$ such that $Q_i$ is a shortest $\seq{z_i&z_{i+1}}$-walk in $G$ for $i \in [q-1]$ and $Q_q$ is a shortest $z_qz_1$-walk in $G$. We then let $Q$ be the tour in $G$ obtained by concatenating $Q_1,\dots,Q_q$. If $Q$ is a \deltatour in $G$ and $Q$ is shorter than $T$, we replace $T$ by $Q$. We now do this for every sequence $(z_1,\dots,z_q)$ of points in $S_{G,\delta}$ for $q \leq 12k$. By Lemma~\ref{drftgzhu}, we obtain that $T$ is assigned a shortest \deltatour in $G$ at some point during this process.

We still need to analyze the running time of this algorithm. Note that
the shortest paths between all pairs of vertices in $V(G)$ can be computed in time
$\Oh(n^3)$ by the Floyd–Warshall algorithm~\cite{CLRS}. After, given two points in $P(G)$, a shortest walk linking these points can be computed in constant time. Indeed, if both points are on the same edge, it is trivial to compute. Otherwise, such a walk can be obtained by attaching these points to a shortest path from a vertex incident to the edge the first point is on to a vertex incident to the edge the second point is on. There are only four choices for these vertices.

It follows that, given a sequence $(z_1,\dots,z_q)$ of points in $S_{G,\delta}$, each of the walks $Q_i$ for $i \in [q]$ can be computed in $\Oh(1)$-time, 
given the all-pair shortest paths computed above. Further, it follows directly from Corollary~\ref{check} that we can check in $\Oh(n^c)$-time for some fixed constant  whether $Q$ is a \deltatour. Observe that, as $G$ does not contain multiple edges, we have $\abs{S_{G,\delta}}= \Oh(n^2)$. Given an integer $q$, there are hence at most $\Oh(n^{2q})$ possibilities to choose $\{z_1,\dots,z_q\}$ and for each of them we need to consider $q!=f(k)$ orderings. Hence the total running time is
$\Oh(q!n^{2q}n^{c}) = \Oh(f(k)n^{24k+c})$.
\end{proof}

\subsection[\texorpdfstring{\wone-Hardness When Parameterized by $n/\delta$}{W[1]-Hardness When Parameterized by n/delta}]{\wone-Hardness When Parameterized by \boldmath$n/\delta$}
\label{section:w1-hard-n-over-delta}

In this subsection, we show that the running time of the algorithm implied
by \cref{thm:param:k:ub:largedelta} is conditionally optimal up to a $\log{k}$-factor in the
exponent.
More precisely, we show that the problem is \wone-hard for the parameterization
by $k \coloneqq\ceil{\frac{n}{\delta}}$, and, moreover, unless \ETH fails, the
\xp-time algorithm from \cref{thm:param:k:ub:largedelta} cannot be
significantly improved in the regime $\delta = \Omega(n)$. We prove \cref{thm:param:k:lb:largedelta}, which we restate here for convenience.

\ThmParamKLbLargeDelta*\label\thisthm

We base our hardness on that of the Binary Constraint Satisfaction Problem (\binaryCSP) in
cubic graphs, defined as follows.
A \binaryCSP instance $I$ is
a triple $(V, \Sigma, \Ce)$, where
$V = \{v_1, \dots, v_{|V|}\}$ is a set of \textit{variables}
taking values from a \textit{domain} $\Sigma = [n]$, and
$\mathcal{C}$ is a set of $k$ \textit{constraints}, each of which
is a triple $(v_i, v_j, C_{i,j})$ with $1 \leq i < j \leq |V|$, and
$C_{i,j} \subseteq \Sigma \times \Sigma$ is a relation constraining variables
$v_i$ and $v_j$. The instance $I$ is \textit{satisfiable}
if there is an assignment to the variables $\mathcal{A} \colon  V \to \Sigma$ such
that $(\assig{i}, \assig{j}) \in C_{i,j}$ for every
constraint relation $C_{i,j}$. \binaryCSP  is the problem of deciding whether a given instance is satisfiable. 

The instance $I$ is associated with a constraint graph~$G$, with
$V(G) \coloneqq V$ and where two vertices are adjacent if
there is a constraint between them.

\binaryCSP is \wone-hard when parameterized by the number of constraints $k$.
In~\cite{Marx07}, Marx showed that an $f(k)\cdot n^{\oh(k/\log{k})}$-time
algorithm for \binaryCSP on cubic constraint graphs violates \ETH.
We make use of a slightly stronger formulation of that result, which was shown
by Karthik et al.~\cite{KarthikMPS23}.

\begin{theorem}[{\cite[Theorem 1.5]{KarthikMPS23}}]\label{thm:csp_hardness}
	Unless \ETH fails, there exist absolute positive constants
	$\alpha$ and $k_0\geq 6$ such that for every fixed $k \geq k_0$,
	where $k$ is divisible by $3$,
	there is no algorithm solving \binaryCSP on cubic constraint graphs
	with $k$ constraints over a domain of size $n$ 
	in time $\Oh(n^{\alpha k / \log{k}})$.%
	\footnote{Note that the exact formulation in~\cite{KarthikMPS23}
	uses $k$ to denote the number of variables (vertices), which is even.
	As the input graph is cubic, the statement we use is equivalent.}
\end{theorem}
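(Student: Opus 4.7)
The plan is to derive the statement by a chain of reductions from \ETH, following the template used in Marx's original proof and then adding the uniform-in-$k$ strengthening due to Karthik, Manurangsi, Paturi, and Simonov. First, I would invoke the Sparsification Lemma of Impagliazzo–Paturi–Zane to obtain that, assuming \ETH, $3$-SAT with $n$ variables and $\Oh(n)$ clauses admits no $2^{o(n)}$-time algorithm. From such a sparse $3$-SAT instance, the standard reduction (group $\Theta(\log n)$ variables together and encode clauses as binary relations between the groups) yields a \binaryCSP on an arbitrary constraint graph with $k$ vertices over a domain of size $n$ for which any $f(k)\,n^{o(k)}$-time algorithm would contradict \ETH.

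The decisive technical step is Marx's cubification: a polynomial-time reduction that transforms a \binaryCSP on an arbitrary $k$-vertex constraint graph into a \binaryCSP on a cubic constraint graph with $k' = \Oh(k \log k)$ vertices, with the domain blown up only polynomially. The idea is to replace each variable $v$ of degree $d$ by a \emph{distributor gadget}—a cubic expander-like graph on $\Theta(d \log d)$ nodes whose edges carry equality constraints, with $d$ designated external nodes to which the original constraints incident to $v$ are reattached, one per node. Expansion guarantees that the equality constraints propagate so that any satisfying assignment agrees on a single value of $v$ across all external nodes, while cubicity is maintained everywhere. Composing this with the previous step gives an absolute constant $\alpha_0 > 0$ such that cubic \binaryCSP with $k'$ constraints and domain size $n$ requires time $n^{\Omega(k'/\log k')}$ under \ETH, at least along an infinite sequence of values of $k'$.

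To upgrade this to a lower bound for \emph{every} sufficiently large $k$ divisible by $3$, which is the non-trivial strengthening over Marx, I would apply a padding-and-stitching argument. Given a value of $k$, one takes a hard cubic \binaryCSP instance of size $k_1 \leq k$ (whose existence is guaranteed by the previous paragraph) and augments it with $(k - k_1)/3$ many "padding" constraints on a separate cubic gadget of known satisfiability, wired to the original instance by a constant number of local rewirings chosen so that the combined constraint graph remains $3$-regular and so that solving the combined instance in the hypothetical time $\Oh(n^{\alpha k / \log k})$ translates, for $\alpha$ small enough relative to $\alpha_0$, into an algorithm for the underlying hard instance that violates the non-uniform bound. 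Since the padding can be added in units of three constraints (each such unit being a $K_4$-style gadget rewired into the instance), the assumption that $k$ is divisible by $3$ is exactly what is needed for the arithmetic to work out.

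The main obstacle, as it was for Marx, is the construction of the distributor gadget with an $\Oh(\log d)$ rather than $\Oh(\log^2 d)$ blowup, since any slack in this exponent propagates directly into a weaker exponent in the final lower bound; this requires either explicit constant-degree expanders with a carefully analyzed mixing argument or a recursive balanced-separator construction of depth $\Oh(\log d)$. A secondary subtlety, specific to the uniform strengthening, is designing the padding so that it simultaneously preserves exact $3$-regularity of the constraint graph, increases the domain size only by a factor polynomial in $n$, and contributes exactly the needed number of constraints for arbitrary $k \geq k_0$ divisible by $3$; any mismatch would either weaken the exponent or restrict the statement to a sparser arithmetic progression of values of $k$.
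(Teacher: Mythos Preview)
The paper does not prove this theorem at all: it is quoted verbatim as \cite[Theorem~1.5]{KarthikMPS23} and used as a black box in the reduction to \deltatourprob. There is therefore no ``paper's own proof'' to compare your proposal against.

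Your sketch is a plausible high-level outline of how the result is established in the cited work (Sparsification Lemma, grouping to \binaryCSP, Marx's degree-reduction to cubic constraint graphs, then the uniform-in-$k$ strengthening). If your goal is to reproduce the proof of the cited theorem, you should consult \cite{Marx07} and \cite{KarthikMPS23} directly rather than the present paper, which treats the statement as an imported tool.
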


The main technical difficulties for the proof of \cref{thm:param:k:lb:largedelta} are contained in the following result.
\begin{lemma}\label{lem:csp_to_dtour}
	For every fixed $k \geq 6$ which is divisible by $3$,
	and constant integer $r \geq 0$,
	given an instance $I = (V, \Sigma, \Ce)$ of \binaryCSP in a cubic graph
	$G$ with $k \coloneqq |E(G)|$ over a domain of size $n$,
	we can compute
	a graph~$G'$ in $f(k)n^{\Oh(1)}$-time of order
	$n' = (9k + r) \delta$, for some integer $\delta$, with $n' = \Oh(n^5)$
	such that $I$ is satisfiable if and only if there is a \deltatour of $G'$
	of length at most $k(\delta+1)$.
\end{lemma}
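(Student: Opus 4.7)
The plan is to encode \binaryCSP into \deltatourprob by building, for each of the $k$ constraints of the instance $I$, a gadget whose \emph{states} are in bijection with the satisfying assignments of that constraint, and by linking these gadgets so that every variable's value is forced to agree across the (exactly three) gadgets that constrain it. I pick $\delta$ to be an integer large enough to accommodate $n^2$ distinguishable offsets along a path of length $\delta+1$; taking $\delta=\Theta(n^{3})$ suffices, so the bound $n'=\Oh(n^{5})$ is easily met. The gadget $H_{e}$ for a constraint $e=v_{i}v_{j}$ is built around a ``spine'' path from an entry portal $p_{e}^{\text{in}}$ to an exit portal $p_{e}^{\text{out}}$ of length exactly $\delta+1$. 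To each satisfying pair $(a,b)\in C_{i,j}$, a short pendant is attached at a unique half-integral offset of the spine that encodes $(a,b)$; four further auxiliary portals (two per variable) are added for the cross-gadget links described below. The $k$ gadgets are glued into a single backbone cycle by identifying $p_{e}^{\text{out}}$ with $p_{e'}^{\text{in}}$ along an arbitrary circular ordering of the constraints, giving a backbone of total length exactly $k(\delta+1)$. Auxiliary padding paths of combined size $r\delta$, attached at a single vertex, pad the order of $G'$ to exactly $(9k+r)\delta$ without affecting any tour that stops at the attachment point.

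Variable-consistency is then imposed as follows. Whenever two constraints $e,e'$ share a variable $v$, I add, between the auxiliary $v$-portals of $H_{e}$ and $H_{e'}$, a path of length chosen to form a shortcut that a consistent tour can use ``for free,'' but that forces any inconsistent selection to pay at least one unit of extra length. Concretely, the pendants in $H_{e}$ and $H_{e'}$ encoding the same value of $v$ are connected by paths whose length compensates for their respective positions on the two spines, so that the only way to \deltatour-cover both gadgets within the joint length budget is to select pendants with matching $v$-value. Bookkeeping the vertex counts shows that the spine, the pendants, four portal arms, and small internal structure together contribute exactly $9\delta$ vertices per gadget; the additional pendants and consistency paths required to host the $\Theta(n^{2})$ states and cross-gadget links are absorbed into the spine's length through subdivisions, which is where the choice $\delta=\Theta(n^{3})$ is used.

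For the forward direction, given a satisfying assignment $\mathcal{A}$ of $I$, the desired \deltatour is obtained by traversing the backbone once, making inside each gadget $H_{e}$ (with $e=v_{i}v_{j}$) a detour that touches the pendant encoding $(\assig{i},\assig{j})$. Thanks to the cross-gadget identifications, consistent detours cost nothing beyond the spine length, so the total tour length is exactly $k(\delta+1)$. A routine check using \cref{char2stops}, \cref{nostopchar} and \cref{char1stop} verifies that every point of $G'$ is \deltatour-covered. For the reverse direction, let $T$ be a \deltatour of $G'$ of length at most $k(\delta+1)$. By \cref{lem:tournice} and \cref{lemma:discretization} I may assume $T$ is nice with stopping points from a small discrete set. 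A cut argument along the backbone, combined with the fact that the only way to \deltatour-cover the interior of each $H_{e}$ is to stop at one of its pendants, shows that $T$ spends exactly $\delta+1$ length in each gadget and touches precisely one pendant per gadget; I read the pair $(a_{e},b_{e})\in C_{i,j}$ off this pendant. The consistency paths then rule out any $v$-mismatch between gadgets sharing a variable, as any mismatched configuration would blow the budget by at least one unit. Thus the induced map $v\mapsto a_{e}$ is a well-defined satisfying assignment of $I$.

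The main obstacle will be calibrating the cross-gadget consistency mechanism so that (i) correct value matchings cost nothing, (ii) any mismatched matching costs strictly more than the slack $k(\delta+1)-k(\delta+1)=0$ in the budget and is therefore impossible, and (iii) no ``unintended'' short \deltatour exists that $\delta$-covers the gadgets without genuinely selecting pendants that jointly correspond to a satisfying assignment. The first two items are handled by placing the pendants at pairwise distinct offsets modulo a base larger than the coverage radius, so that \deltatour-covering a pendant demands a stop at the precise corresponding offset, and by setting the length of each consistency path to equal the sum of the two matching pendant offsets. Item (iii) is delicate because the graph has many potential shortcuts; I will rule them out by a careful case analysis of nice tours that use the auxiliary portals but do not follow the backbone, leveraging the stated half-integral discretization \cref{lemma:discretization} and \cref{lentour} to bound the number of candidate configurations. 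The final vertex-count bookkeeping and the verification that $(9k+r)\delta$ can be hit exactly for every permissible $r$ are routine, using the padding paths as a buffer.
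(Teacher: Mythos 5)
Your proposal gets the scaffolding right — per-constraint gadgets with $\Theta(n^2)$ states, $\delta=\Theta(n^3)$, a backbone cycle of total length $k(\delta+1)$, and a padding block to hit $(9k+r)\delta$ exactly — but the central mechanism you describe does not and cannot work, and the paper's actual mechanism is different in kind. You model each gadget as a single spine of length $\delta+1$ with the states realized as pendants at distinct offsets, and you enforce both state selection and cross-gadget consistency through the \emph{length budget}: a matched selection ``costs nothing beyond the spine length,'' while a mismatch ``would blow the budget by at least one unit.'' Neither of those two claims can be made to hold. If the tour traverses the spine, it passes every offset, so every pendant base (and every nearby pendant) is within distance $\delta$ of the tour with no detour at all; traversing a spine of length $\delta+1$ therefore does not force a choice of state. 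Conversely, any genuine detour off the spine into a pendant and back strictly increases the tour length, so ``detours cost nothing'' is impossible unless the pendants have length zero, in which case they encode nothing. You flag item~(iii) as delicate, but it is not merely delicate — the spine-with-pendants picture has no way to make the tour commit to a state.

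What the paper actually does is enforce choice and consistency through $\delta$-\emph{coverage}, not length. Each constraint gadget has two ``tails'' of length $\delta-2n$ whose far endpoints can only be $\delta$-covered from inside that gadget; the intersection of the two $\delta$-balls around the tail tips is exactly the candidate set $A_e$, so a tour of length $k(\delta+1)$ — which can afford at most one costly stop per gadget — must stop at precisely one candidate per gadget. Cross-gadget consistency is then enforced by ``middle paths'' of length $2n$ between gadgets sharing a variable $u$: the paths from a candidate $a_e^\phi$ to the connector endpoints have lengths $2n+2\phi(u)$ and $4n-2\phi(u)$, so the fraction of a middle path covered from the $e$-side is $2n-2\phi(u)$ and from the $e'$-side is $2\phi'(u)$, and the whole middle path is covered iff $\phi'(u)\ge\phi(u)$; the reversed middle path gives the opposite inequality, forcing equality. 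A mismatch leaves an uncovered point (so the tour is simply not a valid $\delta$-tour), not a longer tour. Your ``consistency path whose length equals the sum of the two offsets'' does not express this; the paper's middle paths have fixed length $2n$ and the calibration lives entirely in the candidate-to-connector arm lengths. So while your high-level bookkeeping (backbone length, padding, $n'=\Oh(n^5)$) matches the paper, the gadget design and the entire soundness argument need to be rebuilt around coverage rather than cost.
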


\begin{proof}
	Let $I = (V, \Sigma, \Ce)$ be a \binaryCSP instance whose constraint graph is a cubic
	graph~$G$ with $V = \{v_1, \dots, v_{|V|}\}$ and $\Sigma = [n]$ for some positive integer $n$. For some $e \in E(G)$, we use $C_e$ for the unique constraint in $\mathcal{C}$ corresponding to $e$. Further, let $k=|\Ce|$ and observe that $|V|=\frac{2}{3}k$ as $G$ is cubic.

	First suppose that $n < k$. Then $I$ can be decided in $f(k)$-time by a brute-force approach. In order to finish this case, we still need to show that a yes-instance and a no-instance of the desired form can efficiently be constructed. Let $\delta \coloneqq 1$
	and $n' = 9k + r$.
	If $I$ is satisfiable, let $n_0' \coloneqq k  + 2$; otherwise,
	we let $n_0' \coloneqq k + 3$.
	We then return a graph~$G'$ of order $n'$ constructed from a path
	$v_0, \ldots, v_{n_0'}$ of length $n_0'$ with
	additional vertices $u_1, \ldots, u_{n'-n_0'}$ and edges
	$v_1 u_1, v_1 u_2, \ldots, v_1 u_{n'-n_0'}$. The shortest
	\deltatour~$T$ of $G'$ is of the form
	$T = (v_1, \ldots, v_{n_0'-1}, v_{n_0'-2}, \ldots, v_1)$ of
	length $\len(T) = 2(n_0'-2)$.
	Then we have that $\len(T) \leq 2k = k(\delta+1)$ if and only if $I$ is
	satisfiable.
\medskip

	Therefore, we may assume that $n \geq k$. Let $\delta \coloneqq 42kn^3$.
	We construct $G'$ consisting of $k$ subgraphs we call \textit{constraint gadgets}
	and a collection of further gadgets connecting them, which we detail in what
	follows.

	We first construct a constraint gadget $\Gamma_{e}$ for every $e\in E(G)$. See \cref{fig:constraint_gadget} for an illustration.  Let $e = uv\in E(G)$. By symmetry, we may suppose that there are $i,j \in [|V|]$ with $i<j$ such that $u=v_i$ and $v=v_j$.

	Let $\Gamma_{e}$ include the following four paths of length $\delta-4n$,
	which we call \textit{connectors}:
	\begin{itemize}
		\item a path $\conn{R}{e}{u}{0}$ between vertices $\conn{s}{e}{u}{0}$ and $\conn{t}{e}{u}{0}$,
		\item a path $\conn{R}{e}{u}{1}$ between vertices $\conn{s}{e}{u}{1}$ and $\conn{t}{e}{u}{1}$,
		\item a path $\conn{R}{e}{v}{0}$ between vertices $\conn{s}{e}{v}{0}$ and $\conn{t}{e}{v}{0}$, and
		\item a path $\conn{R}{e}{v}{1}$ between vertices $\conn{s}{e}{v}{1}$ and $\conn{t}{e}{v}{1}$.
	\end{itemize}

	We then add two paths
	$R_{e}^{\tail_1}$ and $R_{e}^{\tail_2}$, each of length $\delta-2n$
	between vertices $s_{e}^{\tail_1}$ and $t_{e}^{\tail_1}$
	and between vertices $s_{e}^{\tail_2}$ and $t_{e}^{\tail_2}$. We refer to the
	paths $R_{e}^{\tail_1}$ and $R_{e}^{\tail_2}$ as \textit{tails}.

	Let $\Phi_e \coloneqq
	\{\phi \colon  \{v_i, v_j\} \to \Sigma \mid (\phi(v_i), \phi(v_j)) \in C_{i, j}\}$,
	and let $\Gamma_e$ include a set of vertices
	$A_e \coloneqq \{a_e^\phi \mid \phi \in \Phi_e\}$, which we refer to
	as \textit{candidate} vertices. 
	Note that $|A_e|=|\Phi_e|\le n^2$. 
	For each candidate $\cand{e}{\phi} \in A_{e}$, add the following
	paths between $a_e^\phi$ and the connectors in~$\Gamma_e$: 
	\begin{itemize} 
		\item a path from $\cand{e}{\phi}$ to $\conn{s}{e}{u}{0}$ of length $2n+2\phi(u)$,
		\item a path from $\cand{e}{\phi}$ to $\conn{s}{e}{u}{1}$ of length $4n-2\phi(u)$,
		\item a path from $\cand{e}{\phi}$ to $\conn{s}{e}{v}{0}$ of length $2n+2\phi(v)$, and
		\item a path from $\cand{e}{\phi}$ to $\conn{s}{e}{v}{1}$ of length $4n-2\phi(v)$.
	\end{itemize}
	
	Finally, for all $\phi \in \Phi_e$, we create a path from
	$\cand{e}{\phi}$ to $s_{e}^{\tail_1}$ and
	a path from
	$\cand{e}{\phi}$ to $s_{e}^{\tail_2}$ of length $2n$ each. We choose these paths so that they are pairwise vertex-disjoint, except possibly for their endvertices, when it is explicitly mentioned.
	This finishes the description of $\Gamma_e$, which is the constraint gadget for $e$.

	Next, consider some $v \in V(G)$ and let
	$e_1, e_2$, and $e_3$ be the edges incident to $v$ in $G$.
	We now connect the constraint gadgets $\Gamma_{e_1}, \Gamma_{e_2}$, and $\Gamma_{e_3}$ by adding paths linking some of the endvertices of the connectors in these constraint gadgets.
	More precisely, for all distinct $i,j \in [3]$, we add the two paths
	$\midpath{e_i, v, e_j}{0}$ from~$\conn{t}{e_i}{v}{0}$~to~$\conn{t}{e_j}{v}{1}$,
	and 
	$\midpath{e_j, v, e_i}{0}$ from~$\conn{t}{e_j}{v}{0}$~to~$\conn{t}{e_i}{v}{1}$,
	of length $2n$, which we refer to as \textit{middle} paths. The graph consisting of the six paths corresponding to a vertex of $V(G)$ is called the {\it middle gadget} of the vertex. Again, all the newly added middle paths are vertex-disjoint from each other and all constraint gadgets, except for their endvertices when explicitly specified.
	\Cref{fig:gadget_connectivity} shows how the
	three gadgets sharing a variable are connected.
	Note that every vertex $v \in V(G)$ corresponds to six such middle paths
	as $G$ is cubic; thus, we add exactly $4k$ new paths.

	We next add a collection of gadgets connecting the constraint gadgets. While the form of these gadgets is independent of the actual constraints, we will later show that any tour meeting the desired length bound is forced to be mainly contained in these new gadgets. 
	Let $e_1,\ldots,e_k$ be an arbitrary
	ordering of $E(G)$. 
	We add $k$ paths
	$R_{i, i+1}$ with endpoints $y_i$ and $z_{i+1}$
	of length $\delta-4n+1$ each for all $i \in [k]$. In the following, we set $e_{k+1}=e_1$ and $z_{k+1}=z_1$.
	Moreover, we define $Y\coloneqq \{y_1,\ldots,y_k\}$ and $	Z\coloneqq \{z_1,\ldots,z_k\}$.
	Now, for $i \in [k]$ all $\phi \in \Phi_{e_i}$, we connect the vertices $y_i$ and $z_{i+1}$ to the constraint gadgets $\Gamma_{e_i}$ and $\Gamma_{e_{i+1}}$, respectively. Namely, for all $i \in [k]$ and $\phi \in \Phi_{e_i}$, we add a path of length $2n$
	from
	$a_{e_i}^\phi$ to $y_i$ and a path of length $2n$
	from
	$a_{e_i}^\phi$ to $z_i$. For any $i \in [k]$, we call the path $R_{i,i+1}$ together with the paths connecting $R_{i,i+1}$ to the corresponding constraint gadgets a {\it tour gadget}. Again, we choose our tour gadgets to be vertex-disjoint from each other and the remainder of the graph, unless explicitly specified otherwise.

	Let $G'_0$ be the graph constructed so far. We want to construct $G'$ from $G'_0$ by adding a set of vertices with the point of meeting the desired requirement on the total number of vertices. To this end, we first need to show that this limit is not yet exceeded.
	\begin{claim}
	$|V(G_0')|\leq n'$.
	\end{claim}
	\begin{proof}
	First observe that for every $e \in E(G)$, the total number of vertices in $V(\Gamma_e)$ is $4(\delta-4n+1)+2(\delta-2n+1)+|A_e|16n\leq 6\delta+16n^3$. It follows that the total number of vertices of $G_0'$ contained in a constraint gadget is at most $k(6\delta+16n^3)$. Next observe that every middle path contains exactly $2n+1$ vertices. As there are in total $4k$ middle paths, we obtain that the total number of vertices of $G_0'$ contained in a middle path is bounded by $8kn+4k$. Finally observe that by construction, for every $i \in [k]$ the number of vertices of the tour gadget containing $R_{i,i+1}$ is bounded by $(\delta-4n+2)+(|A_{e_i}|+|A_{e_{i+1}}|)2n\leq \delta+4n^3$. As $G_0'$ contains exactly $k$ tour gadgets, the total number of vertices of $G_0'$ contained in a tour gadget is at most $k(\delta+4n^3)$. We obtain that $|V(G_0')|\leq k(7 \delta+20n^3+8n+4)\leq 9k\delta\leq n'$.
	\end{proof}
	
	We now choose an arbitrary $e \in E(G)$ and add a collection of $(9k + r) \delta -|V(G_0')|$ vertices together with an edge linking it to the unique neighbor of $t_{e}^{\tail_1}$ in $G_0'$.
	Let $G'$ be the resulting graph.
	Observe that $|V(G')|=(9k+r)\delta$.
	Further, as $k \leq n$, we have $\delta=\Oh(n^4)$ and hence $n'=\Oh(n^5)$. Further, we clearly have that $G'$ can be constructed in time $\Oh(n^5)$.

	To prove the correctness of the reduction, we make use of 
	some key observations regarding~$G'$.

The following result will be used several times throughout the proof. It gives a characterization of when two candidates from different constraint gadgets are of limited distance to each other.
	\begin{claim}\label{pathlength}
Let $i,j \in [k]$ with $i\neq j, \phi_i\in \Phi_{e_i}$ and $\phi_j \in \Phi_{e_j}$. Then there is a path of length at most $\delta+1$ linking $a_{e_i}^{\phi_i}$ and $a_{e_j}^{\phi_j}$ in $G'$ if and only if $|i-j| \equiv 1 \pmod k$. Moreover, if such a path exists, then it is the unique path linking $a_{e_i}^{\phi_i}$ and $a_{e_j}^{\phi_j}$ fully contained in a tour gadget, so in particular of length exactly $\delta+1$.
\end{claim}
\begin{proof}
First suppose that such a path $Q$ exists. We may suppose that $Q$ is a shortest path linking $A_{e_{i_1}}$ and $A_{e_{i_2}}$ for distinct $i_1,i_2 \in [k]$.

 Suppose for the sake of a contradiction that $V(Q)$ contains $s_e^{u}$ for a vertex $u$ that is incident to $e$. Then, by construction, as $Q$ is a path and by the minimality of $Q$, we obtain that $Q$ contains $t_e^{u}$ and $R_e^u$. Now let $e'$ be the unique edge incident to $u$ such that one of $t_{e'}^{u}$ and $\conn{t}{e'}{u}{1}$ is the last vertex in the middle gadget of $u$ when following $Q$ from $a_{e_i}^{\phi_i}$ to $a_{e_j}^{\phi_j}$. As $Q$ is a path and by construction it follows that $Q$ contains one of $\conn{R}{e'}{u}{0}$ and $\conn{R}{e'}{u}{1}$. By construction, we have that $V(R_e^u)\cap V(\conn{R}{e'}{u}{0})=V(R_e^u)\cap V(\conn{R}{e'}{u}{1})=\emptyset$. It follows that $\delta+1\geq |V(Q)|\geq |V(R_e^u)| + \min\{|V(\conn{R}{e'}{u}{0})|,|V(\conn{R}{e'}{u}{1})|\}\geq 2(\delta-4n)>\delta$, a contradiction.

 We obtain that $Q$ does not contain $s_e^{u}$. Similar arguments show that $Q$ does not contain any of $\conn{s}{e}{u}{1},\conn{t}{e}{u}{0}$, and $\conn{t}{e}{u}{1}$. By the minimality of $Q$, as $Q$ is a path and by construction, it follows that $Q$ is fully contained in a single tour gadget. It follows directly from the existence of this gadget and construction that $|i-j|\equiv 1 \pmod k$ and that $Q$ is the unique path linking $a_{e_i}^{\phi_i}$ and $a_{e_j}^{\phi_j}$ fully contained in the tour gadget, so in particular of length exactly $\delta+1$.
 
 On the other hand, if $|i-j|\equiv 1 \pmod k$, it is easy to see that the there exists a tour gadget that contains a path linking $a_{e_i}^{\phi_i}$ and $a_{e_j}^{\phi_j}$ of length exactly $\delta+1$.
\end{proof}

	For some $p_0 \in P(G')$, let $\ball{p_0}$ be defined as $\{p \in V(G') \mid \dist_{G'}(p, p_0) \leq \delta\}$. Further, for some $S \subseteq P(G')$, we set $\ball{S}=\bigcup_{u \in S}\ball{u}$. A set $S \subseteq P(G')$ is a \textit{$\delta$-dominating set} of $G'$ if $\ball{S}=P(G')$.
	By definition, we immediately get the following simple property.

	\begin{claim}
	\label{prop:one_per_ball}
	Let $S$ be a $\delta$-dominating set of $G'$.
	For every $i \in [k]$ and $b \in [2]$,
	there is some $s \in S \cap \ball{t_{e_i}^{\tail_b}}$
	\end{claim}

	For $\delta$-dominating sets of size at most $k$, we get a stronger property.
	\begin{claim}
	\label{prop:one_candidate}
	Let $S\subseteq P(G')$ be a set of size at most $k$ such that $S \cup Y\cup Z$ is a $\delta$-dominating set of $G'$.  Then $S$ consists of exactly
	one candidate from each constraint gadget.
	\end{claim}
	\begin{claimproof}
		Consider an arbitrary $e \in E(G)$, and
		observe that
		$\ball{t_{e}^{\tail_1}}\subseteq P(\Gamma_e)$ by construction. It follows by \cref{prop:one_per_ball} and the fact that $Y\cap P(\Gamma_e)=Z\cap P(\Gamma_e)=\emptyset$ that $S$ contains at least one point from $P(\Gamma_e)$. As $P(\Gamma_e)\cap P(\Gamma_{e'})=\emptyset$ for all distinct $e,e' \in E(G)$ and $|E(G)|=k\geq |S|$, we obtain that $P(\Gamma_e)$ contains exactly one point of $S$ for every $e \in E(G)$. Now consider some $e \in E(G)$. As $\ball{t_{e_i}^{\tail_1}}\cup \ball{t_{e_i}^{\tail_2}}\subseteq P(\Gamma_e)$, the unique point in $S \cap P(\Gamma_e)$ 
		needs to be contained in $\ball{t_{e_i}^{\tail_1}}\cap \ball{t_{e_i}^{\tail_2}}$. As
		$\ball{t_{e_i}^{\tail_1}}\cap \ball{t_{e_i}^{\tail_2}} = A_{e}$, the statement follows.\end{claimproof}

	The following property shows the key equivalence between the existence of
	$\delta$-dominating sets of size at most $k$ in $G'$ and the satisfiability of the \binaryCSP instance in $G$.

	\begin{claim}
	\label{prop:correctness}
	The following statements are equivalent:
	\begin{enumerate}[(i)]
	\item $I$ is satisfiable,
	\item $G'$ admits a $\delta$-dominating set of size at most $k$,
	\item $P(G')$ contains a set $S$ of at most $k$ points such that $S \cup Y\cup Z$ is a $\delta$-dominating set in $G'$.
	\end{enumerate}
	\end{claim}
	\begin{claimproof}
	We first show that $(i)$ implies $(ii)$.
		Let $\mathcal{A} \colon  V(G) \to \Sigma$ be a satisfying assignment.
		Then we claim that the set of vertices
		$S \coloneqq \{\cand{e}{\phi} \mid
			e = uv \in E(G),~
			\phi(u) = \assig{u}, ~\phi(v) = \assig{v}\}$
		is a $\delta$-dominating set of $G'$ of size $k$. First observe that $S$ is well-defined because $\mathcal{A}$ is satisfying and by construction. Further observe that $|S|=k$.

		To see that $S$ is a $\delta$-dominating set of $G'$, first consider some $e\in E(G)$ and let $\phi$ be the unique element of $\Phi_e$ such that $a_e^{\phi}\in S$. Observe that $t_{e}^{\tail_1}$ and $t_{e}^{\tail_2}$
		are at a distance $2n + (\delta - 2n) =
		\delta$ from $a_e^{\phi}$.
		Next, all other candidates are at a distance at most $4n < \delta$
		from $a_e^{\phi}$. Finally, the vertices 
		$\conn{s}{e}{u}{0}$, $\conn{s}{e}{u}{1}$,
		$\conn{s}{e}{v}{0}$, and $\conn{s}{e}{v}{1}$
		lie at a distance at most $4n + \delta - 4n = \delta$ from $a_e^{\phi}$.
		The remaining points in $P(\Gamma_e)$ lie on one of the
		shortest paths between $a_e^{\phi}$ and the above vertices. It follows that all points of $G'$ that are contained in a constraint gadget are contained in $\ball{S}$.
		
		Next consider a point $p\in P(G')$ contained in a tour gadget containing the path $R_{i,i+1}$ for some $i \in [k]$. If $p$ is not contained in $P(R_{i,i+1})$, then, we have $\min\{\dist_{G'}(p,y_i),\dist_{G'}(p,z_{i+1})\}\leq 2n\leq \frac{1}{2}\delta$ by construction. If $p\in V(R_{i,i+1})$, then we have $\min\{\dist_{G'}(p,y_i),\dist_{G'}(p,z_{i+1})\}\leq \frac{1}{2}(\dist_{G'}(p,y_i)+\dist_{G'}(p,z_{i+1}))\leq \frac{1}{2}\delta$. In either case, we have $\min\{\dist_{G'}(p,y_i),\dist_{G'}(p,z_{i+1})\}\leq \frac{1}{2}\delta$. Moreover, as $A_{e_i}\cap S \neq \emptyset$ and $A_{e_{i+1}}\cap S \neq \emptyset$ hold and by construction, we have $\dist_{G'}(y_i,S)\leq 2n$ and  $\dist_{G'}(z_{i+1},S)\leq 2n$. It follows that $\dist_{G'}(p,S)\leq 2n+\frac{1}{2}\delta\leq\delta$. It follows that all points of $G'$ that are contained in a tour gadget are contained in $\ball{S}$.
		
		Finally consider a point $p$ that is contained in the middle path $M_{e_i,v,e_j}$
		for some $e_i, e_j \in E(G)$ and $v \in V(G)$ with $e_i \cap e_j = \{v\}$.
		By construction, we have $\dist_{G'}(p,\conn{t}{e_i}{v}{0})+\dist_{G'}(p,\conn{t}{e_j}{v}{1})=2n$. Next, by construction, we have $\dist_{G'}(\conn{s}{e_i}{v}{0},\conn{t}{e_i}{v}{0})=\dist_{G'}(\conn{t}{e_j}{v}{1},\conn{s}{e_j}{v}{1})=\delta-4n$. Next, for $\mu \in \{i,j\}$, let $u_\mu \in V(G)$ such that $e_\mu=u_\mu v$ and $\phi_\mu\colon \{u_\mu,v\}\to [n]$ be defined by $\phi_\mu(u_\mu)=\mathcal{A}(u_\mu)$ and $\phi_\mu(v)=\mathcal{A}(v)$. Observe that by construction, we have $a_{e_\mu}^{\phi_\mu}\in S$ for $\mu\in \{i,j\}$. Moreover, by construction, we have $\dist_{G'}(\conn{s}{e_i}{v}{0},a_{e_i}^{\phi_i})=2n+2\mathcal{A}(v)$ and $\dist_{G'}(\conn{s}{e_j}{v}{1},a_{e_j}^{\phi_j})=4n-2\mathcal{A}(v)$.
		This yields
		\begin{align*}
		\dist_{G'}(p,S)&\leq \min\{\dist_{G'}(p,a_{e_i}^{\phi_i}),\dist_{G'}(p,a_{e_j}^{\phi_j})\}\\
		&\leq \frac{1}{2}(\dist_{G'}(p,a_{e_i}^{\phi_i})+\dist_{G'}(p,a_{e_j}^{\phi_j}))\\
		&\leq \frac{1}{2}(\dist_{G'}(p,\conn{t}{e_i}{v}{0})+\dist_{G'}(\conn{s}{e_i}{v}{0},\conn{t}{e_i}{v}{0})+\dist_{G'}(\conn{s}{e_i}{v}{0},a_{e_i}^{\phi_i})+{}\\
		&\quad\dist_{G'}(x,\conn{t}{e_j}{v}{1})+\dist_{G'}(\conn{t}{e_j}{v}{1},\conn{s}{e_j}{v}{1})+\quad\dist_{G'}(\conn{s}{e_j}{v}{1},a_{e_j}^{\phi_j}))\\
		&\leq \frac{1}{2}(2n+2(\delta-4n)+(2n+2\mathcal{A}(v))+(4n-2\mathcal{A}(v)))\\
		&=\delta.
		\end{align*}
		It follows that $p \in \ball{S}$. We obtain that all vertices contained in a middle gadget of $G'$ are contained $\ball{S}$.
		
		We obtain that $S$ is  a $\delta$-dominating set of $G'$. Hence $(ii)$ follows.
		
		Next, observe that $(ii)$ clearly implies $(iii)$.
		
		We finally prove that $(iii)$ implies $(i)$.
		Let $S\subseteq P(G')$ with $|S|\leq k$ such that $S'$ is a $\delta$-dominating set in $G'$, where $S'=S \cup Y\cup Z$.
		By \cref{prop:one_candidate}, we get that $S$ consists of exactly $k$ candidates, one per constraint gadget. Hence for every $e=uv \in E(G)$, there is exactly one mapping $\phi_e\colon \{u,v\}\rightarrow [n]$ such that $a_e^{\phi_e}\in S$.

		Now consider some $u \in V(G)$ and let $e$ and $e'$ be two edges incident to $u$ in $G$. We claim that $\phi_{e}(u)=\phi_{e'}(u)$. Suppose otherwise. By symmetry, we may suppose that $\phi_{e}(u)>\phi_{e'}(u)$. Now let $P$ be the unique path in $G'$ that contains $\{\cand{e}{\phi_{e}}, \conn{s}{e}{u}{0},\conn{t}{e}{u}{0},\conn{t}{e'}{u}{1},\conn{s}{e'}{u}{1},\cand{e'}{\phi_{e'}}\}$, and all of whose other vertices are of degree 2 in $G'$.
		 We obtain that the length of $P$ is $2\delta+2(\phi_{e}(u)-\phi_{e'}(u))\geq 2\delta+2$.
		 
		  Hence there is a vertex $x\in V(P)$ with $\min\{\dist_{P}(\cand{e}{\phi_{e}},x),\dist_{P}(\cand{e'}{\phi_{e'}},x)\}\geq \delta+1$.
		 As $\dist_{G'}(\conn{t}{e}{u}{0},{\cand{e}{\phi_{e}}})\leq (\delta-4n)+(2n+2 \phi_{e}(u))\leq \delta$ and  $\dist_{G'}(\conn{t}{e'}{u}{1},\cand{e'}{\phi_{e'}})\leq (\delta-4n)+(4n-2 \phi_{e'}(u))\leq \delta$, we obtain that $x$ is contained in the interior of the middle path $M_{e,u,e'}$. As $S'$ is a $\delta$-dominating set and by the structure of $S'$, there is some $s \in S'$ such that $\dist_{G'}(x,s)\leq \delta$. Let $Q$ be a path of length at most $\delta$ in $G'$ linking $x$ and $s$. By construction and symmetry, we may suppose that $\conn{t}{e}{u}{0}\in V(Q)$. Next, as $Q$ is a path, it will either fully contain $R_e^u$ or fully contain a path linking $\conn{t}{e}{u}{0}$ and $T$ fully contained in the middle gadget corresponding to $u$, where $T=\{\conn{t}{e}{u}{1},\conn{t}{e'}{u}{0},\conn{t}{e'}{u}{1},\conn{t}{e''}{u}{0},\conn{t}{e''}{u}{1}\}$, for $e''$ being the unique edge incident to $u$ distinct from $e$ and $e'$.
		 
		  If $Q$ fully contains $R_e^u$, then, as $Q$ is a path, it fully contains the path $Q_1$ from $\conn{s}{e}{u}{0}$ to $a$ for some $a \in A_e$. We may suppose that $a$ is the last vertex of $Q$ contained in $A_e$ when following $Q$ from $a$ to $s$. If $a=\cand{e}{\phi_{e}}$, we obtain a contradiction to the assumption that $\dist_{G'}(a_e^{\phi_e},x)\geq \delta+1$.
		  Otherwise, as $a_e^{\phi_e}$ is the only vertex in $S \cap A_e$, we obtain that $Q$ fully contains a path $Q_2$ linking $A_e$ and $A_{\bar{e}}$ for some $\bar{e}\in E(G)\setminus \{e\}$ or linking $A_e$ and $Y\cup Z$.
		  By construction, we have that the length of $R_e^u$ is $\delta-4n$ and the length of $Q_1$ is at least $2n$. Moreover, by Claim \ref{pathlength}, construction and $2n\leq \delta+1$, we obtain that the length of $Q_2$ is at least $2n$. As $R_e^u,Q_1$, and $Q_2$ are vertex-disjoint except possibly for their endvertices and $Q$ contains at least one edge that is contained in none of $R_e^u$, $Q_1$, and $Q_2$, we obtain that the length of $Q$ is at least $\delta+1$, a contradiction.

		  Now suppose that $Q$ fully contains a path $Q_1$ linking $\conn{t}{e}{u}{0}$ and some $t \in T$. We may suppose that $t$ is the last vertex contained in $T$ when following $Q$ from $x$ to $s$. Then, as $Q$ is a path, by the structure of $S$ and construction, we obtain that $Q$ contains one of $\conn{R}{\bar{e}}{u}{0}$ and $\conn{R}{\bar{e}}{u}{1}$ for some $\bar{e}\in E(G)$. Hence, in particular, $Q$ contains one of $\conn{s}{\bar{e}}{u}{0}$ and $\conn{s}{\bar{e}}{u}{1}$. As $Q$ is a path ending in $S$ and by construction, we have that $Q$ contains a path $Q_2$ linking $\{\conn{s}{\bar{e}}{u}{0},\conn{s}{\bar{e}}{u}{1}\}$ and $A_{\bar{e}}$.
		  By construction, we have that the length of $Q_1$ is $2n$, the length of $\conn{R}{\bar{e}}{u}{0}$ and $\conn{R}{\bar{e}}{u}{1}$ is at $\delta-4n$, and the length of $Q_2$ is at least $2n$. As $Q_1,\conn{R}{\bar{e}}{u}{0}, \conn{R}{\bar{e}}{u}{1}$, and $Q_2$ are vertex-disjoint except possibly for their endvertices and $Q$ contains at least one edge that is contained in none of $Q_1,\conn{R}{\bar{e}}{u}{0}, \conn{R}{\bar{e}}{u}{1}$, and $Q_2$, we obtain that the length of $Q$ is at least $\delta+1$, a contradiction.  We obtain that $\phi_{e}(u)=\phi_{e'}(u)$.

		We thus obtain an assignment $\mathcal{A}$ by setting
		$\mathcal{A}(u) \coloneqq z_u$ where $z_u$ is the
		unique value such that 
		$\{z_u\} = \{\phi_e(u)~\mid~ u \in e \in E(G)\}$. It follows directly from the existence of $\cand{e}{\phi_{e}}$ for all $e \in E(G)$ that $\mathcal{A}$ is a satisfying assignment for $(V,\Sigma,\mathcal{C})$.
	\end{claimproof}
	We are now ready to prove the second main claim, which is necessary for the proof of the lemma.
\begin{claim}
	\label{prop:dom_equiv_tour}
		$G'$ has a $\delta$-dominating set of size at most $k$
		if and only if
		$G'$ has a \deltatour of length at most $k(\delta+1)$.
	\end{claim}
	\begin{claimproof}
		First let $S$ be a $\delta$-dominating set in $G'$ of size at most $k$.
		By \cref{prop:one_candidate}, for every $e=v_iv_j \in E(G)$ with $i<j$, there is exactly one mapping $\phi_e\colon\{v_i,v_j\}\rightarrow [n]$ such that $a_e^{\phi_e}\in S$ and $(\phi(v_i),\phi(v_j))\in C_e$. Observe that by Claim \ref{pathlength}, for $i \in [k]$, there is a path $Q_i$ of length $\delta+1$ linking $a_{e_i}^{\phi_{e_i}}$ and $a_{e_{i+1}}^{\phi_{e_{i+1}}}$.  Now let $T$ be the tour obtained by concatenating $T_1,\ldots,T_k$ in that order. By construction, we have $\len(T)=k(\delta+1)$. Moreover, as $T$ stops at all vertices of $S$ and $S$ is a $\delta$-dominating set in $G'$, we obtain that $T$ is a $\delta$-tour in $G$.

		Now suppose that $G'$ admits a $\delta$-tour $T$ of length at most $k(\delta+1)$.
		By definition, we obtain that $P(T)$ is a $\delta$-dominating set of $G'$. It follows from \cref{prop:one_per_ball} and \cref{nearstop} that for $i \in [k]$,
		there is a stopping point of $T$ in $\ball{t_{e_i}^{\tail_1}}$.
		As $\ball{t_{e_i}^{\tail_1}}\cap \ball{t_{e_j}^{\tail_1}}=\emptyset$ for all $i,j \in[k]$ with $i \neq j$, $T$ is a tour, and by construction, we obtain that for
		$e \in E(G)$, we have that $T$ stops at $a_{e}^{\phi_{e}}$ for some $\phi_e \in \Phi_e$.
		Let $A_T=\{a_{e}^{\phi_{e}}:e \in E(G)\}$ and let $a_1,\ldots,a_k$ be the enumeration of $A_T$ in the order the vertices are stopped at when following $T$, choosing an arbitrary occurence of an element of $A_T$ if $T$ stops at this point multiple times.
		Further, let $\tau:[k]\rightarrow [k]$ be defined so that $a_i \in A_{e_{\tau(i)}}$ for $i \in [k]$. Observe that $\tau$ is a bijection. Without loss of generalty, we may suppose that $\tau(1)=1$. For $i \in [k-1]$, let $W_i$ be the subwalk of $T$ from $a_i$ to $a_{i+1}$ and let $W_k$ be the subwalk of $T$ from $a_k$ to $a_1$. By Claim \ref{pathlength} and the choice of $T$, we obtain $k(\delta+1)\geq \len(T)=\sum_{i \in [k]}\len(W_i)\geq k(\delta+1)$. Hence equality holds throughout yielding that $|\tau(i)-\tau(i+1)|=1$ or $\{\tau(i),\tau(i+1)\}=\{1,k\}$ for $i \in [k-1]$ and $|\tau(k)-\tau(1)|=1$ or $\{\tau(k),\tau(1)\}=\{1,k\}$ by Claim \ref{pathlength}. As $\tau$ is a bijection and by symmetry, we may suppose that $\tau(i)=i$ for $i\in [k]$. It now follows from Claim \ref{pathlength} that for $i \in [k]$, we have that $W_i$ is the unique path from $a_{e_i}^{\phi_{e_i}}$ to $a_{e_{i+1}}^{\phi_{e_{i+1}}}$ that is fully contained in a tour gadget.
		
We now conclude that $A_T\cup Y\cup Z$ is a $\delta$-dominating set in $G'$. To this end, let $p$ be an arbitrary point in $P(G')$. As $T$ is a $\delta$-tour, there is a walk $Q$ from $p$ to $P(T)$ of length at most $\delta$. By the above characterization of $T$, we have that $T$ is an integral tour and all stopping points of $T$ which are not contained in $A_T\cup Y\cup Z$ are vertices of degree 2 in $G'$ both of whose incident edges are traversed exactly once by $T$. It follows that the endvertex of $Q$ in $P(T)$ is contained in $A_T\cup Y\cup Z$. Hence $A_T\cup Y\cup Z$ is a $\delta$-dominating set in $G'$. It follows by Claim \ref{prop:correctness} that there is a $\delta$-dominating set in $G'$ of size at most $k$.

	\end{claimproof}

Now the lemma follows directly from \cref{prop:correctness} and \cref{prop:dom_equiv_tour}.
\end{proof}

With \cref{lem:csp_to_dtour}, the main theorem of this section follows easily.

\ThmParamKLbLargeDelta*\label\thisthm
\begin{proof}
	Assume that for some fixed $k \geq k_0'$, there is an
		algorithm as in the above statement
		for $\alpha' \coloneqq \alpha/90$ and $k_0' \coloneqq 27 k_0$, where
		$\alpha$ and $k_0$ are the absolute constants implied by
		\cref{thm:csp_hardness}.
		Given an instance of cubic \binaryCSP with
		$k_{\CSPsubscript} \coloneqq 3\floor{\frac{k}{27}}$
	constraints over a domain of size $n$,
	apply \cref{lem:csp_to_dtour} with $k_{\CSPsubscript}$ and $r \coloneqq k \bmod 27$
	to obtain $G'$ where $n' = (9k_{\CSPsubscript} + r) \delta$ and $n' = \Oh(n^5)$.

	Observe that $n'/\delta = \ceil{\frac{n'}{\delta}}=
		27 \floor{\frac{k}{27}} + k \bmod 27 = k$.

	Thus, a hypothetical algorithm deciding $(G',k(\delta+1))$
		yields an algorithm deciding such \binaryCSP instances in time
		$\Oh((n')^{\alpha' \cdot k/\log{k}}) =
		\Oh(n^{\alpha k/(18\log{k})}) = \Oh(n^{\alpha k_{\CSPsubscript}/\log{k_{\CSPsubscript}}})$,
		contradicting \cref{thm:csp_hardness}.
	Moreover, as the starting problem is \wone-hard parameterized
		by the number of constraints $k$,
		we obtain \wone-hardness in our setting for parameterization by $n'/\delta$.
\end{proof}
As a side product of the proof of \cref{lem:csp_to_dtour}, we obtain the following result which might be of independent interest.
\begin{corollary}
\label{cor:d_domset_hardness}
There are constants $\alpha>0$ and $k_0$ such that,
	unless \ETH fails,
	for every $k \geq k_0$,
	there is no algorithm that, given an $n$-vertex graph,
	computes a minimum-cardinality $\delta$-dominating set
	in $\Oh(n^{\alpha k/\log{k}})$ time where $\delta = n/k$.
Moreover, the problem is \wone-hard parameterized by~$k$. 
\end{corollary}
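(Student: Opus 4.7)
The plan is to re-use the reduction underlying \cref{lem:csp_to_dtour} almost verbatim, but to extract the dominating-set equivalence rather than the tour equivalence. Recall that in the proof of \cref{lem:csp_to_dtour}, we constructed from a cubic \binaryCSP instance $I=(V,\Sigma,\Ce)$ with $k$ constraints a graph $G'$ with $n'=(9k+r)\delta$ vertices, for an arbitrary integer $r\in\{0,\dots,26\}$ of our choosing, and $n'=\Oh(n^5)$. A key step in that proof was \cref{prop:correctness}, which established that $I$ is satisfiable if and only if $G'$ admits a $\delta$-dominating set of size at most $k$. This is already exactly the equivalence needed for a hardness result for minimum-cardinality $\delta$-dominating set.

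The plan is therefore as follows. Fix a constant $k\geq k_0'$, set $k_{\CSPsubscript}\coloneqq 3\lfloor k/27\rfloor$ and $r\coloneqq k\bmod 27$, and apply the construction from \cref{lem:csp_to_dtour} to obtain $G'$ with $n'=(9k_{\CSPsubscript}+r)\delta$. As in the proof of \cref{thm:param:k:lb:largedelta}, this ensures $\lceil n'/\delta\rceil=27\lfloor k/27\rfloor+k\bmod 27=k$, so the parameter in the output instance is exactly $k$ and $\delta=n'/k$. By \cref{prop:correctness}, an algorithm deciding whether $G'$ has a $\delta$-dominating set of size at most $k$ solves the source \binaryCSP instance. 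A hypothetical algorithm running in time $\Oh((n')^{\alpha' k/\log k})$ for sufficiently small $\alpha'$ (concretely $\alpha'=\alpha/90$, with $\alpha,k_0$ the constants from \cref{thm:csp_hardness}) would then solve cubic \binaryCSP with $k_{\CSPsubscript}$ constraints over a domain of size $n$ in time $\Oh(n^{\alpha k_{\CSPsubscript}/\log k_{\CSPsubscript}})$, contradicting \cref{thm:csp_hardness}. The \wone-hardness with respect to $k$ follows in the same manner from the \wone-hardness of \binaryCSP parameterized by the number of constraints.

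The only subtlety, and the one point that requires care rather than being a straight copy of previous arguments, is verifying that the equivalence of \cref{prop:correctness} really does transfer without reference to the tour-length inequality used in \cref{prop:dom_equiv_tour}. Fortunately this is immediate from the statement of \cref{prop:correctness} itself, which speaks directly about $\delta$-dominating sets in $G'$ of size at most $k$; no tour argument is needed. Thus the entire proof is just a specialization of the proof of \cref{thm:param:k:lb:largedelta}, stopping one step earlier.
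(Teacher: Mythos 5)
Your proposal is correct and takes essentially the same approach as the paper, which likewise re-uses the reduction of \cref{lem:csp_to_dtour} together with the parameter-bookkeeping from \cref{thm:param:k:lb:largedelta}. The only (harmless) presentational difference is which intermediate statement you cite: you invoke \cref{prop:correctness} directly, which already gives the equivalence ``$I$ satisfiable $\iff$ $G'$ has a $\delta$-dominating set of size at most $k$'', whereas the paper chains \cref{lem:csp_to_dtour} with \cref{prop:dom_equiv_tour}; both paths unpack the same reduction and arrive at the identical conclusion.
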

\begin{proof}
	As in \cref{thm:param:k:lb:largedelta}, the proof follows immediately
	from \cref{lem:csp_to_dtour} and \cref{prop:dom_equiv_tour}.
\end{proof}

\tikzstyle{wiggly} = [decorate, decoration={snake, amplitude=.4mm, segment length=2mm, post length=0mm}]
\tikzstyle{solid}=[fill=black, draw=black, shape=circle]
\tikzstyle{solid_small}=[fill=black, draw=black, shape=circle, inner sep=0pt, minimum size=4pt]

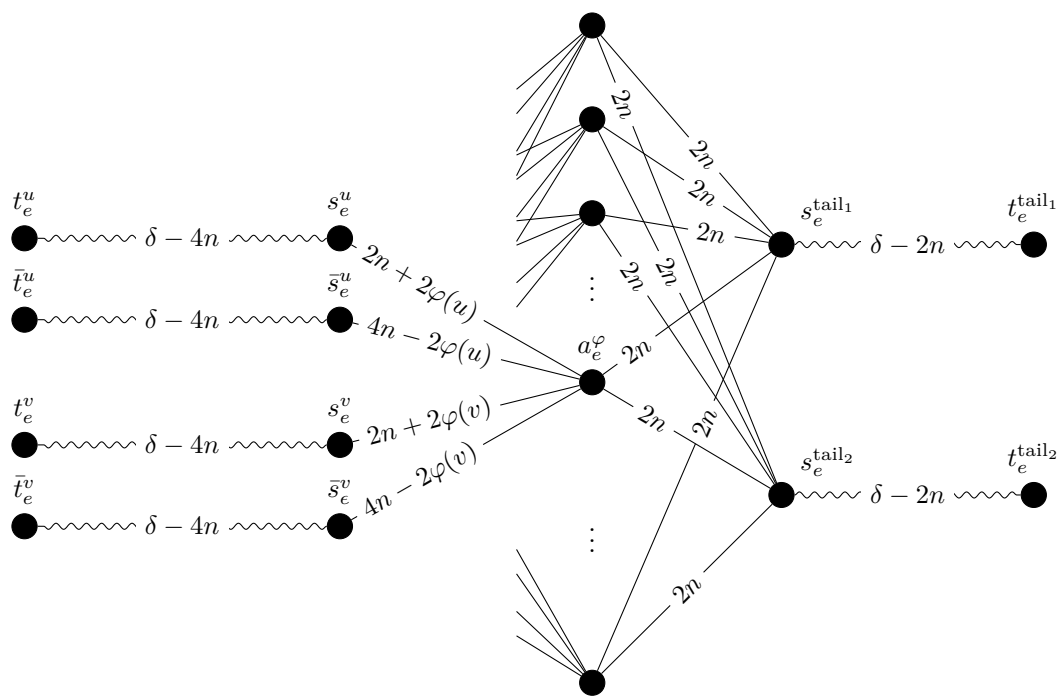
\begin{figure}
	\centering
\begin{tikzpicture}[scale=.83]
	\node [style=solid, label=above right:$s_{e}^{\tail_1}$] (tail1_start) at (3.5, 6.5) {};
	\node [style=solid, label=above:$t_{e}^{\tail_1}$] (tail1_end) at (7.5, 6.5) {};
	\node [style=solid, label=above right:$s_{e}^{\tail_2}$] (tail2_start) at (3.5, 2.5) {};
	\node [style=solid, label=above:$t_{e}^{\tail_2}$] (tail2_end) at (7.5, 2.5) {};
	\node [style=solid] (4) at (0.5, 10.0) {};
	\node [style=solid] (5) at (0.5, 8.5) {};
	\node [style=solid] (6) at (0.5, 7.0) {};
	\node [style=solid] (7) at (0.5, -0.5) {};
	\node [style=solid, label=above:$\cand{e}{\phi}$] (c_xy) at (0.5, 4.3) {};
	\node [style=solid, label=above:$\conn{s}{e}{u}{0}$] (x_start) at (-3.5, 6.6) {};
	\node [style=solid, label=above:$\conn{s}{e}{u}{1}$] (xbar_start) at (-3.5, 5.3) {};
	\node [style=solid, label=above:$\conn{s}{e}{v}{0}$] (y_start) at (-3.5, 3.3) {};
	\node [style=solid, label=above:$\conn{s}{e}{v}{1}$] (ybar_start) at (-3.5, 2.0) {};
	\node [style=solid, label=above:$\conn{t}{e}{u}{0}$] (x_end) at (-8.5, 6.6) {};
	\node [style=solid, label=above:$\conn{t}{e}{u}{1}$] (xbar_end) at (-8.5, 5.3) {};
	\node [style=solid, label=above:$\conn{t}{e}{v}{0}$] (y_end) at (-8.5, 3.3) {};
	\node [style=solid, label=above:$\conn{t}{e}{v}{1}$] (ybar_end) at (-8.5, 2.0) {};

	\draw [style=wiggly] (tail1_start) -- (tail1_end) node[midway,fill=white]{$\delta-2n$};
	\draw [style=wiggly] (tail2_start) -- (tail2_end) node[midway,fill=white]{$\delta-2n$};

	\draw (tail1_start) -- (4) node[pos=.4,fill=white,sloped] {$2n$};
	\draw (tail1_start) -- (5) node[pos=.4,fill=white,sloped] {$2n$};
	\draw (6) -- (tail1_start) node[pos=.65,fill=white,sloped] {$2n$};
	\draw (tail1_start) -- (c_xy) node[pos=.8,fill=white,sloped] {$2n$};
	\draw (tail1_start) -- (7) node[pos=.4,fill=white,sloped] {$2n$};
	\draw (tail2_start) -- (4) node[pos=.85,fill=white,sloped] {$2n$};
	\draw (tail2_start) -- (5) node[pos=.6,fill=white,sloped] {$2n$};
	\draw (tail2_start) -- (6) node[pos=.8,fill=white,sloped] {$2n$};
	\draw (tail2_start) -- (c_xy) node[pos=.7,fill=white,sloped] {$2n$};
	\draw (tail2_start) -- (7) node[pos=.5,fill=white,sloped] {$2n$};

	\draw (c_xy) -- (x_start) node[pos=.7,fill=white,rotate=-30] {$2n + 2\phi(u)$};
	\draw (c_xy) -- (xbar_start) node[pos=.66,fill=white,rotate=-15] {$4n - 2\phi(u)$};
	\draw (c_xy) -- (y_start) node[pos=.66,fill=white,rotate=15] {$2n + 2\phi(v)$};
	\draw (c_xy) -- (ybar_start) node[pos=.7,fill=white,,rotate=30] {$4n - 2\phi(v)$};

	\foreach \name in {4, 5, 6, 7} {
		\draw (\name) -- ($(\name)!0.3!(x_start)$);
		\draw (\name) -- ($(\name)!0.3!(xbar_start)$);
		\draw (\name) -- ($(\name)!0.3!(y_start)$);
		\draw (\name) -- ($(\name)!0.3!(ybar_start)$);
	}

	\draw [style=wiggly](x_start) -- (x_end) node[midway,fill=white]{$\delta-4n$};
	\draw [style=wiggly] (xbar_start) -- (xbar_end) node[midway,fill=white]{$\delta-4n$};
	\draw [style=wiggly] (y_start) -- (y_end) node[midway,fill=white]{$\delta-4n$};
	\draw [style=wiggly] (ybar_start) -- (ybar_end) node[midway,fill=white]{$\delta-4n$};

	\node at ($(c_xy)!0.5!(7)$) {$\vdots$};
	\node at ($(c_xy)!0.6!(6)$) {$\vdots$};
\end{tikzpicture}
\caption{Constraint gadget $\Gamma_{e}$ for $e = uv \in E(G)$ as defined in the proof of \cref{thm:csp_hardness}.} 
\label{fig:constraint_gadget}
\end{figure}

\tikzset{pics/constraint_gadget_module/.style n args={8}{
	code = {
		\node (0) at (-1, -1) {}; %
		\node (2) at (1, -1) {}; %
		\node (3) at (1, 1) {}; %
		\node (4) at (-1, 1) {}; %
		\draw [in=90, out=-90] (0.center) -- (4.center);
		\draw (0.center) -- (2.center);
		\draw (2.center) -- (3.center);
		\draw (3.center) -- (4.center);

		\ifthenelse{\equal{#4}{}}{
			\node (22) at ($($(0)!0.1!(4)$)!0.1!($(3)!0.1!(2)$)$) {$\Gamma_{#1,#2}$};
		}{
			\node (22) at ($($(0)!1.25!(4)$)!0.1!($(3)!0.3!(2)$)$) {$\Gamma_{#4}$};
		}

		\node (5) at (0, -1) {}; %
		\node (6) at (0, 1) {}; %
		\foreach \x in {1,...,9}{
			\pgfmathsetmacro{\position}{\x/10}
			\node[circle,draw,fill=black,inner sep=1pt] (cand\x) at ($ (5.center)!\position!(6.center) $) {};
		}

		\ifthenelse{\equal{#3}{1}}{
			\def\rot{0}
		}{
			\def\rot{270}
		}

		\ifthenelse{\not\equal{#1}{}}{
			\begin{scope}[rotate=\rot]
				\node (12) at (1, 0.5){}; %
				\node[style=solid_small] (16) at (2, 0.5) {};  %

				\node (17) at (1, -0.5) {}; %
				\node[style=solid_small] (18) at (2, -0.5) {};  %

				\draw[style=wiggly] (12.center) -- (16.center) node[pos=1,#5]{$t_{#4}^{#1}$};
				\draw[style=wiggly] (17.center) -- (18.center) node[pos=1,#6]{$\bar{t}_{#4}^{#1}$};
			\end{scope}
		}{}
		\ifthenelse{\not\equal{#2}{}}{
			\begin{scope}[rotate=\rot]
				\node (9) at (-1, 0.5) {}; %
				\node[style=solid_small] (13) at (-2, 0.5) {};   %

				\node (19) at (-1, -0.5) {}; %
				\node[style=solid_small] (20) at (-2, -0.5) {}; %

				\draw[style=wiggly] (9.center) -- (13.center) node[pos=1,#7]{$t_{#4}^{#2}$};
				\draw[style=wiggly] (19.center) -- (20.center) node[pos=1,#8]{$\bar{t}_{#4}^{#2}$};
			\end{scope}
		}{}

		\ifthenelse{\equal{#3}{1}}{
		}{
			\node[circle,draw,fill=black,inner sep=1pt] (s) at (1.5, 0) {};
			\node[circle,draw,fill=black,inner sep=1pt] (e) at (-1.5, 0) {};
		}
	}
}}

\begin{figure}
	\centering
	\begin{tikzpicture}
		\pic[rotate=270] (XW) at (3,3.75) {constraint_gadget_module={v}{u}{1}{e_1}{left}{left}{above}{above}};
		\pic[rotate=180] (XY) at (6,0) {constraint_gadget_module={v}{w}{1}{e_2}{below}{below}{above}{above}};
		\pic (XZ) at (0,0) {constraint_gadget_module={v}{x}{1}{e_3}{below}{below}{above}{above}};

		\draw[line width=1pt] (XY16.center) -- (XZ18.center) node {};
		\draw[line width=1pt] (XY18.center) -- (XZ16.center) node {};

		\draw[line width=1pt] (XW16.center) -- (XZ18.center) node {};
		\draw[line width=1pt] (XW18.center) -- (XZ16.center) node {};

		\draw[line width=1pt] (XW16.center) -- (XY18.center) node {};
		\draw[line width=1pt] (XW18.center) -- (XY16.center) node {};
	\end{tikzpicture}
	\caption{An example of the connection between gadgets
		$\Gamma_{e_1}$, $\Gamma_{e_2}$, and $\Gamma_{e_3}$ sharing a variable $v$.
		Oscillating lines are paths of length $\delta-4n$, and bold lines are paths of length $2n$.
		The vertices within gadgets indicate candidate vertices.}
	\label{fig:gadget_connectivity}
\end{figure}
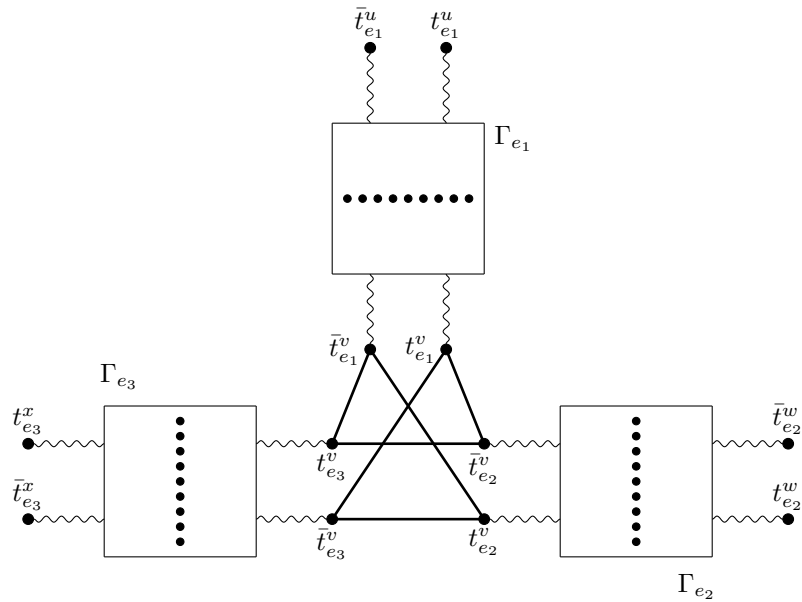

\begin{figure}
	\centering
	\begin{tikzpicture}[scale=.8]
		\pic[rotate=90] (G_0_) at (0.5,-0.25) {constraint_gadget_module={}{}{0}{e_1}{}{}{}{}};
		\pic[rotate=35] (G_1_) at (2.5,3.5) {constraint_gadget_module={}{}{0}{e_2}{}{}{}{}};
		\pic (G_2_) at (7,5) {constraint_gadget_module={v}{}{0}{e_3}{right}{left}{}{}};
		\pic[rotate=-45]  (G_3_) at (11.5,3.5) {constraint_gadget_module={}{}{0}{e_4}{}{}{}{}};
		\pic[rotate=270] (G_4_) at (13.5,-0.25) {constraint_gadget_module={}{}{0}{e_5}{}{}{}{}};
		\pic[rotate=225] (G_5_) at (11,-4) {constraint_gadget_module={v}{}{0}{e_6}{below left}{above right}{}{}};
		\pic[rotate=135] (G_6_) at (3,-4) {constraint_gadget_module={v}{}{0}{e_7}{above left}{below right}{}{}};

		\draw[style=wiggly,line width=1pt] (G_0_s.center) -- (G_1_e.center) node {};
		\draw[style=wiggly,line width=1pt] (G_1_s.center) -- (G_2_e.center) node {};
		\draw[style=wiggly,line width=1pt] (G_2_s.center) -- (G_3_e.center) node {};
		\draw[style=wiggly,line width=1pt] (G_3_s.center) -- (G_4_e.center) node {};
		\draw[style=wiggly,line width=1pt] (G_4_s.center) -- (G_5_e.center) node {};
		\draw[style=wiggly,line width=1pt] (G_5_s.center) -- (G_6_e.center) node[midway,fill=white] {$~~~~\dots\dots~~~~$};
		\draw[style=wiggly,line width=1pt] (G_6_s.center) -- (G_0_e.center) node {};

		\foreach \x in {0,...,6}{
			\foreach \y in {1,...,9}{
				\draw[line width=1pt] (G_\x_s.center) -- (G_\x_cand\y.center) node {};
				\draw[line width=1pt] (G_\x_e.center) -- (G_\x_cand\y.center) node {};
			}
		}

		\draw[line width=1pt] (G_2_16.center) -- (G_5_18.center) node {};
		\draw[line width=1pt] (G_2_18.center) -- (G_5_16.center) node {};

		\draw[line width=1pt] (G_6_16.center) -- (G_5_18.center) node {};
		\draw[line width=1pt] (G_6_18.center) -- (G_5_16.center) node {};

		\draw[line width=1pt] (G_6_16.center) -- (G_2_18.center) node {};
		\draw[line width=1pt] (G_6_18.center) -- (G_2_16.center) node {};
	\end{tikzpicture}
	\caption{The graph~$G'$. The figure shows an example of how three
		gadgets sharing a variable connect.
		The remaining connectors and middle paths are omitted for
		clarity. Bold oscillating lines are paths of length $\delta-4n+1$, and
		bold lines are paths of length $2n$.}
	\label{fig:graph_gprime_large_delta}
\end{figure}
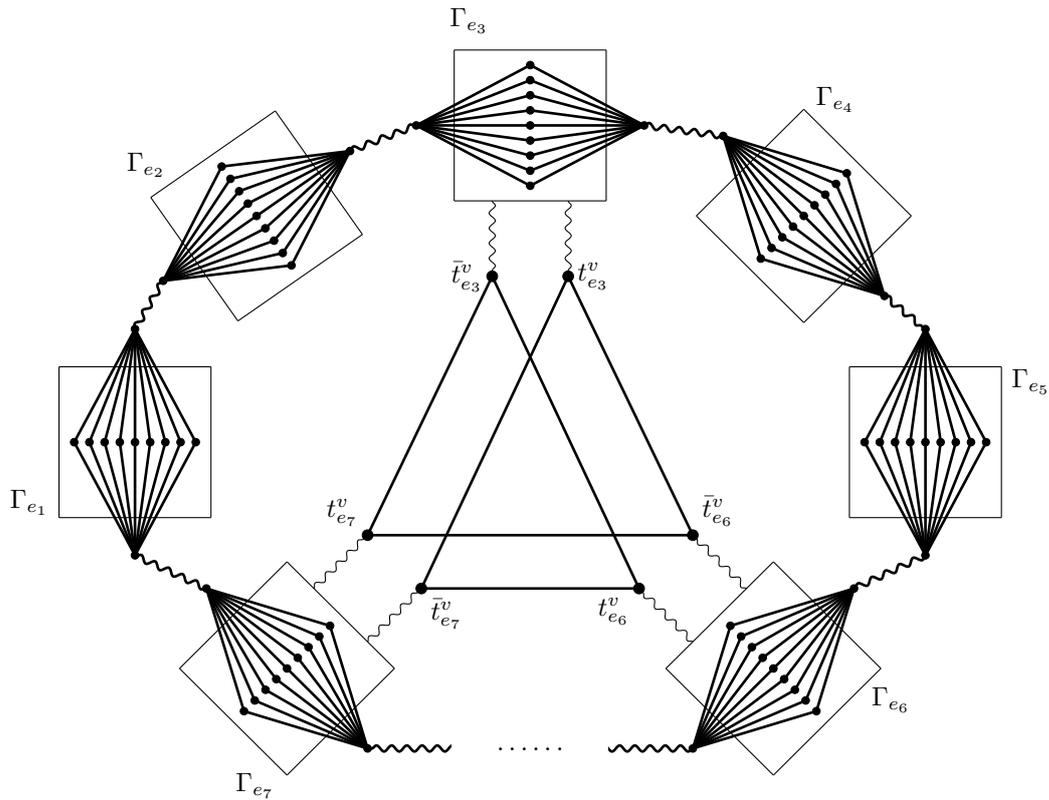
\clearpage

\bibliography{lit}

\begin{thebibliography}{10}

\bibitem{AlimontiK97}
Paola Alimonti and Viggo Kann.
\newblock Hardness of approximating problems on cubic graphs.
\newblock In {\em {CIAC}}, volume 1203 of {\em Lecture Notes in Computer Science}, pages 288--298. Springer, 1997.

\bibitem{CLRS}
Thomas~H. Cormen, Charles~E. Leiserson, Ronald~L. Rivest, and Clifford Stein.
\newblock {\em Introduction to {{Algorithms}}, Fourth Edition}.
\newblock {MIT Press}, 2022.

\bibitem{Dearing1974}
Perino~M. Dearing and Richard~Lane Francis.
\newblock A minimax location problem on a network.
\newblock {\em Transportation Science}, 8(4):333--343, 1974.

\bibitem{DinurS14}
Irit Dinur and David Steurer.
\newblock Analytical approach to parallel repetition.
\newblock In {\em {STOC}}, pages 624--633. {ACM}, 2014.

\bibitem{EngebretsenK01}
Lars Engebretsen and Marek Karpinski.
\newblock Approximation hardness of {TSP} with bounded metrics.
\newblock In Fernando Orejas, Paul~G. Spirakis, and Jan van Leeuwen, editors, {\em Automata, Languages and Programming, 28th International Colloquium, {ICALP} 2001, Crete, Greece, July 8-12, 2001, Proceedings}, volume 2076 of {\em Lecture Notes in Computer Science}, pages 201--212. Springer, 2001.
\newblock \href {https://doi.org/10.1007/3-540-48224-5\_17} {\path{doi:10.1007/3-540-48224-5\_17}}.

\bibitem{EngebretsenK06}
Lars Engebretsen and Marek Karpinski.
\newblock {TSP} with bounded metrics.
\newblock {\em Journal of Computer and System Sciences}, 72(4):509--546, 2006.
\newblock URL: \url{https://www.sciencedirect.com/science/article/pii/S0022000005001285}, \href {https://doi.org/10.1016/j.jcss.2005.12.001} {\path{doi:10.1016/j.jcss.2005.12.001}}.

\bibitem{FreiGHHM24}
Fabian Frei, Ahmed Ghazy, Tim~A. Hartmann, Florian H{\"{o}}rsch, and D{\'{a}}niel Marx.
\newblock From chinese postman to salesman and beyond: Shortest tour {\(\delta\)}-covering all points on all edges.
\newblock {\em CoRR}, abs/2410.10613, 2024.
\newblock URL: \url{https://doi.org/10.48550/arXiv.2410.10613}, \href {https://arxiv.org/abs/2410.10613} {\path{arXiv:2410.10613}}, \href {https://doi.org/10.48550/ARXIV.2410.10613} {\path{doi:10.48550/ARXIV.2410.10613}}.

\bibitem{HartmannJanssen2024}
Tim~A. Hartmann and Tom Jan{\ss}en.
\newblock Approximating $\delta$-covering.
\newblock In Marcin Bie{\'{n}}kowski and Matthias Englert, editors, {\em Approximation and Online Algorithms. {WAOA} 2024}, Lecture Notes in Computer Science, pages 61--75. Springer, 2025.

\bibitem{HartmannLW22}
Tim~A. Hartmann, Stefan Lendl, and Gerhard~J. Woeginger.
\newblock Continuous facility location on graphs.
\newblock {\em Math. Program.}, 192(1):207--227, 2022.
\newblock URL: \url{https://doi.org/10.1007/s10107-021-01646-x}, \href {https://doi.org/10.1007/S10107-021-01646-X} {\path{doi:10.1007/S10107-021-01646-X}}.

\bibitem{Karpinski15FCT}
Marek Karpinski.
\newblock Towards better inapproximability bounds for {TSP:} {A} challenge of global dependencies.
\newblock In Adrian Kosowski and Igor Walukiewicz, editors, {\em Proceedings of the 20th International Symposium on Fundamentals of Computation Theory ({FCT} 2015)}, volume 9210 of {\em Lecture Notes in Computer Science}, pages 3--11. Springer, 2015.
\newblock \href {https://doi.org/10.1007/978-3-319-22177-9\_1} {\path{doi:10.1007/978-3-319-22177-9\_1}}.

\bibitem{KarpinskiS13website}
Marek Karpinski and Richard Schmied.
\newblock Improved inapproximability results for the shortest superstring and the bounded metric {TSP}.
\newblock \url{https://theory.cs.uni-bonn.de/ftp/reports/cs-reports/2013/85339-CS.pdf}.
\newblock Accessed: 2024-04-01.

\bibitem{KarpinskiS12}
Marek Karpinski and Richard Schmied.
\newblock On approximation lower bounds for {TSP} with bounded metrics.
\newblock {\em CoRR}, abs/1201.5821, 2012.
\newblock URL: \url{http://arxiv.org/abs/1201.5821}, \href {https://arxiv.org/abs/1201.5821} {\path{arXiv:1201.5821}}.

\bibitem{KarpinskiS13}
Marek Karpinski and Richard Schmied.
\newblock Improved inapproximability results for the shortest superstring and related problems.
\newblock In Anthony Wirth, editor, {\em Nineteenth Computing: The Australasian Theory Symposium, {CATS} 2013, Adelaide, Australia, February 2013}, volume 141 of {\em {CRPIT}}, pages 27--36. Australian Computer Society, 2013.
\newblock URL: \url{http://crpit.scem.westernsydney.edu.au/abstracts/CRPITV141Karpinski.html}.

\bibitem{KarpinskiS15}
Marek Karpinski and Richard Schmied.
\newblock Approximation hardness of graphic {TSP} on cubic graphs.
\newblock {\em RAIRO -- Operations Research}, 49(4):651--668, 2015.
\newblock URL: \url{http://www.numdam.org/articles/10.1051/ro/2014062/}, \href {https://doi.org/10.1051/ro/2014062} {\path{doi:10.1051/ro/2014062}}.

\bibitem{KarthikLM19}
{Karthik {C. S.}}, Bundit Laekhanukit, and Pasin Manurangsi.
\newblock On the parameterized complexity of approximating dominating set.
\newblock {\em J. {ACM}}, 66(5):33:1--33:38, 2019.
\newblock \href {https://doi.org/10.1145/3325116} {\path{doi:10.1145/3325116}}.

\bibitem{KarthikMPS23}
{Karthik {C. S.}}, D{\'{a}}niel Marx, Marcin Pilipczuk, and U{\'{e}}verton~S. Souza.
\newblock Conditional lower bounds for sparse parameterized 2-{CSP}: {A} streamlined proof.
\newblock {\em CoRR}, abs/2311.05913, 2023.

\bibitem{Marx07}
D{\'{a}}niel Marx.
\newblock Can you beat treewidth?
\newblock In {\em {FOCS}}, pages 169--179. {IEEE} Computer Society, 2007.

\bibitem{RamanS08}
Venkatesh Raman and Saket Saurabh.
\newblock Short cycles make \emph{W}-hard problems hard: {FPT} algorithms for \emph{W}-hard problems in graphs with no short cycles.
\newblock {\em Algorithmica}, 52(2):203--225, 2008.

\bibitem{SahniG76}
Sartaj Sahni and Teofilo~F. Gonzalez.
\newblock P-complete approximation problems.
\newblock {\em J. {ACM}}, 23(3):555--565, 1976.
\newblock \href {https://doi.org/10.1145/321958.321975} {\path{doi:10.1145/321958.321975}}.

\bibitem{schrijver-book}
Alexander Schrijver.
\newblock {\em Combinatorial Optimization -- Polyhedra and Efficiency}.
\newblock Springer, 2003.

\bibitem{Shier1977}
Douglas~R. Shier.
\newblock A min-max theorem for $p$-center problems on a tree.
\newblock {\em Transportation Science}, 11(3):243--252, 1977.
\newblock URL: \url{http://www.jstor.org/stable/25767877}.

\end{thebibliography}

\tableofcontents
\label{toc}

\end{document}